\newtheorem{assumption}[lemma]{Assumption}
\newcommand{\eqlaw}{\stackrel{\mbox{\tiny law}}{=}}
\def\PPi{\boldsymbol{\Pi}}
\def\emptyset{\mathop{\centernot\ocircle}}
\def\CCE{\mathbb{E}}
\def\CCV{\mathbb{V}}
\def\CCG{\mathbb{G}}
\def\CCT{\mathbb{T}}
\def\E{\mathbf{E}}
\def\DD{\mathscr{D}}
\def\MM{\mathscr{M}}
\def\FF{\mathscr{F}}
\def\Ren{\mathscr{R}}
\def\LL{\mathscr{L}}
\def\Eps{\mathcal{E}}
\def\PR{\boldsymbol{R}}
\def\DeltaM{\Delta^{\!M}}
\def\DeltaW{\Delta^{\!\!\scriptscriptstyle\mathrm{Wick}}}
\def\Wick{M^{\!\scriptscriptstyle\mathrm{Wick}}}
\def\RWick{\CR^{\!\scriptscriptstyle\mathrm{Wick}}}
\def\WickL{L^{\!\scriptscriptstyle\mathrm{Wick}}}
\def\Wickh{\hat M^{\!\scriptscriptstyle\mathrm{Wick}}}
\def\W{{\scriptscriptstyle\mathrm{W}}}
\def\Deltap{\Delta^{\!+}}
\def\hDeltaM{\hat\Delta^{\!M}}
\def\s{\mathfrak{s}}
\def\BB{\mathbf{B}}
\def\RR{\mathfrak{R}}
\def\Hopf{\mathrm{HC}}
\def\H{\mathrm{H}}
\def\Z{\mathfrak{M}}
\def\ZZ{\mathbf{Z}}
\def\D{\mathbf{D}}
\def\Amin{{A^{\!-}}}
\def\ex{{\mathrm{ex}}}
\def\Bad{\mathscr{B}}
\def\Op{\mathscr Y}
\def\PPP{}
\def\v{\nu}
\def\w{\omega}
\def\u{\upsilon}
\def\${|\!|\!|}
\definecolor{darkgreen}{rgb}{0.1,0.6,0.1}
\definecolor{darkblue}{rgb}{0.1,0,0.7}
\def\J{\mathscr{I}}
\def\EE{\mathscr{E}}
\def\bbar#1{\bar{\bar #1}}
\def\n{\mathbf{n}}
    \pgfmathsetlength{\pgf@xb}{\pgfkeysvalueof{/pgf/outer xsep}}%
    \pgfmathsetlength{\pgf@yb}{\pgfkeysvalueof{/pgf/outer ysep}}%
\colorlet{symbols}{blue}     
\colorlet{testcolor}{green!60!black}
\def\symbol#1{\textcolor{symbols}{#1}}
\def\1{\mathbf{\symbol{1}}}
\def\Proj{\mathrm{Proj}}
\tikzset{
	root/.style={circle,fill=testcolor,inner sep=0pt, minimum size=2mm},
	dot/.style={circle,fill=black,inner sep=0pt, minimum size=1mm},
	var/.style={circle,fill=black!10,draw=black,inner sep=0pt, minimum size=2mm},
	dotred/.style={circle,fill=black!50,inner sep=0pt, minimum size=2mm},
	generic/.style={semithick,shorten >=1pt,shorten <=1pt},
	dist/.style={ultra thick,testcolor,shorten >=1pt,shorten <=1pt},
	testfcn/.style={ultra thick,testcolor,shorten >=1pt,shorten <=1pt,<-},
	testfcnx/.style={ultra thick,testcolor,shorten >=1pt,shorten <=1pt,<-,
		postaction={decorate,decoration={markings,mark=at position 0.6 with {\drawx}}}},
	kernelprime/.style={semithick,shorten >=1pt,shorten <=1pt,densely dashed,->},
	kernelprimex/.style={semithick,shorten >=1pt,shorten <=1pt,densely dashed,->,
		postaction={decorate,decoration={markings,mark=at position 0.4 with {\drawx}}}},
	kernel/.style={semithick,shorten >=1pt,shorten <=1pt,->},
	kernel1/.style={->,semithick,shorten >=1pt,shorten <=1pt,postaction={decorate,decoration={markings,mark=at position 0.45 with {\draw[-] (0,-0.1) -- (0,0.1);}}}},
	keps/.style={semithick,densely dashed,shorten >=1pt,shorten <=1pt,->},
	dots/.style={semithick,dotted,shorten >=1pt,shorten <=1pt},
	Deps/.style={semithick,draw=black!25,fill=black!25,shorten >=1pt,shorten <=1pt,->},
	kbase/.style={semithick,dotted,shorten >=1pt,shorten <=1pt,->},
	multx/.style={shorten >=1pt,shorten <=1pt,
		postaction={decorate,decoration={markings,mark=at position 0.5 with {\drawx}}}},
	kernelx/.style={semithick,shorten >=1pt,shorten <=1pt,->,
		postaction={decorate,decoration={markings,mark=at position 0.4 with {\drawx}}}},
	kernel1/.style={->,semithick,shorten >=1pt,shorten <=1pt,postaction={decorate,decoration={markings,mark=at position 0.45 with {\draw[-] (0,-0.1) -- (0,0.1);}}}},
	kernel2/.style={->,semithick,shorten >=1pt,shorten <=1pt,postaction={decorate,decoration={markings,mark=at position 0.45 with {\draw[-] (0.05,-0.1) -- (0.05,0.1);\draw[-] (-0.05,-0.1) -- (-0.05,0.1);}}}},
	kernelBig/.style={semithick,shorten >=1pt,shorten <=1pt,decorate, decoration={zigzag,amplitude=1.5pt,segment length = 3pt,pre length=2pt,post length=2pt}},
	rho/.style={dotted,semithick,shorten >=1pt,shorten <=1pt},
	renorm/.style={shape=circle,fill=white,inner sep=1pt},
	labl/.style={shape=rectangle,fill=white,inner sep=1pt},
	xi/.style={circle,fill=symbols!10,draw=symbols,inner sep=0pt,minimum size=1.2mm},
	xix/.style={crosscircle,fill=symbols!10,draw=symbols,inner sep=0pt,minimum size=1.2mm},
	xib/.style={circle,fill=symbols!10,draw=symbols,inner sep=0pt,minimum size=1.6mm},
	xibx/.style={crosscircle,fill=symbols!10,draw=symbols,inner sep=0pt,minimum size=1.6mm},
	not/.style={circle,fill=symbols,draw=symbols,inner sep=0pt,minimum size=0.5mm},
	>=stealth,
	graydot/.style={circle,fill=gray,inner sep=0pt, minimum size=1mm},
	zero/.style={circle,inner sep=0pt, minimum size=1mm, draw},
	kernelprimeeps/.style={densely dashed, semithick,shorten >=1pt,shorten <=1pt},
	smalldot/.style={circle,fill=black,draw=black, solid,inner sep=0pt,minimum size=0.5mm},
	}
\def\DeclareSymbol#1#2#3{\expandafter\gdef\csname MH@symb@#1\endcsname{\tikz[baseline=#2,scale=0.15,draw=black]{#3}}
\expandafter\gdef\csname MH@symb@#1s\endcsname{\scalebox{0.7}{\tikz[baseline=#2,scale=0.15,draw=black]{#3}}}}
\def\<#1>{\csname MH@symb@#1\endcsname}
\begin{document}

\title{A class of growth models rescaling to KPZ}
\author{Martin Hairer$^1$, Jeremy Quastel$^2$}
\institute{University of Warwick, \email{M.Hairer@Warwick.ac.uk}
\and University of Toronto, \email{quastel@math.toronto.edu}}
\maketitle

\begin{abstract}
We consider a large class of $1+1$-dimensional continuous
interface growth models and we show that, in both the weakly asymmetric 
and the intermediate disorder regimes, these models converge
to Hopf-Cole solutions to the KPZ equation. 
\end{abstract}

\setcounter{tocdepth}{2}
\microtypesetup{protrusion=false}
\tableofcontents
\microtypesetup{protrusion=true}

\section{Introduction}

The Kardar-Parisi-Zhang equation is formally given by
\begin{equ}[e:KPZ]
\d_t h^{(\lambda)} = \d_x^2 h^{(\lambda)} + \lambda \bigl(\d_x h^{(\lambda)}\bigr)^2 + \xi\;,
\end{equ}
where $\xi$ denotes space-time white noise and $\lambda \in \R$ is a parameter
describing the strength of its ``asymmetry''.
Equation \eref{e:KPZ} should be interpreted either via the Hopf-Cole transform \cite{MR1462228} as
\begin{equ}\label{def:hc}
h_{\Hopf}^{(\lambda)} \eqdef {1\over \lambda} \log Z^{(\lambda)}\;.
\end{equ}
where $Z^{(\lambda)}$ is the continuous \cite{MR876085}, strictly positive \cite{Carl}   It\^o solution of the multiplicative stochastic heat equation
\begin{equ}\label{eq:SHE}
dZ^{(\lambda)} = \d_x^2 Z^{(\lambda)} + \lambda Z^{(\lambda)}\,dW\;,\qquad
Z^{(\lambda)}(0) = Z_0\;,
\end{equ}
where $Z_0=\exp (\lambda h_0)$
with $W$ an $L^2$-cylindrical Wiener process, $\scal{W_t - W_s, \phi} = \xi(\phi \otimes \one_{[s,t]})$
or equivalently by using the theory exposed in
\cite{KPZ,Regularity}. 
It has been conjectured (see \cite{MR1317994,MR1462228,MR3176353} for a number of results in
this direction) 
that the KPZ equation 
has a ``universal'' character in the sense that any one-dimensional model of surface growth should
converge to it provided that it has the following features:
\begin{claim}
\item There is a microscopic smoothing mechanism.
\item The system has microscopic fluctuations with short-range correlations.
\item The system has some ``lateral growth'' mechanism in the sense that the growth speed depends in a nontrivial way on the slope.
\item At the microscopic scale, the strengths of the growth and fluctuation mechanisms 
are well separated: either the growth mechanism dominates (intermediate disorder) or the fluctuations dominate (weak asymmetry).
\end{claim}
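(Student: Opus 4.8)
\emph{Proof strategy.} The plan is to establish a precise version of this universality statement for the class of \emph{continuous} growth models given by (a priori classical) SPDEs, in which the four features are realised respectively by: a second-order smoothing operator (reducible to $\d_x^2$ by an affine change of variables); a driving noise whose correlation length equals the microscopic scale, so that after the rescaling below it becomes a version $\xi_\epsilon$ of space-time white noise regularised at scale $\epsilon$; a polynomial lateral-growth term $P(\d_x h)$ with $P$ not affine; and a weak-asymmetry prefactor $\sqrt\epsilon$ in front of $P$ (the intermediate-disorder case being identical with the small parameter moved onto the noise and the scaling exponents adjusted). Under the KPZ rescaling $h_\epsilon(t,x)=\sqrt\epsilon\,u(\epsilon^{-2}t,\epsilon^{-1}x)-C_\epsilon t$ the equation becomes
\begin{equ}
\d_t h_\epsilon = \d_x^2 h_\epsilon + \sum_{k\ge 1} a_k\,\epsilon^{(k-2)/2}\,(\d_x h_\epsilon)^k + \xi_\epsilon\;,
\end{equ}
with $a_k$ the coefficients of $P$. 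Formally only the $k=2$ term survives, with would-be KPZ constant $\lambda=a_2$, but in the limit $\d_x h_\epsilon$ is distribution-valued, so none of the products $(\d_x h_\epsilon)^k$ with $k\ge2$ makes classical sense and one must determine which contributions survive renormalisation.

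\emph{Deterministic step.} First I would build a regularity structure rich enough to host the fixed-point problem generated by the \emph{full} polynomial $P$ --- necessarily larger than the one used for $(\d_x h)^2$ alone, since each monomial $(\d_x h)^k$ with $k\le\deg P$ contributes its own family of trees --- lift the mild formulation to an equation for modelled distributions, and obtain from the general theory of regularity structures~\cite{Regularity} both local well-posedness and continuity of the solution in the underlying renormalised model. At this stage everything is deterministic, so the whole probabilistic question reduces to the convergence of a suitable sequence of renormalised models.

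\emph{Probabilistic step (the main obstacle).} Then I would take the canonical lift of $\xi_\epsilon$, renormalise it by Wick-type subtractions tree by tree, and prove convergence of these renormalised models to the KPZ model as $\epsilon\to0$. The decisive bookkeeping is that $\d_x h_\epsilon$ is regular only at scales $\gtrsim\epsilon$, with pointwise variance of order $\kappa\epsilon^{-1}$ (the constant $\kappa$ being fixed by the noise regularisation and its limiting covariance); hence the constant needed to renormalise the $(\d_x h_\epsilon)^2$-content of a monomial $(\d_x h_\epsilon)^{2m}$ is of order $\epsilon^{-(m-1)}$, \emph{exactly} offsetting its prefactor $\epsilon^{m-1}$, so that every even monomial feeds a finite amount into an \emph{effective} constant $\lambda_{\mathrm{eff}}=\sum_{m\ge1}c_m a_{2m}\kappa^{m-1}$ (with $c_1=1$, so $\lambda_{\mathrm{eff}}=a_2$ to leading order), while every odd monomial likewise produces a diverging multiple, of order $\epsilon^{-1/2}$, of $\d_x h_\epsilon$ together with diverging scalar constants. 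The heart of the argument --- which I expect to be the main obstacle --- is to show that \emph{nothing else} survives: apart from this shift of the KPZ constant and this Galilean drift, all remaining renormalisation contributions (higher Wick powers, the ``off-diagonal'' trees coupling distinct monomials, possible logarithmic corrections) are subcritical and vanish in the limit. This requires controlling simultaneously the combinatorics of the enlarged regularity structure and the model-convergence estimates, and it is here that the short-range-correlation hypothesis on the noise enters.

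\emph{Identification of the limit.} Finally, the surviving $\d_x h_\epsilon$-term is removed by passing to a Galilean frame $x\mapsto x+b_\epsilon t$ with $b_\epsilon$ of order $\epsilon^{-1/2}$ --- a change that the Hopf-Cole solution carries along explicitly --- and the leftover diverging constants are absorbed into the subtraction $C_\epsilon t$. What remains is the renormalised KPZ equation with constant $\lambda_{\mathrm{eff}}$, whose regularity-structures solution coincides with the Hopf-Cole solution $h^{(\lambda_{\mathrm{eff}})}_{\Hopf}={1\over\lambda_{\mathrm{eff}}}\log Z^{(\lambda_{\mathrm{eff}})}$ of \eref{e:KPZ}--\eref{eq:SHE} by the identification established in \cite{KPZ} when that solution was first constructed. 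Combining the deterministic continuity with the model convergence then yields convergence of $h_\epsilon$ to $h^{(\lambda_{\mathrm{eff}})}_{\Hopf}$ in the natural topology (and, in the degenerate case $\lambda_{\mathrm{eff}}=0$, to the Edwards--Wilkinson Gaussian field instead); the intermediate-disorder regime follows by the identical three-step scheme with the obvious modification of the scaling exponents.
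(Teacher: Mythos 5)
Your overall architecture (rescale, solve an abstract fixed point problem deterministically and continuously in the model, prove convergence of renormalised models, identify the limit with the Hopf--Cole solution via the quadratic result, with the effective constant $\lambda_{\mathrm{eff}}=\sum_m c_m a_{2m}\kappa^{m-1}$ matching $\lambda=\tfrac12\int F''\,d\mu_0$) is indeed the architecture of the paper. But your deterministic step, as formulated, would fail. You propose to ``build a regularity structure rich enough to host the fixed-point problem generated by the full polynomial $P$'', with each monomial $(\d_x h)^{k}$ contributing its own family of trees, treating the prefactor $\eps^{(k-2)/2}$ as a mere coefficient of the equation. That structure is not locally subcritical: once symbols like $(\CI'(\Xi))^{4}$ are admitted, the analogue of Lemma~\ref{lem:subcritical} fails (there are infinitely many symbols below any fixed homogeneity), so the fixed point theory of \cite{Regularity} does not apply; and, worse, the renormalised models you then need to converge simply do not exist in the limit --- $(\d_x h_\eps)^4-C'_\eps$ diverges for \emph{every} choice of $C'_\eps$, so no Wick-type subtraction ``tree by tree'' can make the canonical lift of $\xi_\eps$ converge on those symbols. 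The missing idea is that the exact offsetting you invoke probabilistically ($\eps^{m-1}$ against the $\eps^{-(m-1)}$ variance) must already be encoded \emph{algebraically}: one introduces an abstract symbol $\CE$ for ``multiplication by $\eps$'' which raises homogeneity, keeps only the decorated symbols such as $\CE^{k-1}\bigl((\CI'(\Xi))^{2k}\bigr)$ in the structure $\CT$ (the undecorated ones live only in the auxiliary extension $\CT_\ex$ used to define the structure group), and works in $\eps$-dependent spaces $\MM_\eps$, $\CD^{\gamma,\eta}_\eps$ in which $\hat\CE^{k}$ acts continuously. Without this device the reduction ``deterministic continuity plus model convergence'' collapses at its first step.

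Two further points are stated too optimistically. First, it is not true that all remaining renormalisation contributions ``are subcritical and vanish'': the constants attached to $\Psi\CI'(\Psi\CI'(\Psi^2))$ and $\CI'(\Psi^2)^2$ diverge logarithmically and only the specific combination $4C_2^{(\eps)}+C_3^{(\eps)}$ converges; the proof that these logarithms cancel in the renormalised equation (Theorem~\ref{theo:logs}) is a separate, non-trivial ingredient, not a soft subcriticality estimate. Second, your scheme yields only local-in-time convergence with smooth data; to reach the stated theorems one must start from data of H\"older regularity below $1/2$ and restart the solution using the global Hopf--Cole limit, which is precisely what forces the two-scale spaces $\CC^{\gamma,\eta}_\eps$ and Proposition~\ref{prop:restart}. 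Finally, your treatment of odd monomials by a Galilean shift of order $\eps^{-1/2}$ is a genuine extension beyond the paper (which restricts to even $F$) and would need its own argument; as written it is an unproved assertion rather than a reduction to known results.
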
 
Only some progress has been made toward a rigorous mathematical understanding of this claim.
The only discrete microscopic models for which convergence to the KPZ equation has been established
rigorously in general are the height function of  asymmetric  exclusion processes in the weakly asymmetric limit \cite{MR1462228},
 \cite{MR2796514}, \cite{Dembo:2013aa}, qTASEP \cite{MR3152785,Corwin:2015aa} and the free energy of directed random polymers in the intermediate disorder regime \cite{AKQ}, \cite{MQR}.  In  \cite{MR3176353} it was shown that a wide class of asymmetric particle models with product invariant measures converge to \emph{energy solutions} of the KPZ equation when
 started in  
 equilibrium.  A slightly stronger version of these equilibrium energy solutions were shown to be unique in \cite{Gubinelli:2015cr}.  In the  continuous 
 setting \cite{Funaki:2014aa} consider the KPZ equation with  non-linearity smoothed out so that a smoothed out Brownian motion is invariant, and show, again, that in equilibrium it converges to KPZ.  In all these cases, including the last two, the proof goes through the Hopf-Cole transformation, and relies on the result satisfying a manageable version of \eqref{eq:SHE}.
This is avoided in the regularity structures approach \cite{KPZ, Regularity} which, in principle, allows for many different types of 
regularization of the quadratic KPZ equation or stochastic heat equation \cite{Etienne}.  At the present time it is however 
restricted to finite volume.

 Substantial progress has also been made recently in the understanding of the conjectured 
long time scaling limit of the KPZ equation itself, which is expected to be the scaling
limit for  this whole class of microscopic interface growth models \cite{spohn, MR2784327,MR2796514,MR3152785}.
Note that the type of well-posedness and approximation results considered here, or in \cite{Regularity}, even when they
are global, do not have much to say about large time, which presently can only be probed through exact calculations.

In this article, we consider continuous growth models of the type      
\begin{equ}[e:growth]
\d_t h = \d_x^2 h + \eps F(\d_x h) + \delta \eta\;,
\end{equ}
where $F$ is an even function, which we will often take to be a polynomial, 
modelling the growth mechanism, $\eta$ is a 
\textit{smooth} space-time Gaussian process
modelling the microscopic fluctuations, and $\eps$, $\delta$ are two parameters. 
The two regimes alluded to earlier correspond to $\eps \approx 1$ and 
$\delta \ll 1$ (intermediate disorder),
as well as $\eps \ll 1$ and $\delta \approx 1$ (weak asymmetry).
It is important to note that these two regimes are
\textit{not} equivalent, i.e.\ it is not possible to turn one regime into the
other by a simple change of variables.   What is usually done is to formally expand
\begin{equ}
F(s) = F(0) + F'(0) s + \tfrac12 F''(0) s^2 + \cdots
\end{equ}
The first two terms in the expansion can be removed by simple height and spatial shifts and 
 one argues that the model is then approximated by the quadratic KPZ equation  \eqref{e:KPZ} with $\lambda=  \tfrac12 F''(0)$
 \cite{hhz, krug-spohn}.

Our main result is that for a wide class of nonlinearities $F$
and correlation functions for $\eta$, the appropriate rescaling of 
\eref{e:growth} (as a function of the small parameter $\eps$ or $\delta$ depending on
the regime considered) converges to the KPZ equation \eref{e:KPZ} for a suitable
value of the parameter $\lambda$. While this result is to some extent expected in view of the above discussion,
the precise analysis uncovers some surprising facts:
\begin{claim}
\item In the weakly asymmetric regime, the 
value $\lambda$ obtained for the limiting equation is \textit{not} the
one that one would guess by formally rescaling the equation and neglecting
all terms with a positive power of the small parameter. In particular,
one generically has $\lambda \neq 0$ even if the polynomial $F$ has no quadratic term.
\item In the intermediate disorder regime, if we consider $F$ with $F''(0) = 0$ but
$F^{(4)}(0) \neq 0$ (say) then, as expected, the limit obtained under the
``na\"\i ve rescaling'' is given by the additive (linear) stochastic heat equation \eqref{e:KPZ} with $\lambda=0$.
However, by considering larger scales, one again recovers the KPZ equation with a non-trivial $\lambda$!
\end{claim}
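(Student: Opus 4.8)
The plan is to carry out the whole analysis at the level of the rescaled equation, using the theory of regularity structures in finite volume (say on the torus), and to identify the limit with the Hopf--Cole solution $h^{(\lambda)}_{\Hopf}$ of \eqref{def:hc}--\eqref{eq:SHE} via Hairer's earlier treatment of the KPZ equation \cite{KPZ}. The first step is to pin down the correct rescaling in each regime: one sets $\hat h_\eps(\tau,y) = \eps^{\mathfrak a}\,h(\eps^{-\mathfrak b}\tau,\eps^{-\mathfrak c}y) - C_\eps\tau$ (and analogously with $\delta$ in the intermediate disorder regime), with the KPZ exponent relation $\mathfrak b = 2\mathfrak c$, the height exponent $\mathfrak a$ fixed by requiring that the rescaled noise converge to space--time white noise, $\mathfrak c$ chosen as large as possible subject to the limit being nontrivial, and $C_\eps$ a divergent height shift. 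One then checks that $\hat h_\eps$ solves \[ \d_\tau\hat h_\eps = \d_y^2\hat h_\eps + \sum_{k\ge1}\kappa_k(\eps)\,(\d_y\hat h_\eps)^{2k} + \hat\xi_\eps + C_\eps\,, \] a perturbation of \eqref{e:KPZ} in which $\hat\xi_\eps$ is a smooth approximation of white noise (so it enters the regularity structure exactly as the noise does in the treatment of \eqref{e:KPZ}), only even powers appear because $F$ is even, and $\kappa_k(\eps)\to 0$ for $k$ beyond some threshold (and, depending on the regime, for all $k\ge 2$ or even all $k\ge1$).

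The second step, which is where the naive guess for $\lambda$ fails, is that one may \emph{not} truncate $F$ to its quadratic part. Although $\kappa_k(\eps)\to 0$ for $k\ge 2$, these coefficients multiply Wick powers whose renormalization constants blow up, and the product survives; so the regularity structure must contain all trees generated by the full polynomial $F$ (only finitely many matter by power counting). Concretely, reordering $\kappa_k(\eps)(\d_y\hat h_\eps)^{2k}$ with respect to the variance $\sigma_\eps^2 = \mathbb{E}[(\d_y\hat h_\eps)^2]$, which diverges because already the additive stochastic heat equation has $\d_y\hat h_\eps\notin L^2$, produces a term $\kappa_k(\eps)\,\gamma_k\,\sigma_\eps^{2(k-1)}$ multiplying the Wick-ordered square of $\d_y\hat h_\eps$, with $\gamma_k$ an explicit combinatorial constant ($\gamma_2 = 6$), together with lower Wick powers and a constant. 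The exponent $\mathfrak c$ is chosen so that $\sigma_\eps^2$ diverges at exactly the rate that compensates the decay of the $\kappa_k$: then $\kappa_k(\eps)\gamma_k\sigma_\eps^{2(k-1)}$ has a finite limit, the coefficients of the higher Wick powers of order $2j$ with $j\ge2$ vanish, and the constants diverge but are absorbed into $C_\eps$. Summing over $k$, the true parameter is, schematically, $\lambda = \tfrac12 F''(0) + \sum_{k\ge2}\gamma_k\,a_{2k}\,c^{\,k-1}$ with $a_{2k} = F^{(2k)}(0)/(2k)!$ and $c$ the (regime-dependent) finite limit of the rescaled variance, itself a functional of the correlation function of $\eta$. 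In particular $\lambda$ differs from $\tfrac12 F''(0)$ in general, and is generically nonzero even when $F''(0)=0$. The remaining renormalization — of the more complex trees built from up to four noise symbols — is the KPZ-type renormalization, handled as in \cite{KPZ}.

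The third step is the probabilistic input: one shows that the renormalized models associated with $\hat\xi_\eps$, i.e.\ the finite family of stochastic data indexed by the trees of negative homogeneity, converge as $\eps\to0$ to the canonical ``KPZ model'' driven by white noise with the parameter $\lambda$ identified above, and that every bound is uniform in $\eps$. This goes by Wiener chaos expansions and the standard graph/kernel estimates, the new feature being the need to track the $\eps$-dependence of each kernel (the mollified heat kernel convolved with the covariance of $\eta$, the divergent constants $\sigma_\eps^2$ and their higher-tree analogues) and to verify that the would-be-divergent subtractions recombine precisely into the renormalization of the limiting equation. Once this convergence of models is in hand, continuity of the reconstruction operator and of the solution map for the abstract fixed point problem (available because we work in finite volume) transfers it to convergence of $\hat h_\eps$ to $h^{(\lambda)}_{\Hopf}$; the stochastic heat equation \eqref{eq:SHE} enters only through this last identification.

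The hard part will be the interplay of the second and third steps: the bookkeeping that matches the divergence rate of $\sigma_\eps^2$ — and of the renormalization constants attached to the more complicated trees — against the powers of $\eps$ carried by the $\kappa_k(\eps)$, so that the net effect is a single finite shift producing $\lambda$ plus a finite collection of height/tilt renormalizations, with everything else vanishing. In the intermediate disorder regime with $F''(0)=0$ there is an additional layer: with the rescaling one would use for a purely quadratic nonlinearity, both the quartic term and its renormalization vanish, so the limit is the additive stochastic heat equation ($\lambda=0$); recovering KPZ then requires redoing the analysis on a coarser scale (a larger $\mathfrak c$, with $\mathfrak a,\mathfrak b$ re-tuned accordingly) on which $\sigma_\eps^2$ diverges fast enough for the mechanism of the second step to regenerate a nontrivial $\lambda$ — and this demands a separate power-counting and a separate convergence-of-models argument, since the relevant trees and their homogeneities change.
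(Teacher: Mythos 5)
There is a genuine gap, and it sits exactly at the point your second step passes over in one clause. You propose to ``include all trees generated by the full polynomial $F$'' in the regularity structure, treat the vanishing coefficients $\kappa_k(\eps)$ as mere parameters of the equation, and assert that ``only finitely many matter by power counting''. That power counting does not close: once a term $(\d_y \hat h)^{2k}$ with $k\ge 2$ is present as a genuine nonlinearity, the symbol $\CI'(\Xi)^{2k}$ has homogeneity $\approx -k$, its integral $\CI(\CI'(\Xi)^{2k})$ has homogeneity $\approx 2-k\le 0$, and iterating the fixed point generates symbols of arbitrarily negative homogeneity — local subcriticality fails, so the set of symbols below any fixed homogeneity is infinite and the abstract fixed point problem cannot even be formulated uniformly in $\eps$. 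Worse, the convergence-of-models step you invoke cannot be carried out for such a structure: objects like $\CI'(\Xi)^{4}$ do not converge under \emph{any} choice of renormalisation constants; only the $\eps$-weighted combinations $\eps\,\CI'(\Xi)^{4}-C_\eps$ do. The missing idea is to build the factor $\eps$ into the algebra itself: an abstract symbol $\Eps$ (``multiplication by $\eps$'') that raises homogeneity by one, so that $\Eps^{k-1}\bigl(\CI'(\Xi)^{2k}\bigr)$ is a legitimate symbol of homogeneity just below $-1$ while $\CI'(\Xi)^{2k}$ itself is excluded from $\CT$ (it lives only in an auxiliary extended structure $\CT_\ex$ used to define the structure group and the renormalisation maps). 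This is what restores subcriticality, makes the list of relevant symbols finite, and lets the limiting model be the one with $\hat\Pi_z\tau=0$ on every $\Eps$-decorated symbol — which is precisely how the higher-order terms disappear from the limit equation while still contributing, through the Wick constants, the shift of $\lambda$ that you correctly compute in your reordering argument.

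A second, related omission: your final step only yields convergence up to a possibly small random time, whereas the claim concerns convergence to the (global) Hopf--Cole solution on every $[0,T]$. To restart the solution one must take initial data of H\"older regularity just below $1/2$, and then the nonlinearity $(\d_x h)^{2m}$ produces a non-integrable singularity $t^{m(\eta-1)}$ at $t=0$, so the fixed point in the usual $\CD^{\gamma,\eta}$ spaces does not close for large $m$. One needs the additional input that the approximate solutions are smooth \emph{below scale $\eps$}, encoded in $\eps$-dependent spaces ($\CC^{\gamma,\eta}_\eps$, $\CD^{\gamma,\eta}_\eps$) and $\eps$-dependent model norms ($\MM_\eps$, with extra bounds on $f_z(\EE^k_\ell(\tau))$ and on $\Pi_z\tau$ for $\tau\in\CU'$ at scales $\lambda\le\eps$), together with a restarting estimate showing the reconstructed solution at positive times again lies in $\CC^{\gamma,\eta}_\eps$ uniformly in $\eps$. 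Without this two-scale bookkeeping, the iteration from local to global convergence — and hence the identification with $h^{(\lambda)}_{\Hopf}$ on a fixed time interval — does not go through. Your steps 2 and 3 (Wick reordering producing $\lambda=\tfrac12\int F''\,d\mu_0$, convergence of the $\eps$-dependent kernels, and identification via the quadratic KPZ theory) are in the right spirit and match the paper's mechanism, but they presuppose a uniformly subcritical abstract setting that your construction does not provide.
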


To understand the need for the separation of scales, let us
consider the problem of trying to make sense of \eqref{e:growth}
with $\eps=\delta=1$, when $\eta$ is space-time white noise.  The 
natural approach is to replace $\eta$ by an approximate white noise $\xi^{(\gamma)}$ 
which is smooth on some small scale $\gamma>0$ and attempt to identify a limit of 
\begin{equ}
\d_t h_\gamma = \d_x^2 h_\gamma +  F(\d_x h_\gamma) +\xi^{(\gamma)}\;.
\end{equ}
In the KPZ case, $F(u) =  u^2$, 
the non-linear term does indeed converge to a non-trivial field, at the simplest level in the sense of convergence of  the space-time covariance, after renormalization by
subtraction of a diverging constant.    On the other hand,
if one takes  a higher order non-linearity such as $F(v) =v^4$, the
renormalization by constants cannot help: The space-time covariance of the non-linear field simply diverges as $\gamma^{-2}$.  A possible route might be to renormalize by subtracting 
quadratic terms.  For example, one could try to take a limit of 
\begin{equ}\label{gammaeq}
\d_t h_\gamma = \d_x^2 h_\gamma +  [(\d_x h_\gamma)^4- c_{2,\gamma}(\d_x h_\gamma)^2+ c_{1,\gamma}] + \xi^{(\gamma)}\;,
\end{equ}
with precisely chosen $c_{1,\gamma}$ and $c_{2,\gamma}$.  The model is supercritical, and on 
large scales one expects such system to be diffusive, i.e.\ to exhibit Gaussian 
fluctuations \cite{spohn}.  
On our scales the non-linear term still diverges, in fact, it 
is just a divergent multiple of space-time white noise, as
can be seen by considering instead the critically adjusted model
\begin{equ}
\d_t h_\gamma = \d_x^2 h_\gamma +  \gamma^{1/2}[(\d_x h_\gamma)^4- c_{2,\gamma}(\d_x h_\gamma)^2+ c_{1,\gamma}] + \xi^{(\gamma)}\;.
\end{equ} 
Although we know of no  proof, it is possible to convince oneself that the limit as $\gamma\searrow 0$ should just be the free field
$
\d_t h = \d_x^2 h  + a\xi$
with a new $a>1$, suggesting that the solution of \eqref{gammaeq} is essentially the solution of the free equation multiplied by $\gamma^{-1/2}$.  The lesson is that the only non-trivial limits 
in \eqref{e:growth} are going to come from fine tuning $\eps$ and
$\delta$ with the scale of decay of covariance of the forcing noise. This leads ultimately to two choices, the intermediate disorder, and weakly asymmetric limits.

In order to state our results precisely, we need to describe briefly the function spaces we are working in.  We would like  our initial conditions to have the typical regularity of the KPZ equation, which is $\CC^\alpha$ for $\alpha < {1\over 2}$, where for $\alpha\in (0,1)$ the H\"older norm is given by
\begin{equ}
\|h\|_\alpha =  \|h\|_{L^\infty} + \sup_{x \neq y} {|h(x) - h(y)| \over |x-y|^\alpha} .
\end{equ}
But even for  $\eps = 1$ and without the noise,
it is not at all straightforward to control solutions to \eref{e:weakAsym}, 
even for short times,
by exploiting the regularisation properties of the associated fixed point map. The
only tool we really have at our disposal is the maximum principle (see, for example, \cite{ben-artzi}) but it is not clear how
one can combine this with the type of analytic estimates essential in the theory of
regularity structures.  

So we define H\"older spaces $\CC^{\gamma,\alpha}_\eps$ for $\alpha \in (0,1)$ and $\gamma \in (1,2)$ 
of functions which
are $\CC^\alpha$
at ``large scales'' (ie. larger than $\eps$) and  $\CC^\gamma$ at ``small scales'', by setting
\begin{equ}[e:defh]
\|h\|_{\gamma,\alpha;\eps} = \|h\|_{\alpha} + \sup_{x \neq y \atop |x-y| \le \eps} {|h'(x) - h'(y)| \over \eps^{\alpha-\gamma}|x-y|^{\gamma-1}}\;. 
\end{equ}
This norm makes such a statement quantitative, typically in the context of a sequence of functions
$h^{(\eps)} \in \CC^{\gamma,\alpha}_\eps$ with uniformly bounded norms. For $\eps = 0$, one does of course
recover the usual $\alpha$-H\"older norms.
The natural way of comparing an element $ \bar h \in \CC^{\alpha}$ with an element $ h \in \CC^{\gamma,\alpha}_\eps$
is given by
\begin{equs}[e:diffHolder]
\|h;\bar h\|_{\gamma,\alpha;\eps} &= \|h - \bar h\|_{\gamma} 
+ \sup_{x \neq y \atop |x-y| \le \eps} {| h'(x) - h'(y)| \over \eps^{\alpha-\gamma}|x-y|^{\gamma-1}} + \sup_x {|h'(x)| \over \eps^{\alpha-1}}\;. \qquad
\end{equs}
Note that we do not impose a supremum bound of order $\eps^{\alpha -1}$
on $h'$ in \eqref{e:defh} because such a  bound follows automatically from 
$\|h\|_{\gamma,\alpha;\eps} \le 1$.

\subsection{Intermediate disorder scaling}

Let us first consider the intermediate disorder regime. In this case, 
\begin{equ}[e:weakNoise]
\d_t h = \d_x^2 h +F(\d_x h) + \eps^{1\over 2} \eta\;,
\end{equ}
where  $\eps$ will always be a small positive parameter. 
Setting $\tilde h(x,t) = h(\eps^{-1}x,\eps^{-2}t)$, we obtain for the rescaled process the equation
\begin{equ}
\d_t \tilde h = \d_x^2 \tilde h + \eps^{-2}F(\eps \d_x \tilde h) + \xi^{(\eps)}\;,
\end{equ}
where $\xi^{(\eps)}(x,t) = \eps^{-3/2}\eta(\eps^{-1}x,\eps^{-2}t)$ is a stochastic process that approximates space-time white noise on scales 
larger than $\eps$.
Expanding $f$ in a Taylor series around $0$, we formally obtain
\begin{equ}[e:weakNoiseRescaled]
\d_t \tilde h = \d_x^2 \tilde h + {a_0\over \eps^2 } + a_1(\d_x \tilde h)^2 + \CO\bigl(\eps^2 (\d_x \tilde h)^4\bigr) + \xi^{(\eps)}\;,
\end{equ}
which strongly suggests that the scaling limit of this equation as $\eps \to 0$ (modulo a height shift which
has the effect of adjusting the value of $a_0$) is given by the KPZ equation \cite{KPZOrig}.

It also raises the question of what happens if the quadratic part of $F$ vanishes. 
Under the scaling given above, it seems intuitively clear that one simply converges toward
the ``trivial'' limit given by the additive stochastic heat equation. On the other hand, 
one might look at different scalings and consider
$\tilde h(x,t) = \eps^\beta h(\eps^{-\alpha}x,\eps^{-2\alpha}t)$ for some exponents $\alpha$ and $\beta$
to be determined. Inserting this into \eref{e:weakNoise}, we obtain the rescaled equation
\begin{equ}
\d_t \tilde h = \d_x^2 \tilde h + \eps^{\beta-2\alpha}F(\eps^{\alpha-\beta} \d_x \tilde h) + \eps^{1-\alpha+2\beta \over 2}\xi^{(\eps^\alpha)}\;.
\end{equ}
In order for the noise term to converge to space-time white noise, we should choose $\beta = (\alpha-1)/2$, so that
\begin{equ}[e:weakNoiseRescaledGeneral]
\d_t \tilde h = \d_x^2 \tilde h + \eps^{-{1+3\alpha \over 2}}F(\eps^{1+\alpha \over 2} \d_x \tilde h) + \xi^{(\eps^\alpha)}\;.
\end{equ}
If $F(x) \sim x^{2p}$ around $x=0$ for some integer $p \ge 1$, this suggests that one should see a non-trivial limit by choosing
$\alpha$ such that $2p(1+\alpha) = 1+3\alpha$, i.e.\ $\alpha = (2p-1)/(3-2p)$ and that the scaling limit should be given by the equation
\begin{equ}
\d_t \tilde h = \d_x^2 \tilde h + (\d_x \tilde h)^{2p} + \tilde \xi\;,
\end{equ}
where $\tilde \xi$ denotes space-time white noise. This would to some extent 
contradict the universality of the KPZ equation. 
We immediately see a problem with this argument: when $p > 1$,
the value of $\alpha$ obtained in this way is negative, so that we do not actually look at large scales at all!
We will see  that the correct way to rescale this system in order to obtain a non-trivial
large-scale limit is to choose $\alpha = 2p-1$. With this choice, it turns out that even if $p \neq 1$,
the scaling limit obtained in this way is indeed given by the KPZ equation.

In order to fix notations, let us consider henceforth a smooth compactly supported function $\rho\colon \R^2 \to \R$
integrating to $1$ and set
\begin{equ}[e:xiEps]
\rho_\eps(t,x) = \eps^{-3} \rho(\eps^{-2}t, \eps^{-1}x)\;,
\qquad \xi^{(\eps)} = \rho_\eps * \xi\;,
\end{equ}
where ``$*$'' denotes space-time convolution and $\xi$ denotes space-time
white noise. To keep things simple, we will assume that $\rho$ is symmetric in space, $\rho(t,x) = \rho(t,-x)$\footnote{This is used in a few places such as \eqref{e:iden1} or \eqref{e:exprWick}.  Without the symmetry, one has to make further subtractions, which manifest themselves as global drifts in the resulting equation which then have to be removed by shifts, see \cite{HaoShen}.  In order not to complicate things even further, we do not pursue this here.}   Note that, in law, the field $\xi^{(\eps)}$ is obtained from $\xi_1$ as above by
a suitable parabolic rescaling:
\begin{equ}
\xi^{(\eps)}(t,s) \eqlaw \eps^{-3/2}\xi_1(\eps^{-2}t, \eps^{-1}x)\;.
\end{equ}
We furthermore define a constant $C_0$ by
\begin{equ}[e:defC0]
C_0 = \iint \bigl(P'* \rho\bigr)(t,x)^2\,dt\,dx\;,
\end{equ}
where $P$ denotes the heat kernel on $[0,2\pi)$ with periodic boundary conditions.
This constant can be rewritten using a graphical notation which will save a 
great deal of space later.
Writing \tikz[baseline=-0.1cm] \draw[keps] (0,0) -- (1,0); for the kernel $\rho_\eps * K'$,
a black dot for an integration variable, and a green dot for the value $0$,
it follows from the definition of $K$, the scaling invariance of the heat kernel 
and the fact that $\rho$ has compact support
that one has
\begin{equ}[e:defrhopicture]
\begin{tikzpicture}[baseline=10,scale=0.5]
\node at (0,2) [dot] (1) {};
\node at (0,0) [root] (0) {};

\draw[keps] (1) to[bend left=60] (0);
\draw[keps] (1) to[bend right=60] (0);
\end{tikzpicture}
= 
{C_0 \over \eps} + \CO(1)
\;.
\end{equ}

We now consider \eref{e:weakNoiseRescaledGeneral} with
$\alpha = 2p-1$. Performing the substitution $\eps^{2p-1} \mapsto \eps$, this can be rewritten as
\begin{equ}[e:intDis]
\d_t h_\eps  = \d_x^2 h_\eps + \eps^{-{3p-1\over 2p-1}} F\bigl(\eps^{p\over 2p-1} \d_x h_\eps\bigr) + \xi^{(\eps)}\;.
\end{equ}
As usual, we consider \eref{e:intDis} on a finite interval with periodic boundary conditions.
We now make use of the fact that, by assumption, $F$ is smooth and $F(u) \sim u^{2p}$ near $u = 0$, so that
one can write
\begin{equ}[e:defakF]
F(u) = \sum_{k=0}^{2p-1} a_{p+k} u^{2(p+k)} + \tilde F(u)\;,
\end{equ}
where $\tilde F$ is a smooth function such that $|\tilde F(u)| \le |u|^{6p}$ for $|u| \le 1$.
Substituting this into \eref{e:intDis}, we obtain the equation
\begin{equ}[e:rewritten]
\d_t h_\eps  = \d_x^2 h_\eps + \sum_{k=0}^{2p-1} a_{p+k} \eps^{p-1+{2pk \over 2p-1}} (\d_x h_\eps)^{2(p+k)} + \eps^{-{3p-1\over 2p-1}} \tilde F \bigl(\eps^{p\over 2p-1} \d_x h_\eps\bigr) + \xi^{(\eps)}\;.
\end{equ}
With this notation at hand, we have the following result.

\begin{theorem}\label{theo:convDisorder}
Let $p \ge 1$ be an integer, let $F \in \CC^{6p+1}$ with $F^{(2p)}(0) \neq 0$ and 
$F^{(k)}(0) = 0$ for $k < 2p$, and let $h_\eps$ be the solution to
\eref{e:intDis} with initial condition $h_0^{(\eps)} \in \CC^{\gamma,\eta}_\eps$ for 
$\gamma = 2 - {1\over 96p}$ and $\eta \in ({1\over 2} - {1\over 12p^2}, {1\over 2})$ with $\|h_0^{(\eps)}, h_0\|_{\gamma,\eta;\eps} \to 0$. Then there exists a constant $c \in \R$ such that 
$h_\eps - (C_\eps  + c)t$ converges in probability to $h_\Hopf^{(\lambda)}$ with 
initial condition $h_0$ where \begin{equ}[e:defLambdaIntermediate]
\lambda = a_p   {C_0^{p-1} (2p)! \over 2^p (p-1)!}\;,\qquad
C_\eps = \sum_{k=p}^{3p-1} a_k \eps^{-{3p - 1 - k \over 2p-1}}  {C_0^k (2k)! \over 2^k k!}\;.
\end{equ}
\end{theorem}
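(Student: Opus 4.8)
The plan is to recast \eref{e:rewritten} via the Hopf–Cole transform $Z_\eps = \exp(\lambda h_\eps)$ and then apply the general convergence machinery for the multiplicative stochastic heat equation together with the analysis of regularized KPZ-type equations developed in the earlier part of the paper. First I would compute $\d_t Z_\eps$ from \eref{e:rewritten}: the linear part gives $\d_x^2 Z_\eps$, the noise contributes $\lambda Z_\eps \xi^{(\eps)}$, and the nonlinear terms $a_{p+k}\eps^{p-1+2pk/(2p-1)}(\d_x h_\eps)^{2(p+k)}$ become $\lambda a_{p+k}\eps^{\cdots}Z_\eps(\d_x h_\eps)^{2(p+k)}$; crucially $(\d_x h_\eps)^{2j} = \lambda^{-2j}(\d_x Z_\eps/Z_\eps)^{2j}$, so after multiplying through we get polynomial expressions in $\d_x Z_\eps$ divided by powers of $Z_\eps$. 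The residual term $\eps^{-(3p-1)/(2p-1)}\tilde F(\eps^{p/(2p-1)}\d_x h_\eps)$ is controlled using $|\tilde F(u)|\le|u|^{6p}$, which makes it an honest error of positive order in $\eps$ once the a priori bounds on $\d_x h_\eps$ in the $\CC^{\gamma,\eta}_\eps$-norm are in force.

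The second step is the heart of the matter: the terms $\lambda Z_\eps\xi^{(\eps)}$ and the dominant nonlinearity (the $k=0$ term, carrying $\eps^{p-1}$) do \emph{not} individually converge — each requires renormalization by the diverging constants encoded in \eref{e:defrhopicture}. Here I would expand $\d_x h_\eps$ against the noise in a (finite, since everything is polynomial) chaos expansion / Wick-product decomposition, using the graphical calculus set up around \eref{e:defrhopicture}: each factor $(\d_x h_\eps)^2$ produces, at leading order, a contraction worth $C_0/\eps$ plus a Wick-ordered remainder. A $2(p+k)$-th power then produces combinatorial factors counting pairings; the number of full pairings of $2p$ legs is $(2p)!/(2^p p!)$, which is exactly the combinatorial constant appearing in $\lambda$, and the partially-contracted terms with $2(p+k)$ legs yield the constants $C_0^k (2k)!/(2^k k!)$ in $C_\eps$ after accounting for the extra $\eps$-powers $\eps^{p-1+2pk/(2p-1)}$; the scales are arranged precisely so that only $k=0,\dots,p-1$ survive (contributing to $C_\eps$) with $k=0$ producing the finite limit $\lambda(\d_x h)^2$-type term — i.e. after the height shift by $(C_\eps+c)t$ the renormalized nonlinearity converges to $\lambda Z(\d_x\log Z)^2 = \lambda^{-1}(\d_x Z)^2/Z$, and terms $k\ge p$ carry strictly positive powers of $\eps$ and vanish. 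I would invoke the a priori estimates in $\CC^{\gamma,\eta}_\eps$ (the choice $\gamma = 2 - 1/(96p)$, $\eta$ near $1/2$ is tuned so these powers have enough room) to bound all the Wick remainders and the $\tilde F$ term, and the assumed convergence $\|h_0^{(\eps)},h_0\|_{\gamma,\eta;\eps}\to 0$ to handle initial data; then the stability/continuity of the solution map for \eqref{eq:SHE} under the relevant topology, together with tightness, upgrades this to convergence in probability of $Z_\eps$ to $Z^{(\lambda)}$, and hence of $h_\eps-(C_\eps+c)t = \lambda^{-1}\log Z_\eps - (C_\eps+c)t$ to $h_\Hopf^{(\lambda)}$.

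The main obstacle I expect is the uniform-in-$\eps$ control of the transformed equation at the level of the $\CC^{\gamma,\eta}_\eps$ spaces: unlike the pure SHE, the nonlinear terms are $\eps$-dependent and, before renormalization, large, so one cannot directly quote the classical Bertini–Giacomin/Mueller/regularity-structures estimates. One must show that the ``bad'' parts of the nonlinearity (the products of $\d_x h_\eps$ that scale like negative powers of $\eps$) really do organize into the explicit divergent constants $C_\eps$ plus genuinely small remainders, and that the remaining implicit nonlinearity $\eps^{-(3p-1)/(2p-1)}\tilde F(\eps^{p/(2p-1)}\d_x h_\eps)$ stays negligible — all \emph{simultaneously} with propagating the a priori bound that makes the expansion legitimate. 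This is a bootstrap: one needs the bound to do the expansion, and the expansion to close the bound. Carrying this through — presumably via the maximum principle for the $h$-equation to get an $L^\infty$ bound on $h_\eps$, hence on $Z_\eps$ and $1/Z_\eps$, combined with Schauder estimates for the small-scale $\CC^\gamma$ part — is where the real work lies; the combinatorial identification of $\lambda$ and $C_\eps$ is then a bookkeeping exercise in the graphical notation of \eref{e:defrhopicture}.
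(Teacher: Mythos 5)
The central step of your proposal fails: the Hopf--Cole transform does not linearize \eref{e:rewritten} when $p>1$. Setting $Z_\eps=\exp(\lambda h_\eps)$ gives
$\d_t Z_\eps = \d_x^2 Z_\eps - \lambda^2 Z_\eps(\d_x h_\eps)^2 + \lambda Z_\eps \sum_{k} a_{p+k}\eps^{p-1+{2pk\over 2p-1}}(\d_x h_\eps)^{2(p+k)} + \lambda Z_\eps\xi^{(\eps)} + \ldots$,
and since $F$ has no quadratic term there is nothing in the equation to absorb the term $-\lambda^2 Z_\eps(\d_x h_\eps)^2$ created by the transform; that term alone diverges like $C_0/\eps$, while its would-be compensation is the \emph{emergent} quadratic part of the Wick-renormalized $2p$-th power. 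So the transformed equation is not ``SHE plus errors of positive order in $\eps$'': it is \eqref{eq:SHE} plus two separately divergent contributions whose mutual cancellation is essentially the statement being proved, and one cannot invoke stability of the solution map for \eqref{eq:SHE}. (This is exactly why the paper stresses that Hopf--Cole is unavailable for the non-quadratic models.) Two further steps of your sketch are not legitimate as stated. First, $\d_x h_\eps$ is \emph{not} an element of a finite Wiener chaos -- $h_\eps$ solves a nonlinear equation -- so ``each $(\d_x h_\eps)^2$ produces a contraction worth $C_0/\eps$ plus a Wick remainder'' presupposes a perturbative expansion of $h_\eps$ around the linearized solution with a remainder controlled uniformly in $\eps$; that is precisely the hard content you defer to ``where the real work lies.'' Second, the maximum-principle bootstrap cannot close: at best it controls $\sup|h_\eps|$ after subtracting the (a priori unknown) diverging constants, and it gives no control on $\d_x h_\eps$ in the negative-regularity norms needed to define and renormalize its high powers; the introduction notes explicitly that it is unclear how to combine the maximum principle with the small-scale analytic estimates required here.

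For comparison, the paper never transforms the equation. It re-expands the (polynomial part of the) nonlinearity in Hermite form with $\eps$-dependent coefficients $\hat a_k^{(\eps)}$ -- this is where your (correct) combinatorial constants in $\lambda$ and $C_\eps$ actually arise -- and solves the abstract fixed point problem \eref{e:abstrFPWeak} in the $\eps$-dependent spaces $\CD^{\gamma,\eta}_\eps$, handling the non-polynomial remainder $\tilde F$ through the reconstruction operator together with the bounds \eref{e:boundRU'} and \eref{e:wantedNoise}; this is where the hypotheses $F\in\CC^{6p+1}$ and $\eta>{1\over 2}-{1\over 12p^2}$ are used. The limit $\eps\to0$ is then taken via the convergence of the renormalized models (Theorem~\ref{theo:convModel}) combined with Theorem~\ref{theo:FP}, yielding the limiting fixed point \eref{e:abstrFPLimit} in which only the quadratic term survives, and the limit is identified with $h_\Hopf^{(\lambda)}$ by Propositions~\ref{prop:solutionRenormalised} and \ref{prop:CH}, i.e.\ by comparison with the known quadratic-KPZ approximation result rather than by a direct argument for the transformed equation. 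To salvage your outline you would have to supply exactly these missing ingredients: a uniform-in-$\eps$ local solution theory in a topology that distinguishes scales above and below $\eps$, convergence of the finitely many renormalized stochastic objects (including the logarithmically divergent constants whose cancellation produces the finite shift $c$), and an independent identification of the limit.
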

 
%
%
\subsection{Weakly asymmetric scaling}

Let us now consider the weakly asymmetric regime
\begin{equ}[e:weakAsym]
\d_t h = \d_x^2 h + \sqrt \eps F(\d_x h) + \xi_1\;.
\end{equ}
In this case, provided that $f$ has a non-vanishing second derivative as before,
the natural scaling is given by $\tilde h_\eps(x,t) = \eps^{1\over 2}h(\eps^{-1}x, \eps^{-2}t)$.
With such a scaling, we obtain for $\tilde h$ the equation
\begin{equ}[e:weakAsymRescaled]
\d_t \tilde h_\eps = \d_x^2 \tilde h_\eps +  \eps^{-1}  F(\eps^{1\over 2}\d_x \tilde h_\eps) + \xi^{(\eps)}\;.
\end{equ}
Formally replacing $f$ by its Taylor series as before and neglecting terms of positive order in $\eps$,
we obtain this time 
\begin{equ}
\d_t \tilde h_\eps = \d_x^2 \tilde h_\eps +  \eps^{-1}a_0  + a_1 (\d_x \tilde h_\eps)^2 + \CO\bigl(\eps (\d_x \tilde h_\eps)^4\bigr) + \xi^{(\eps)}\;.
\end{equ}
Comparing this to \eref{e:weakNoiseRescaled}, we see that now the ``error term'' is much larger,
so that it is less clear whether this still converges to the KPZ equation. It turns out that it still does,
but the ``error terms'' do not vanish in the limit. Instead, at all orders they  contribute  to the limiting asymmetry
constant $\lambda$ of the KPZ equation \eqref{e:KPZ}.

%
%
%

\begin{theorem}\label{theo:weakAsym}
Let $F\colon \R \to \R$ be an even polynomial of degree $2m$, let $\eta\in (\tfrac12 - \frac{1}{4m}, \tfrac12)$ and $\gamma=2-\tfrac1{32m}$, and let $h_0^{(\eps)}$ be a sequence of functions in $\CC_\eps^{\gamma,\eta}$ such
that there exists $h_0 \in \CC^\eta$ with $\lim_{\eps \to 0}\|h_0^{(\eps)}; h_0\|_{\gamma,\eta;\eps} = 0$ in probability.
Let $h_\eps$ be the solution to
\begin{equ}[e:approxSolution]
\d_t h_\eps = \d_x^2 h_\eps + \eps^{-1}F\bigl(\sqrt \eps \d_x h_\eps\bigr) + \xi^{(\eps)}\;,
\end{equ}
where $\xi^{(\eps)}$ is as in \eref{e:xiEps}. Let $C_0$ be given by \eref{e:defC0}, let $\mu_0$ be the
centred Gaussian measure on $\R$ with variance $C_0$, and let
\begin{equ}[e:defLambda]
\lambda = {1\over 2}\int F''(x)\,\mu_0(dx)\;,\qquad
\hat \lambda = \int F(x)\,\mu_0(dx)\;.
\end{equ}
Then there exists a constant $c$ such that, for every $T > 0$, the family of random functions
$(t,x) \mapsto h_\eps(t,x) - (\eps^{-1}\hat \lambda  + c)t$ converges to $h_\Hopf^{(\lambda)}$
in probability in $\CC^\eta([0,T] \times S^1)$.
\end{theorem}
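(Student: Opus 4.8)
The plan is to pass through the Cole--Hopf transform and reduce the statement to the (by now classical) stability of the mollified multiplicative stochastic heat equation, the genuinely new ingredients being a uniform-in-$\eps$ a priori bound for \eref{e:approxSolution} in the spaces $\CC^{\gamma,\eta}_\eps$ and a Wiener chaos analysis of $F^{(j)}$ evaluated at the slope of the stochastic convolution. Suppose first $\lambda \neq 0$. Since $\xi^{(\eps)}$ is smooth, $h_\eps$ is a classical solution, and writing $\tilde h_\eps = h_\eps - \eps^{-1}\hat\lambda\,t$ and $Z_\eps = \exp(\lambda \tilde h_\eps)$ the chain rule gives
\begin{equ}
\d_t Z_\eps = \d_x^2 Z_\eps + \lambda Z_\eps\,\xi^{(\eps)} + Z_\eps\,V_\eps\;,\qquad V_\eps = \lambda\eps^{-1}\bigl(F(\sqrt\eps\,\d_x h_\eps)-\hat\lambda\bigr) - \lambda^2(\d_x h_\eps)^2\;.
\end{equ}
The core of the argument is the claim that there exist $c\in\R$ and $\kappa>0$ such that $V_\eps = -\lambda^2 C_0\eps^{-1} + \lambda c + o(1)$ in $C([0,T],\CC^{-\kappa}(S^1))$, with probability tending to $1$. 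Granting this, the equation for $Z_\eps$ is a vanishing perturbation of $\d_t Z = \d_x^2 Z + \lambda Z\xi^{(\eps)} - \lambda^2 C_0\eps^{-1}Z + \lambda c\,Z$; the symmetry of $\rho$ together with the formula \eref{e:defC0} for $C_0$ is exactly what makes the first three terms produce, as $\eps\to0$, the It\^o solution $Z^{(\lambda)}$ of \eref{eq:SHE} (cf.\ \cite{MR1462228}). Combining this with continuity of the It\^o solution map in the multiplicative potential, with the strict positivity of $Z$, and with the convergence $h_0^{(\eps)}\to h_0$ carried through the exponential, one gets $Z_\eps\,e^{-\lambda c\,t}\to Z^{(\lambda)}$ in probability in $C([0,T],\CC^\alpha)$ for $\alpha<\tfrac12$; applying $\tfrac1\lambda\log$ and absorbing $c$ yields the statement. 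If $\lambda=0$ the Cole--Hopf trick is degenerate, but the same estimates show that the nonlinear forcing in \eref{e:approxSolution}, after removal of the drift $\eps^{-1}\hat\lambda\,t$, tends to $0$, so $h_\eps$ converges directly to the additive stochastic heat equation, which is $h_\Hopf^{(0)}$ by convention.

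To control $V_\eps$ I would decompose $h_\eps = X_\eps + v_\eps$, where $X_\eps$ is the stochastic convolution of $\xi^{(\eps)}$ and $v_\eps$ is the remainder solving the induced random quasilinear equation with data $h_0^{(\eps)} - X_\eps(0)$. The key deterministic statement is that, with probability tending to $1$ and uniformly on $[0,T]$, $v_\eps$ is bounded in $\CC^{\gamma,\eta}_\eps$ --- so that $\d_x v_\eps$ is bounded in $\CC^{\gamma-1}$, hence of order one and H\"older continuous --- and $\sqrt\eps\,\d_x h_\eps$ stays in a fixed compact set. I expect this to be the main obstacle. The only robust tool for \eref{e:approxSolution} is the comparison principle, which does apply because the slope $\d_x h_\eps$ satisfies a quasilinear parabolic equation, and this yields $L^\infty$ control of $\d_x h_\eps - \d_x X_\eps$; the hard part is to upgrade it to the $\CC^{\gamma,\eta}_\eps$-bound on $v_\eps$ by combining it with the Schauder estimates of the heat semigroup measured in the $\eps$-weighted norms \eref{e:defh}--\eref{e:diffHolder}, all within a continuity (bootstrap) argument, the exponents $\gamma=2-\tfrac1{32m}$ and $\eta\in(\tfrac12-\tfrac1{4m},\tfrac12)$ being tuned precisely so that the $2m$ powers of $\d_x h_\eps$ hidden inside $F$ can be reabsorbed without destroying the regularising gain.

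On the probabilistic side, write $\sigma_\eps^2 = \E[(\d_x X_\eps)^2] = C_0\eps^{-1}+\CO(1)$ by \eref{e:defrhopicture}. Decomposing each $F^{(j)}(\sqrt\eps\,\d_x X_\eps)$ into Wiener chaos and estimating the explicit covariances by hypercontractivity and Kolmogorov's criterion, one finds: $\eps^{-1}\E[F(\sqrt\eps\,\d_x X_\eps)] = \eps^{-1}\hat\lambda+\CO(1)$; the second-chaos component of $\eps^{-1}F(\sqrt\eps\,\d_x X_\eps)$ converges in $\CC^{-\kappa}$ to $\lambda\,{:}(\d_x X)^2{:}$ with prefactor the single non-trivial pairing sum $\sum_k a_k (2k)!\,C_0^{k-1}/(2^k(k-1)!) = \tfrac12\int F''\,d\mu_0$, i.e.\ exactly the $\lambda$ of \eref{e:defLambda}; the first-chaos component of $\eps^{-1/2}F'(\sqrt\eps\,\d_x X_\eps)$ equals $2\lambda\,\d_x X_\eps$ up to $o(1)$; $F''(\sqrt\eps\,\d_x X_\eps)\to\int F''\,d\mu_0 = 2\lambda$ in every $L^p$; and every remaining chaos component, as well as $\eps^{j/2-1}F^{(j)}(\sqrt\eps\,\d_x X_\eps)$ for $j\ge3$, vanishes in $\CC^{-\kappa}$. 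The mechanism --- and the source of the ``surprising'' value of $\lambda$ --- is that each Wick contraction contributes a factor $\sigma_\eps^2\sim C_0\eps^{-1}$, which the powers of $\eps$ coming from the rescaled slope $\sqrt\eps\,\d_x h$ balance exactly on the surviving terms and oversuppress on all the others; in particular every coefficient of $F$ feeds into $\lambda$, not only the quadratic one.

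Finally I would insert $\d_x h_\eps = \d_x X_\eps + \d_x v_\eps$ into $V_\eps$ and Taylor expand $F$ about $\sqrt\eps\,\d_x X_\eps$ (a finite sum, $F$ being a polynomial of degree $2m$), to the effect that
\begin{equs}
V_\eps &= \Bigl[\lambda\eps^{-1}\bigl(F(\sqrt\eps\,\d_x X_\eps)-\hat\lambda\bigr) - \lambda^2(\d_x X_\eps)^2\Bigr] + \Bigl[\lambda\eps^{-1/2}F'(\sqrt\eps\,\d_x X_\eps) - 2\lambda^2\d_x X_\eps\Bigr]\d_x v_\eps \\
&\quad + \Bigl[\tfrac\lambda2 F''(\sqrt\eps\,\d_x X_\eps) - \lambda^2\Bigr](\d_x v_\eps)^2 + \sum_{j\ge3}\frac{\lambda\eps^{j/2-1}}{j!}\,F^{(j)}(\sqrt\eps\,\d_x X_\eps)(\d_x v_\eps)^j\;.
\end{equs}
By the chaos analysis the first bracket equals $\lambda^2\,{:}(\d_x X_\eps)^2{:}-\lambda^2(\d_x X_\eps)^2+\lambda c+o(1) = -\lambda^2 C_0\eps^{-1}+\lambda c+o(1)$; in the second bracket the first-chaos term $2\lambda^2\d_x X_\eps$ cancels, leaving an $o(1)$ distribution which, multiplied by the bounded H\"older function $\d_x v_\eps$ from the a priori step, is still $o(1)$; in the third bracket $\tfrac\lambda2\cdot 2\lambda$ cancels $\lambda^2$, leaving $o(1)$ times $(\d_x v_\eps)^2$; and the $j\ge3$ sum is $o(1)$ since $\eps^{j/2-1}\to0$ while $F^{(j)}(\sqrt\eps\,\d_x X_\eps)$ and $(\d_x v_\eps)^j$ stay bounded. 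This yields $V_\eps = -\lambda^2 C_0\eps^{-1}+\lambda c+o(1)$, which proves the claim and hence the theorem.
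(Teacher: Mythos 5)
Your scheme (direct Cole--Hopf plus a chaos expansion of $F^{(j)}(\sqrt\eps\,\d_x X_\eps)$ and a perturbation argument for the mollified SHE) is genuinely different from the paper's route, but it has two gaps, and each of them is precisely the difficulty the paper's machinery is built to handle. The first is the a priori bound: you need $v_\eps = h_\eps - X_\eps$ to be bounded in $\CC^{\gamma,\eta}_\eps$ uniformly in $\eps$, ``so that $\d_x v_\eps$ is bounded in $\CC^{\gamma-1}$, hence of order one and H\"older continuous''. This is false, not just unproven: already at second order $\d_x v_\eps$ contains $\lambda\, K'*{:}(\d_x X_\eps)^2{:}$, i.e.\ the object $\<20>$, whose homogeneity is $0^-$; its supremum norm diverges (at least) like $\sqrt{\log(1/\eps)}$ and its $\CC^{\delta}$-norm diverges polynomially for any fixed $\delta>0$, while $\gamma-1 = 1-\tfrac1{32m}$. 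The spaces $\CC^{\gamma,\eta}_\eps$ only encode extra smoothness \emph{below} scale $\eps$ and are used in the paper for initial data and for restarting, not as uniform bounds on $h_\eps - X_\eps$. Nor is there a known way to extract even the weaker true statements from the comparison principle: the slope equation is a conservation law whose flux is evaluated at $\d_x v_\eps + \d_x X_\eps$ with $\|\d_x X_\eps\|_{L^\infty}\sim\eps^{-1/2}$, and producing barriers uniformly in $\eps$ is exactly the obstruction flagged in the introduction; overcoming it is what the $\CD^{\gamma,\eta}_\eps$/$\MM_\eps$ fixed point theory (Theorem~\ref{theo:FP}) is for.

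The second gap is structural and would persist even if the a priori bound were available: the ``error'' brackets are not $o(1)$, and the constant $c$ does not come only from the zeroth chaos of $\eps^{-1}F(\sqrt\eps\,\d_x X_\eps)$. For $\deg F\ge 4$ the fluctuating part of $\tfrac12 F''(\sqrt\eps\,\d_x X_\eps)$ is of order $\eps\,{:}(\d_x X_\eps)^2{:}$; multiplied by $(\d_x v_\eps)^2 \approx \lambda^2\bigl(K'*{:}(\d_x X_\eps)^2{:}\bigr)^2$, the connected Wick contraction is a $\<22>$-type diagram of size $\eps\cdot\eps^{-1}=O(1)$, so your third bracket has a non-vanishing limit; the cross terms in your second bracket likewise generate $\<211>$-type resonances. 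These are exactly the individually log-divergent constants $C_2^{(\eps)},C_3^{(\eps)}$ of \eqref{e:defRenormConst} and the constants $C_\tau^{(\eps)}$, $\tau\in\Bad$ (see \eqref{e:tauBB}), whose cancellation pattern and finite limits require Theorem~\ref{theo:logs}; they cannot be dismissed as ``a vanishing distribution times a bounded function'', because the relevant products are resonant products of a vanishing negative-regularity factor with the non-H\"older part of $\d_x v_\eps$, which must be renormalised rather than estimated by norms (one such resonance only vanishes thanks to the spatial symmetry of $\rho$, cf.\ \eqref{e:iden1}). This is why, as remarked after the theorem, $c$ depends trilinearly on \emph{all} coefficients of $F$, whereas your $c$ is produced by the first bracket alone. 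Finally, the closing step --- ``continuity of the It\^o solution map in the multiplicative potential'' given only $C([0,T],\CC^{-\kappa})$ smallness of $Z_\eps V_\eps'$ --- is not an off-the-shelf fact either; the paper avoids it by proving convergence of renormalised models (Section~\ref{sec:convModel}) and identifying the limit through Proposition~\ref{prop:CH}.
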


\begin{remark}
At this stage, it seems very difficult to obtain uniform moment bounds on 
solutions to \eref{e:approxSolution} as $\eps \to 0$. Therefore, it is unrealistic 
to expect a much stronger notion of convergence than convergence in probability.
\end{remark}

\begin{remark}
We would like to emphasize again that a na\"\i ve guess would be that,
after being appropriately centred, $h_\eps$ converges to $h_\Hopf^{(\lambda)}$
with $\lambda = a_1$. It is plain from \eref{e:defLambda} that this is \textit{not} 
the case. Instead, each of the higher order terms yields a non-trivial contribution
in the limit, although they formally disappear as $\eps \to 0$ 
in \eref{e:approxSolution}. Another remark is that the constant 
$\hat \lambda$ which determines
the average speed of the interface $h_\eps$ 
is in general \textit{different} from $\eps^{-1}C_0 \lambda$,
which is what one would obtain when replacing the nonlinearity
by $\lambda (\d_x h_\eps)^2$. Finally, note that the additional constant $c$
that needs to be subtracted in order to obtain the Hopf-Cole
solution depends in a very non-trivial (actually trilinear) way on 
all of the coefficients of $P$.
\end{remark}

\begin{remark}
A piece of physics lore is that white noise is invariant for the generalized stochastic Burgers equation
\begin{equ}[e:generalizedburgers1]
\d_t u = \d_x^2 u + \d_x F\bigl( u) + \d_x\xi\;,
\end{equ}
for any polynomial $F$.  Here one simply thinks of $u=\d_x h$ and $h$ is then a solution to the polynomial
KPZ, which, as we learn in this article, simply means quadratic KPZ with a non-trivially renormalized $\lambda$.
So the invariance of the white noise for \eqref{e:generalizedburgers1} would appear to be a statement with little new content beyond the white noise invariance for the quadratic case.  It is worth remarking however that if we convolve the noise in space only: $\xi^{(\eps)} (t,x) = \int \xi(t, y) \rho_\eps(x+y) dy$ where $\rho$ is a non-negative, symmetric function of total integral $1$ and $\rho_\eps(x) =\eps^{-1}
\rho(\eps^{-1}x)$, then white noise convolved with $\rho_\eps$ is always invariant for the approximating equation
\begin{equ}[e:generalizedburgers]
\d_t u_\eps = \d_x^2 u_\eps + \d_xC_{2,\eps}\left( F\bigl( u_\eps)\right) + \d_x\xi^{(\eps)}\;,
\end{equ}
where $C_{2,\eps}f$ denotes convolution with the $\eps$-rescaling of $\rho_2 = \rho\ast\rho$ (and  also the covariance operator of $\xi^{(\eps)}$.)  This can be shown by adapting \cite[Thm. 2.1]{Funaki:2014aa}, which makes the following argument about the Burgers flow $\d_t u_\eps = \d_x C_{2,\eps}\left( F\bigl( u)\right)$ rigorous:
\begin{equs}
&\partial_t \int f(u(t))e^{-\frac{1}{2}\langle u, C^{-1}_{2,\eps}u\rangle}
 =   \int \langle \frac{\delta f}{\delta u},  
\d_xC_{2,\eps}(F(u))\rangle 
e^{-\frac{1}{2}\langle u, C^{-1}_{2,\eps}u\rangle}\\ & \qquad
=- \int f\Big\langle \frac{\delta }{\delta u} \Big( 
\d_xC_{2,\eps}(F(u))
e^{-\frac{1}{2}\langle u, C^{-1}_{2,\eps}u\rangle}\Big)\Big\rangle =0\,, 
\end{equs}
where $\langle f, g\rangle =\int fgdx$ and the last term vanishes because of the following:  By Leibniz rule
$ \frac{\delta }{\delta u}  (
\d_xCF
e^{-\frac{1}{2}\langle u, C^{-1}u\rangle})$$= $$\frac{\delta }{\delta u}  (
\d_xCF)
e^{-\frac{1}{2}\langle u, C^{-1}u\rangle} $$+ $$
\d_xCF\frac{\delta }{\delta u}  (
e^{-\frac{1}{2}\langle u, C^{-1}u\rangle})$.  The first   term $\frac{\delta }{\delta u}  
\d_xC_{2,\eps}(F(u))= C_{2,\eps}(F^{''}(u)\d_x u)= \d_xC_{2,\eps}(F^{'}(u))$ integrates to zero because
it is an exact derivative.  The second is as well, but this uses the more subtle fact that $\frac{\delta }{\delta u} e^{-\frac{1}{2}\langle u, C^{-1}_{2,\eps}u\rangle} =C^{-1}_{2,\eps}u$ and $\langle \d_xC_{2,\eps}(F(u)) , C^{-1}_{2,\eps}u\rangle= \langle F'(u) \partial_xu, u\rangle=0$ because if $G'(u)=uF'(u)$ then $\partial_x G(u) = uF'(u) \partial_xu$.
\end{remark}

\subsection{Possible generalisations}

Although the class of models \eref{e:growth} considered in this article is quite rich, 
we have placed a number of rather severe restrictions on it and it is legitimate to ask
whether they are genuinely necessary for our universality result to hold. We now
discuss a number of these restrictions and possible strategies for relaxing them.

\begin{enumerate}
\item {\bf Regularity of $F$.} In the weakly asymmetric limits we assume that $F$ is a polynomial.
  The formulation of Theorem~\ref{theo:weakAsym} suggests that this is not an essential
  assumption since the limiting values of $\lambda$  and 
  $\hat \lambda$ appearing in the statement are finite for any function (even distribution)
  $F$ which is sufficiently tame at infinity. This is a strong hint that it is probably
  sufficient to impose that $F$ satisfies a suitable growth condition and is 
  locally Lipschitz continuous. It is not clear at this
  stage however if and how the theory of regularity structures used in this article could be 
  tweaked to cover this case.  
  
  The restriction to \emph{even} polynomials is natural
  because of the lack of a preferred direction, but it is not 
  really important for our proof.  Odd polynomials 
  produce large spatial shifts, which simply add a layer of complication to the argument.  
  It is important  to note that we are not using the large scale convexity of 
  the even polynomial in any way; none of our arguments use convexity at all.
  
  \item {\bf Gaussianity of $\xi^{(\eps)}$.} At the microscopic level, there is no a priori reason for the 
  randomness to be described by Gaussian noise. One may ask whether the arguments in this article
  still hold if $\xi$ is an arbitrary smooth and stationary space-time random field with suitable
  integrability and mixing conditions. (Think of conditions similar to those considered 
  in \cite{PP12,HPP13}.) The only part of the paper where we use Gaussianity is in 
  Section~\ref{sec:convModel}. In principle, one would expect these results to hold also for suitable non-Gaussian
  noises (with the same limit). This was done in \cite{HaoShen} for the particular case when
  $F$ is quadratic, but the technique employed there should also work for the general case.
  
  \item {\bf Smoothing mechanism.} One could replace the smoothing mechanism $\d_x^2$ in \eref{e:weakAsym}
  by a more general (pseudo-)differential operator of the type $Q(i\d_x)$ for an even polynomial
  (or suitable smooth function) $Q$. Provided that $Q(0) = 0$, $Q''(0) < 0$, and 
  $\lim_{|k| \to \infty} Q(k) = -\infty$,
  one would expect essentially the same results to still hold true since the large-scale behaviour
  of the fundamental solution for $\d_t - Q(i\d_x)$ is still described by the heat kernel. 
  Unfortunately, the convergence of the rescaled fundamental solutions does not take place in a topology allowing
  to easily reuse the results of \cite{Regularity} in this case, although one would still
  expect the general theory to apply, at least for some choices of $Q$.
\item {\bf Symmetry.} Our model is symmetric for the reflection $x \mapsto -x$. This symmetry could
  be broken by considering uneven nonlinearities $F$ or, in one of the previously discussed generalisations,
  by considering asymmetric processes $\xi$ or uneven functions $Q$. The expectation is that in this
  case one should consider limits of the type $\tilde h_\eps(x-c_\eps t, t) - C_\eps t$, where the constant
  $c_\eps$ is also allowed to diverge. The correct choice of these diverging constants should however
  again lead to the Hopf-Cole solution of the KPZ equation. 
\item {\bf The ``balanced'' weakly asymmetric case.} In the weakly asymmetric case, it may happen
  that the constant $\lambda$ in \eref{e:defLambda} is equal to $0$. This situation is non-generic
  as it requires a very fine balance between all ingredients of the model (since the variance of $\mu$
  in \eref{e:defLambda} depends on the details of both the noise and the smoothing mechanism). 
  In this situation, our results imply that the limiting process
  is given by the (additive) stochastic heat equation. One might ask whether, similarly to the intermediate disorder case,
  it is then possible to consider the model on larger scales and still obtain convergence to KPZ
  (or some other non-Gaussian process). By analogy with what happens in the context of lattice gases,
  we do not expect this to be the case \cite{Diffusive}. 
  
  \item  {\bf Unbounded space.}  Our results are on a finite interval with periodic boundary conditions, and 
  extending them to the real line represents a challenge.  Recently, \cite{Hairer:2015ab} introduced weighted spaces
  allowing the extension of the results on convergence of smoothed noise approximations of the quadratic KPZ equation to the whole line.  However, these use in an essential way the Hopf-Cole transformation, which is not
  available for the non-quadratic versions considered in this article.

%
\end{enumerate}

\subsection{Standing assumptions and terminology}
\label{1point5}

Throughout the article, we will consider stochastic processes $h$ taking values in some
Banach space $\BB$. (Typically a space of periodic H\"older continuous functions on $\R$.)
Since we consider equations with polynomially growing coefficients, we allow for solutions with
a finite lifetime $\tau$ such that $\lim_{t \to \tau} \|h(t)\|_\BB = \infty$ on $\{\tau < \infty\}$.  One way of formalising this is to consider, for each $T > 0$, the space
$\bar \CC_T(\BB)$ of continuous $\BB$-valued functions $h \colon [0,T] \to \BB$ endowed with 
 a ``point at infinity'' $\infty$ for which we postulate that
\begin{equ}
d(h, \infty) = d(\infty,h) =  \Bigl(1 + \sup_{t \le T} \|h(t)\|_\BB\Bigr)^{-1}\;.
\end{equ}
For any two elements $h,\tilde h \neq \infty$,
we then set 
\begin{equ}
d(h,\tilde h) = d(h,\infty) \wedge d(\tilde h,\infty) \wedge \sup_{t \le T} \|h(t) - \tilde h(t)\|_\BB\;.
\end{equ}

For fixed $T$, we can then view a process $h$ with lifetime $\tau$ as a random variable 
in $\bar \CC_T(\BB)$ with the understanding that it is equal to $\infty$ if $\tau \le T$.
Throughout the remainder of this article, when we state that a sequence of $\BB$-valued
processes $h_\eps$ converges in probability to a limit $h$, this is a shorthand for the fact that the
corresponding $\bar \CC_T(\BB)$-valued random variables converge for every choice of 
final time $T> 0$.

Throughout the text, we will make use of the parabolic distance on space-time: 
if $z=(t,x)$ then we write $|z| = \|z\|_\s= |t|^{1/2} + |x|$.  We always work on a  domain $z\in [-1,T+1]\times S^1$ where
$S^1=[0,L)$ with periodic boundary conditions, and we will often write $\sup_z$ to mean the supremum over 
$z$ in this compact set without further comment.  The time interval here is chosen to be large enough to strictly contain $[0,T]$ where are convergence results take place.

We will also use $\lesssim$ throughout to indicate a bound of the left side by a constant multiple of the right side with a constant independent of the relevant quantities.  When necessary, these will be indicated explicitly.  
\label{standing}

\subsection*{Acknowledgements}

{\small
MH gratefully acknowledges financial support from the Leverhulme trust,
the Royal Society, the ERC, and the Institute for Advanced Study.  JQ gratefully 
acknowledges financial support from the Natural Sciences and Engineering Research 
Council of Canada, the I.~W. Killam Foundation, and the
Institute for Advanced Study.
}

\section{Methodology}

In order to prove theorems~\ref{theo:convDisorder} and \ref{theo:weakAsym}, we 
make use of the theory
of regularity structures as developed in \cite{Regularity,Notes}.
Let us rapidly recall the main features of this theory. The main
idea is to replace the usual $\CC^\gamma$ spaces of H\"older continuous functions
by  analogues $\CD^\gamma$ obtained by extending the usual 
Taylor polynomials with the  addition of a few special universal  processes 
built from the driving noise.

When trying to follow the methodology developed in \cite{Regularity},
there are two principal  obstacles that must be overcome:
\begin{enumerate}
\item In \eref{e:weakAsymRescaled}, the
parameter $\eps$ appears in two places:  In the regularisation
of the noise, and multiplying the nonlinearity. If one tries to brutally cast this into 
the framework of \cite{Regularity}, one might try to deal with arbitrary polynomial
nonlinearities andview the multiplicative $\eps$ as simply a parameter of the equation.
This is bound to fail since the KPZ equation with a higher than quadratic nonlinearity  fails to satisfy the assumption of local subcriticality
which is key to the analysis of \cite{Regularity}. 

\item Since polynomials of arbitrary degree are allowed in the right hand side
of \eref{e:weakAsym}, the number of objects that need to be explicitly controlled in the limit
$\eps \to 0$ can be very large. In the original article \cite{KPZ}, almost half of the article
was devoted to the control of only five such objects. This was substantially improved in
\cite{Regularity}, but we heavily exploited the fact that most of the objects that require
control for a solution theory to $\Phi^4_3$ can be decomposed as products of convolutions
of integral kernels, for which general bounds exist. In our case, 
we have to deal with generalised convolutions which cannot be broken into simple
convolutions and products. 
\end{enumerate}

The second of these is more of a practical nature, and Appendix~\ref{sec:bounds} contains a very general bound
which allows one to control such generalised convolutions, even in the presence of certain
renormalisation procedures. This bound is then used in Section~\ref{sec:convModel} to give
a relatively short proof of the convergence of the required objects as $\eps \to 0$.
It has also been used in the article \cite{Etienne} to control the necessary objects
to provide a Wong-Zakai theorem for a natural class of SPDEs.

 The first obstacle above is the main new conceptual difficulty.  In a sense, the main point of the regularity structure  in \cite{Regularity} is to remove the $\eps$-regularization of the noise from the problem: The equation with an arbitrary smooth noise forcing it is simply lifted to the $\eps$-independent abstract space.
In this way $\eps$ just takes the role of a parameter in the lifts.
However in the present case, the equation itself is also $\eps$ dependent.  So what we want to do is, as much as possible, separate the two $\eps$'s.
 
To accomplish this, we  build an extension  of the type of regularity structure used in \cite[Sec.~8]{Regularity}   which contains an additional abstract symbol
$\Eps$ representing the multiplicative parameter $\eps$ appearing in the nonlinearity, but \emph{not} the $\eps$ in the noise.
The resulting regularity structure is described in Section~\ref{sec:structure}
and the corresponding renormalisation procedure is described in Section~\ref{sec:renorm}.   To this regularity structure we lift the equation with an arbitrary smooth forcing noise, which does \emph{not} depend on $\eps$.

Only in Section~\ref{sec:convModel}, when we prove convergence of the models, do  we again take the special noise
depending on $\eps$  as in \eqref{e:xiEps}.  For symbols which do not contain $\Eps$ this choice is unnecessary.  But symbols containing $\Eps$ cannot, of course, converge except for this particular choice of approximating noises (or something relatively close). 

This turns out to be possible as long as the initial condition is sufficiently smooth.  However, the results would then
only be valid up to some finite (random) lifetime.  To avoid this, we use the fact that the limit can be identified with
the Hopf-Cole solution of KPZ, which we know independently is global in time.   The difficultly is that one is then  
forced to start with typical data.  This would be slightly below H\"older $1/2$, thus leading to a 
singularity which ruins our fixed point argument.  What saves us is that because of the 
regularization of the noise, the solutions
are really smoother on a small scale than this na\"ive argument suggests. In order to be able 
to exploit this, we introduce 
$\eps$-dependent versions $\CD^{\gamma,\eta}_\eps$ of the $\CD^{\gamma,\eta}$ spaces, which are generalizations to space-time modelled distributions of weighted H\"older spaces, which, 
just like the $\CC^{\gamma,\alpha}_\eps$ spaces defined in \eqref{e:defh}, measure regularity 
differently at scales above and below $\eps$. The parameter $\eta$ appearing here
allows for possible blow-up as $t \to 0$, just as in \cite[Sec.~6]{Regularity} 
(see also Section~\ref{sec:epsilon}). So we cannot completely separate the two $\eps$'s, 
although we try to do it to  the largest extent possible.

As we see here, to a  certain degree, the $\eps$ is producing a small scale cutoff in the problem, below which 
things can be thought of as smooth.  This means that multiplication by $\eps$ effectively increases the homogeneity of a function by $1$, and hence our new symbol $\CE$ acts much like an integration operator.  There are technical differences however.   In the definition of admissible models, $\Pi_x\CI\tau$ is defined in terms of $\Pi_x\tau$ but $\Pi_x \CE\tau$ is not; in fact, there is much more freedom in how it is defined.  Also, $\CE$ doesn't need to kill polynomials.  More strikingly, $\CE$ is not really even an
operator on the regularity structure $\CT$.  The reason is that while we need objects such as $\CE( (\CI'(\Xi))^4)$
to describe the right hand side of our equation (where $\Xi$ is the lift of the noise), we do \emph{not} need
$ (\CI'(\Xi))^4$, and such an object would not converge, whatever the renormalization.  

One unfortunate consequence of these observations is that it makes the structure group highly non-trivial to construct.  However, there is a nice trick.  We construct a larger regularity structure $\CT_\ex$, which \emph{does}
contain objects such as $ (\CI'(\Xi))^4$, and on which $\CE$ acts much more simply as a linear map defined
on a subspace.  On this extended regularity structure, the structure group can be constructed as in \cite{Regularity},
using the formalism of Hopf algebras.  Of course, in $\CT_\ex$, things will not converge in the end, even after 
renormalization.  But our real regularity structure, on which things do converge, is simply a sector of $\CT_\ex$,
so the structure group of $\CT_\ex$ is defined on it by restriction.

\section{Construction of the regularity structure}
\label{sec:structure}

Since the weakly asymmetric case is the more difficult one, we will treat the intermediate
disorder scaling essentially as a perturbation of the weakly asymmetric one. The 
equation of interest is then
\begin{equ}[e:approxKPZ]
\d_th_\eps = \d_x^2h_\eps +  F_\eps\bigl(\d_x h_\eps\bigr)  - C_\eps+ \xi^{(\eps)},
\end{equ}
where $\xi^{(\eps)}$ denotes a regularised version of space-time white noise, the polynomial $F_\eps$ is of the form
\begin{equ}
F_\eps(u) = \sum_{j=1}^m a_j \eps^{j-1} u^{2j}\;,
\end{equ}
for some coefficients $a_j \in \R$ and some finite degree $m \ge 1$ and $P$ is the heat kernel.
Following the methodology of \cite{Regularity}, we would like to build a regularity structure that is
sufficiently large to be able to accommodate an abstract reformulation of \eref{e:approxKPZ}
as a fixed point problem in some space $\CD$ and which is stable in the limit $\eps\searrow 0$.

\subsection{The collection of symbols}
\label{sec:symbols}

Let us first recall how the construction works for the KPZ equation, where only the term with 
$j = 1$ appears in the nonlinearity. In this case, a regularity structure is built in the following way.
We write $\CU$ for a collection of symbols, or formal expressions, that will be useful to 
describe the solution $h$ as a function of space and time, $\CU'$ for a collection of symbols useful to describe its spatial distributional 
derivative $h' = \d_x h$, and $\CV$ for
a collection of symbols useful to describe the terms $F_\eps\bigl(\d_x h_\eps\bigr)  + \xi^{(\eps)}$ on the right hand side of the KPZ equation. 
We decree that $\CU$ and $\CU'$  contain at least symbols representing the usual Taylor
polynomials, i.e. all symbols of the form $X^k$ for $k$ a two-dimensional multiindex $k=(k_1,k_2)$, $k_i\in \{0,1,2,\ldots\}$, representing time and space.

Furthermore, we introduce a symbol $\Xi\in \CV$ describing the driving noise. Finally, we introduce abstract integration maps $\CI$ and $\CI'$ that
represent integration with respect to the heat kernel and its spatial derivative respectively.
In view of the structure of the KPZ equation, it is then natural to decree that 
\begin{equs}[e:rel]
\tau, \bar \tau \in \CU' \quad&\Rightarrow\quad \tau \bar \tau \in \CV\;,\\
\tau \in \CV \quad&\Rightarrow\quad \CI(\tau) \in \CU\;,\quad \CI'(\tau) \in \CU'\;,
\end{equs}
and to \textit{define} $\CU$, $\CU'$ and $\CV$ as the smallest collection of formal expressions
such that $\Xi \in \CV$, $X^k \in \CU$, $X^k \in \CU'$, and \eref{e:rel} holds. For consistency
with \cite{Regularity}, we furthermore decree that $\CI(X^k) = \CI'(X^k) = 0$. In other words,
we only keep formal expressions that do not contain $\CI(X^k)$ or $\CI'(X^k)$ as a sub-expression.
We also decree that $\tau \bar \tau = \bar \tau \tau$ and we denote by $\CW$ the union of these collections of formal expressions:
\begin{equ}
\CW \eqdef \CU \cup \CU' \cup \CV\;.
\end{equ}

We can then associate to any formal expression $\tau$ a homogeneity $|\tau|\in \R$ 
(despite what the notation may suggest $|\tau|$ is not necessarily positive)
in the following way. For any multi-index $k = (k_0, k_1)$, we set
$|X^k| = |k| = 2k_0 + k_1$. Here, $k_0$ denotes the degree of the ``time'' variable, which we
choose to count double in order to reflect the parabolic scaling of the
heat equation. For the symbol representing the driving noise  we set 
\begin{equ}[e:regXi]
	|\Xi| = -{3\over 2} - \kappa\;,
\end{equ}
where $\kappa> 0$ is a fixed small value,
and we extend this recursively to every 
formal expression as follows:
\begin{equ}
|\tau \bar \tau| = |\tau| + |\bar \tau|\;,\qquad |\CI(\tau)| = |\tau| + 2\;,\qquad|\CI'(\tau)| = |\tau| + 1\;.
\end{equ}
With all of these expressions at hand, a simple power-counting argument (see \cite[Sec.~8]{Regularity})
yields the following crucial result.

\begin{lemma}\label{lem:subcritical}
If $\kappa < {1\over 2}$ then for 
every $\gamma \in \R$, the set $\{\tau \in \CW\,:\, |\tau| < \gamma\}$ is finite.
\end{lemma}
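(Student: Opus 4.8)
The plan is to show that the homogeneity map is ``coercive'' in the sense that only finitely many symbols can have homogeneity below any fixed threshold. The key observation is that every symbol in $\CW$ is built recursively from $\Xi$ and the $X^k$ using the product and the two integration maps $\CI,\CI'$, and each of these operations has a definite effect on the homogeneity. First I would note that it suffices to show the claim for $\CV$: indeed $\CU$ and $\CU'$ consist of $X^k$ (finitely many below any $\gamma$) together with symbols $\CI(\tau)$ or $\CI'(\tau)$ with $\tau\in\CV$, and since $|\CI(\tau)| = |\tau|+2 > |\tau|$ and $|\CI'(\tau)| = |\tau|+1 > |\tau|$, a bound below $\gamma$ on such symbols forces $|\tau| < \gamma$, so finiteness of $\{\tau\in\CV : |\tau|<\gamma\}$ implies finiteness for $\CU$ and $\CU'$.

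\textbf{Key step: a lower bound on homogeneity in terms of complexity.} The heart of the argument is to track, for each $\tau \in \CW$, two quantities: the number $n(\tau)$ of occurrences of $\Xi$ in $\tau$, and the number $m(\tau)$ of occurrences of the integration maps $\CI,\CI'$. Using $\kappa < \tfrac12$, one has $|\Xi| = -\tfrac32-\kappa > -2$, and each application of $\CI'$ (the ``worst'' case, adding only $1$ to the homogeneity) combined with the fact that a product of two elements of $\CU'$ is needed to produce an element of $\CV$, gives a recursive lower bound. Concretely, I would prove by structural induction that for every $\tau \in \CW$ one has $|\tau| \ge -2\,n(\tau) + m(\tau) + (\text{nonnegative polynomial part})$, and moreover that in any $\tau\in\CV$ the number of integration maps satisfies $m(\tau) \ge n(\tau) - 1$ — because each noise symbol $\Xi$ beyond the first must be ``glued in'' through the tree structure, which costs at least one $\CI$ or $\CI'$. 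Combining these two inequalities yields $|\tau| \ge -n(\tau) - 1 + (\text{something} \ge 0)$, equivalently $n(\tau) \le 1 - |\tau|$. Since $\kappa$ is strictly less than $\tfrac12$, one can even extract a strictly positive gain: each ``unit of complexity'' decreases homogeneity by at most $\tfrac12 - \kappa' $ below the trivial count, so that a bound $|\tau| < \gamma$ caps not only $n(\tau)$ but the total size of $\tau$ as a tree.

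\textbf{Concluding.} Once $n(\tau)$ is bounded by $1-\gamma$ for all $\tau$ with $|\tau|<\gamma$, and correspondingly $m(\tau)$ and the polynomial degree of $\tau$ are bounded (the polynomial part contributes nonnegatively to the homogeneity, so a symbol with $|\tau|<\gamma$ and bounded noise/integration count can only carry $X^k$ factors with $|k| < \gamma$), one concludes that every such $\tau$ is a tree with a bounded number of nodes, and there are only finitely many such trees. This is exactly the power-counting argument of \cite[Sec.~8]{Regularity}, adapted to the present collection of symbols; the verification that $\CW$ here is no larger than the collection treated there makes the reduction immediate.

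\textbf{Main obstacle.} The only genuinely delicate point is establishing the inequality $m(\tau) \ge n(\tau) - 1$ for $\tau \in \CV$ (and the analogous bookkeeping for $\CU,\CU'$), i.e.\ making precise the intuition that noise leaves of the tree cannot accumulate without being separated by integration operations — this is where the recursive \emph{definition} of $\CU,\CU',\CV$ via \eqref{e:rel} is used essentially, since it is precisely the constraint ``a product lands in $\CV$, and only elements of $\CV$ may be integrated'' that forces the alternation. Everything else is routine induction on the structure of $\tau$, using $\kappa < \tfrac12$ exactly once, to ensure $|\Xi| + 2 > 0$ so that the ``$\CI$ of noise'' building block has positive homogeneity and repeated such constructions cannot drive the homogeneity to $-\infty$.
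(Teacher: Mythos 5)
Your overall route — a structural power-counting induction that first bounds the number of occurrences of $\Xi$ in terms of the homogeneity and then bounds the number of integrations and polynomial decorations, exactly as in \cite[Sec.~8]{Regularity} — is the same one the paper invokes. However, the specific bookkeeping you propose does not close. Your two ingredients are $|\tau| \ge -2\,n(\tau) + m(\tau)$ (from $|\Xi| > -2$ and each $\CI,\CI'$ contributing at least $+1$) and $m(\tau) \ge n(\tau)-1$. Combining them gives $|\tau| \ge -n(\tau)-1$, in which the coefficient of $n(\tau)$ is \emph{negative}; this rearranges to $n(\tau) \ge -1-|\tau|$, a vacuous lower bound, and your stated conclusion ``$n(\tau) \le 1-|\tau|$'' is an algebra slip. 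More precisely, even with the exact value $|\Xi| = -\tfrac32-\kappa$, one integration per noise only yields $|\tau| \ge -(\tfrac12+\kappa)\,n(\tau) - 1$, which caps nothing: the inequality $m \ge n-1$ is true but too weak to prove the lemma.

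The missing point is that the recursion \eqref{e:rel} forces roughly \emph{two} integrations per noise, not one: every occurrence of $\Xi$ inside an element of $\CU'$ is wrapped in its own $\CI'$ (since $\Xi \in \CV$ and only $\CI'(\CV)$-terms, besides $X^k$, lie in $\CU'$), and merging the resulting factors requires $n-1$ binary product nodes, each of which must itself be wrapped in $\CI$ or $\CI'$ before it can be used again. Hence $m(\tau) \ge 2n(\tau)-1$ on $\CU'$, which gives $|\tau| \ge (-\tfrac32-\kappa)n(\tau) + 2n(\tau) - 1 = (\tfrac12-\kappa)\,n(\tau) - 1$, and it is exactly here that $\kappa < \tfrac12$ enters: the coefficient $\tfrac12-\kappa$ is strictly positive, so $n(\tau)$ is bounded affinely by $|\tau|$. (Relatedly, the role you assign to $\kappa<\tfrac12$, namely ``$|\CI(\Xi)|>0$'', is not the operative mechanism; what matters is $|\CI'(\Xi)|+1 = \tfrac12-\kappa > 0$, i.e.\ that squaring an element of $\CU'$ and integrating strictly raises the minimal homogeneity per additional noise.) Once $n(\tau)$ is bounded, your concluding step — bounding $m(\tau)$ and the total polynomial degree since each contributes at least $+1$, and hence bounding the number of admissible trees — is fine and routine.
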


This is a reflection of the fact that the KPZ equation is subcritical with respect to the scaling
imposed by the linearised equation. In the context of \eref{e:approxKPZ}, one could think that 
it suffices to replace the first implication in \eref{e:rel} by
\begin{equ}
\tau_1, \ldots, \tau_{2m} \in \CU' \quad\Rightarrow\quad \tau_1 \cdots \tau_{2m} \in \CV\;.
\end{equ}
(Here we exploited the fact that $\one = X^0$ belongs to $\CU'$, so that this automatically covers
the case of products of less than $m$ terms.)  The problem with this definition is that the conclusion of
Lemma~\ref{lem:subcritical} no longer holds, so that it appears as though the theory developed
in \cite{Regularity} breaks down. This is fortunately not the case, but we have to be a little
bit more sophisticated. 

The reason why we can circumvent the problem is of course that the very singular behaviour of the 
higher powers of $\d_x h$ is precisely compensated by the powers of the small parameter $\eps$
that multiply them. It is therefore quite reasonable to expect that we can somehow encode this
into the properties of our regularity structure. The trick is to introduce an additional symbol $\CE$
besides $X$, $\Xi$, $\CI$ and $\CI'$ which symbolises the operation ``multiplication by $\eps$''.
With this new symbol at hand, we build $\CU$, $\CU'$ and $\CV$ as before, but we replace the
first implication of \eref{e:rel} by the implication
\begin{equ}[e:recursion]
\tau_1,\ldots,\tau_{2k} \in \CU' \qquad \Rightarrow \qquad \Eps^{k-1} \tau_1\cdots \tau_{2k} \in \CV\;,
\end{equ}
which we impose for every $k \in \{1,\ldots,m\}$. The product is made commutative and associative by identifying the corresponding formal expressions and making multiplication by $\one$ the identity, and $\Eps^{k}\Eps^{\ell}\tau = \Eps^{\ell+k}\tau$.
At this stage, it is very important to note
that as a consequence of our definitions, there will be formal expressions $\tau$ such that 
$\Eps \tau \in \CW$, but $\tau \not \in \CW$.  For example, $\tau = \CI'(\Xi)^4$. 
This  reflects the fact that 
$\eps (\d_x h)^4 - C_\eps$  converges weakly to a distributional limit as $\eps \to 0$ for a suitable
choice of  $C_\eps$, while $(\d_x h)^4 - C'_\eps$ diverges no matter what is $C'_\eps$.

With these notations, we then define $\CT$ as the linear span of $\CW$ and
we view the symbols $\Eps^{k-1}$ as $2k$-linear maps on $\CT$ via
\begin{equ}
(\tau_1,\ldots,\tau_{2k}) \mapsto \Eps^{k-1} \tau_1\cdots \tau_{2k}\;.
\end{equ}
We furthermore decree that the homogeneity of an element of $\CW$
obtained in this way is given by
\begin{equ}[e:homofE]
|\Eps^{k-1} \tau_1\cdots \tau_{2k}| = k-1 + \sum_i |\tau_i|\;.
\end{equ}

Elements $x \in\CT$ can be written uniquely as $
x = \sum_{\tau \in \CW} x_\tau \,\tau$, $x_\tau \in \R$ and 
with this notation, we  set
\begin{equ}[e:defxalpha]
|x|_\alpha = \sum_{|\tau| = \alpha} |x_\tau|\;,
\end{equ}
with the usual convention that $|x|_\alpha = 0$ for those  $\alpha$ where the sum is empty.

\subsection{Structure group}
\label{sec:structGroup}

We now describe the structure group $\CG$ \label{defofCG} associated to the space $\CT$. For this, we 
first introduce $\CT_+$, the free commutative algebra generated by $\CW_+$ which consists of  $X_0, X_1$ as well as the formal expressions \label{struck}
$\bigl\{\J_\ell(\tau)\,:\, \tau \in \CW \setminus \bar \CT \;, \; |\tau| + 2 > |\ell|\}$ and  \label{e:genT+}
$ \bigl\{\EE_\ell^{k}(\tau)\,:\, \tau \in \CV_{\ell,k} \bigr\}$
where $\ell$ is an arbitrary $2$-dimensional multi-index with $|\ell| = 2 \ell_0 + \ell_1$\label{parlm}, $k$ is an integer with $k \in \{1,\ldots,m-1\}$, $\bar \CT$ is the subset generated by the $X^k$, and 
$\CV_{\ell,k}$ is the subset of $\CV$ consisting of $\tau$ of the form 
$\tau_1\cdots \tau_{2k+2}$, $\tau_i \in \CU'$ with 
$|\ell| \ge \sum|\tau_i| > |\ell| - k$.
Note that for the moment, elements of $\CT_+$ are formal objects. They are only used to index
matrix elements for the linear transformations belonging to the structure group of our
regularity structure.  The scheme is as follows:  Starting from these formal objects, we will define $\Delta$ by
\eqref{e:defDelta1} and \eqref{e:defDelta}.  The structure group is then defined by \eqref{e:Gammadef}.

\begin{remark}
In principle, there is no \textit{a priori} reason to impose that
$|\ell| \ge \sum|\tau_i|$ in the second line of \eqref{e:genT+} (the analogous 
constraint does not appear for $\J_\ell(\tau)$ for example). The reason why have imposed
this here is twofold. First, it is natural in view of the canonical lift defined in 
\eqref{e:canonical} below in the sense that even if we did not impose $\EE_\ell^{k}(\tau) = 0$
for $|\ell| < |\tau|$ at the algebraic level, all of the models we ever consider in this article
involve linear forms $f_z$ over $\CT_+$ such that $f_z(\EE_\ell^{k}(\tau)) = 0$.
The second, more pragmatic, reason is that this greatly simplifies the expression
\eqref{e:RHS} which would otherwise sport a number of spurious additional terms. 
\end{remark}

With this definition at hand, we construct a linear map $\Delta \colon \CT \to \CT\otimes \CT_+$ \label{Delta} in a recursive way. 
In order to streamline notations, we shall write 
$\tau^{(1)}\otimes\tau^{(2)}$ as a shorthand for $\Delta\tau$.
(This is a slight abuse of notation, following Sweedler, since in general
$\Delta\tau$ is a linear combination of such terms. It is justified by the fact that expressions
containing the $\tau^{(i)}$ will always be linear in them.) 
We then define $\Delta$ via the identities
\begin{equ}\label{e:defDelta1}
\Delta\one=\one\otimes\one\;,\qquad
\Delta\Xi= \Xi\otimes\one\;,\qquad
\Delta X_i= X_i\otimes\one+\one\otimes X_i\;,
\end{equ}
and then recursively by the following relations: 
\minilab{e:defDelta}
\begin{equs}
\Delta \tau\overline{\tau}&= \tau^{(1)}\overline{\tau}^{(1)}\otimes \tau^{(2)}\overline{\tau}^{(2)}\;,\label{e:multProp}\\
\Delta\CI(\tau)&=\CI(\tau^{(1)})\otimes \tau^{(2)}+\sum_{\ell,k}\frac{X^\ell}{\ell!}\otimes\frac{X^k}{k!}
\J_{\ell+k}(\tau)\;,\label{e:intProp}
\\
\Delta\CI'(\tau)&=\CI'(\tau^{(1)})\otimes \tau^{(2)}+\sum_{\ell,k}\frac{X^\ell}{\ell!}\otimes\frac{X^k}{k!}
\J_{\ell+k+1}(\tau)\;,\label{e:intProp2}\\
\Delta\Eps^k(\tau)&=\Eps^k(\tau^{(1)})\otimes \tau^{(2)} + \sum_{\ell,m}\frac{X^\ell}{\ell!}\otimes\frac{X^m}{m!}
\EE^k_{\ell+m}(\tau)\;.\label{e:epsProp}
\end{equs}
Here, we write $\ell+k+1$ as a shorthand for $\ell+k+(0,1)$, where $(0,1)$ is the multiindex corresponding
to the spatial direction. We also implicitly set $\J_{k}(\tau) = 0$ 
if $|\tau| \le |k| - 2$ and $\EE^k_{\ell}(\tau) = 0$ if $|\tau| \le |\ell| - |k|$
or $|\tau| > |\ell|$ so these sums, as well as the corresponding ones in the sequel, are all finite.

Finally, we define a linear map $\DD$ on all elements of the type $X^k \CI(\tau)$ by \label{DD}
$\DD \CI(\tau) = \CI'(\tau)$, $\DD X^k = k_1 X^{k-(0,1)}$ for every $k \ge (0,1)$, $\DD \one = 0$,
and by extending it using the Leibnitz rule. 
It then follows immediately from \eqref{e:intProp} and \eqref{e:intProp2} that
$\DD$ commutes with $\Delta$ in the sense that $\Delta \DD\tau = (\DD \otimes I)\Delta \tau$.

\begin{remark}
As already mentioned before, one should really view the $\Eps^{k-1}$ as $2k$-multilinear maps. A more pedantic
way of writing the last line in the above equation  would then be
\begin{equs}
\Delta\Eps^{k-1}(\tau_1,\ldots,\tau_{2k}) &= \Eps^{k-1}(\tau_1^{(1)},\ldots,\tau_{2k}^{(1)})\otimes \tau_1^{(2)}\cdots \tau_{2k}^{(2)} \\
&\qquad + \sum_{\ell,m}\frac{X^\ell}{\ell!}\otimes\frac{X^m}{m!}
\EE^{k-1}_{\ell+m}(\tau_1,\ldots,\tau_{2k})\;.
\end{equs}
However, there is no ambiguity in the above since we implicitly used the fact that 
$\Delta$ extends to arbitrary products of elements of $\CT$ via the multiplicative property.
This abuse of notation is further justified in view of Section~\ref{sec:extended} below.
\end{remark}

For any linear functional $f \colon \CT_+ \to \R$, we can now define in a natural way
a map $\Gamma_{\!f} \colon \CT \to \CT$ by
\begin{equ}\label{e:Gammadef}
\Gamma_{\!f} \tau = (I \otimes f)\Delta \tau\;.
\end{equ}
Let now $\CG_+$ \label{defofCGplus} denote the set of all such linear functionals $f$ which are multiplicative in the sense that 
$f(\tau \bar \tau) = f(\tau)f(\bar \tau)$ for any two elements $\tau, \bar \tau \in \CT_+$. With this definition
at hand, we set
\begin{equ}
\CG = \{\Gamma_{\! f}\,:\, f \in \CG_+\}\;.
\end{equ}
It is not difficult to see that these operators are ``lower triangular'' in the sense that
\begin{equ}
\tau \in \CT_\alpha \quad\Rightarrow\quad \Gamma_f \tau - \tau \in \bigoplus_{\beta < \alpha} \CT_\beta\;,
\end{equ}
but it is
not obvious that the set $\CG$ does indeed form a group
under composition. In the case where the symbols $\Eps^k$ are absent, a proof is given in 
\cite[Sec.~8.1]{Regularity}. In our situation, we note that from a purely algebraic
point of view, the only thing that distinguishes $\Eps^k$ from an abstract integration operator
of order $k$ is that it does not annihilate polynomials. This property was however never used
in \cite[Sec.~8.1]{Regularity}. The only reason why this property was imposed in \cite{Regularity}
is the aesthetic consideration that we do not want to have a proliferation of abstract 
symbols that all encode smooth functions, as this would lead to more redundancy in the theory.

%
%

\begin{remark} 
While the symbol $\Eps$ should be thought as ``multiplication by $\eps$''
and the models we consider will typically implement this by satisfying the relation \eref{e:canonical2} below, we do not impose that relation. In particular, no
real number $\eps$ needs to be specified in general for the notion of 
an ``admissible model'' to make sense.
As a matter of fact, while there are natural limiting models with ``$\eps = 0$'' 
for which $\Pi_x \tau = 0$ whenever
$\tau$ contains at least one factor $\Eps$, there are also limiting models for which this is not the case.
\end{remark}

\begin{remark} 
We do \textit{not} impose the identity $\CI(\Eps \tau) = \Eps \CI(\tau)$, which would
in principle have been natural given the interpretation of $\Eps$ as essentially 
multiplication by $\eps$. The reason for this is that if we had done this,
then we would have run into consistency problems when trying to also impose that 
$\Eps$ increases homogeneity by $1$.
\end{remark}

\subsection{The extended regularity structure \texorpdfstring{$\CT_\ex$}{Tex}}
\label{sec:extended}

Before we proceed, we ``trim'' the regularity structure $(\CT,\CG)$ to the bare minimum
required for the right hand side of \eqref{e:abstract} to make sense as a map
from $\CD^\gamma$ into itself for $\gamma \in ({3\over 2} + \kappa, 2- (6m-2)\kappa)$.
From now on, with $\bar \CT$ the usual Taylor polynomials as before, we set
\begin{equ}\label{e:defCWbar}
\CT = \bar \CT \oplus \scal{\bar \CW}\;,\qquad \bar \CW = \bar \CU' \cup \bar \CV \cup \{\CI(\tau)\,:\, \tau \in \bar \CV\}\;,
\end{equ}
with
\begin{equs}
\bar \CU' &= \Big\{\tau\in \CU'\,:\,|\tau| < {3\over 4}\Big\}\;,\\
\bar \CV &= \Big\{\CE^{k-1}(\tau_1\cdots \tau_{2k})\,:\, k \in \{1,\ldots,m\}\,,\;\tau_i \in \bar \CU'\,,\;
\sum_{j=1}^{2k}|\tau_j|\le 0\Big\}\;,
\end{equs}
where we implicitly used the identification $\Eps^0(\tau) = \tau$.
Setting furthermore
\begin{equ}
\CU'_\ex =  \bigl\{\tau_1\cdots \tau_{2m} \;:\; \tau_i \in \bar\CU'\bigr\}\;,
\end{equ}\label{defofUprimeex}
we also define $\bar \CW_+$ \label{defofwbarplus} to consist of $\{X_0,X_1\}$, as well as those elements 
in $\CW_+$ of the form $\J_\ell(\tau)$ and $\EE_\ell^k(\bar \tau)$ for elements 
$\tau, \bar \tau \in \CW$ such that $\CI(\tau) \in \bar \CW$ and 
$\bar \tau \in \CU'_\ex$. With this definition at hand, we define $\CT_+$ as the free commutative algebra
generated by $\bar \CW_+$.
It will also be very convenient in the sequel to consider an ``extended'' regularity structure whose
structure space $\CT_\ex$ is given by
\begin{equ}
\CT_\ex = \bar \CT \oplus \scal{\CW_\ex}\;,\qquad \CW_\ex = \bar \CW \cup \CU'_\ex\;.
\end{equ}\label{defofwex}
In particular, if we extend the definition of $\Delta$ to elements in $\CU'_\ex$ by
imposing that it is multiplicative, our definitions guarantee that 
\begin{equ}[e:stable]
\Delta \colon \CT_\ex \to \CT_\ex \otimes \CT_+\;,\qquad
\Delta \colon \CT \to \CT \otimes \CT_+\;,
\end{equ}
i.e.\ both $\CT$ and $\CT_\ex$ are stable under the action of $\CG_+$,
so that $(\CT_\ex, \CG)$ is again a regularity structure
and $(\CT,\CG)$ can be viewed as a sector of $(\CT_\ex, \CG)$, i.e. a subspace
that is stable under $\CG$ and diagonal with respect to the direct sum decomposition of $\CT_\ex$.
The key point of \eqref{e:stable} is that the same space $\CT_+$ suffices to define the structure 
group for $\CT_\ex$, and therefore \textit{the structure group for 
$\CT_\ex$ is the same as the structure group for $\CT$}.

A key advantage of $\CT_\ex$, and this is why we introduce it, 
is that for $\ell \le m-1$, the maps $\Eps^\ell$ can be viewed as genuine \emph{linear}
maps defined on the subspace of $\CT_\ex$ generated by $\tau_1\cdots\tau_{2\ell+2}$, $\tau_i\in \CU'$.  
However, although $\Eps^\ell$ and $\CI$ are defined on subspaces of $\CT_\ex$, they 
are not necessarily defined for all elements of $\CT_\ex$. 
The other main advantage of $\CT_\ex$ is that all of its elements can be obtained
from the ``basic'' elements $\{\one,X_0,X_1,\Xi\}$ by application of the operators
$\CI$, $\CI'$ and $\Eps^\ell$ without ever leaving $\CT_\ex$. In fact, it is the minimal 
extension of $\CT$ with that property.

Since $\CT \subset \CT_\ex$ and since the structure groups are the same for both
regularity structures, every model\footnote{See \cite{Regularity} or Section~\ref{sec:models}.} $(\Pi,\Gamma)$ for $(\CT_\ex,\CG)$ defines
a model for $(\CT,\CG)$ by restriction. 
We will use this fact in Section~\ref{sec:canonical} by defining a model first recursively on $\CT_\ex$ and
then on $\CT$ by restriction. 
On the other hand, one should remember if $(\Pi,\Gamma)$ is a model for
the structure $(\CT,\CG)$, it does \textit{not} automatically extend to a model for
$(\CT_\ex,\CG)$. As a matter of fact, we are precisely interested in the limiting situation 
in which it does not!
Since the structure group $\CG$ is identical for both structures however, the 
family of operators $\Gamma_{zz'}$ can be viewed as acting on $\CT_\ex$ for any model
on $\CT$.  In particular,  the spaces 
$\CD^\gamma$  also make sense over $\CT_\ex$ (see Section~\ref{sec:Dgamma} below
for the definition of these spaces and their variants), even for models on $\CT$.
Since $\Eps^\ell$ can be thought of as linear operators on $\CT_\ex$,  this will
give a simple way to understand the fixed point argument.


\subsection{Admissible models}
\label{sec:models}

From now on, we also 
set $\CT = \bigoplus_{\alpha \in A\,:\, \alpha \le 2} \CT_\alpha$, which has the advantage that
$\CT$ is finite-dimensional so we do not need to worry about topologies.
In order to describe our ``polynomial-like'' objects, 
we first fix a kernel $K \colon \R^2 \to \R$ with the following properties:
\begin{enumerate}
\item The kernel $K$ is supported in $\{|z| \le 1\}$,  
$K(t,x) = 0$ for $t \le 0$, and $K(t,-x) = K(t,x)$.
\item For $z$ with $|z| < 1/2$, $K$ coincides with the heat kernel and $K$ is smooth outside of the origin.
\item For every polynomial $Q \colon \R^2 \to \R$ of parabolic degree $2$ or higher, one has
\begin{equ}[e:killPoly]
\int_{\R^2} K(t,x) Q(t,x)\,dx\,dt = 0\;.
\end{equ}
\end{enumerate}
in other words, $K$ has essentially all the properties of the heat kernel, except that it is furthermore
compactly supported and satisfies \eref{e:killPoly}.
The existence of a kernel $K$ satisfying these properties is very easy to show.

\begin{remark}
The identity \eqref{e:killPoly} is imposed only for convenience. If we didn't impose this,
then in order to be able to impose \eqref{e:admissible2} later on we would have to add symbols of the 
type $\CI(X^k)$ which would also describe smooth functions. This would introduce some rather unnatural
redundancy into the construction.
\end{remark}

Let $\CS'$ be the space of Schwartz distributions on $\R^2$ and  $\CL(\CT,\CS')$ the space
of  linear maps from $\CT$ to $\CS'$. Furthermore, given
a continuous test function $\phi\colon \R^2 \to \R$ and a point $z = (t,x) \in \R^2$, we set
\begin{equ}
\phi_z^\lambda(\bar z) = \lambda^{-3} \phi\bigl((\lambda^{-2}(\bar t - t), \lambda^{-1}(\bar x - x)\bigr)\;,
\end{equ}
where we also used the shorthand $\bar z = (\bar t, \bar x)$. Finally, we write $\CB$ for the set
of functions $\phi \colon \R^2 \to \R$ that are smooth, compactly supported in the ball of radius one,
and with their values and both first and second derivatives bounded by $1$.

Given a kernel $K$ as above, we then introduce a set $\MM$ of \textit{admissible models} \label{admissible} which
are analytical objects built upon our regularity structure $(\CT,\CG)$ that will play a role for our solutions that is
similar to that of the usual Taylor polynomials for smooth functions.
A \textit{model} (not necessarily admissible) for $\CT$ on $\R^2$  consists of a pair $(\Pi,\Gamma)$ of functions
\begin{equs}[2]\label{eq:models}
\Pi \colon \R^2 &\to \CL(\CT,\CS') \quad & \quad \Gamma \colon \R^2\times \R^2  &\to \CG \\
z &\mapsto \Pi_z & (z,\bar z) &\mapsto \Gamma_{z\bar z} 
\end{equs}
with the following properties. First, we impose that
they satisfy the analytical bounds 

\begin{equ}[e:bounds]
\bigl|\bigl(\Pi_z \tau\bigr)(\phi_z^\lambda)\bigr| \lesssim \lambda^{|\tau|}\;,\qquad
\bigl\|\CQ_\alpha \Gamma_{z\bar z}\tau \bigr\| \lesssim |z-\bar z|^{|\tau|-\alpha}\;, 
\end{equ}
uniformly over $\phi \in \CB$, $\lambda \in (0,1]$, $\tau \in \CW$, and $\alpha$ with
$\alpha \le |\tau|$, where $\CQ_\alpha$ denotes the projection onto $\CT_\alpha$. 
Also, the proportionality constants implicit in the notation $\lesssim$ are assumed to be bounded
uniformly for $z$ and $\bar z$ taking values in any compact set. 
We furthermore assume
that one has the algebraic identities
\begin{equ}[e:algebraic]
\Pi_{z} \Gamma_{z\bar z} = \Pi_{\bar z}\;,\qquad \Gamma_{z\bar z} \Gamma_{\bar z\bbar z} =\Gamma_{z\bbar z} 
\end{equ}
valid for every $z, \bar z, \bbar z$ in $\R^2$. 

\begin{remark}  It is important to note that \eqref{e:bounds} is the 
crux of the whole theory of regularity structures, providing a concrete
meaning to the 
abstract notion of homogeneity.  It is to make \eqref{e:bounds} hold
that one is forced to make the subtractions in \eqref{e:admissible3} and \eqref{e:admissible4}, 
which then produces the non-trivial algebraic structure.
\end{remark}

In this article, we will always consider \textit{admissible} models that come with some 
additional structure. Our models will actually consist of pairs $(\Pi,f)$ where
$\Pi$ is as in \eqref{eq:models} and $f \colon \R^2 \to \CG_+$ is a continuous function 
such that, if we define
\begin{equ}[e:defGamma]
\Gamma_{z\bar z} = \Gamma_{f_z}^{-1} \Gamma_{f_{\bar z}}\;,
\end{equ}
the properties \eqref{e:bounds} and \eqref{e:algebraic} are satisfied.
In other words, we assume that there exists one
\textit{single} linear map $\PPi \in \CL(\CT, \CS')$,\label{defofschwartz} where $\CS'$ is the dual of smooth functions,  such that
$\Pi_z = \PPi F_z$ for every $z$, where $F_z = \Gamma_{f_z}$.

Note also that elements of $\CG_+$ contain strictly more information than the corresponding
element of $\CG$. This is because the range of $\Delta$ on $\CT_\ex$ is actually contained
in $\CT_\ex \otimes \tilde \CT_+$, where $\tilde \CT_+ \subset \CT_+$ is the subalgebra
generated only by the $X_i$ and elements of the type $\J_\ell(\tau)$. 
The bound \eqref{e:bounds} then yields some regularity assumption on the action of $f_z$ on
$\tilde \CT_+$, but not on its action on elements of the form $\EE^k_\ell(\bar \tau)$.
The reason why we still need these elements will be clear from the construction of the operators
$\hat \CE^k$ in Section~\ref{sec:epsk}.
We will also impose more stringent bounds on $f_z(\EE^k_\ell(\bar \tau))$ in 
Section~\ref{sec:epsilon} below.


\begin{definition}\label{def:admissible}
A model $(\Pi, f)$ as above is admissible on $\CT$ if $\Pi_z \one = 1$, for every multiindex $k$,
\minilab{e:admissible}
\begin{equ}\label{e:admissible1}
\bigl(\Pi_z X^k\tau\bigr)(\bar z) = (\bar z- z)^k\bigl(\Pi_z \tau\bigr)(\bar z) \;,\qquad f_z(X^k) = (-z)^k\;,
\end{equ}
and, for every $\tau \in \CW$ with $\CI(\tau) \in \CT$, one has the identities
\minilab{e:admissible}
\begin{equs} 
f_z(\J_k \tau) &=  -\int_{\R^2} D^k K(z - \bar z)\bigl(\Pi_{z} \tau\bigr)(d\bar z) \;, \qquad |k| < |\tau|+2\;,\label{e:admissible2}\\
\bigl(\Pi_z \CI \tau\bigr)(\bar z) &=  \int_{\R^2} K(\bar z - \bbar z)\bigl(\Pi_{z} \tau\bigr)(d\bbar z) + \sum_{k} {(\bar z - z)^k \over k!} f_z(\J_k \tau) \;,\label{e:admissible3}\\
\bigl(\Pi_z \CI' \tau\bigr)(\bar z) &=  \int_{\R^2} DK(\bar z - \bbar z)\bigl(\Pi_{z} \tau\bigr)(d\bbar z) + \sum_{k} {(\bar z - z)^k \over k!} f_z(\J_{k+1} \tau) \;,\qquad \label{e:admissible4}
\end{equs}
where $D=\partial_x$ and $k+1$ means $(k_0,k_1+1)$. 
\end{definition}

Note that these definitions in particular also guarantee that 
$\bigl(\Pi_z \DD\tau\bigr)(\bar z) = \d_{\bar z}\bigl(\Pi_z \tau\bigr)(\bar z)$
for every $\tau$ in the domain of definition of $\DD$.

\begin{remark}
 Here we set $\J_k \tau = 0$ if $|k| \ge |\tau|+2$, so that the sum appearing in
\eref{e:admissible3} is always finite.
It is not clear in principle that all the integrals appearing in \eref{e:admissible} 
converge, but it turns out that the analytical conditions
\eref{e:bounds} combined with the condition $|k| < |\tau|+2$ 
guarantee that this is always the case, see \cite[Sec.~5]{Regularity}.
\end{remark}

\begin{remark}
Given an admissible model $(\Pi,f)$, we write $\$\Pi\$$ \label{triple}
for the smallest choice of 
proportionality constant in \eqref{e:bounds} with the operators $\Gamma_{z\bar z}$ given 
by \eqref{e:defGamma}. This is a slight abuse of notation since
we should rather write $\$(\Pi,f)\$$ instead, but we hope that this notation is lighter
while remaining sufficiently unambiguous.
Given any two models $(\Pi,f)$, $(\bar \Pi,\bar f)$,
we furthermore write $\$\Pi;\bar \Pi\$$ for the same quantity, but with $\Pi_z$ 
replaced by $\Pi_z - \bar \Pi_z$ and $\Gamma_{z\bar z}$
replaced by 
$\Gamma_{z\bar z}-\bar\Gamma_{z\bar z}$.
Note that these bounds are only locally uniform in general, so these norms also depend on
some underlying bounded domain in which we allow $z$ and $\bar z$ to vary. Since we
are only interested in situations with periodic boundary conditions on a bounded domain
and on a bounded time interval, this is irrelevant for the purpose of this article.
Note also that $\$\cdot\$$ is not a norm since the space $\MM$ of admissible models is
not linear. It does however behave like a norm for all practical purposes and we will refer
to as as the ``norm'' of a model.
\end{remark}

\begin{remark}\label{rem:redundant1}
Note that since $f_z \in \CG_+$, so that it is multiplicative, \eref{e:admissible1} and \eref{e:admissible2}
do specify $f_z$ on elements of the type $\J_k(\tau)$ once we know $\Pi_z$. There is therefore
quite a lot of rigidity in these definitions, which makes the mere existence of admissible models a 
highly non-trivial fact.
\end{remark}

\begin{remark}\label{rem:redundant2}
Building further on Remark~\ref{rem:redundant1}, it actually turns out that if $\Pi $ 
satisfies the {\em first} analytical bound in \eref{e:bounds} and
is such that, for $F$ defined from $\Pi$ via \eref{e:admissible2}, one has the 
identities \eref{e:admissible3} and \eref{e:admissible4}, then the second analytical bound in \eref{e:bounds}
is {\em automatically} satisfied for elements of the type $\J_k(\tau)$. 
This is a consequence of \cite[Thm.~5.14]{Regularity}. However, it is \textit{not}
automatic for terms of the type $\EE^{k-1}_\ell(\tau)$.
This is because our notion
of an ``admissible model'' does not specify any relation 
between $f_z(\EE_\ell^{k-1}(\tau_1,\ldots,\tau_{2k}))$ and the distributions $\Pi_z \tau_i$.
\end{remark}

At this point we have that $\CE$ is an abstract integration operator on the regularity structure $\CT_\ex$ and the results of 
\cite[Sections 8.1 and 8.2]{Regularity} hold for $\CE$ by repeating the proofs there for $\CI$.  These will be used
repeatedly in the sequel.  In principle, $\CE$ is not really an operator on the regularity structure $\CT$, like it is on $\CT_\ex$, however it is now defined on $\CT$ 
through the  restriction map:  If $\tau\in \CT$, then $\tau\in \CT_\ex$ since $\CT$ is a subset of $\CT_\ex$. Now $\CE\tau\in \CT_\ex$ and
the restriction of $\CE\tau$ to $\CT$ is what we will call $\CE\tau\in \CT$.

Finally, we define an analogous set $\MM_\ex$\label{cemex} of admissible models for $\CT_\ex$ on $\R^2$. 
 A model for $\CT_\ex$ is a
pair $(\Pi,F)$ of functions
$
\Pi \colon \R^2 \to \CL(\CT_\ex,\CS')$ and $F \colon \R^2  \to \CG$
satisfying  \eqref{e:algebraic} and \eqref{e:bounds} for  $\tau \in \CW_\ex$ and $\bar \tau \in \CW_{+}$, and it is 
admissible if \eqref{e:admissible1}-\eqref{e:admissible4} hold for $\tau\in \CW$.

\subsection{Definition of \texorpdfstring{$\CD^\gamma$}{Dg}}
\label{sec:Dgamma}

Given the space $\CT$ as above and $\gamma > 0$, as well as an admissible model $(\Pi,F) \in \MM$ 
we now define a space $\CD^\gamma$ of modelled distributions consisting of those functions 
$H \colon \R^2 \to \CT$ such that 
\begin{equ}[eq:distH0]
\|H\|_\gamma = \sup_{|z-\bar z| \le 1}\sup_{\alpha < \gamma} {|H(z) - \Gamma_{z\bar z} H(\bar z)|_\alpha \over |z-\bar z|^{\gamma - \alpha}} + \sup_{z,\alpha} |H(z)|_\alpha<\infty\;.
\end{equ}
Recall that, as defined in \eqref{e:defxalpha}, 
$|H(z)|_\alpha$ refers to the (Euclidean) norm of the part of $H(z)$ in $\CT_\alpha$.
Here, the arguments $z,\bar z$ are typically constrained to lie furthermore
in some fixed bounded set and we have used the shorthand $\Gamma_{z\bar z} \eqdef F_z^{-1}\circ F_{\bar z}$. Note that the space $\CD^\gamma$ depends on the underlying model! 
It is however natural to be able to also compare elements $H$ and $\bar H$ belonging to
spaces $\CD^\gamma$ based on two different models $(\Pi,f)$ and $(\bar \Pi,\bar f)$.
In this case, we write 
\begin{equs}[e:distH]
\|H;\bar H\|_\gamma = &\sup_{|z-\bar z| \le 1}\sup_{\alpha < \gamma} {|H(z) 
	- \Gamma_{z\bar z} H(\bar z) - \bar H(z) + \bar \Gamma_{z\bar z} \bar H(\bar z)|_\alpha 
	\over |z-\bar z|^{\gamma - \alpha}}\\ &\qquad+ \sup_{z,\alpha} |H(z)- \bar H(z)|_\alpha\;.
\end{equs}
This yields a ``total space'' $\MM \ltimes \CD^\gamma$  
containing all triples of the form
$(\Pi,f,H)$ with $H \in \CD^\gamma$ based on the model $(\Pi,f)$.
The distances $\$\cdot;\cdot\$$ and \eqref{e:distH} endow $\MM \ltimes \CD^\gamma$
with a metric structure.

It was then shown in \cite{Regularity} that for any $\gamma > 0$
there exists a \textit{unique} locally Lipschitz continuous 
map $\CR\colon \MM \ltimes \CD^\gamma \to \CS'$ \label{reco} with the property that
\begin{equ}
\bigl|\bigl(\CR H - \Pi_z H(z)\bigr)(\phi_z^\lambda)\bigr| \lesssim \lambda^\gamma\;,
\end{equ}
uniformly over $\phi \in \CB$, $\lambda \in (0,1]$ and locally uniformly in $z$.
The interpretation  of the ``reconstruction operator'' $\CR$ is that $H$ is really
just a local description of a ``Taylor expansion'' for the actual distribution $\CR H$.
It is straightforward to show that in the particular case where $\Pi_z \tau$ 
represents
a continuous function for every $\tau \in T$, one has the identity
\begin{equ}[e:propR]
\bigl(\CR f\bigr)(z) = \bigl(\Pi_z f(z)\bigr)(z)\;.
\end{equ}
This identity will be crucial in the sequel.

We will also make use of weighted spaces $\CD^{\gamma,\eta}$\label{def:cdgammaeta}, which essentially
consist of elements of $\CD^\gamma$ that are allowed to blow up at rate $\eta$ 
near the line $\{(t,x)\,:\, t=0\}$. For a precise definition, see \cite[Def.~6.2]{Regularity}.
In our setting, this is the set of functions $H\colon \R^2 \to \CT$ such that
\begin{equs}[e:defgammaeta]
\|H\|_{\gamma,\eta} \eqdef & \sup_{z}\sup_{\alpha < \gamma} {| H(z)|_\alpha \over |t|^{(\eta-\alpha) \wedge 0 \over 2}}
+ \sup_{ |z-\bar z| \le  \sqrt{|t| \wedge |\bar t|}}\sup_{\alpha<\gamma} {|H(z) - \Gamma_{z\bar z} H(\bar z)|_\alpha \over |z-\bar z|^{\gamma - \alpha} (|t| \wedge |\bar t|)^{\eta-\gamma \over 2}} \\&\qquad + \sup_{z,\alpha} |H(z)|_\alpha< \infty\;,
\end{equs}
where we used $t$ and $\bar t$ for the time coordinates of $z$ and $\bar z$.

\begin{remark}
Note that we do not necessarily assume that $H(z) \in \CT_{<\gamma}\eqdef \oplus_{\alpha<\gamma}\CT_\alpha$.\label{defofCTsubleg} This will be
useful especially in the case when $\gamma < 0$ which we will encounter later on.
\end{remark}

\begin{remark}
While we have defined the spaces $\CD^\gamma$ and $\CD^{\gamma,\eta}$ for admissible models with respect to $\CT$, there is of course an analogous definition for admissible models with 
respect to $\CT_\ex$.  Many of the statements in the next several sections will be true for either, and we will indicate if a specific one is being used.

In the limit $\eps\to 0$, we will obtain a model $(\Pi,F)$ on $\CT$, not on $\CT_\ex$. However, 
although $\Pi$ doesn't extend to $\CT_\ex$, the operators $\Gamma_{xy}$ \textit{do} extend to it
by multiplicativity. As a consequence, the spaces $\CD^\gamma$ and $\CD^{\gamma,\eta}$ make sense for
function with values in $\CT_\ex$, even if we are only given a model on $\CT$. If we are given
such a model, it is only when 
applying the reconstruction operator $\CR$ that it is crucial that the function be $\CT$-valued.
\end{remark}

\subsection{Canonical lift to \texorpdfstring{$\CT_\ex$}{Tex}}
\label{sec:canonical}

Given any \textit{smooth} space-time function $\zeta$ and any real number $\eps$, 
there is a canonical way of building a family of admissible models,
$\LL_\eps(\zeta) = (\Pi^{(\eps)}, f^{(\eps)})$ for the \textit{extended} regularity structure
$(\CT_\ex, \CG)$
as follows. First, we set $\Pi_z^{(\eps)} \Xi = \zeta$, independently
of $z$ and of $\eps$, and we define it on $X^k$ as in \eref{e:admissible1}. 
Then, we define $\Pi^{(\eps)}_z$ recursively by \eref{e:admissible3}
and \eref{e:admissible4}, together with the identities
\begin{equ}[e:canonical]
\bigl(\Pi_z^{(\eps)} \tau \bar \tau\bigr)(\bar z) = \bigl(\Pi_z^{(\eps)} \tau\bigr)(\bar z)
\bigl(\Pi_z^{(\eps)} \bar \tau\bigr)(\bar z)\;.
\end{equ}
as well as
\minilab{e:canonical2}
\begin{equs}
\bigl(\Pi_z^{(\eps)} \CE^{k-1}(\tau)\bigr)(\bar z) &= \eps^{k-1} \bigl(\Pi_z^{(\eps)} \tau\bigr)(\bar z)
 + \sum_{\ell} {(\bar z - z)^\ell \over \ell!} f^{(\eps)}_z \bigl(\EE_\ell^{k-1}(\tau)\bigr)\;,\label{e:canonical2A} \\
f^{(\eps)}_z \bigl(\EE_\ell^{k-1}(\tau)\bigr) &= 
 -\eps^{k-1} \bigl(D^{(\ell)} \bigl(\Pi_z^{(\eps)} \tau\bigr)\bigr)(z)\;.  \label{e:canonical2B}
\end{equs}
None of this make sense on $\CT$, which is one of the key reasons to introduce the larger
regularity structure $\CT_\ex$.
Here, the multiindex $\ell$ is furthermore constrained by imposing that
$|\tau| \le |\ell| < k-1 + |\tau|$.
Note again that in general, this definition is only guaranteed to makes 
sense if $\zeta$ is a smooth function! 
Note also that when we use this definition in practice
later on, $\zeta$ will really be given by 
some smooth approximation $\xi_{\bar\eps}$ to our space-time white noise.
It is however very important to note that $\bar\eps$  can be completely unrelated to $\eps$,
so the models $\LL_{ \eps}(\xi_{\bar\eps})$ or even $\LL_0(\xi_{\bar\eps})$ make perfect sense.
Finally, note that the definition \eqref{e:canonical2} would not even make sense on our actual regularity structure
$\CT$, because we could have $\CE^{k-1}(\tau)\in \CT$ but $\tau\not\in\CT$.

\begin{proposition}
If $\zeta$ is smooth then $\LL_\eps(\zeta)\in\MM_\ex$ for any $\eps$. 
\end{proposition}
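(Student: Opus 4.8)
The plan is to verify Definition~\ref{def:admissible} and the model bounds \eqref{e:bounds}, \eqref{e:algebraic} directly for the recursively constructed pair $(\Pi^{(\eps)}, f^{(\eps)})$, treating $\CE^{k-1}$ on exactly the same footing as the abstract integration operators $\CI$, $\CI'$. First I would observe that the admissibility conditions \eqref{e:admissible1}--\eqref{e:admissible4} hold by construction: they are built into the recursive definition of $\Pi^{(\eps)}_z$ via \eqref{e:admissible3}, \eqref{e:admissible4}, \eqref{e:canonical}. Likewise $f^{(\eps)}_z(\J_k\tau)$ is forced by \eqref{e:admissible2}, and $f^{(\eps)}_z(\EE^{k-1}_\ell(\tau))$ is defined by \eqref{e:canonical2B}. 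The algebraic relations \eqref{e:algebraic} — namely $\Pi^{(\eps)}_z\Gamma_{z\bar z} = \Pi^{(\eps)}_{\bar z}$ and the cocycle property of $\Gamma$ — follow, as in \cite[Prop.~8.27]{Regularity}, from the compatibility of the coproduct $\Delta$ with the definitions; since the excerpt already notes that the results of \cite[Sec.~8.1, 8.2]{Regularity} transfer to $\CE$ by repeating the proofs verbatim (the only property of $\CI$ not shared by $\CE$, namely annihilation of polynomials, is never used), this part is essentially a citation. The one genuinely new recursion step is \eqref{e:epsProp} for $\Delta\CE^{k-1}$, but this has exactly the same shape as \eqref{e:intProp}/\eqref{e:intProp2}, so the inductive verification that $\Pi^{(\eps)}_z\Gamma_{z\bar z}=\Pi^{(\eps)}_{\bar z}$ on $\CE^{k-1}(\tau)$ goes through by plugging \eqref{e:canonical2A} into both sides and using the inductive hypothesis on $\tau$.

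The substantive content is the analytic bound $|(\Pi^{(\eps)}_z\tau)(\phi^\lambda_z)| \lesssim \lambda^{|\tau|}$ together with the corresponding bound on $\Gamma_{z\bar z}$, which I would prove by induction on the number of symbols in $\tau$. For $\tau = \one, X^k, \Xi$ the bounds are immediate (here one uses smoothness of $\zeta$ to get the trivial bound $|\tau|=-3/2-\kappa$, since a smooth function is in particular of negative regularity with any desired constant). For products $\tau\bar\tau$ one uses the multiplicativity \eqref{e:canonical} together with the inductive bounds on the factors, exactly as in \cite[Sec.~8]{Regularity}: a product of two functions bounded by $\lambda^{|\tau|}$ and $\lambda^{|\bar\tau|}$ near scale $\lambda$ is bounded by $\lambda^{|\tau|+|\bar\tau|}$, using that both homogeneities involved are $\geq$ the homogeneity of $\Xi$ so the pointwise products make sense for smooth $\zeta$. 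For $\CI(\tau)$ and $\CI'(\tau)$ one invokes the multi-level Schauder estimate \cite[Sec.~5]{Regularity} (this is precisely where the condition $|k|<|\tau|+2$ and the kernel properties of $K$ enter), and for $\CE^{k-1}(\tau)$ one invokes the analogue of that estimate for $\CE$, which the excerpt has already declared to hold. Concretely: from \eqref{e:canonical2A}, $(\Pi^{(\eps)}_z\CE^{k-1}(\tau))(\bar z) = \eps^{k-1}(\Pi^{(\eps)}_z\tau)(\bar z) + \sum_\ell \frac{(\bar z - z)^\ell}{\ell!} f^{(\eps)}_z(\EE^{k-1}_\ell(\tau))$, and \eqref{e:canonical2B} gives $f^{(\eps)}_z(\EE^{k-1}_\ell(\tau)) = -\eps^{k-1}(D^{(\ell)}\Pi^{(\eps)}_z\tau)(z)$; this is a ``positive renormalisation'' of the smooth function $\eps^{k-1}\Pi^{(\eps)}_z\tau$ by a Taylor polynomial in $\bar z - z$ of degree $<k-1+|\tau|$, which raises its apparent homogeneity at $z$ from $|\tau|$ to $k-1+|\tau| = |\CE^{k-1}(\tau)|$ — exactly the Schauder-type gain, testable against $\phi^\lambda_z$ by the standard argument.

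I expect the main obstacle — though a fairly mild one given the scaffolding already in place — to be bookkeeping the fact that $\CE^{k-1}$ is a genuine linear operator only on the extended structure $\CT_\ex$ and that the relevant inductive hypothesis must be carried along $\CT_\ex$ rather than $\CT$. Concretely, to bound $\Pi^{(\eps)}_z\CE^{k-1}(\tau_1\cdots\tau_{2k})$ one needs the bound on $\Pi^{(\eps)}_z(\tau_1\cdots\tau_{2k})$, and the latter symbol lives in $\CU'_\ex \subset \CT_\ex$ but not in $\CT$; since the Proposition is stated for $\MM_\ex$ this is not actually an issue, but one must be careful to run the entire induction on $\CW_\ex$ (and the generators $\CW_+$ of $\CT_+$) and check that the recursion \eqref{e:canonical}--\eqref{e:canonical2} never requires evaluating $\Pi^{(\eps)}$ or $f^{(\eps)}$ on a symbol outside $\CT_\ex$ — which the closure properties \eqref{e:stable} guarantee. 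One should also note that, as in \cite[Sec.~8]{Regularity}, the $\Gamma$-bound in \eqref{e:bounds} is proved simultaneously with the $\Pi$-bound in the same induction, using \eqref{e:defGamma} and the definition of $f^{(\eps)}_z$ on $\J_k$ and $\EE^{k-1}_\ell$; the $\EE^{k-1}_\ell$ matrix elements are controlled directly from \eqref{e:canonical2B} by the (already established) inductive $\Pi$-bound on $\tau$, differentiated $|\ell|$ times and evaluated at $z$, which is a smooth-function estimate. The locally uniform nature of all constants is automatic since $\zeta$ is smooth and we work on a compact space-time domain.
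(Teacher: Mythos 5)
Most of your outline coincides with the paper's argument: the algebraic identities and admissibility hold by construction, the first bound in \eqref{e:bounds} is obtained by induction over $\CW_\ex$ (the paper phrases this via the stronger pointwise bound $|(\Pi_z\tau)(\bar z)|\lesssim |z-\bar z|^{|\tau|\wedge 0}$, and your observation that \eqref{e:canonical2A}--\eqref{e:canonical2B} amount to a Taylor subtraction raising the order of vanishing of the smooth function $\eps^{k-1}\Pi_z\tau$ to $|\CE^{k-1}\tau|$ is exactly the right mechanism there), and the bounds on $\gamma_{z\bar z}(\J_k\tau)$ are inherited from \cite[Prop.~8.27]{Regularity}. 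The place where your proof does not go through as written is the second bound in \eqref{e:bounds} for the new generators $\EE^k_\ell(\tau)$ of $\CT_+$, i.e.\ the estimate $|\gamma_{z\bar z}(\EE^k_\ell(\tau))|\lesssim |z-\bar z|^{|\tau|+k-|\ell|}$. You claim these ``matrix elements are controlled directly from \eqref{e:canonical2B} by the inductive $\Pi$-bound on $\tau$, differentiated $|\ell|$ times and evaluated at $z$''. That estimate only bounds the single numbers $f_z(\EE^k_\ell(\tau))$ by a constant (locally uniformly in $z$); it produces no smallness in $|z-\bar z|$ at all, while the exponent $|\tau|+k-|\ell|$ is strictly positive for every generator of $\CT_+$. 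A triangle-inequality bound on $\gamma_{z\bar z}(\EE^k_\ell(\tau))= f_{\bar z}(\EE^k_\ell(\tau)) - \sum_m \frac{(\bar z-z)^m}{m!} f_z(\EE^k_{\ell+m}(\Gamma_{z\bar z}\tau))$ therefore cannot work: one must exploit a cancellation between the two terms. This is precisely the point the paper flags in Remark~\ref{rem:redundant2} -- unlike the $\J_k$ case, the $\Gamma$-bound for $\EE^k_\ell$ is \emph{not} automatic -- and it is where the only genuinely new analytic work of the proposition lies.

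The paper's proof supplies exactly this missing step: starting from \eqref{e:exprgammaz}, one uses \eqref{e:canonical2} to write, with $g_{\bar z}(z)=-\eps^{k}\bigl(D^{(\ell)}(\Pi_{\bar z}^{(\eps)}\tau)\bigr)(z)$, the identity \eqref{e:exprfe} for $f_z(\EE^k_{\ell+m}(\Gamma_{z\bar z}\tau))$; the correction term there is controlled by the \emph{inductive bound on} $\Gamma_{z\bar z}\tau$ (not on $\Pi_z\tau$), giving \eqref{e:boundf}, and what remains is the classical (parabolic) Taylor remainder \eqref{e:exprgammazbarz} of the smooth two-parameter function $(z,\bar z)\mapsto g_{\bar z}(z)$, which is $O(|z-\bar z|^{|\tau|+k-|\ell|})$. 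So your induction must carry the $\Gamma$-bounds on $\CT_\ex$ along with the $\Pi$-bounds (as you say), but the $\EE^k_\ell$ case needs this explicit Taylor-cancellation argument rather than a direct smooth-function estimate; without it the proof of the proposition is incomplete.
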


\begin{proof}
The argument is very similar to that of \cite[Prop.~8.27]{Regularity}.
The fact that the algebraic identity \eqref{e:algebraic} is satisfied follows immediately from
our construction. The analytical bounds \eqref{e:bounds} for $\Pi_z$ follow in exactly the same
way as in \cite[Prop.~8.27]{Regularity} from the stronger bound 
\begin{equ}
|\bigl(\Pi_z \tau\bigr)(\bar z)| \lesssim |z-\bar z|^{|\tau| \wedge 0}\;,
\end{equ}
which is easily verified by induction. Writing $\gamma_{z\bar z}$ \label{defofgammazbarz} for the
element in $\CT_+^*$ such that $\Gamma_{z\bar z} = \Gamma_{\gamma_{z\bar z}}$
(recall \eqref{e:Gammadef} for these notations), the required bounds on $\Gamma_{z\bar z}$ 
are equivalent to $\gamma_{z\bar z}(\bar \tau) \lesssim |z-\bar z|^{|\bar \tau|}$.
The bounds on $\gamma_{z\bar z}(\bar \tau)$
for $\bar \tau$ of the form $\J_k(\tau)$ follow from the bounds on $\Pi_z$ 
as in \cite[Prop.~8.27]{Regularity}, so
it only remains to get a bound on $\gamma_{z\bar z}(\EE^\ell_k(\tau))$. 

In almost exactly the same way as in the proof of \cite[Prop.~8.27]{Regularity}, 
it is straightforward to set up a inductive structure on $\CT$ which
allows us to assume that all components of $\Gamma_{z\bar z}\tau$ do satisfy the required bounds.
Proceeding exactly as in the last part of the proof of \cite[Prop.~8.27]{Regularity}, 
one then obtains the identity
\begin{equ}[e:exprgammaz]
\gamma_{z\bar z}(\EE^\ell_k(\tau)) = f_{\bar z} (\EE^\ell_k(\tau)) - \sum_m {(\bar z - z)^m \over m!}
f_{z} (\EE^\ell_{k+m}(\Gamma_{z\bar z}\tau))\;.
\end{equ}
Fix $\tau$ and write $g_{\bar z}(z)$ as a shorthand for $-\eps^{\ell} \bigl(D^{(k)}(\Pi_{\bar z}^{(\eps)}\tau)\bigr)(z)$, so that $f_{\bar z} (\EE^\ell_k(\tau)) = g_{\bar z}(\bar z)$. 
It follows from our construction that the map $(z,\bar z) \mapsto g_{\bar z}(z)$ is smooth.
It then follows
from \eqref{e:canonical2} that
\begin{equ}[e:exprfe]
f_{z} (\EE^\ell_{k+m}(\Gamma_{z\bar z}\tau)) = D^{(m)} g_{\bar z}(z) \mid_{z=\bar z} - \eps^{\ell} \bigl(D^{(k+m)} \bigl(\Pi_z^{(\eps)} \,\Proj_{<|k|+|m| - \ell}\Gamma_{z\bar z}\tau\bigr)\bigr)(z)\;,
\end{equ}
where $\Proj_{<\alpha}$ denotes the orthogonal projection onto $\CT_{<\alpha} \subset \CT_\ex$.  At this stage, we note that by our induction hypothesis one has $\|\Gamma_{z\bar z} \tau\|_\alpha \lesssim |z-\bar z|^{|\tau| - \alpha}$. In particular, we can combine this with \eqref{e:exprfe} to conclude that one has
\begin{equ}[e:boundf]
\bigl|f_{z} (\EE^\ell_{k+m}(\Gamma_{z\bar z}\tau)) - D^{(m)} g_{\bar z}(z)\bigr| \lesssim |\bar z - z|^{|\tau| + \ell - |k+m|}\;.
\end{equ}
We can also combine \eqref{e:exprfe} with \eqref{e:exprgammaz}, which yields the identity
\begin{equ}[e:exprgammazbarz]
\gamma_{z\bar z}(\EE^\ell_k(\tau)) = g_{\bar z}(\bar z) - \sum_{|m| < |\tau|+\ell-|k|} {(\bar z - z)^m \over m!}
D^{(m)}g_{\bar z}(z)\;,
\end{equ}
which is bounded by a multiple of $|z-\bar z|^{|\tau| + \ell - |k|}$ 
as a consequence of usual Taylor expansion. \end{proof}

It is however very important to keep in mind that not every admissible model is obtained in this way, 
or even as a limit of such models! This will be apparent in Section~\ref{sec:renorm} 
below where we describe the renormalization group.

\begin{proposition}
If $\zeta$ is smooth then the restriction of $\PPP\LL_\eps(\zeta)$ to $\CT$ is in $\MM$ for any $\eps$. 
\end{proposition}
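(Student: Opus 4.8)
The plan is to deduce this from the previous proposition together with the general principle, already noted in Section~\ref{sec:extended}, that restricting an admissible model for $(\CT_\ex,\CG)$ to the sector $\CT$ yields an admissible model for $(\CT,\CG)$. Concretely, by the previous proposition we have $\LL_\eps(\zeta) = (\Pi^{(\eps)}, f^{(\eps)}) \in \MM_\ex$, so all the analytical bounds \eqref{e:bounds} hold for $\tau \in \CW_\ex$ and the algebraic identities \eqref{e:algebraic} hold, and the admissibility identities \eqref{e:admissible1}--\eqref{e:admissible4} hold for $\tau \in \CW$. Since $\CW \subset \CW_\ex$ and since, as emphasised around \eqref{e:stable}, the structure group $\CG$ and the coalgebra $\CT_+$ are literally the same for both regularity structures, the pair $(\PPP\Pi^{(\eps)}, f^{(\eps)})$ obtained by restricting $\Pi^{(\eps)}_z$ to $\CT \subset \CT_\ex$ is a candidate model for $(\CT,\CG)$.

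The steps I would carry out, in order: first, observe that $f^{(\eps)}$ needs no modification at all — it is already a map $\R^2 \to \CG_+$, and $\CG_+$ is unchanged — so $\Gamma_{z\bar z} = \Gamma_{f_z^{(\eps)}}^{-1}\Gamma_{f_{\bar z}^{(\eps)}}$ is the same family of operators, now viewed via \eqref{e:stable} as acting on $\CT$ (indeed they preserve $\CT$ because $\Delta\colon \CT \to \CT\otimes\CT_+$). Second, note that the two analytical bounds in \eqref{e:bounds}, required now only for $\tau \in \CW$ and $\bar\tau \in \CW_+$, are a strict subset of those already verified in $\MM_\ex$, hence hold automatically. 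Third, the algebraic identities \eqref{e:algebraic} are statements about $\Pi_z$ composed with $\Gamma_{z\bar z}$ and about the cocycle property of $\Gamma$; the latter is untouched, and the former for $\tau \in \CW$ only involves $\Pi_z$ evaluated on $\CW$ (using that $\Gamma_{z\bar z}$ maps $\CW$-spanned elements back into $\CT$), so it descends from the extended statement. Fourth, admissibility: Definition~\ref{def:admissible} requires $\Pi_z\one = 1$, the identities \eqref{e:admissible1} for multiindices $k$, and \eqref{e:admissible2}--\eqref{e:admissible4} for $\tau \in \CW$ with $\CI(\tau) \in \CT$ — every one of these is among the conditions already imposed in the definition of $\MM_\ex$ at the end of Section~\ref{sec:models}, so there is nothing new to check.

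I do not expect a genuine obstacle here; the content is essentially bookkeeping. The one point that deserves a sentence of care is the self-consistency of the restriction operation on the $\CE$ symbols: for $\tau$ with $\CE^{k-1}(\tau) \in \CT$ one may well have $\tau \notin \CT$, so "restricting $\Pi^{(\eps)}$ to $\CT$" must be read, as explained in the paragraph following Remark~\ref{rem:redundant2}, as first computing $\CE^{k-1}(\tau) \in \CT_\ex$ inside the extended structure and only then projecting onto $\CT$. With that reading, \eqref{e:canonical2A} shows $\Pi_z^{(\eps)}\CE^{k-1}(\tau)$ is well-defined for such elements and its bound is inherited, so the restricted model is genuinely a model on $\CT$ and the admissibility conditions, which never refer to $\CE$ at all, are unaffected. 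Hence $\PPP\LL_\eps(\zeta)|_{\CT} \in \MM$.
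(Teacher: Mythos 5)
Your proposal is correct and follows essentially the same route as the paper, which leaves this proposition without an explicit proof precisely because it is an immediate consequence of the preceding proposition combined with the restriction principle stated in Section~\ref{sec:extended} (the structure group is common to $\CT$ and $\CT_\ex$, $\CT$ is a sector by \eqref{e:stable}, and the bounds and admissibility identities required on $\CT$ are a subset of those already verified on $\CT_\ex$). Your extra remark about how to read the restriction on symbols $\CE^{k-1}(\tau)$ with $\tau\notin\CT$ is a sensible clarification but adds nothing beyond the intended bookkeeping.
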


\subsection{Multiplication by \texorpdfstring{$\eps^{k}$}{epsk}}
\label{sec:epsk}

For any model that is constructed as the canonical lift of a smooth function as above to $\CT_\ex$,
the symbol $\Eps^{k}$ should be thought of as representing the operation of
``multiplication with $\eps^{k}$''. This is however not quite true: \eqref{e:canonical2A}
suggests that we should introduce the (model dependent) linear maps $\hat \Eps^k$ acting
on the spaces $\CD^\gamma$ by
\begin{equ}[e:defEpshat]
\bigl(\hat \Eps^k U\bigr)(z) = \Eps^k U(z) - \sum_{\ell} {X^\ell \over \ell!} f_z \bigl(\EE^k_\ell(U(z))\bigr)\;,
\end{equ}
where $f$ is determined by the underlying model on which $\CD^\gamma$ is based.
One then has the following fact where we implicitly assume that $U$
takes values in the domain of the operator $\Eps^k$.

\begin{proposition}\label{prop:multEps}
Let $\gamma \in \R$ and let $\delta = \inf\{\gamma - \alpha\,:\, \alpha \in A \cap (-\infty,\gamma)\}$. 
Then, if $U \in \CD^\gamma$, one has $\hat \Eps^k U \in \CD^{\bar \gamma}$ 
for $\bar \gamma = (\gamma + k) \wedge \delta$.
\end{proposition}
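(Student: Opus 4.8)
The plan is to verify the two defining bounds of $\CD^{\bar\gamma}$ for $\hat\Eps^k U$ directly from the definition \eqref{e:defEpshat}, using the structure of the model and the hypothesis $U\in\CD^\gamma$. The starting observation is that $\hat\Eps^k$ is built precisely so that its action intertwines nicely with $\Gamma_{z\bar z}$: I would first establish the key algebraic identity $\Gamma_{z\bar z}(\hat\Eps^k U)(\bar z) = \hat\Eps^k(\Gamma_{z\bar z} U(\bar z))$ up to the lower-order terms coming from \eqref{e:epsProp}. Concretely, applying $\Gamma_{z\bar z} = (I\otimes f_z^{-1} f_{\bar z})\Delta$ to $\Eps^k U(\bar z)$ and using the coproduct formula \eqref{e:epsProp}, one gets the ``main'' term $\Eps^k(\Gamma_{z\bar z}U(\bar z))$ plus a sum of terms of the form $\frac{X^\ell}{\ell!}\cdot(\text{something involving }\EE^k)$; the role of the subtraction in \eqref{e:defEpshat} at both $z$ and $\bar z$ is to absorb exactly these, leaving $(\hat\Eps^k U)(z) - \Gamma_{z\bar z}(\hat\Eps^k U)(\bar z)$ expressible in terms of $U(z)-\Gamma_{z\bar z}U(\bar z)$ together with a remainder that is polynomial in $(z-\bar z)$ with coefficients controlled by the smoothness bounds on $f_z(\EE^k_\ell(\cdot))$. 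This is the analogue of the computation in \cite[Sec.~5]{Regularity} for the operator $\CI$ (more precisely for the ``reconstruction–type'' maps), and I would model the bookkeeping on that, since as remarked after Definition~\ref{def:admissible} the results of \cite[Sec.~8.1, 8.2]{Regularity} for abstract integration carry over to $\CE$.

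The homogeneity count then gives the claimed $\bar\gamma$. Each term $\tau$ appearing in $U(z)$ with $|\tau|=\alpha$ produces, on the one hand, $\Eps^k\tau$ of homogeneity $\alpha+k$ — contributing the $\gamma+k$ part of $\bar\gamma$ — and on the other hand the subtracted polynomial terms $\frac{X^\ell}{\ell!} f_z(\EE^k_\ell(\tau))$ of homogeneity $|\ell|$, which by the constraint $|\tau|\le|\ell|<|\tau|+k$ on nonvanishing $\EE^k_\ell$ range over $[\alpha, \alpha+k)$; the smoothness of $z\mapsto f_z(\EE^k_\ell(\tau))$, i.e. the bound $|f_z(\EE^k_\ell(\tau)) - \sum_{|m|<|\tau|+k-|\ell|} \frac{(\bar z-z)^m}{m!} D^{(m)}(\cdots)|\lesssim |z-\bar z|^{|\tau|+k-|\ell|}$ (which is the content of \eqref{e:exprgammazbarz} type estimates in the previous proposition, and which I would invoke or re-derive), gives exactly the $\CD^{\alpha+k}$-type behaviour for these pieces as well. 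So far everything is consistent with $\bar\gamma=\gamma+k$. The reason the genuine answer is $(\gamma+k)\wedge\delta$ is the standard one for such maps: the modelled-distribution norm only ``sees'' components of homogeneity $<\bar\gamma$, and when we expand $(\hat\Eps^k U)(z)-\Gamma_{z\bar z}(\hat\Eps^k U)(\bar z)$ the terms of homogeneity $\ge\bar\gamma$ are discarded, but the error in discarding them is controlled only at order $|z-\bar z|^{\gamma-\alpha+k}$ coming from the top relevant component $\alpha$ of $U$, and when $\gamma-\alpha+k$ would exceed $\delta + (\text{lowest homogeneity present})$ one cannot do better than $\delta$ above the lowest level — this is precisely why \cite[Prop.~6.13]{Regularity} (multiplication, reconstruction) and similar statements always cap the output regularity at the ``gap'' $\delta$ of the spectrum $A$. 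I would write this truncation estimate out carefully following \cite[Prop.~6.13]{Regularity}.

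The main obstacle, and the place requiring genuine care rather than routine imitation, is that $\hat\Eps^k$ is model-dependent through $f$, and $\CE$ is only partially defined — it is a linear map on the subspace of $\CT_\ex$ spanned by products $\tau_1\cdots\tau_{2k+2}$ with $\tau_i\in\CU'$, not on all of $\CT$. So one must check that $U\in\CD^\gamma$ taking values in the domain of $\Eps^k$ actually guarantees that all the quantities $f_z(\EE^k_\ell(U(z)))$ appearing in \eqref{e:defEpshat} are defined and satisfy the needed regularity, and that no component of $U$ of homogeneity outside the admissible window $[\,\cdot\,,\,\cdot\,)$ causes trouble — here the constraint $\sum|\tau_i|>|\ell|-k$ built into the definition of $\CV_{\ell,k}$ is what makes the $\EE^k_\ell$ vanish outside the right range, so one genuinely needs to use the precise definitions from Section~\ref{sec:symbols} rather than treat $\CE$ as a black-box abstract integration operator. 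Once the domain issue is handled, the rest is a bookkeeping exercise parallel to \cite[Sec.~5 and Prop.~6.13]{Regularity}, and I would present it in that form, pointing out the one spot (the non-annihilation of polynomials by $\CE$, noted in Section~\ref{sec:structGroup}) where the argument formally differs but the estimates are unaffected.
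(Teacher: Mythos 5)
Your overall skeleton --- commute $\Gamma_{zz'}$ through $\hat \Eps^k$ using the coproduct \eqref{e:epsProp}, note that the subtraction in \eqref{e:defEpshat} absorbs the counterterms, then bound the ``main'' part by homogeneity raised by $k$ and the polynomial part separately --- is the same as the paper's. The genuine gap is in how you propose to control the polynomial part: you invoke smoothness (Taylor) bounds on $z\mapsto f_z(\EE^k_\ell(\tau))$, citing ``\eqref{e:exprgammazbarz}-type estimates''. Those estimates are established only for the canonical lift $\LL_\eps(\zeta)$ of a \emph{smooth} $\zeta$, where $f_z(\EE^k_\ell(\tau))=-\eps^k\bigl(D^{(\ell)}\Pi_z\tau\bigr)(z)$ is literally a smooth function of $z$; for a general model no such bound exists (Remark~\ref{rem:redundant2} stresses that nothing relates $f_z(\EE^k_\ell(\cdot))$ to $\Pi_z$), and the proposition is needed later precisely for renormalised and limiting models, where only bounds of the type \eqref{e:modelEpsWantedf} are available. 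The paper's proof never needs any regularity of $f$ in $z$: combining the analogue of \eqref{e:followsfrom} with the purely algebraic identity \eqref{e:idengamma} for $\gamma_{zz'}(\EE^k_\ell\tau)$, the difference collapses exactly to $\Eps^k\bigl(U(z)-\Gamma_{zz'}U(z')\bigr)+\sum_\ell \frac{X^\ell}{\ell!}\, f_z\bigl(\EE^k_\ell(\Gamma_{zz'}U(z')-U(z))\bigr)$, i.e.\ \eqref{e:boundepsk}, so the only smallness ever used is the $\CD^\gamma$ increment of $U$ itself, with $f_z$ acting on it as a fixed locally bounded functional; the remark following the proof makes the point that the only property of $f$ used is $f_z(X^k)=z^k$, not even admissibility. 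As written, your argument would prove the statement only for canonical (smooth) models; to repair it you must push the algebra all the way to \eqref{e:boundepsk} instead of Taylor-expanding $f$.

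A second, related inaccuracy is your explanation of the cap $\bar\gamma\le\delta$: you attribute it to discarding components of homogeneity $\ge\bar\gamma$, as in the multiplication estimates. No projection is built into \eqref{e:defEpshat}, and truncation is not the mechanism. The cap comes from the counterterm: the component of $\Gamma_{zz'}U(z')-U(z)$ at the largest homogeneity $\alpha<\gamma$ is only $O(|z-z'|^{\gamma-\alpha})=O(|z-z'|^{\delta})$, and $\EE^k_\ell$ sends it to polynomial levels $|\ell|\ge 0$, in particular to the $\one$-component ($\ell=0$); requiring the $\CD^{\bar\gamma}$ bound at level $0$ then forces $\bar\gamma\le\delta$. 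Your exponent ``$\gamma-\alpha+k$'' is not the limiting one --- the $+k$ never helps here because the obstruction sits in the polynomial counterterm, not in $\Eps^k$ applied to the increment. (Your concern about the domain of $\Eps^k$ and the vanishing of $\EE^k_\ell$ outside the window is fine, and is exactly what the standing assumption before the proposition and the conventions after \eqref{e:defDelta} take care of.)
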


\begin{proof}
Our aim is to obtain a suitable bound on the components of 
$
\bigl(\hat \Eps^k U\bigr)(z) - \Gamma_{zz'} \bigl(\hat \Eps^k U\bigr)(z')
$.
For this, we note that one has from \eqref{e:defEpshat},
\begin{equs}
\Gamma_{zz'} \bigl(\hat \Eps^k U\bigr)(z')
= \Gamma_{zz'} \Eps^k U(z') - \sum_\ell {(X+z- z')^\ell \over \ell!} f_{z'} \bigl(\EE^k_\ell(U(z'))\bigr).
\end{equs}
Now from  the analogue for $\Eps^k $ of the proof of \cite[Thm. 8.24]{Regularity}
\begin{equs}\label{e:followsfrom}
\Gamma_{zz'} \Eps^k U(z') 
= \Eps^k \Gamma_{zz'} U(z') + \sum_{\ell,m} {X^\ell \over \ell!} {(z-z')^m \over m!} \gamma_{zz'} \bigl(\EE^k_{\ell+m}(U(z'))\bigr).
\end{equs}
At this stage, we make use of the fact that one has the identity \cite[p.~127]{Regularity}
\begin{equ}[e:idengamma]
\gamma_{zz'}(\EE^k_\ell\tau) = f_{z'}(\EE^k_\ell\tau) - \sum_m {(z'-z)^m \over m!} f_z \bigl(\EE^k_{\ell+m}\Gamma_{zz'}\tau\bigr)\;.
\end{equ}
Inserting this into the above expression and using the
binomial identity yields
\begin{equ}
\Gamma_{zz'} \bigl(\hat \Eps^k U\bigr)(z') =  \Eps^k \Gamma_{zz'} U(z') - \sum_\ell {X^\ell \over \ell!} f_z \bigl(\EE^k_\ell(\Gamma_{zz'}U(z'))\bigr)\;,
\end{equ}
so that
\begin{equs}
\bigl(\hat \Eps^k U\bigr)(z) - \Gamma_{zz'} \bigl(\hat \Eps^k U\bigr)(z')
&= \Eps^k \bigl(U(z) - \Gamma_{zz'} U(z')\bigr) \label{e:boundepsk} \\ &\qquad +  \sum_\ell {X^\ell\over \ell!} f_z \bigl(\EE^k_\ell \bigl(\Gamma_{zz'}U(z') - U(z)\bigr)\bigr)\;.
\end{equs}
The components in $\CT_\alpha$ arising from the first term are bounded
by $|z-z'|^{\gamma + k - \alpha}$ as a trivial consequence of the definition of $\CD^\gamma$
and the fact that $|\Eps^k \tau| = |\tau|+k$, so that we only need to consider the
components arising from the second
term. For this, we only need to note that these components are bounded by some multiple of $|z-z'|^\delta$ as
an immediate consequence of the definitions of $\CD^\gamma$ and $\delta$.
\end{proof}

\begin{remark}
There are two very important facts to note here. First, we do \textit{not} assume that
$\gamma > 0$. Second, the only property of $f$ that we used is that $f_z(X^k) = z^k$. In particular,
we do not need to assume that our model is the canonical model associated to a smooth function and
parameter $\eps > 0$. Actually, we do not even need to assume that it is admissible.
\end{remark}

It is then immediate from \eqref{e:canonical2} that if this model is the canonical model associated
to a smooth function as in Section~\ref{sec:canonical}, then the reconstruction operator
defined in Section~\ref{sec:Dgamma} satisfies the identity
\begin{equ}[e:propReconstr]
\CR \hat \Eps^{k-1} \bigl(U_1 \cdots U_{2k}\bigr) = \eps^{k-1} \CR U_1 \cdots \CR U_{2k}\;,
\end{equ}
so that the operation $\hat \Eps^{k-1}$ does indeed represent multiplication by $\eps^{k-1}$.
In general, if we have any model on $(\CT_\ex, \CG)$ consisting of smooth functions and satisfying
the identities \eqref{e:canonical2}, then $\CR \hat \Eps^{\ell} U = \eps^\ell \CR U$.
This remains true even in situations where \eqref{e:canonical} fails and / or when
$U \in \CD^\gamma$ for some $\gamma < 0$, provided that in the latter case one defines
$\CR U$ through the identity $(\CR U)(x) = \bigl(\Pi_x U(x)\bigr)(x)$.

\section{Abstract solution map}
\label{sec:solMap}

We start this section with a computation on $\CT_\ex$ showing that if one starts with sufficiently regular initial data, one
expects a well-posed fixed point problem in $\CD^\gamma=\CD^\gamma(\CT_\ex)$ for $\gamma>3/2$.  There are two key issues
which will have to be addressed in subsections \ref{sec:epsilon} -\ref{picit}:  1.  In order to iterate the argument to get global solutions, we will want to be able to start with less regular initial data; and, 2. We want the fixed point argument on $\CT$ itself, instead of $\CT_\ex$ where we can think of $\Eps^j$ as abstract integration operators increasing 
homogeneity by $j$.  For these reasons we will introduce spaces $\CD^{\gamma,\eta}_\eps$ in Section \ref{sec:epsilon}.
From now on, in order to simplify notations and similarly to \cite{Regularity}, we use the shortcut notation
\begin{equ}
\Psi = \CI'(\Xi)\;.
\end{equ}
We also write $\CQ_{\le 0}$ for the projection onto $\bigoplus_{\alpha \le 0}\CT_\alpha$ in $\CT_\ex$.\label{defofCQ}
Fix now some coefficients $\hat a_j$ and define the linear maps on $\CT_\ex$  given by
\begin{equs}
\hat \CF (\tau) &= \sum_{j=1}^m \hat a_j \CQ_{\le 0}\Eps^{j-1} \bigl(\CQ_{\le 0}\Psi^{2j} \tau\bigr)\;,\\ 
\hat \CF^{(n)}(\tau) &= \sum_{j=\lceil n/2\rceil}^m (2j+1-n)\cdots (2j)\hat a_j \CQ_{\le 0}\Eps^{j-1} \bigl(\CQ_{\le 0} \Psi^{2j-n} \tau\bigr)\;,\\
\hat \FF^{(n)}(\tau) &= \sum_{j=\lceil n/2\rceil}^m (2j+1-n)\cdots (2j)\hat a_j \EE_0^{j-1} \bigl(\CQ_{\le 0}\Psi^{2j-n} \tau\bigr)\;.
\end{equs}
(Of course we assume $n \le 2m$.)
We will also write $\hat \CF'$ as a shortcut for $\hat \CF^{(1)}$ and $\hat \CF''$ as a shortcut for $\hat \CF^{(2)}$.

Since the homogeneity of $\Psi$ is just below $-1/2$, and, according to \eqref{e:homofE},  $\CE^{j-1}$ increases the homogeneity by 
$j-1$, $\hat \CF$ decreases the homogeneity of its argument by just a bit more than $1$, 
$\hat \CF'$ decreases it by just a bit more than ${1\over 2}$, 
$\hat \CF''$ decreases it by a little bit more than $0$, and all the other
$\hat \CF^{(n)}$ increase the homogeneity of their argument (provided that $\kappa$ is small enough). 
From now on, we will write $\hat \CF^{(n)}\tau$ instead of $\hat \CF^{(n)}(\tau)$ 
and we will use the shorthand
\begin{equ}
\hat \CF^{(n)} = \hat \CF^{(n)}(\one)\;.
\end{equ}
Note also that $\Gamma \hat \CF^{(n)}(\tau) = \hat \CF^{(n)}(\Gamma \tau)$ for $n \le 2$, so that 
one actually has $\Pi_x \hat \CF \in \CC_\s^{-1-2m\kappa}$, 
$\Pi_x \hat \CF' \in \CC_\s^{-{1\over 2}-(2m-1)\kappa}$, etc. for every model
$(\Pi,\Gamma)$.

Denote now by $\CP$ the integration operator given by 
\begin{equ}\label{eq:integrationop}
\CP = \CK + R \CR\;,
\end{equ}
where $\CK$ is the operator defined from the kernel $K$ as in \cite[Sec.~5]{Regularity},
$\CR$ is the reconstruction operator, and $R$ is defined in \cite[Lemma~7.7]{Regularity}. 
 For suitable $\alpha > 0$, the operator
$\CP$ maps $\CD^\alpha$ to $\CD^{\alpha+2}$ as a consequence of \cite[Thm~4.7]{Regularity}. 
We also write $\one_+$ for the indicator function of the set of positive
times $\{(t,x)\,:\, t> 0\}$.
Because of its discontinuity at the origin, multiplication with $\one_+$ is not
a bounded linear operator on $\CD^\gamma$, so, as in \cite{Regularity},
one really does this on $\CD^{\gamma,\eta}$ defined at the end of Section~\ref{sec:Dgamma}.  However, the argument is only
formal at this point anyway because of the initial conditions, so we do not pursue it yet.
\footnote{Note that $\one_+$ is called $\PR^{+}$ in \cite{Regularity}}. 

With these notations at hand it is natural, just as in \cite[Sec.~9]{Regularity}, 
to associate to our problem the fixed point equation 
\begin{equ}[e:abstrFPorig]
H = \CP \one_+ \Bigl(\Xi + \sum_{j=1}^m \hat a_j \CQ_{\le 0} \hat\Eps^{j-1} \bigl(\CQ_{\le 0}(\DD H)^{2j}\bigr)\Bigr) + P h_0\;,
\end{equ} 
where $\DD$ was defined in Section~\ref{sec:structGroup}.

\begin{remark}
The reason why we are so explicit about the presence of the projection operators
$\CQ_{\le 0}$ (the analogous projections were mostly implicit in \cite{Regularity}) is
that we will end up in a situation where $(\DD H)^{2j}$ belongs to a space $\CD^{\gamma_j,\eta_j}$
with $\gamma_j < 0$ for some $j$. Projecting onto $\CQ_{<\gamma_j}$, as is done in \cite{Regularity},
would then have the effect of actually modifying the effect of the reconstruction operator on 
$\hat\Eps^{j-1} \bigl((\DD H)^{2j}\bigr)$, which is not a desirable feature.
\end{remark}

In principle, one may want to look for solutions to this problem 
in $\CD^{\gamma,\eta}$ for suitable
values of $\gamma$ and $\eta$. The remainder of this section is devoted to the 
study of \eqref{e:abstrFPorig}. Before we delve into the details, we give
a heuristic argument showing why one would expect this equation to 
have local solutions.
First, we note that \eqref{e:abstrFPorig} is of the form
\begin{equ}[e:abstract]
H = \CI \Bigl(\Xi + \sum_{j=1}^m \hat a_j  \CQ_{\le 0}\Eps^{j-1} \bigl(\CQ_{\le 0}(\DD H)^{2j}\bigr)\Bigr) + (\ldots)\;,
\end{equ}
where $(\ldots)$ denotes terms taking values in $\bar \CT$.
These additional terms arise as in \cite{Regularity} from the initial condition
and from the fact that the operator $\CP$ representing convolution with the heat kernel
is given by $\bigl(\CP f\bigr)(z) = \CI f(z) + (\ldots)$, where $(\ldots)$ denotes
again some terms taking values in $\bar \CT$. 

It follows that if we are able to solve \eqref{e:abstrFPorig} in 
$\CD^{\gamma,\eta}$ for ${3\over 2} < \gamma < 2 - (6m-2)\kappa$, then 
\textit{any} solution is necessarily of the form
\begin{equ}[e:U]
H = h\cdot\one + \CI(\Xi) + \CI(\hat \CF) + h'\cdot X + \CI(\hat \CF' \CI'(\hat \CF)) + h'\cdot\CI(\hat \CF')\;,
\end{equ}
for some continuous real-valued functions $h = h(t,x)$ and $h' = h'(t,x)$. 
Note that $h$ is \textit{not}
necessarily differentiable and that even when it is, $h'$ is \textit{not} in general the derivative of $h$ (see Section 2 of \cite{Hairer:2015aa} for an introduction and explanation of this issue).
This notation is only used by analogy with the usual Taylor expansions.  To obtain \eqref{e:U}, 
write the right hand side of \eqref{e:abstract}
first with $H=0$, then with the resulting $H$ from the left hand side substituted 
into the right hand side, etc.\ until the expression stabilises and
only components in $\bar\CT$ change from one step to the next. 
In the simpler context of the KPZ equation, this is explained in the proof of 
Proposition~15.12 of \cite{Book}.
The abstract derivative of $H$ is therefore given by
\begin{equ}[e:DU]
\DD H = \Psi + \CI'(\hat \CF) + h'\cdot \one + \CI'(\hat \CF' \CI'(\hat \CF)) + h'\,\CI'(\hat \CF')\;.
\end{equ}
Regarding the argument of $\CI$ in the right hand side
of \eref{e:abstract},
since we only keep terms of negative (or vanishing) homogeneities, it is given by
\begin{equs}
 {} & \Xi + \hat \CF + \hat \CF' \CI'(\hat \CF) + h'\cdot \hat \CF' + 
\hat \CF' \CI'(\hat \CF' \CI'(\hat \CF)) + h'\cdot \hat \CF' \CI'(\hat \CF') \qquad \label{e:RHS} \\
&\quad + {1\over 2} \hat \CF^{(2)} \bigl(\CI'(\hat \CF) + h'\cdot \one\bigr)^2 
- \sum_{n > 2} f_z \bigl(\hat
\FF^{(n)}\bigl((\CI'(\hat \CF) + h'\cdot \one)^n\bigr)\bigr)\,\one\;.
\end{equs}
The reason why no other terms of the form 
$\hat \FF^{(n)}(\cdot)$ appear in this expression is that 
$\EE_0^j(\tau) = 0$ for $\tau$ such that 
$|\tau| > 0$ (see the remark just after \eqref{e:defDelta} as well as the definition 
\eqref{e:abstrFPorig} of our fixed point problem).

As a consequence of \cite[Thm~4.7]{Regularity} and Proposition~\ref{prop:multEps}, we then note 
that if $H \in \CD^\gamma$ for $\gamma > {3\over 2} + \kappa$
then, disregarding the effect of initial conditions and provided that 
$\kappa$ is sufficiently small, the Picard 
iteration \eref{e:abstract} maps $\CD^\gamma$ into $\CD^{\gamma'}$ with
\begin{equ}
\gamma' = \gamma + {1\over 2} - (2m-1)\kappa \;.
\end{equ}
This strongly suggests that it is possible to build local fixed points of the Picard
iteration for $\kappa$ sufficiently small. It turns out that this heuristic is 
correct, although technical problems arise due to the effect of the initial condition.
The resolution of these problems is the subject of the remainder of this section.

%
%
%
%
%
%
%
\subsection{Dealing with irregular initial conditions}
\label{sec:epsilon}

There is a problem with the argument outlined above stemming from the class of initial
conditions we would like to consider. Since the solutions to the KPZ equation are $\alpha$-H\"older
continuous only for $\alpha < {1\over 2}$, we would like to have a (uniform in the small
parameter $\eps$ controlling our smoothing) solution theory for the
approximating equations that can deal with this type of initial data. The problem is that 
in this case, even for fixed $\eps$, say $\eps = 1$, and considering the deterministic equation
\begin{equ}
\d_t h = \d_x^2 h + (\d_x h)^{2m} + \zeta\;,\qquad h(0,\cdot) = h_0 \in \CC^\alpha\;,
\end{equ}
for some smooth $\zeta$, one expects the supremum norm of $\d_x h$ to develop a
singularity of order $t^{(\alpha-1)/2}$ at the origin, since this is what happens for solutions
to the heat equation.
 As a consequence, the term $(\d_x h)^{2m}$
leads to a non-integrable singularity as soon as $\alpha < 1$ and $m$ is large enough!   (\cite{ben-artzi} gives a nice survey of what is known about  the deterministic problem.)

One could of course circumvent this problem by simply postulating that the initial data is smooth
(or say Lipschitz continuous). However, in order to obtain approximation results for any fixed time
interval, one would like to exploit the global well-posedness of the limiting equation in order to
``restart'' our approximation argument (see Proposition~\ref{prop:restart}). 
Such an argument would then of course break down since the limiting solutions are only in $\CC^\alpha$
for $\alpha < {1\over 2}$. On the other hand,
it is reasonable to expect the solutions to the approximate equation to remain smooth at scales below 
$\eps$. In order to formalise this, we will introduce spaces of models / functions / modelled 
distributions that depend on a parameter $\eps \in (0,1]$, as well as their limiting counterparts for
$\eps = 0$, and we will set up suitable notions of convergence in such a context.

Recall from Section~\ref{sec:structure} that $\CU'\subset \CW$ is 
the set of all formal expressions in $\CW$ which are 
of the form $\CI'(\tau)$ for some $\tau$ in $\CW$.
For $\eps > 0$, we then define a class of $\eps$-models $\MM_\eps$ \label{epsilonmodels} 
which consist of all
admissible models $(\Pi,f)$ that furthermore satisfy the bounds 
\minilab{e:modelEpsWanted}
\begin{equs}[2]
\bigl|f_z\bigl(\EE^k_\ell(\tau)\bigr)\bigr| &\le C \eps^{|\tau| + k - |\ell|}\;,&\quad \tau &\in \CV_{\ell,k}\,:\,
|\ell| \ge |\tau| > |\ell|-k\;,\label{e:modelEpsWantedf}\\
\bigl|\bigl(\Pi_z\tau\bigr)(\phi_z^\lambda)\bigr| &\le C \lambda^{\bar \gamma} \eps^{|\tau| - \bar \gamma}\;,
&\quad \tau &\in \CU'\;,\quad \bar \gamma = 1 - {1\over 32m}\;,\qquad\label{e:modelEpsWantedPi}
\end{equs}
for some constant $C$, uniformly for $z$ belonging to an arbitrary compact set and for 
$\lambda \le \eps$. Here, $m$ is as in \eqref{e:abstrFPorig} and $\kappa$ is as
in \eqref{e:regXi}. The second bound is assumed to hold uniformly 
over all test functions $\phi \in \CB$
as in Section~\ref{sec:models} such that furthermore $\int \phi(z)\,dz = 0$.

\begin{remark}
The second bound in \eqref{e:modelEpsWanted} is non-trivial (i.e.\ not already implied by 
the definition of a model) only if $|\tau| < \bar \gamma$.
Note also that the condition on $\phi$ guarantees that $\bigl(\Pi_z\one\bigr)(\phi_z^\lambda) = 0$,
so that the bound holds trivially for all of $\bar \CT$.
\end{remark}

Note that, viewed as sets, one has of course $\MM_\eps = \MM_{\eps'}$ for
any $\eps,\eps' > 0$. However, 
they do differ at the level of the corresponding natural distance functions.
Indeed, we introduce a natural family of ``norms'' on $\MM_\eps$ by setting
$\$\Pi\$_\eps = \$\Pi\$ + \|\Pi\|_\eps$ with
\begin{equ}[e:modelEps]
\|\Pi\|_\eps = \sup_{z }\Bigl( \sup_{\tau \in \CV_{\ell,k}} \sup_{k,\ell}\eps^{|\ell| - k -|\tau|} \bigl|f_z\bigl(\EE^k_\ell(\tau)\bigr)\bigr| +  \sup_{\tau \in \CU' \atop |\tau| < \bar \gamma} \sup_{\lambda\le \eps\atop \phi}\lambda^{-\bar \gamma}\eps^{\bar \gamma+\delta-|\tau|}\bigl|\bigl(\Pi_z\tau\bigr)(\phi_z^\lambda)\bigr| \Bigr)\;,
\end{equ}
where the supremum over $\phi$ runs over the same set as above.
In particular, the restriction of the canonical lift $\LL_\eps(\zeta)$ to $\CT$ is in $\MM_\eps$ for any $\eps>0$.

\begin{remark}
We have made an abuse of notation here: Unlike for the class of models considered in \cite{Regularity}, 
there is here in general no canonical way of recovering $f$ from $\Pi$, so we should
really write $\|(\Pi,f)\|_\eps$ instead. This is because while our definition of an admissible model
imposes \eqref{e:admissible} which determines $f_z(\J_k\tau)$ in terms of $\Pi$, 
there is no analogue of this for $f_z(\Eps_k^\ell \tau)$. We do  have \eqref{e:canonical2} for the canonical lift, but this is \textit{not} preserved by our renormalisation procedure.
Furthermore, unlike \eqref{e:admissible}, it is not a continuous relation in the 
topology on models introduced in \cite{Regularity}.
\end{remark}

The natural way of comparing two elements of $\MM_\eps$ is to set
\begin{equ}
\$\Pi; \bar \Pi\$_\eps = \$\Pi; \bar \Pi\$ + \|\Pi - \bar \Pi\|_\eps\;.
\end{equ}
The point here is that we will be interested in distance bounds that are uniform in $\eps$ as
$\eps \to 0$.

We also introduce $\MM_0$ which is the subspace of $\MM$ consisting of those admissible
models that furthermore satisfy $f_z\bigl(\EE^k_\ell(\tau)\bigr) = 0$ for every 
$\tau$ and every $k$ and $\ell$. 
Since both $\MM_\eps$ and $\MM_0$ are subspaces of $\MM$, we can in principle
compare them by using the metric $\$\cdot;\cdot\$$ on $\MM$. It will also be convenient to set
up a way of comparing elements in $\MM_\eps$ with elements in $\MM_0$ in a way that takes into account
the $\eps$-dependence. This is done
by setting
\begin{equ}[e:boundPidiff]
\$\Pi; \bar \Pi\$_{\eps,0} = \$\Pi; \bar \Pi\$ + \|\Pi\|_\eps\;,
\end{equ}
for every pair of admissible models with $(\Pi, \Gamma) \in \MM_\eps$ and $(\bar\Pi,\bar\Gamma) \in \MM_0$.

\begin{remark}
One might wonder if there is a natural way of comparing elements $(\Pi,\Gamma) \in \MM_\eps$ 
with elements $(\bar \Pi,\bar \Gamma) \in \MM_{\bar \eps}$ for $0 < \bar \eps < \eps$.
For $\bar \eps > \eps / 2$ say, it is natural to view both models as belonging to $\MM_\eps$
and to use the distance $\$\cdot;\cdot\$_\eps$ defined there. For $\bar \eps < \eps / 2$ on
the other hand, it is more natural to set 
$\$\Pi, \bar \Pi\$_{\eps,\bar \eps} = \$\Pi; \bar \Pi\$ + \|\Pi\|_\eps + \|\bar \Pi\|_{\bar \eps}$.
We will however not make use of these definitions in the sequel.
\end{remark}

We similarly introduce $\eps$-dependent norms on suitable subspaces $\CD^{\gamma,\eta}_\eps$
\label{def:cep} of
the spaces $\CD^{\gamma,\eta}$ of modelled distributions previously 
introduced in \eqref{e:defgammaeta}. 
We will usually consider situations where the space $\CD^{\gamma,\eta}_\eps$ 
is built from an underlying model
belonging to $\MM_\eps$, but this is not needed in general. 
The space $\CD^{\gamma,\eta}_\eps$ consists of the elements $H \in \CD^{\gamma,\eta}$
such that the norm $\|H\|_{\gamma,\eta;\eps}$ given by
\begin{equ}[e:defgammaetaeps]
\|H\|_{\gamma,\eta;\eps} = \|H\|_{\gamma,\eta} 
+  \sup_{z}\sup_{\alpha > \eta} {| H(z)|_\alpha\over \eps^{\eta - \alpha}} 
+ \sup_{ |z-\bar z| \le  \sqrt{|t| \wedge |\bar t|}\atop|z-\bar z| \le \eps}\sup_{\alpha<\gamma} {|H(z) - \Gamma_{z\bar z} H(\bar z)|_\alpha \over |z-\bar z|^{\gamma - \alpha} \eps^{\eta-\gamma}}\;,
\end{equ}
is finite.

Note that the space $\CD^{\gamma,\eta}_0$ is nothing but $\CD^{\gamma,\eta}$.
The norms \eqref{e:defgammaetaeps}
are of course all equivalent as long as $\eps > 0$, but as $\eps \to 0$ they get closer 
and closer to the inequivalent norm $\|\cdot\|_{\gamma,\eta}$.

As before, it is natural to
compare elements $H \in \CD^{\gamma,\eta}_\eps$ with elements $\bar H \in \CD^{\gamma,\eta}_0$
by setting
\begin{equ}[e:bounddifff]
\|H;\bar H\|_{\gamma,\eta;\eps} = \|H;\bar H\|_{\gamma,\eta} 
+  \sup_{z}\sup_{\alpha> \eta} {| H(z)|_\alpha\over \eps^{\eta -\alpha}}
+ \sup_{ |z-\bar z| \le  \sqrt{|t| \wedge |\bar t|}\atop|z-\bar z| \le \eps}\sup_{\alpha<\gamma} {| H(z) - \Gamma_{z\bar z}  H(\bar z)|_\alpha \over |z-\bar z|^{\gamma - \alpha} \eps^{\eta-\gamma}}\;.
\end{equ}


\begin{remark}
As before, the fact that $\bar H$ does not appear in the second term of
\eqref{e:bounddifff} is not a typo. Indeed, for general $\bar H \in \CD^{\gamma,\eta}_0$
this supremum would in general be infinite.
\end{remark}

\subsection{Properties of the spaces \texorpdfstring{$\CD^{\gamma,\eta}_\eps$}{Dge}}

In this section, we collect some useful properties of the spaces $\CD^{\gamma,\eta}_\eps$
introduced earlier. Unless otherwise specified, we make the following standing assumptions
and abuses of notation:
\begin{claim}
\item Whenever we make a claim of the type ``if $H$ belongs to $\CD^{\gamma,\eta}_\eps$,
then $\bar H$ belongs to $\CD^{\bar \gamma,\bar \eta}_\eps$'', it is understood that the norm
of $\bar H$ can be bounded in terms of the norm of $H$, uniformly over $\eps \in [0,1]$
and over models in $\MM_\eps$ with bounded norm.
\item When comparing modelled distributions in $\CD^{\gamma,\eta}_\eps$ with
some in $\CD^{\gamma,\eta}_0$, we always assume that we are given respective models
 $(\Pi,\Gamma) \in \MM_\eps$ and $(\bar \Pi,\bar \Gamma)\in \MM_0$. 
Modelled distributions denoted by $H$, $H_1$, etc are assumed to belong
to spaces $\CD^{\gamma,\eta}_\eps$ based on 
$(\Pi,\Gamma)$, while $\bar H$, $\bar H_1$, etc belong to spaces $\CD^{\gamma,\eta}_0$
based on $(\bar \Pi,\bar \Gamma)$.
\item Whenever we write $\Phi \lesssim \Psi$ for two expressions $\Phi$ and $\Psi$ depending
on $\eps$, it is understood that there exists
a constant $C$ independent of $\eps$ such that $\Phi \le C \Psi$. For every fixed value
$\bar C > 0$, 
the constant $C$ can be chosen the same for all possible functions / models appearing in 
$\Phi$ and $\Psi$, as long as their norms are bounded by $\bar C$.
\item We implicitly assume that the modelled distributions we consider take values in
sectors such that the operations we perform are well-defined.
\item The space-time domain on which our elements are defined is given by
$[0,T] \times S^1$ for some $T \in [\eps^2, 1]$.
\end{claim}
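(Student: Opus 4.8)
The five standing assumptions above are conventions rather than assertions that carry a proof of their own: the last three merely fix the $\lesssim$-notation, the admissible-sector hypothesis, and the space-time domain $[0,T]\times S^1$ with $T\in[\eps^2,1]$ — the lower bound on $T$ guaranteeing that the time interval is at least one parabolic $\eps$-cell long, so that the ``small-scale'' regime discussed below is non-empty. The genuine content sits in the first two: they declare that \emph{every} estimate proved in this section — for products of modelled distributions, for the action of $\CP=\CK+R\CR$, for the reconstruction operator $\CR$, for the multiplication operators $\hat\Eps^k$, and for multiplication by $\one_+$ — comes with an implicit constant uniform over $\eps\in[0,1]$ and over models in $\MM_\eps$ (resp.\ over pairs at bounded $\$\cdot;\cdot\$_{\eps,0}$-distance). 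What I would actually do, instead of ``proving'' the list, is set up once and for all a single scaling reduction that makes all such $\eps$-uniform bounds follow mechanically from their $\eps=1$ and $\eps=0$ counterparts, so that in the body of the section one simply writes ``uniformly in $\eps$'' and quotes \cite{Regularity} (together with Proposition~\ref{prop:multEps} for $\hat\Eps^k$).

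The plan is a two-regime splitting of each seminorm occurring in \eqref{e:defgammaetaeps} and \eqref{e:modelEps}. On the \emph{large-scale} part ($|z-\bar z|\ge\eps$, resp.\ test-function scale $\lambda\ge\eps$), the extra terms in \eqref{e:defgammaetaeps}, \eqref{e:modelEps} are dominated by the ordinary $\CD^{\gamma,\eta}$- and $\MM$-seminorms, so $\CD^{\gamma,\eta}_\eps\hookrightarrow\CD^{\gamma,\eta}$ and $\MM_\eps\hookrightarrow\MM$ with norm at most one, and the corresponding piece of any operator bound is exactly the $\eps=0$ statement of \cite{Regularity}. On the \emph{small-scale} part ($|z-\bar z|\le\eps$, resp.\ $\lambda\le\eps$) one applies the parabolic dilation $\sigma_\eps\colon(t,x)\mapsto(\eps^{-2}t,\eps^{-1}x)$ to the model, to $H$, and to the test functions: the enhanced bounds \eqref{e:modelEpsWanted} and the two additional terms of \eqref{e:defgammaetaeps} are tailored precisely so that the dilated objects satisfy the \emph{unit-scale} bounds of a model in $\MM_1$ and of an element of $\CD^{\gamma,\eta'}$ for a suitably shifted weight $\eta'$ — the intuition being that below scale $\eps$ everything ``looks smooth,'' so the $|t|\wedge|\bar t|$-weight may be replaced by the constant $\eps$ and the components of homogeneity above $\eta$ are allowed to grow no faster than $\eps^{\eta-\alpha}$. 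The powers of $\eps$ appearing in \eqref{e:modelEps}, \eqref{e:defgammaetaeps} are precisely the ones dictated by $\sigma_\eps$, which is why they are the natural normalisation. Having dilated, one invokes the $\eps=1$ version of the estimate in question and dilates back, collecting the power of $\eps$ prescribed by those $\eps$-factors on the source and target sides (and, for $\hat\Eps^k$, the factor $\eps^k$ from Proposition~\ref{prop:multEps}). Matching the two regimes at $|z-\bar z|\sim\eps$ and checking that no logarithm is generated is the one place where a short computation is needed; since the weight $\eps^{\eta-\gamma}$ is continuous across that threshold and the exponents $\gamma-\alpha$, $\eta-\alpha$ agree on the two sides, this is immediate.

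For the second standing assumption — comparisons between an $\eps$-object and its $\eps=0$ limit — the same splitting applies, but with the crucial caveat that on the limiting side one controls only the ordinary $\CD^{\gamma,\eta}$- and $\MM$-norms. This is exactly why the second and third terms of \eqref{e:bounddifff} (and of \eqref{e:boundPidiff}) involve only $H$, $\Pi$ and never $\bar H$, $\bar\Pi$; as already noted in the remark above, these ``missing'' suprema would otherwise be infinite for a generic limiting object. Thus on the large-scale part one uses the ordinary difference seminorms $\|H;\bar H\|_{\gamma,\eta}$ and $\$\Pi;\bar\Pi\$$, while on the small-scale part one uses $\|H\|_{\gamma,\eta;\eps}$, $\|\Pi\|_\eps$ on the $\eps$-side together with the fact that for $|z-\bar z|\le\eps$ the components of $\bar H(z)$ of homogeneity above $\eta$ are bounded by $|t|^{(\eta-\alpha)/2}\lesssim\eps^{\eta-\alpha}$, so the whole difference expression collapses to a bound in $\|H\|_{\gamma,\eta;\eps}$ alone (plus genuinely lower-order terms). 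The main obstacle throughout is therefore not any individual inequality but the bookkeeping: keeping the two regimes, the $|t|$-weights, and the powers of $\eps$ mutually consistent as one pushes $H$ through products, through $\CK$ and $R\CR$, through $\CR$, and through multiplication by $\one_+$, so that every constant genuinely remains bounded as $\eps\to0$. Once the scaling reduction is in place this becomes routine, which is what licenses the blanket conventions stated here.
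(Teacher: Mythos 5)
You are right that this item is a list of standing conventions rather than a provable claim, and indeed the paper attaches no proof to it; on that point your reading matches the text exactly. Where you diverge is in the mechanism you propose for making the conventions legitimate. The paper does not set up a single scaling reduction: each subsequent estimate (Propositions~\ref{prop:pointwise}--\ref{prop:epsSing}) is obtained by rerunning the corresponding proof of \cite{Regularity} essentially verbatim, with the single substitution of the weight $\|z,\bar z\|_P$ by $\eps + \sqrt{|t|\wedge|\bar t|}$, supplemented for $\hat\Eps^k$ by the explicit case analysis $|z-z'|\le \eps$ versus $|z-z'|\ge \eps$ in Proposition~\ref{prop:epsSing} and by the model bound \eqref{e:boundfEpsk}; uniformity in $\eps$ is then read off directly because the modified weight degenerates continuously to the $\eps=0$ one. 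Your two-regime splitting captures the right intuition (below scale $\eps$ everything is smooth, above it one is back in the ordinary $\CD^{\gamma,\eta}$ setting), and the large-scale half of your argument, namely the norm-one embeddings $\CD^{\gamma,\eta}_\eps\hookrightarrow\CD^{\gamma,\eta}$ and $\MM_\eps\hookrightarrow\MM$, is sound. The small-scale half is the weak link: the bounds \eqref{e:bounds} and \eqref{e:modelEpsWanted} are expressed in terms of the homogeneities $|\tau|$, which encode the amplitude of the noise, so the parabolic dilation $\sigma_\eps$ applied to a model in $\MM_\eps$ does not produce a model in $\MM_1$ of comparable norm unless one simultaneously rescales $\Pi\Xi$ (and hence every $\Pi\tau$ by $\eps^{|\tau|}$); making the ``dilate, apply the unit-scale estimate, dilate back'' step rigorous therefore requires exactly the bookkeeping of $\eps^{|\tau|-\alpha}$ factors that the paper instead handles pointwise inside each proof. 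Your route could be made to work and would arguably be more systematic, but as written it asserts rather than verifies the compatibility of the dilation with the graded structure, whereas the paper's direct adaptation sidesteps the issue entirely.
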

For all practical purposes, the spaces $\CD^{\gamma,\eta}_\eps$ 
behave just like the spaces $\CD^{\gamma,\eta}$. First, we show that the
definition \eqref{e:defgammaetaeps} is somewhat redundant in the sense that the second
term is bounded by the two other terms. This shows that in many cases, it suffices to
bound the last term in \eqref{e:defgammaetaeps}.
Note that this is however \textit{not} the case
for \eqref{e:bounddifff}, which is why we chose to keep the current notations.

\begin{proposition}\label{prop:pointwise}
For $H \in \CD^{\gamma,\eta}_\eps$, the second term in
\eqref{e:defgammaetaeps} is bounded by a fixed multiple of 
the sum of the first and the last term.
\end{proposition}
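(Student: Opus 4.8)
The plan is to reduce the statement to the elementary fact that, at a point $z=(t,x)$ and for a homogeneity $\alpha>\eta$, one has $\eps^{\alpha-\eta}\le 1$ since $\eps\in(0,1]$, so the quantity controlled by the second term of \eqref{e:defgammaetaeps} is simply $\eps^{\alpha-\eta}|H(z)|_\alpha\le|H(z)|_\alpha$. Since the pointwise quantity $\sup_{z,\alpha}|H(z)|_\alpha$ is one of the three contributions to $\|H\|_{\gamma,\eta}$ in \eqref{e:defgammaeta}, this already shows that the second term of \eqref{e:defgammaetaeps} is at most $\|H\|_{\gamma,\eta}$, the first term, and hence a fortiori at most the sum of the first and the last terms.

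To make transparent where the last term genuinely enters --- and to obtain a proof not relying on the crude pointwise contribution to \eqref{e:defgammaeta} --- I would instead split according to the size of $|t|$. If $|t|\ge\eps^2$ then $\eps\le\sqrt{|t|}$, hence $\eps^{\eta-\alpha}\ge|t|^{(\eta-\alpha)/2}=|t|^{((\eta-\alpha)\wedge 0)/2}$ (using $\alpha>\eta$), so $\eps^{\alpha-\eta}|H(z)|_\alpha\le|H(z)|_\alpha/|t|^{((\eta-\alpha)\wedge 0)/2}\le\|H\|_{\gamma,\eta}$ directly from the first term of \eqref{e:defgammaeta} when $\alpha<\gamma$ (and from the pointwise term when $\alpha\ge\gamma$). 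If instead $|t|<\eps^2$ then $\sqrt{|t|}<\eps$, so every increment $H(z)-\Gamma_{z\bar z}H(\bar z)$ admissible for the $\CD^{\gamma,\eta}$ seminorm, i.e.\ with $|z-\bar z|\le\sqrt{|t|\wedge|\bar t|}$, automatically also satisfies $|z-\bar z|\le\eps$ and is therefore controlled by the last term of \eqref{e:defgammaetaeps}. One then telescopes $H$ dyadically in time: set $z_0=z$, $z_{k+1}=(2t_k\wedge\eps^2,x)$, continue until reaching the reference point $\bar z=(\eps^2,x)$, and estimate $|H(z)|_\alpha$ via the triangle inequality from the increments $|H(z_k)-\Gamma_{z_k z_{k+1}}H(z_{k+1})|_\alpha$ (each controlled by the last term), the $\Gamma$-bounds of the underlying model (which lies in $\MM_\eps$ with controlled norm), and the bound $|H(\bar z)|_\beta\lesssim\eps^{\eta-\beta}\|H\|_{\gamma,\eta}$ for $\beta>\eta$ coming from the first term of \eqref{e:defgammaeta} evaluated at $|\bar t|=\eps^2$.

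Summing the geometric series in $k$ then yields $|H(z)|_\alpha\lesssim\eps^{\eta-\alpha}\bigl(\|H\|_{\gamma,\eta}+L_\eps\bigr)$, where $L_\eps$ denotes the last term in \eqref{e:defgammaetaeps}, which is exactly the claim. The only mildly delicate point --- and the one I would expect to occupy most of the writing --- is the bookkeeping in this sum: one must verify that the powers $|z_k-z_{k+1}|^{\gamma-\alpha}$ produced by the seminorm and the powers $|z-\bar z|^{\beta-\alpha}$ produced by the $\Gamma$-expansion of $H(\bar z)$, together with the accompanying powers of $\eps$, combine to give precisely $\eps^{\eta-\alpha}$ up to a constant. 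This works out because $\alpha<\gamma$ on the range of $\alpha$ that actually occurs and because $\eps\le 1$, so that each such series is dominated by its endpoint at scale $\eps^2$; there is no genuine obstacle. The upshot is that the extra seminorm in \eqref{e:defgammaetaeps} encodes no information beyond what is already jointly carried by $\|H\|_{\gamma,\eta}$ and the $\eps$-scale Hölder seminorm of $H$.
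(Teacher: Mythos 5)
Your second argument is essentially the paper's own proof: one reduces to $0<|t|\le\eps^2$ using the weighted first term of \eqref{e:defgammaeta}, then chains $z$ to a point at time of order $\eps^2$ through intermediate points with geometrically varying parabolic gaps of size at most $\eps$, bounds the increments by the last term of \eqref{e:defgammaetaeps}, and absorbs the $\Gamma$-contributions via the model bounds; the only cosmetic difference is that the paper handles the terms $\bigl(\Gamma_{z_{n+1}z_n}-1\bigr)H(z_n)$ by reverse induction on the homogeneity, whereas you transport the increments with $\Gamma_{z z_k}$ and bound the single anchor term $\Gamma_{z\bar z}H(\bar z)$ at $|\bar t|=\eps^2$; either way the exponent bookkeeping closes to $\eps^{\eta-\alpha}$ exactly as you indicate, since only homogeneities $\beta\ge\alpha>\eta$ enter the $\Gamma$-expansion. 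You should not, however, lean on your opening paragraph: if the unweighted $\sup_{z,\alpha}|H(z)|_\alpha$ in \eqref{e:defgammaeta} were really taken over all homogeneities, the proposition would be vacuous, and the spaces $\CD^{\gamma,\eta}$ could no longer accommodate the blow-up at $t=0$ they are designed for (for instance $Ph_0$ with $h_0\in\CC^\eta$, as in Proposition~\ref{prop:epsLift} with $\eps=0$, would have infinite norm through its component along $X$); that term is only meant to control the components of homogeneity at least $\gamma$, in line with the remark that $H(z)$ is not assumed to take values in $\CT_{<\gamma}$. The genuine content is therefore your time-dyadic chaining argument, which is correct and coincides with the paper's.
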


\begin{proof}
Since the first term yields $\|H(t,x)\|_\ell \lesssim t^{\eta - \ell \over 2}$, 
the claimed bound is non-trivial only for $z = (t,x)$
with $0 < |t| \le \eps^2$. For such a value of $z$, one can always find a sequence
$\{z_n\}_{n \ge 0}$ such that $(z_n, z_{n+1})$ with  $|z-\bar z| \le  \sqrt{|t| \wedge |\bar t|}$ and $|z-\bar z| \le \eps \in D_\eps^{(2)}$, such that 
$|z_n - z_{n+1}| \le \eps c^n$ for some fixed $c \in (0,1)$, and such that $z_n = z$
for $n$ sufficiently large. It then suffices to
rewrite $H(z)$ as
\begin{equ}
H(z) = H(z_0) + \sum_{n\ge 0} \bigl(H(z_{n+1}) - \Gamma_{z_{n+1}z_n} H(z_{n})\bigr)
+ \sum_{n\ge 0} \bigl(\Gamma_{z_{n+1}z_n} - 1\bigr) H(z_{n})\;.
\end{equ}
The first sum is bounded by a multiple of $\eps^{\eta - \gamma} \sum_{n \ge 0} |c^n \eps|^{\gamma-\ell}$, which is the required bound. 

To bound the second sum, we proceed by ``reverse induction'' on $\ell$. Indeed, for the
largest possible value of $\ell$ less than $\gamma$, one has 
$\|\bigl(\Gamma_{z_{n+1}z_n} - 1\bigr) H(z_{n})\|_\ell = 0$, so that the required bound
holds trivially there. Assuming now that the required bound holds for all $m > \ell$, we have
\begin{equ}
\|\bigl(\Gamma_{z_{n+1}z_n} - 1\bigr) H(z_{n})\|_\ell 
\lesssim \sum_{m > \ell} |\eps c^n|^{m-\ell} \|H(z_n)\|_m 
\lesssim \sum_{m > \ell} |\eps c^n|^{m-\ell} \eps^{\eta-m}\;. 
\end{equ}
Summing again over $n$, the required bound follows.
\end{proof}

One motivation for our definitions are the following two results.
To formulate the first one, we introduce some notation. 

\begin{proposition}\label{prop:epsLift}
Let $\alpha \in (0,1)$ and $\gamma \in (1,2)$, let $h \in \CC^{\gamma,\alpha}_\eps$, and let $P h$ be the canonical lift (via its truncated Taylor series)
of the harmonic extension of $h$ (in other words, the action of the heat kernel on a function, but then interpreted in the canonical way as a modelled distribution, see \cite[(7.13)]{Regularity}.) Then, one has $P h \in \CD_\eps^{\gamma,\alpha}$ and
the bound
\begin{equ}[e:boundHarm]
\|Ph\|_{\gamma,\alpha;\eps} \le C \|h\|_{\gamma,\alpha;\eps}\;,
\end{equ}
holds uniformly over $\eps \in [0,1]$ for some $C \ge 1$. 
If furthermore $ \bar h \in \CC^{\alpha}$ and $ h \in \CC^{\gamma,\alpha}_\eps$, 
 then
\begin{equ}
\|Ph ; P\bar h\|_{\gamma,\alpha;\eps} \le C \|h ; \bar h\|_{\gamma,\alpha;\eps}\;.
\end{equ}
\end{proposition}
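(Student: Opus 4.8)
The plan is to reduce everything to explicit estimates on the harmonic extension and its truncated Taylor lift, treating the three components of $\|\cdot\|_{\gamma,\alpha;\eps}$ in \eqref{e:defgammaetaeps} separately. Write $u = Ph$ for the (spatial) harmonic extension of $h$, so that $u(t,\cdot)$ is the convolution of $h$ with the heat kernel at time $t$; the modelled distribution $Ph$ then has the form $Ph(z) = u(z)\one + (\d_x u)(z)\,X_1 + \dots$, truncated at order $\gamma$, so since $\gamma<2$ only the components $\one$ and $X_1$ survive. The bounds we need thus amount to: (i) $\|u\|_{L^\infty} + $ the $\alpha$-H\"older bound on $u$ and on $\d_x u$ at large scales (the $\|\cdot\|_{\gamma,\alpha}$ part); (ii) the small-scale $\CC^\gamma$-type bound on $\d_x u$, which is exactly the second defining seminorm of $\CC^{\gamma,\alpha}_\eps$ in \eqref{e:defh}; and (iii) the extra $\eps$-dependent terms in \eqref{e:defgammaetaeps}, namely $\sup_z \sup_{\alpha'>\alpha} |H(z)|_{\alpha'} / \eps^{\alpha-\alpha'}$ and the $\eps$-restricted increment bound. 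For (iii), the only component of $Ph$ of homogeneity $>\alpha$ (and $<\gamma$) is the $X_1$ component, of homogeneity $1$, so this term is $\sup_z |\d_x u(z)|/\eps^{\alpha-1}$, and the last term in \eqref{e:defgammaetaeps} is a restricted H\"older-type ratio of $\d_x u$.

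First I would recall the standard Schauder-type estimates for the heat semigroup applied to an initial datum in $\CC^{\gamma,\alpha}_\eps$. Since $h\in\CC^\alpha$ with $\alpha<1$, one has $\|u(t,\cdot)\|_\alpha \lesssim \|h\|_\alpha$ uniformly in $t$, giving the $\|\cdot\|_\alpha$ part of \eqref{e:diffHolder}/\eqref{e:defgammaetaeps} immediately, together with the correct time-regularity (parabolic H\"older continuity) which controls the off-diagonal increments in \eqref{e:defgammaeta}. For the derivative, the key is to exploit the extra regularity encoded in $\|h\|_{\gamma,\alpha;\eps}$: because $h$ is $\CC^\gamma$ below scale $\eps$, one has $\sup_x |\d_x h(x)| \lesssim \eps^{\alpha-1}\|h\|_{\gamma,\alpha;\eps}$ (the remark after \eqref{e:diffHolder}) and $|\d_x h(x)-\d_x h(y)| \lesssim \eps^{\alpha-\gamma}|x-y|^{\gamma-1}$ for $|x-y|\le\eps$. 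Propagating these through the heat flow: for $t\le\eps^2$ one simply has $\|\d_x u(t,\cdot)\|_{L^\infty} \lesssim \|\d_x h\|_{L^\infty} \lesssim \eps^{\alpha-1}$ and a $\CC^{\gamma-1}$-bound with constant $\eps^{\alpha-\gamma}$ at scales $\lesssim \sqrt t$, while for $t \ge \eps^2$ the smoothing gives $\|\d_x u(t,\cdot)\|_{L^\infty} \lesssim t^{(\alpha-1)/2}\|h\|_\alpha$, which at $t\simeq\eps^2$ matches $\eps^{\alpha-1}$ and for larger $t$ is even smaller. These are precisely the bounds needed for term (ii) and for the $X_1$-component contributions to (iii); one must also check the increment of $\d_x u$ in the \emph{time} direction, which follows from $\d_t u = \d_x^2 u$ together with the above and interpolation, and the increment over $|z-\bar z|\le\eps$ with the weight $\eps^{\alpha-\gamma}$, handled the same way by splitting according to whether $\sqrt{|t|}\lessgtr\eps$.

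The comparison statement $\|Ph;P\bar h\|_{\gamma,\alpha;\eps}\le C\|h;\bar h\|_{\gamma,\alpha;\eps}$ is then obtained by linearity: $P$ is a linear operator, so $Ph - P\bar h = P(h-\bar h)$ at the level of the $\one$-component, and one applies the first part to $h-\bar h$ (noting that $h-\bar h \in \CC^\gamma$ so the full $\|\cdot\|_\gamma$-norm is available, which is what appears in \eqref{e:diffHolder}); the $X_1$-component of $Ph$ is $\d_x u$ and the corresponding component of $P\bar h$ vanishes since $\bar h\in\CC^\alpha$ is lifted as a pure $\one$-multiple, so the last two terms of \eqref{e:bounddifff} reduce exactly to the derivative estimates on $u=Ph$ already established in the first part, bounded by $\|h\|_{\gamma,\alpha;\eps}$ and hence (together with the diagonal normalisation) by $\|h;\bar h\|_{\gamma,\alpha;\eps}$. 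The main obstacle, and where care is genuinely required, is step (ii)--(iii): matching the two regimes $t\le\eps^2$ and $t\ge\eps^2$ for $\d_x u$ so that the single constant $C$ works uniformly in $\eps\in[0,1]$, and in particular verifying that the large-scale H\"older bound $|\d_x u(x)-\d_x u(y)|\lesssim\eps^{\alpha-\gamma}|x-y|^{\gamma-1}$ for $|x-y|\le\eps$ survives the heat flow without picking up a negative power of $\eps$ — this is exactly the point of the $\CC^{\gamma,\alpha}_\eps$ spaces, and it works because the heat kernel at time $t\le\eps^2$ acts at scale $\sqrt t\le\eps$, where $h$ is genuinely $\CC^\gamma$.
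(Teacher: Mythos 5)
Your proposal is correct and follows essentially the same route as the paper: reduce the $\eps$-independent part of the norm to the standard lifting estimate for the heat flow of a $\CC^\alpha$ function (\cite[Lem.~7.5]{Regularity}), and obtain the extra $\eps$-weighted terms from classical heat-kernel estimates using that $h$ is $\CC^\gamma$ with norm of order $\eps^{\alpha-\gamma}$ (your two-regime split $t\lessgtr\eps^2$ is just a hands-on version of the paper's single observation that the spatial $\CC^\gamma$ and temporal $\CC^{\gamma/2}$ norms of $Ph$ are bounded by $\eps^{\alpha-\gamma}$ uniformly in $t$). One small correction: $P\bar h$ is \emph{not} a pure $\one$-multiple — it is the truncated Taylor lift of the harmonic extension of $\bar h$ and has a nonvanishing $X_1$-component $\d_x(P*\bar h)$; your conclusion survives nevertheless because, by the very definition \eqref{e:bounddifff}, the two $\eps$-weighted terms of the comparison norm involve only $H=Ph$ (see the remark following it), while the first term is handled by linearity, $Ph-P\bar h=P(h-\bar h)$, exactly as you indicate.
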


\begin{proof} Since $\CC^{\gamma,\alpha}_\eps \subset \CC^\alpha$ with embedding
constants uniform in $\eps$, we conclude from \cite[Lem.~7.5]{Regularity}
that we only need to bound the second term in \eref{e:bounddifff} (with $H = Gh$).
In particular, we only need to consider the case $\eps > 0$.

This in turn is nothing but the statement that the map $Ph$ is of class 
$\CC^{\gamma/2}$ in time and 
$\CC^\gamma$ in space, with norm bounded by $\eps^{\alpha-\gamma}$.
This in turn follows from classical properties of the heat kernel, combined 
with the fact that the $\CC^\gamma$-norm of $h$ is bounded by 
$\eps^{\alpha-\gamma}$ by assumption.

To obtain the bound on $\|Ph ; P\bar h\|_{\gamma,\alpha;\eps}$, we only need to bound
the last two terms in \eqref{e:bounddifff} in terms of the last two terms in 
\eqref{e:diffHolder}. This follows again immediately from the properties of the heat kernel.
\end{proof}

We also have the following result, where $\CU$ is as in Section~\ref{sec:symbols},
$\scal{\CU}$ denotes its linear span in $\CT$, and $\bar \gamma$ is as in \eqref{e:modelEpsWanted},
so that in particular $\bar \gamma >0$.

\begin{proposition}\label{prop:restart}
Let $\alpha \le {1\over 2}-{3\kappa \over 2}$, let $\gamma = 1+\bar \gamma$, 
and let $H_\eps \in \CD^\gamma$ 
with values in $\scal{\CU}$, based
on some model $\Pi^{(\eps)} \in \MM_\eps$.
Then, for every $t$ such that $[t-\eps^2, t+\eps^2] \subset [0,T]$, the function $h_t^{(\eps)} = (\CR H_\eps)(t,\cdot)$ belongs to 
$\CC_\eps^{\gamma,\alpha}$ and one has
\begin{equ}
\|h_t^{(\eps)}\|_{\gamma,\alpha;\eps} \le C \|H_\eps\|_\gamma\,\$\Pi^{(\eps)}\$_\eps\;,
\end{equ}
for some constant $C$ independent of $\eps \in (0,1]$. Furthermore, 
given $H \in \CD^\gamma$ with values in $\scal{\CU}$, based
on some model $\Pi \in \MM_0$, the function $h_t = (\CR H)(t,\cdot)$
belongs to $\CC^\alpha$ and one has the bound
\begin{equ}
\|h_t; h_t^{(\eps)}\|_{\gamma,\alpha;\eps} \le C \|H;H_\eps\|_\gamma\,\bigl(\$\Pi\$ + \$\Pi^{(\eps)}\$_\eps\bigr)
+ \$\Pi^{(\eps)};\Pi\$_{\eps,0} \bigl(\|H\|_\gamma+\|H_\eps\|_\gamma\bigr) \;.
\end{equ}
\end{proposition}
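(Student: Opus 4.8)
The plan is to control $h_t^{(\eps)}=(\CR H_\eps)(t,\cdot)$ at the two scales separately. At large scales, $\CR H_\eps$ is handled directly by the reconstruction theorem of \cite{Regularity}: since $H_\eps$ takes values in $\scal{\CU}$, which is a sector whose only homogeneity-$0$ element is $\one$ and whose smallest strictly positive homogeneity is $\tfrac12-\kappa$ (attained by $\CI(\Xi)$), and since $\gamma>1>\alpha$ with $\alpha\le\tfrac12-\tfrac{3\kappa}{2}<\tfrac12-\kappa$, the function $\CR H_\eps$ is continuous and its restriction to any fixed time lies in $\CC^\alpha(S^1)$ with $\|h_t^{(\eps)}\|_\alpha\lesssim\|H_\eps\|_\gamma\,\$\Pi^{(\eps)}\$$. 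This step uses only that $\Pi^{(\eps)}\in\MM$; the ``trace at a fixed time'' is standard (as in \cite[Sec.~6]{Regularity}), and it is precisely for its validity that we need $[t-\eps^2,t+\eps^2]\subset[0,T]$, so that the space-time test functions $\phi_z^\lambda$ with $\lambda\le\eps$ used below are supported in the region on which $\$\Pi^{(\eps)}\$_\eps$ is controlled.

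For the small-scale part I would set $G_\eps=\DD H_\eps$, which takes values in $\scal{\CU'}$ and lies in $\CD^{\bar\gamma}$ with $\bar\gamma=\gamma-1\in(0,1)$; since $\DD$ represents $\d_x$ for admissible models one has $\d_x\CR H_\eps=\CR G_\eps$ as space-time distributions, hence $\d_x h_t^{(\eps)}=(\CR G_\eps)(t,\cdot)$. Fix $z=(t,x)$ and a spatial test function $\psi$, supported in $[-1,1]$, with $\int\psi=0$, and for $\lambda\le\eps$ write $\psi_x^\lambda$ for its rescaling around $x$. Picking a time bump $\rho$ with $\int\rho=1$ and setting $\phi=\rho\otimes\psi$ (so that $\int\phi=0$ and $\int\phi_z^\lambda(t',\cdot)\,dt'=\psi_x^\lambda$), a routine argument using the temporal continuity of $t'\mapsto(\CR G_\eps)(t',\cdot)(\psi_x^\lambda)$ gives $|\langle\d_x h_t^{(\eps)},\psi_x^\lambda\rangle-(\CR G_\eps)(\phi_z^\lambda)|\lesssim\lambda^{\bar\gamma}\|H_\eps\|_\gamma\,\$\Pi^{(\eps)}\$$. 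By the reconstruction theorem, $(\CR G_\eps)(\phi_z^\lambda)=\bigl(\Pi_z^{(\eps)}G_\eps(z)\bigr)(\phi_z^\lambda)+\CO\bigl(\lambda^{\bar\gamma}\|H_\eps\|_\gamma\,\$\Pi^{(\eps)}\$\bigr)$, and writing $G_\eps(z)=\sum_{\tau\in\CU'}G^\tau(z)\,\tau$ with $|G^\tau(z)|\lesssim\|H_\eps\|_\gamma$ I would estimate each $|(\Pi_z^{(\eps)}\tau)(\phi_z^\lambda)|$ via \eqref{e:modelEpsWantedPi}: the $\one$-term vanishes because $\int\phi=0$, every $\tau\in\CU'$ with $|\tau|\ge\bar\gamma$ contributes $\lesssim\$\Pi^{(\eps)}\$\lambda^{|\tau|}\le\$\Pi^{(\eps)}\$\lambda^{\bar\gamma}\eps^{|\tau|-\bar\gamma}$, and every $\tau\in\CU'$ with $|\tau|<\bar\gamma$ contributes $\lesssim\$\Pi^{(\eps)}\$_\eps\,\lambda^{\bar\gamma}\eps^{|\tau|-\bar\gamma}$. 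Summing over $\CU'$, the worst power of $\eps$ comes from the most singular symbol $\Psi$, with $|\Psi|=-\tfrac12-\kappa$, so that $|\langle\d_x h_t^{(\eps)},\psi_x^\lambda\rangle|\lesssim\|H_\eps\|_\gamma\,\$\Pi^{(\eps)}\$_\eps\,\lambda^{\bar\gamma}\eps^{-1/2-\kappa-\bar\gamma}\le\|H_\eps\|_\gamma\,\$\Pi^{(\eps)}\$_\eps\,\eps^{\alpha-\gamma}\lambda^{\gamma-1}$, the last inequality being exactly where the hypothesis $\alpha\le\tfrac12-\tfrac{3\kappa}{2}$ is used (and where the error term above is absorbed, since $\eps\le1$). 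A standard Hölder characterisation of such localised averages then yields the bound on the second term of $\|h_t^{(\eps)}\|_{\gamma,\alpha;\eps}$ in \eqref{e:defh}, completing the first part.

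For the difference bound I would argue in the same way, now invoking the \emph{local Lipschitz} dependence of the reconstruction operator on the underlying model (again from \cite{Regularity}) together with the distance $\$\Pi^{(\eps)};\Pi\$_{\eps,0}$, which by \eqref{e:boundPidiff} already contains $\|\Pi^{(\eps)}\|_\eps$. The term $\|h_t-h_t^{(\eps)}\|_\gamma$ of \eqref{e:diffHolder} is controlled by applying these bounds to $\CR H-\CR^{(\eps)}H_\eps$ and to $\CR\DD H-\CR^{(\eps)}\DD H_\eps$, using that models in $\MM_0$ satisfy $f_z(\EE^k_\ell(\tau))=0$ so that their $\Eps$-symbols do not obstruct the cancellation of the singular parts; this produces the prefactors $\|H;H_\eps\|_\gamma$ and $\$\Pi^{(\eps)};\Pi\$_{\eps,0}$. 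The remaining two terms of \eqref{e:diffHolder} involve $h_t^{(\eps)}$ alone and are estimated exactly as in the previous paragraph, with one of these prefactors in front.

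The main obstacle is the small-scale estimate of the second paragraph: one must combine the reconstruction theorem, which only sees the abstract homogeneities, with the refined bound \eqref{e:modelEpsWantedPi}, which upgrades the effective homogeneity of every element of $\CU'$ to $\bar\gamma$ below scale $\eps$, and then track which symbol produces the largest power of $\eps$ — the answer being $\Psi$, whose homogeneity $-\tfrac12-\kappa$ is exactly what fixes the admissible range of $\alpha$. The trace of $\CR G_\eps$ at a fixed time, while routine, is the other point that needs a little care and is the reason for the buffer condition on $t$.
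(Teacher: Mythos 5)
Your proposal is correct and follows essentially the same route as the paper: the $\CC^\alpha$ part comes from the reconstruction theorem, and the small-scale Hölder seminorm of $\d_x h_t^{(\eps)}$ is bounded by testing against mean-zero test functions at scales $\lambda\le\eps$, splitting off $\Pi_z^{(\eps)}\DD H_\eps(z)$ via the localised reconstruction bound and estimating that term with the $\eps$-model bound, the worst symbol being $\Psi$. The only cosmetic differences are that the paper bounds the space-time Hölder seminorm on a slab around $t$ directly via \cite[Thm~6.5]{MR1228209} (rather than using spatial test functions together with a time bump and a continuity-in-time argument), and that since the norm \eqref{e:modelEps} carries an extra factor $\eps^{\delta}$ with $\delta=\kappa/2$, the constraint that actually emerges is $\alpha\le \tfrac12-\kappa-\delta=\tfrac12-\tfrac{3\kappa}{2}$, matching the hypothesis exactly rather than with room to spare.
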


\begin{proof}
Let $\alpha_0 = |\CI(\Xi)| = {1\over 2}-\kappa$ be the homogeneity of the element 
of lowest non-zero homogeneity 
in $\CU$. It then follows from \cite[Prop.~3.28]{Regularity} that $\CR H_\eps$ is a continuous
function with $\CR H_\eps \in \CC^{\alpha_0}$ (with parabolic space-time scaling) and, 
since $\alpha < \alpha_0$, that
\begin{equ}
\|h_t^{(\eps)}\|_{\alpha} \lesssim \|H_\eps\|_\gamma\,\$\Pi^{(\eps)}\$\;,
\end{equ}
so that it only remains to obtain the 
bound on the last term in \eqref{e:defh}. Setting $\tilde h_t^{(\eps)} = \d_x h_t^{(\eps)} = \bigl(\CR \DD H_\eps\bigr)(t,\cdot)$, 
we will prove the stronger fact that $\tilde h_t^{(\eps)}$ is a continuous function such that
\begin{equ}[e:wantedMicroBound]
\sup_{z \neq \bar z \atop |z-\bar z| \le \eps} {\eps^{\gamma-\alpha}|\tilde h_t^{(\eps)}(z) - \tilde h_t^{(\eps)}(\bar z)| \over |z-\bar z|^{\gamma-1}} \lesssim \|H_\eps\|_\gamma\,\$\Pi^{(\eps)}\$_\eps \;,
\end{equ}
where the supremum runs over $z$ and $\bar z$ in $[t-\eps^2/4, t+\eps^2 / 4] \times S^1$
and $|z|$ denotes the parabolic distance.

As in \cite[Thm~6.5]{MR1228209}, the left hand side in \eqref{e:wantedMicroBound} is 
bounded, up a factor independent of $\eps$, by the quantity
\begin{equ}[e:wantedBoundleps]
\sup_{\phi} \sup_{\lambda < \eps} \sup_{z} \lambda^{1-\gamma}\eps^{\gamma-\alpha} \bigl|\tilde h_t^{(\eps)}(\phi_z^\lambda)\bigr|\;,
\end{equ}
where the first supremum runs over all space-time test functions $\phi \in \CB$ integrating to $0$,
the supremum over $z$ runs over $[t-\eps^2/2, t+\eps^2/2] \times S^1$, and $\tilde h_t^{(\eps)}$ 
is interpreted as a distribution.

In order to obtain the required bound on \eqref{e:wantedBoundleps} note that, as a consequence of 
\cite[Lem~6.7]{Regularity}, one has for $\lambda < \eps$ the bound
\begin{equ}
\bigl|\bigl(\tilde h_t^{(\eps)} - \Pi_z^{(\eps)} \DD H_\eps(z) \bigr)(\phi_z^\lambda)\bigr| \lesssim \lambda^{\gamma-1} \|H_\eps\|_\gamma\,\$\Pi^{(\eps)}\$
\le \eps^{\alpha-\gamma}\lambda^{\gamma-1} \|H_\eps\|_\gamma\,\$\Pi^{(\eps)}\$_\eps\;,
\end{equ}
where we used the fact that $\gamma > \alpha$ and $\eps < 1$ to obtain second inequality.
Furthermore, it follows from \eqref{e:modelEps}, combined with the facts that
$\phi$ integrates to $0$ and $\bar \gamma = \gamma-1$, that 
\begin{equ}
\bigl|\bigl(\Pi_z^{(\eps)} \DD H_\eps(z) \bigr)(\phi_z^\lambda)\bigr| \lesssim \lambda^{\gamma-1} \eps^{\alpha_0-\gamma - \delta} \|H_\eps\|_\gamma\,\$\Pi^{(\eps)}\$_\eps
\le  \lambda^{\gamma-1} \eps^{\alpha - \gamma} \|H_\eps\|_\gamma\,\$\Pi^{(\eps)}\$_\eps\;,
\end{equ}
where $\delta = \kappa / 2$ as above.
Here, the second inequality follows from the fact that $\alpha \le \alpha_0 - \delta$ by assumption. 
Combining both of these bounds, the required bound on $\|h_t^{(\eps)}\|_{\gamma,\alpha;\eps}$ 
follows at once. The bound on $\|h_t; h_t^{(\eps)}\|_{\gamma,\alpha;\eps}$ then follows
in the same way.
\end{proof}

\subsection{Operations in \texorpdfstring{$\CD^{\gamma,\eta}_\eps$}{Dge}}

We now show how the basic operations required for our purposes behave in these spaces.
First, we have the following
bound on the abstract derivatives of modelled distributions:

\begin{proposition}\label{prop:derSing}
Let  $H \in \CD^{\gamma,\eta}_\eps$ for some $\gamma > 1$ and $\eta \in \R$. 
Then, $\DD H \in \CD^{\gamma-1, \eta-1}_\eps$. Furthermore, one has
$\|\DD H;\DD \bar H\|_{\gamma-1,\eta-1;\eps}
\lesssim \|H;\bar H\|_{\gamma,\eta;\eps}$.
\end{proposition}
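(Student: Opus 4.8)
The statement is the natural "abstract derivative lowers both $\gamma$ and $\eta$ by $1$" result, adapted to the $\eps$-refined spaces $\CD^{\gamma,\eta}_\eps$. The plan is to combine the corresponding statement for the ordinary weighted spaces $\CD^{\gamma,\eta}$ (which is \cite[Prop.~6.15]{Regularity}, applied to $\DD$ as in Section~\ref{sec:structGroup}) with two additional bounds coming from the extra terms in the norm \eqref{e:defgammaetaeps} and \eqref{e:bounddifff}. The key algebraic input is the identity $\Delta\DD\tau = (\DD\otimes I)\Delta\tau$ established just after \eqref{e:intProp2}, which gives at once that $\DD H(z) - \Gamma_{z\bar z}\DD H(\bar z) = \DD\bigl(H(z) - \Gamma_{z\bar z}H(\bar z)\bigr)$; thus every increment bound for $H$ transfers to one for $\DD H$ with the homogeneity index shifted by $1$, since $\DD$ maps $\CT_\alpha$ into $\CT_{\alpha-1}$ (it sends $\CI(\tau)\mapsto\CI'(\tau)$ and $X^k\mapsto k_1X^{k-(0,1)}$).

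The concrete steps: First, recall $\|\DD H\|_{\gamma-1,\eta-1}\lesssim\|H\|_{\gamma,\eta}$ from the standard theory. Second, bound the second term of \eqref{e:defgammaetaeps} for $\DD H$: for $\alpha>\eta-1$ we have $|\DD H(z)|_\alpha \le |H(z)|_{\alpha+1}/\text{(const)}$ by the triangle inequality over the finitely many basis vectors, and $\alpha+1>\eta$, so $|H(z)|_{\alpha+1}\lesssim\eps^{\eta-(\alpha+1)}\|H\|_{\gamma,\eta;\eps} = \eps^{(\eta-1)-\alpha}\|H\|_{\gamma,\eta;\eps}$, which is exactly the required bound with $\eta$ replaced by $\eta-1$. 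Third, bound the last term of \eqref{e:defgammaetaeps}: using the algebraic identity above, for $|z-\bar z|\le\sqrt{|t|\wedge|\bar t|}$ and $|z-\bar z|\le\eps$ and $\alpha<\gamma-1$, one has $|\DD H(z) - \Gamma_{z\bar z}\DD H(\bar z)|_\alpha \lesssim |H(z)-\Gamma_{z\bar z}H(\bar z)|_{\alpha+1} \lesssim |z-\bar z|^{\gamma-(\alpha+1)}\eps^{\eta-\gamma}\|H\|_{\gamma,\eta;\eps} = |z-\bar z|^{(\gamma-1)-\alpha}\eps^{(\eta-1)-(\gamma-1)}\|H\|_{\gamma,\eta;\eps}$, again the desired bound. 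Putting the three together gives $\|\DD H\|_{\gamma-1,\eta-1;\eps}\lesssim\|H\|_{\gamma,\eta;\eps}$, uniformly in $\eps$.

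For the difference bound, one repeats the argument with $H$ replaced by $H-\bar H$ and $\Gamma$ by the pair $(\Gamma,\bar\Gamma)$ throughout: the first term uses the standard $\CD^{\gamma,\eta}$ difference estimate for $\DD$; the second term of \eqref{e:bounddifff} involves only $H$ (not $\bar H$), and is handled exactly as in step two above using that $\alpha+1>\eta$; the third term uses $\DD\bigl(H(z)-\Gamma_{z\bar z}H(\bar z)-\bar H(z)+\bar\Gamma_{z\bar z}\bar H(\bar z)\bigr) = \DD H(z)-\Gamma_{z\bar z}\DD H(\bar z)-\DD\bar H(z)+\bar\Gamma_{z\bar z}\DD\bar H(\bar z)$, which follows from $\Delta\DD=(\DD\otimes I)\Delta$ applied to both models. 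Since $\DD$ commutes with every $\Gamma\in\CG$, no new commutator terms appear and the shift $\alpha\mapsto\alpha+1$ does all the work. The only point requiring a word of care — and the mildest "obstacle" — is checking that the domain restrictions ($\alpha<\gamma-1$ versus $\alpha<\gamma$, $\alpha>\eta-1$ versus $\alpha>\eta$, and the constraint $|z-\bar z|\le\eps$) match up correctly after the index shift; this is routine but is where one should be careful not to lose a term at the boundary homogeneities. No other difficulty arises.
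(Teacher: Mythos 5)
Your proof is correct and is exactly what the paper intends by its one-line proof ("Immediate from the definitions"): $\DD$ commutes with the structure group via $\Delta\DD = (\DD\otimes I)\Delta$ and lowers homogeneity by one, so every term of the norms \eqref{e:defgammaetaeps} and \eqref{e:bounddifff} transfers with the indices $\gamma,\eta,\alpha$ shifted by one, together with the standard $\CD^{\gamma,\eta}$ statement for $\DD$. No discrepancy with the paper's argument.
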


\begin{proof}
Immediate from the definitions.
\end{proof}


We also have a bound on their products:

\begin{proposition}\label{prop:multSing}
Let $H_1 \in \CD^{\gamma_1,\eta_1}_\eps(V^{(1)})$ and $H_2 \in \CD^{\gamma_2,\eta_2}_\eps(V^{(2)})$ for two sectors
$V^{(1)}$ and $V^{(2)}$ with respective regularities $\alpha_1$ and $\alpha_2$, such that 
a product satisfying the properties \cite[Def.~4.1\,\&\,4.6]{Regularity} is defined on $V^{(1)}\times V^{(2)}$.
Let furthermore
$\gamma = (\gamma_1 + \alpha_2) \wedge (\gamma_2 + \alpha_1)$ and assume that $\gamma_i > \alpha_i$.
Then, the function $H = H_1\, H_2$ belongs to
$\CD^{\gamma, \eta}_\eps$ with $\eta = (\eta_1 + \alpha_2)\wedge (\eta_2+\alpha_1) \wedge (\eta_1+\eta_2)$.

Furthermore,  
writing $H = H_1 \, H_2$ and $\bar H = \bar H_1 \, \bar H_2$, 
one has the bound
\begin{equ}[e:diffprod]
\|H;\bar H\|_{\gamma,\eta;\eps}
\lesssim \|H_1;\bar H_1\|_{\gamma_1,\eta_1;\eps} + \|H_2;\bar H_2\|_{\gamma_2,\eta_2;\eps}
+ \$\Pi;\bar \Pi\$\;.
\end{equ}
\end{proposition}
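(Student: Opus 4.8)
The plan is to reduce everything to the already-known multiplication theorem for the spaces $\CD^{\gamma,\eta}$ (i.e.\ \cite[Thm~4.7 \& Prop.~6.12]{Regularity}) and then to check, separately and by hand, that the two extra terms in the definition \eqref{e:defgammaetaeps} of the $\eps$-weighted norm behave correctly under products. The first observation is that $\CD^{\gamma,\eta}_\eps \subset \CD^{\gamma,\eta}$ with embedding constant $1$, so the membership $H_1 H_2 \in \CD^{\gamma,\eta}$ and the corresponding estimate (including the bound involving $\$\Pi;\bar\Pi\$$ for the difference of products based on two models) come for free from \cite{Regularity}; this already disposes of the first term $\|\cdot\|_{\gamma,\eta}$ on the right of \eqref{e:diffprod}. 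So the whole content is to control, uniformly in $\eps \in (0,1]$, the two supremum terms in \eqref{e:defgammaetaeps}: the ``small-scale coefficient bound'' $\sup_z \sup_{\alpha>\eta} \eps^{\alpha-\eta}|H(z)|_\alpha$ and the ``small-scale increment bound'' involving $|z-\bar z|\le \eps$.

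For the coefficient term, I would expand $H(z) = H_1(z)\cdot H_2(z)$ into its homogeneous components: a component of homogeneity $\alpha$ in the product arises from pairs $(\alpha_1',\alpha_2')$ with $\alpha_1'+\alpha_2' = \alpha$, $\alpha_i'$ a homogeneity occurring in $V^{(i)}$. Using $|H_i(z)|_{\alpha_i'} \lesssim (\eps^{\eta_i-\alpha_i'} \wedge |t|^{(\eta_i-\alpha_i')\wedge 0 \over 2})\|H_i\|_{\gamma_i,\eta_i;\eps}$ — which is exactly what the first plus second terms of \eqref{e:defgammaetaeps} give for each factor, after using Proposition~\ref{prop:pointwise} to absorb the middle term — one multiplies the two bounds. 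One must split according to whether $\alpha_i' \gtrless \eta_i$: when $\alpha_i' > \eta_i$ use the $\eps$-power, when $\alpha_i' \le \eta_i$ use the (bounded, since $|t|\le 1$) value $1$ or the $|t|$-power, noting $|t|^{1/2}\le \eps$ is false in general but $|t|\le T$ with no lower bound — here the key elementary inequality is that for $\alpha_1'+\alpha_2' = \alpha > \eta := (\eta_1+\alpha_2)\wedge(\eta_2+\alpha_1)\wedge(\eta_1+\eta_2)$ one always has, in each of the relevant cases, $\eps^{(\eta_1-\alpha_1')_+ }\eps^{(\eta_2-\alpha_2')_+} \lesssim \eps^{\eta-\alpha}$ (using $\eps \le 1$ and $\alpha_i' \ge \alpha_i$); this is a finite case-check since only finitely many homogeneities occur. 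The increment term is handled the same way but using the telescoping identity
\begin{equ}
H_1(z)H_2(z) - \Gamma_{z\bar z}\bigl(H_1(\bar z)H_2(\bar z)\bigr)
= \bigl(H_1(z) - \Gamma_{z\bar z}H_1(\bar z)\bigr)H_2(z)
+ \Gamma_{z\bar z}H_1(\bar z)\bigl(H_2(z) - \Gamma_{z\bar z}H_2(\bar z)\bigr)\;,
\end{equ}
together with the bound $\|\Gamma_{z\bar z}\tau\|_\beta \lesssim |z-\bar z|^{|\tau|-\beta}$ from \eqref{e:bounds} and the product property of $\Gamma$; restricting to $|z-\bar z|\le \eps$ and $|z-\bar z|\le \sqrt{|t|\wedge|\bar t|}$, one again distributes the $\eps^{\eta_i-\gamma}$ weights across the factors and checks the exponent arithmetic $(\eta_1-\gamma_1)+(\text{homog.\ of }H_2) $ etc.\ dominates $\eta-\gamma$, which is the same bookkeeping as in the unweighted proof of \cite[Thm~4.7]{Regularity} but with an extra uniform power of $\eps\le 1$ in the favourable direction.

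For the difference estimate \eqref{e:diffprod} one writes, as usual,
\begin{equ}
H_1 H_2 - \bar H_1 \bar H_2 = (H_1 - \bar H_1)H_2 + \bar H_1 (H_2 - \bar H_2)\;,
\end{equ}
and repeats the two computations above with one factor replaced by a difference; the factor $\bar H_1$ (resp.\ $H_2$) is bounded using its $\CD^{\gamma_1,\eta_1}_0$-norm (resp.\ $\CD^{\gamma_2,\eta_2}_\eps$-norm), and the difference factor is controlled by $\|H_i;\bar H_i\|_{\gamma_i,\eta_i;\eps}$. The term $\$\Pi;\bar\Pi\$$ enters exactly as in \cite{Regularity}, through the need to compare $\Gamma_{z\bar z}$ with $\bar\Gamma_{z\bar z}$ when estimating $\Gamma_{z\bar z}(\bar H_1(\bar z)) - \bar\Gamma_{z\bar z}(\bar H_1(\bar z))$-type expressions, and it does not get an $\eps$-weight in front, which is why it appears with the plain $\$\cdot;\cdot\$$ rather than $\$\cdot;\cdot\$_{\eps,0}$ — this matches the convention in \eqref{e:bounddifff} that $\bar H$ is absent from the $\eps$-weighted pieces. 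I expect the only mildly delicate point, and hence the main obstacle, to be the systematic bookkeeping of which of the three ``floors'' ($\eps$-power, $|t|$-power, or constant $1$) each factor's coefficient is estimated by, and verifying that in every combination the product lands below $\eps^{\eta-\gamma}$ (or $\eps^{\eta-\alpha}$): this is purely a finite, if slightly tedious, verification of inequalities among the finitely many homogeneities $\alpha_i$, $\gamma_i$, $\eta_i$ occurring, using only $0<\eps\le 1$, $|t|\le 1$, $\alpha_i' \ge \alpha_i$, and $\gamma_i > \alpha_i$.
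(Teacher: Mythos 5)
Your proposal is correct and follows essentially the same route as the paper: the paper's proof simply declares the argument identical to that of \cite[Prop.~6.12]{Regularity} with $\|z;\bar z\|_P$ replaced by $\eps + \sqrt{|t|\wedge|\bar t|}$ throughout, and your telescoping decomposition plus exponent bookkeeping (delegating the unweighted $\|\cdot\|_{\gamma,\eta}$ part to \cite{Regularity} and checking the two extra $\eps$-weighted terms by a finite case analysis) is exactly that computation written out. One cosmetic remark: in your key inequality the exponents should be $(\eta_i-\alpha_i')\wedge 0$ rather than positive parts, as your own case description ("$\eps$-power when $\alpha_i'>\eta_i$, constant otherwise") already indicates.
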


\begin{proof}
The proof is identical to that of \cite[Prop.~6.12]{Regularity}. The only difference
is that when bounding $H(z) - \Gamma_{z\bar z}H(\bar z)$ one replaces 
$\|z;\bar z\|_P$ by $\eps + \sqrt{|t| \wedge |t'|}$ throughout.
\end{proof}

\begin{remark}
Note that we did not assume that $\gamma > 0$! In particular, unlike in 
\cite{Regularity}, we do not compose the product with a projection onto 
$\CT_{<\gamma}$. 
\end{remark}


Writing $\CQ_{<\alpha} \colon \CT \to \CT$ for the 
projection onto $\CT_{<\alpha}$, we also see that such a projection
leaves the space $\CD_\eps^{\gamma,\eta}$ invariant.\label{defofQle}

\begin{proposition}\label{prop:proj}
Let $F \in \CD_\eps^{\gamma,\eta}$ with $\eta \le \gamma$ and let $\alpha \ge \gamma$. Then, 
one has again $\CQ_{<\alpha} F \in \CD_\eps^{\gamma,\eta}$.
\end{proposition}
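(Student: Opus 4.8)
The plan is to verify directly that each of the seminorms making up $\|\cdot\|_{\gamma,\eta;\eps}$ in \eqref{e:defgammaetaeps} (which itself refers back to \eqref{e:defgammaeta}), when applied to $\CQ_{<\alpha}F$, is controlled by $\|F\|_{\gamma,\eta;\eps}$, with a constant uniform over $\eps \in [0,1]$ and over models of bounded norm. The key structural observation is that, since $\alpha \ge \gamma$, the operator $\CQ_{<\alpha}$ acts as the identity on every component of homogeneity strictly below $\gamma$; hence it can only affect the ``increment'' seminorms through the fact that $\CQ_{<\alpha}$ does \emph{not} commute with the structure-group action $\Gamma_{z\bar z}$.

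First I would dispose of the pointwise seminorms, namely the first and third terms of \eqref{e:defgammaeta} and the second term of \eqref{e:defgammaetaeps}. Here one simply notes that $|(\CQ_{<\alpha}F)(z)|_\ell$ equals $|F(z)|_\ell$ for $\ell < \alpha$ and vanishes otherwise, so each of these seminorms of $\CQ_{<\alpha}F$ is immediately bounded by the corresponding one of $F$ (in the first term of \eqref{e:defgammaeta} the supremum only ranges over $\ell < \gamma \le \alpha$, so the relevant components are literally unchanged).

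The substantive part is the two increment seminorms: the second term of \eqref{e:defgammaeta} and the third term of \eqref{e:defgammaetaeps}. Fixing $\ell < \gamma$ and points $z,\bar z$ subject to the relevant constraints ($|z-\bar z| \le \sqrt{|t|\wedge|\bar t|}$, and additionally $|z-\bar z| \le \eps$ for the third term of \eqref{e:defgammaetaeps}), and using $\ell < \gamma \le \alpha$ so that $\CQ_\ell \CQ_{<\alpha}F = \CQ_\ell F$, I would split
\begin{equ}
\CQ_\ell\bigl(\CQ_{<\alpha}F(z) - \Gamma_{z\bar z}\CQ_{<\alpha}F(\bar z)\bigr)
= \CQ_\ell\bigl(F(z) - \Gamma_{z\bar z}F(\bar z)\bigr) + \CQ_\ell\Gamma_{z\bar z}\CQ_{\ge\alpha}F(\bar z)\;,
\end{equ}
where $\CQ_{\ge\alpha} = 1 - \CQ_{<\alpha}$. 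The first term on the right is bounded by the corresponding increment seminorm of $F$ by definition. For the spillover term I would write $\CQ_{\ge\alpha}F(\bar z) = \sum_{\beta \ge \alpha}\CQ_\beta F(\bar z)$ --- a finite sum, since $\CT = \bigoplus_{\alpha \le 2}\CT_\alpha$ is finite-dimensional --- and apply the second analytic bound of \eqref{e:bounds} componentwise (extended to general elements by linearity) to get $|\CQ_\ell\Gamma_{z\bar z}\CQ_\beta F(\bar z)| \lesssim |z-\bar z|^{\beta-\ell}|F(\bar z)|_\beta$ for each $\beta \ge \alpha > \ell$. It then remains to absorb the extra factor $|z-\bar z|^{\beta-\gamma}$ against the weights. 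For the $\eps$-weighted seminorm one uses $|F(\bar z)|_\beta \lesssim \eps^{\eta-\beta}\|F\|_{\gamma,\eta;\eps}$ (the second term of \eqref{e:defgammaetaeps}, applicable since $\beta \ge \alpha \ge \gamma \ge \eta$) together with $|z-\bar z| \le \eps$ and $\beta \ge \gamma$, so that $|z-\bar z|^{\beta-\gamma}\eps^{\gamma-\beta} \le 1$. For the $(|t|\wedge|\bar t|)$-weighted seminorm one uses only the crude bound $|F(\bar z)|_\beta \lesssim \|F\|_{\gamma,\eta}$, the constraint $|z-\bar z| \le \sqrt{|t|\wedge|\bar t|}$, and $\beta \ge \gamma \ge \eta$ together with $|t|\wedge|\bar t| \le T \le 1$, so that $|z-\bar z|^{\beta-\gamma}(|t|\wedge|\bar t|)^{(\gamma-\eta)/2} \lesssim (|t|\wedge|\bar t|)^{(\beta-\eta)/2} \lesssim 1$. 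Summing over the finitely many $\beta$ produces exactly the bound demanded by the corresponding seminorm of $\CQ_{<\alpha}F$.

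Putting the estimates together gives $\|\CQ_{<\alpha}F\|_{\gamma,\eta;\eps} \lesssim \|F\|_{\gamma,\eta;\eps}$, hence $\CQ_{<\alpha}F \in \CD_\eps^{\gamma,\eta}$. The one point requiring care --- and what I would flag as the main obstacle --- is precisely the non-commutation of $\CQ_{<\alpha}$ with the $\Gamma_{z\bar z}$: one must carry the term $\Gamma_{z\bar z}\CQ_{\ge\alpha}F(\bar z)$ explicitly and check that its homogeneity-$\ell$ components are of the right order. This is where the hypothesis $\alpha \ge \gamma$ is used (it guarantees $\beta - \gamma \ge 0$ for every surviving $\beta$, so the extra powers of $|z-\bar z|$ are favourable), together with the two scaling constraints $|z-\bar z| \le \eps$ and $|z-\bar z| \le \sqrt{|t|\wedge|\bar t|}$ built into the definitions of the spaces.
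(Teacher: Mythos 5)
Your proof is correct and follows essentially the same route as the paper's: there one reduces to showing that each single-homogeneity piece $\CQ_\beta F$ with $\beta \ge \gamma$ lies in $\CD^{\gamma,\eta}_\eps$, and the only non-trivial contribution is $|\Gamma_{z\bar z}\CQ_\beta F(\bar z)|_\ell \lesssim |z-\bar z|^{\beta-\ell}(|t|+\eps^2)^{(\eta-\beta)/2}$, with the surplus power $|z-\bar z|^{\beta-\gamma}$ absorbed exactly as in your spillover term $\CQ_\ell\Gamma_{z\bar z}\CQ_{\ge\alpha}F(\bar z)$. Your separate treatment of the two weighted increment seminorms (using the $\eps^{\eta-\beta}$ pointwise bound in one case and the plain supremum bound in the other) is just the split form of the paper's single estimate with the combined weight $(|t|+\eps^2)^{(\eta-\beta)/2}$.
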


\begin{proof}
It is sufficient to show that one actually has 
$F_\alpha \eqdef \CQ_{\alpha} F \in \CD_\eps^{\gamma,\eta}$ for every
$\alpha \ge \gamma$. It follows from the definitions that
$|F_\alpha(z)|\lesssim (|t| + \eps^2)^{(\eta - \alpha)/2}$. As a consequence,
for $\beta < \gamma$ (so in particular also $\beta < \alpha$) and 
for $|z-\bar z| \le \sqrt{|t|\wedge |\bar t|}$, one has
\begin{equs}
|F_\alpha(z) - \Gamma_{z\bar z}F_\alpha(\bar z)|_\beta
&\lesssim |z-\bar z|^{\alpha-\beta} |F_\alpha(\bar z)|
\lesssim |z-\bar z|^{\alpha-\beta} (|t| + \eps^2)^{(\eta - \alpha)/2}\\
&\lesssim |z-\bar z|^{\gamma-\beta} (|t| + \eps^2)^{(\eta - \gamma)/2}\;,
\end{equs}
thus yielding the required bound.
\end{proof}

The following proposition shows how these spaces behave under the action of the
integral operator $\CK$ defined in \eqref{eq:integrationop},
\begin{proposition}\label{prop:intSing}
Let $V$ be a sector of regularity $\alpha$
and let $H \in \CD^{\gamma,\eta}_\eps(V)$
with $-2 < \eta < \gamma \wedge\alpha$. 
Then, provided that $\gamma \not \in \N$ and $\eta \not \in \mathbf{Z}$,
one has $\CK H \in \CD^{\bar \gamma,\bar \eta}_\eps$ with 
$\bar \gamma = \gamma + 2$ and $\bar \eta = \eta+2$.
Furthermore, one has the bound
\begin{equ}[e:diffIntSing]
\|\CK H ; \CK \bar H\|_{\bar\gamma,\bar \eta;\eps} \lesssim 
\|H ; \bar H\|_{\gamma,\eta;\eps} + \$\Pi;\bar \Pi\$\;.
\end{equ}
\end{proposition}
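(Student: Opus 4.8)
The plan is to bootstrap from the weighted Schauder estimate already available for the non-$\eps$ spaces $\CD^{\gamma,\eta}$. Since $\CD^{\gamma,\eta}_\eps \subset \CD^{\gamma,\eta}$ and the first term in \eqref{e:defgammaetaeps} is exactly $\|H\|_{\gamma,\eta}$, the bound on $\|\CK H\|_{\bar\gamma,\bar\eta}$ — hence on the first term of $\|\CK H\|_{\bar\gamma,\bar\eta;\eps}$ — together with the matching part of \eqref{e:diffIntSing} is precisely \cite[Thm~4.7]{Regularity} and its weighted counterpart in \cite[Sec.~6]{Regularity}, whose hypotheses are exactly $-2<\eta<\gamma\wedge\alpha$, $\gamma\notin\N$, $\eta\notin\mathbf{Z}$. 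So it remains to estimate the second and third terms of \eqref{e:defgammaetaeps} applied to $\CK H$; since every bound below is affine in $(H,\Pi)$ (resp. in $(H,\bar H,\Pi,\bar\Pi)$), the difference bound \eqref{e:diffIntSing} follows by rerunning the same estimates on $\CK H-\CK\bar H$, so I concentrate on the single-model bounds. Recall finally that the operator actually used is $\CP=\CK+R\CR$; the correction $R\CR$ takes values in $\bar\CT$ and $R$ is a smoothing operator (\cite[Lemma~7.7]{Regularity}), so $R\CR H$ is a smooth function whose polynomial lift has coefficients $\lesssim\$\Pi\$\,\|H\|_{\gamma,\eta;\eps}$, for which the two $\eps$-terms of \eqref{e:defgammaetaeps} are automatically finite (the powers of $\eps$ they ask for are non-negative); I therefore work with $\CK$ alone. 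Set up the usual machinery of \cite[Sec.~5]{Regularity}: $\CK H(z)=\CI H(z)+(\text{polynomial part})$, where $\CI$ merely shifts homogeneities up by $2$ and the coefficient of $X^k/k!$ in the polynomial part is a pairing of $D^kK(z-\cdot)$ against the components $\Pi_z\tau$ of $\Pi_z H(z)$ and against $\CR H-\Pi_z H(z)$; decompose $K=\sum_{n\ge0}K_n$ with $K_n$ supported in a parabolic ball of radius $\lesssim 2^{-n}$, annihilating low-degree polynomials, with $\|D^kK_n\|_{L^\infty}\lesssim 2^{(1+|k|)n}$. Throughout, separate \emph{coarse} scales $2^{-n}\ge\eps$, handled with the plain weighted bounds and the weighted reconstruction estimate of \cite[Sec.~6]{Regularity}, from \emph{fine} scales $2^{-n}<\eps$, where one uses the second and third terms of $\|H\|_{\gamma,\eta;\eps}$ and the $\eps$-norm bounds \eqref{e:modelEps}: below scale $\eps$ the modelled distribution $H$ is effectively smoother than a generic element of $\CD^{\gamma,\eta}$.

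For the second term of \eqref{e:defgammaetaeps}, fix $z$ and a homogeneity $\beta>\bar\eta=\eta+2$. The contribution of $\CI H(z)$ in homogeneity $\beta$ is bounded by $|H(z)|_{\beta-2}$, and since $\beta-2>\eta$ the second term of $\|H\|_{\gamma,\eta;\eps}$ gives exactly $\lesssim\eps^{\eta-(\beta-2)}=\eps^{\bar\eta-\beta}$. For the polynomial contribution — the coefficient of $X^k$ with $|k|=\beta$, which occurs only for $\bar\eta<|k|<\bar\gamma$ — expand with $K=\sum_n K_n$. Since $\eta$ lies strictly below the regularity of the sector $V$, every abstract component $\tau$ of $H(z)$ has $|\tau|>\eta$, hence $\|H(z)\|_{|\tau|}\lesssim\eps^{\eta-|\tau|}$; combined with $|(\Pi_z\tau)(\phi_z^\lambda)|\lesssim\lambda^{|\tau|}\$\Pi\$\,\|H(z)\|_{|\tau|}$ and the reconstruction bounds, the $n$-th term is $\lesssim 2^{(|k|-r)n}\eps^{\eta-r}$ for a suitable $r>\eta$ (namely $r=|\tau|$ or $r=\gamma$). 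On the coarse range the geometric sum converges because $|k|<\bar\gamma$ (this is where $\gamma\notin\N$, $\eta\notin\mathbf{Z}$ keep us off the borderline), and on the fine range the same termwise estimate with the fine-scale information sums to $\lesssim\eps^{\bar\eta-\beta}$, as required.

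The third term of \eqref{e:defgammaetaeps} — the increment $(\CK H)(z)-\Gamma_{z\bar z}(\CK H)(\bar z)$ in homogeneity $\beta<\bar\gamma=\gamma+2$ over pairs with $|z-\bar z|\le\sqrt{|t|\wedge|\bar t|}$ and $|z-\bar z|\le\eps$ — is where essentially all the work lies, and it is a rerun of the core estimate of \cite[Sec.~5--6]{Regularity} with the dyadic sum split at $\eps$. The $\CI H$-part again only shifts homogeneities and is handled directly by the third term of $\|H\|_{\gamma,\eta;\eps}$, which supplies both the factor $\eps^{\eta-\gamma}$ and the clean power $|z-\bar z|^{\gamma-\beta}$. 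For the polynomial parts one uses the standard telescoping/cancellation of the $K_n$ on scales $2^{-n}\gtrsim|z-\bar z|$; the new point is that on the intermediate range $|z-\bar z|\le 2^{-n}\le\eps$ one feeds in the \emph{third} term of $\|H\|_{\gamma,\eta;\eps}$ in place of the weighted increment bound, in order to harvest the factor $\eps^{\eta-\gamma}$ and a power of $|z-\bar z|$ governed by $\gamma$ rather than by $\eta$; a direct but lengthy accounting of the exponents then yields the required bound $|z-\bar z|^{\bar\gamma-\beta}\eps^{\bar\eta-\bar\gamma}=|z-\bar z|^{\gamma+2-\beta}\eps^{\eta-\gamma}$.

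I expect the main obstacle to be exactly this bookkeeping: running the multiscale Schauder argument of \cite[Sec.~5--6]{Regularity} while simultaneously tracking the weight $(|t|^{1/2}\vee\eps)$ near $\{t=0\}$, the crossover at scale $\eps$ between the rough $\CD^{\gamma,\eta}$-behaviour of $H$ and its enhanced sub-$\eps$ regularity, the fact that \eqref{e:modelEps} controls $\Pi_z\tau$ and $f_z(\EE^k_\ell\tau)$ only for the relevant low-homogeneity components, and the need to land exactly on $\bar\gamma=\gamma+2$, $\bar\eta=\eta+2$ without losing powers — which is where the remaining hypotheses are consumed ($\eta>-2$: integrability of the $t\to0$ singularity against $K$; $\gamma\notin\N$, $\eta\notin\mathbf{Z}$: no logarithmic corrections in the polynomial parts; $\eta<\gamma\wedge\alpha$: every component of $H(z)$ sits strictly above the weight, so the pointwise $\eps$-bound applies componentwise). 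Once the two scale regimes $2^{-n}\ge\eps$ and $2^{-n}<\eps$ are separated and each is fed the appropriate term of $\|H\|_{\gamma,\eta;\eps}$, no idea beyond \cite{Regularity} is needed, and \eqref{e:diffIntSing} comes for free by linearity.
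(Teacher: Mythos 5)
Your proposal is correct and follows essentially the same route as the paper: reduce to the Taylor-monomial components of the increment bound and rerun the weighted Schauder estimate of \cite[Prop.~6.16]{Regularity} with the dyadic scales below $\eps$ treated via the $\eps$-weighted part of the norm (the paper phrases this as replacing $\|x,y\|_P$ by $\eps$ \emph{mutatis mutandis}). The only cosmetic difference is that the paper invokes Proposition~\ref{prop:pointwise} to dispense with the pointwise $\eps$-term of \eqref{e:defgammaetaeps} altogether, whereas you estimate it directly, which is harmless extra work.
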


\begin{proof}
In view of \cite[Prop.~6.16]{Regularity} and Proposition~\ref{prop:pointwise}, we only
need to bound the last term in \eqref{e:defgammaetaeps} with $H$
replaced by $\CK H$.

This bound follows immediately from the definitions for the
components of $\CK H$ that are not proportional to the Taylor monomials, so we
only need to consider the latter, i.e.\ we need to show that 
\begin{equ}
\|\CK H(z) - \Gamma_{z\bar z} \CK H(\bar z)\|_\ell \lesssim |z-\bar z|^{\bar \gamma - \ell} \eps^{\bar \eta- \bar \gamma}\;,
\end{equ}
for integer values of $\ell$ and for $(z,\bar z) \in D_\eps^{(2)}$.

The proof of this fact follows
the proof of \cite[Prop.~6.16]{Regularity} \textit{mutatis mutandis}, so we do
not reproduce it here.
The only difference is that all the expressions $\|x,y\|_P$ appearing there
are now replaced by $\eps$.
\end{proof}

\begin{remark}
All conclusions of Proposition~\ref{prop:intSing} still hold if $\CK$ is replaced
by $\CP$. 
\end{remark}

Note that in all the results so far, we never used the fact that the models
actually belong to $\MM_\eps$ rather than just $\MM$. This is somewhat explicit in the
fact that the bounds \eqref{e:diffprod} and \eqref{e:diffIntSing} depend on
$\$\Pi;\bar \Pi\$$ rather than on $\$\Pi;\bar \Pi\$_{\eps,0}$.
Furthermore, up to now, while we have seen that the spaces $\CD^{\gamma,\eta}_\eps$
do not behave any ``worse'' than the spaces $\CD^{\gamma,\eta}$, they do not behave
any ``better'' either, so it may seem unclear at this stage why we introduced them.

The final property of these spaces that we use is their behaviour under the 
operation $\hat \Eps^k$ introduced in Section~\ref{sec:epsk}. At this stage it
is absolutely essential to use the spaces $\CD^{\gamma,\eta}_\eps$ and models in $\MM_\eps$
since the corresponding property would simply be false otherwise.

\begin{proposition}\label{prop:epsSing}
Let $H \in \CD^{\gamma,\eta}_\eps$ with $\gamma > -k$ based on a model $\Pi$ in $\MM_\eps$ and set
\begin{equ}
\bar \gamma = \delta\;,\qquad \bar \eta = \eta + k\;,
\end{equ}
with $\delta = (\gamma + k) \wedge \inf_{\alpha \in A \cap [-k,\gamma)} (\gamma-\alpha)$.
Then, one has $\hat \Eps^k H \in \CD^{\bar \gamma,\bar \eta}_\eps$. Furthermore,
for $\bar H \in \CD^{\gamma,\eta}$ based on a model $\bar \Pi$ in $\MM_0$, one has the bound
\begin{equ}
\|\hat \Eps^k H ; \hat \Eps^k \bar H\|_{\bar\gamma,\bar \eta;\eps} \lesssim 
\|H ; \bar H\|_{\gamma,\eta;\eps} + \$\Pi;\bar \Pi\$_{\eps,0}\;,
\end{equ}
with a proportionality constant depending on $\$\Pi\$_{\eps}
+ \$\bar \Pi\$$, but not explicitly on $\eps$.
\end{proposition}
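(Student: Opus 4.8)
The plan is to mimic the proof of Proposition~\ref{prop:multEps}, keeping track of the weight near $t=0$ and of the cut-off scale $\eps$ that distinguish $\CD^{\gamma,\eta}_\eps$ from $\CD^\gamma$. The starting point is the algebraic identity \eqref{e:boundepsk} obtained there,
\begin{equ}
\bigl(\hat \Eps^k H\bigr)(z) - \Gamma_{zz'} \bigl(\hat \Eps^k H\bigr)(z') = \Eps^k \bigl(H(z) - \Gamma_{zz'} H(z')\bigr) + \sum_\ell {X^\ell\over \ell!} f_z \bigl(\EE^k_\ell \bigl(\Gamma_{zz'}H(z') - H(z)\bigr)\bigr)\;,
\end{equ}
which is purely algebraic (it used only $f_z(X^m) = z^m$ and the recursion \eqref{e:idengamma}) and so holds verbatim here. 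By Proposition~\ref{prop:pointwise} it is enough to estimate the first and third terms of $\|\hat \Eps^k H\|_{\bar\gamma,\bar\eta;\eps}$, the middle one then being free. Throughout, the one genuinely new ingredient over Proposition~\ref{prop:multEps} is the $\MM_\eps$-bound \eqref{e:modelEpsWantedf}, $|f_z(\EE^k_\ell(\tau))| \lesssim \eps^{|\tau|+k-|\ell|}$; it is precisely this smallness that makes the statement true, and its failure for general admissible models is exactly why the conclusion breaks down outside $\MM_\eps$, as warned before the proposition.

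The contribution of $\Eps^k\bigl(H(z) - \Gamma_{zz'}H(z')\bigr)$ is handled by elementary bookkeeping: since $\Eps^k$ raises homogeneity by exactly $k$, its homogeneity-$\alpha$ component equals the homogeneity-$(\alpha-k)$ component of $H(z) - \Gamma_{zz'}H(z')$, to which the $\CD^{\gamma,\eta}_\eps$-bounds on $H$ apply and give $|z-z'|^{\gamma+k-\alpha}$ times the appropriate power of $|t|\wedge|t'|$ (or of $\eps$). Since $\bar\gamma = \delta \le \gamma+k$ and $\bar\eta = \eta+k$, one uses $|z-z'| \le \sqrt{|t|\wedge|t'|}$ (respectively $|z-z'| \le \eps$) to trade the surplus factor $|z-z'|^{\gamma+k-\bar\gamma}$ for the missing weight, exactly as in the proof of Proposition~\ref{prop:intSing}; the pointwise terms are the same with $\Gamma_{zz'}H(z')$ deleted.

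The correction term $\sum_\ell {X^\ell\over \ell!} f_z(\EE^k_\ell(w))$, with $w = \Gamma_{zz'}H(z')-H(z)$ (or $w = H(z)$ for the pointwise estimate), is where the $\MM_\eps$-structure is essential. Expanding $w = \sum_\tau w_\tau\,\tau$ and using that $\EE^k_\ell(\tau) = 0$ unless $|\ell|-k < |\tau| \le |\ell|$, the bound \eqref{e:modelEpsWantedf} yields
\begin{equ}
\bigl|f_z(\EE^k_\ell(w))\bigr| \lesssim \sum_{|\ell|-k < \beta \le |\ell|} |w|_\beta\,\eps^{\beta+k-|\ell|}\;.
\end{equ}
The resulting component has homogeneity $|\ell|$, which is relevant to the $\CD^{\bar\gamma,\bar\eta}_\eps$-norm only when $|\ell| < \bar\gamma = \delta$. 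The key point is that then every $\beta$ above satisfies $\beta > |\ell|-k \ge -k$, and, whenever $\beta < \gamma$, also $\gamma-\beta \ge \delta$ by the very definition of $\delta$; hence $\beta \le \gamma-\delta$ and the $\CD^{\gamma,\eta}_\eps$-bound on $|w|_\beta$ carries a surplus $|z-z'|^{\gamma-\beta}\ge |z-z'|^\delta$. Since $\beta+k-|\ell| > 0$ and $\eps \le (|t|+\eps^2)^{1/2}$, this surplus power of $\eps$ can be converted into the correct power of the weight, producing exactly $|z-z'|^{\bar\gamma-|\ell|}$ times the weight dictated by $\bar\eta$. (Components of $H$ of homogeneity $\ge\gamma$, allowed by the conventions around \eqref{e:defgammaeta}, can enter this range only when $\delta>\gamma$ and are then controlled using the uniform bound $|w|_\beta\lesssim 1$; this case is routine.) A short case distinction, according to whether $|t|$ is above or below $\eps^2$ and whether $|z-z'|$ is above or below $\eps$, is needed so that one invokes the $t$-scale or the $\eps$-scale part of the $\CD^{\gamma,\eta}_\eps$-norm appropriately; in each regime the exponents close. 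I expect this exponent arithmetic, together with checking that $\delta$ as defined is exactly the threshold that works, to be the main (if not conceptually deep) obstacle.

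For the difference bound, note that $\bar\Pi \in \MM_0$ means $\bar f_z(\EE^k_\ell(\cdot)) = 0$, so $\hat\Eps^k\bar H = \Eps^k\bar H$ has no correction term and
\begin{equ}
\bigl(\hat \Eps^k H\bigr)(z) - \bigl(\hat \Eps^k \bar H\bigr)(z) = \Eps^k\bigl(H(z) - \bar H(z)\bigr) - \sum_\ell {X^\ell\over \ell!} f_z\bigl(\EE^k_\ell(H(z))\bigr)\;,
\end{equ}
with the analogous formula for the increments. The $\Eps^k$-part is estimated as above with $H$ replaced by $H-\bar H$, giving a contribution $\lesssim \|H;\bar H\|_{\gamma,\eta;\eps}$ — here the increments of both models appearing in \eqref{e:bounddifff} are precisely what subtraction in the identity above produces. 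The correction part is estimated as in the previous paragraph, except that, since $\bar f=0$, the factor $f_z(\EE^k_\ell(\cdot))$ is now bounded by $\|\Pi\|_\eps$ rather than merely by the $\MM_\eps$-bound, and $\|\Pi\|_\eps \le \$\Pi;\bar\Pi\$_{\eps,0}$ by \eqref{e:boundPidiff}; the remaining factors involving $H$ are bounded by $\|H\|_{\gamma,\eta;\eps}$, hence in terms of $\|H;\bar H\|_{\gamma,\eta;\eps}$ and the (fixed, bounded) norm of $\bar H$. This yields the stated bound, with proportionality constant depending on $\$\Pi\$_\eps + \$\bar\Pi\$$ but not explicitly on $\eps$.
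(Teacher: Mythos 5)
Your proposal is correct and follows essentially the same route as the paper's proof: the identity \eqref{e:boundepsk}, separate treatment of the $\Eps^k$ part and of the $f_z(\EE^k_\ell(\cdot))$ correction using the $\MM_\eps$ bound \eqref{e:modelEpsWantedf}, the case split $|z-z'|\le\eps$ versus $|z-z'|\ge\eps$ with $\gamma-\delta$ as the threshold homogeneity, and the observation that for the difference bound the correction term is controlled by $\|\Pi\|_\eps\le\$\Pi;\bar\Pi\$_{\eps,0}$ since $\bar f_z(\EE^k_\ell(\cdot))=0$ for models in $\MM_0$. The only cosmetic difference is your appeal to Proposition~\ref{prop:pointwise} and the slightly loose remark that components with $\beta\ge\gamma$ are handled by $|w|_\beta\lesssim 1$ (the $\eps$-weighted pointwise bound is the cleaner tool there), but the exponent arithmetic closes either way.
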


\begin{proof}
Setting $g = \hat \Eps^k H$, it then follows from \eqref{e:boundepsk} that 
\begin{equ}
g(z) - \Gamma_{zz'} g(z')
= \Eps^k \bigl(H(z) - \Gamma_{zz'} H(z')\bigr) +  f_z \bigl(\EE^k_0 \bigl(\Gamma_{zz'}H(z') - H(z)\bigr)\bigr)\,\one\;.
\end{equ}
For the components other than the one multiplying $\one$, the required bounds follow at once,
provided that $\bar \gamma \le \gamma + k$ and $\bar \eta \le \eta + k$. 
Regarding the component multiplying $\one$, it follows from the definitions of $\CD^{\gamma,\eta}_\eps$  and $\MM_\eps$
that the terms arising from components of $\Gamma_{zz'}H(z') - H(z)$ proportional
to $\tau$ are bounded by
\begin{equ}[e:boundDiffEps]
\eps^{|\tau| + k} |z-z'|^{\gamma-|\tau|} (\eps + \sqrt{|t|})^{\eta-\gamma}\;,
\end{equ}
where $t$ is the time component of $z$ and we only consider pairs $z,z'$ such that
$|z-z'|^2 \le |t|/2$, say. If $|z-z'| \le \eps$, then this 
bound gets worse for larger values of $|\tau|$.
By the definition of $\delta$ the largest value that arises is given by at most
$|\tau| = \gamma - \delta$. It follows that the requested bound holds, provided that
$\bar \gamma \le \delta$ and $\bar \eta \le \delta + \eta - \gamma$.
For $|z-z'| \ge \eps$, the bound \eqref{e:boundDiffEps} is worse for small values of
$\tau$. Since the smallest possible value of $\tau$ contributing to it is $|\tau| = \delta-k$,
this expression is bounded by $|z-z'|^{\gamma+k} (\eps + \sqrt{|t|})^{\eta-\gamma}$.
Since furthermore we only consider pairs $z$, $z'$ such that 
$|z-z'| \le \eps + \sqrt{|t|}$, this is also bounded by a multiple
of $|z-z'|^{\delta} (\eps + \sqrt{|t|})^{\eta+k-\delta}$ as required.

We now turn to the pointwise bound on $g$. For the components not multiplying $\one$,
it is immediate to see that the required bound holds as soon as $\bar \eta \le \eta + k$.
The component multiplying $\one$ is given by $f_z(\EE_0^k(H(z)))$. Again, the worst
available bound is on the component of $H(z)$ multiplying $\tau$ with $|\tau| = \gamma - \delta$, for which we obtain a bound of the type
\begin{equ}
|\scal{g(z),\one}| \lesssim f_z(\EE_0^k(\tau)) \, (\eps + \sqrt{|t|})^{(\eta - |\tau|)\wedge 0} \|H\|_{\gamma,\eta;\eps}\;.
\end{equ}
At this stage, we make use of the assumption that the underlying model belongs
to $\MM_\eps$, which guarantees that 
\begin{equ}[e:boundfEpsk]
|f_z(\EE_0^k(\tau))| \lesssim \eps^{|\tau| + k}\;.
\end{equ}
Since only terms with $|\tau| + k > 0$ contribute 
(see the remark following \eqref{e:defDelta}), we conclude that
\begin{equ}[e:boundgg]
|\scal{g(z),\one}| \lesssim (\eps + \sqrt{|t|})^{(\eta + k)\wedge 0} \|H\|_{\gamma,\eta;\eps}
\le (\eps + \sqrt{|t|})^{\bar \eta \wedge 0} \|H\|_{\gamma,\eta;\eps}\;,
\end{equ}
provided that $\bar \eta < \eta + k$, which is the required bound.

It remains to bound $\|\hat \Eps^k H ; \hat \Eps^k \bar H\|_{\bar\gamma,\bar \eta;\eps}$.
For this, the bounds on $\|\hat \Eps^k H ; \hat \Eps^k \bar H\|_{\bar\gamma,\bar \eta}$
follow in the same way as above. The bound on the second term in 
\eqref{e:bounddifff} also follows in the same way, noting that
it only requires the bounds \eqref{e:boundfEpsk} which in turn are controlled by 
$\$\Pi;\bar \Pi\$_{\eps,0}$ as a consequence of \eqref{e:boundPidiff} and \eqref{e:modelEps}.
\end{proof}

\subsection{Picard iteration and convergence}
\label{picit}

We now show that the ``abstract'' fixed point problem associated to 
our equation is uniformly well-behaved 
in the spaces $\CD^{\gamma,\eta}_\eps$ for suitable values of $\gamma$ and $\eta$.
(This is precisely what motivates our choice of definitions for $\CD^{\gamma,\eta}_\eps$
in the first place.)
More precisely, we have the following result.

\begin{theorem}\label{theo:FP}
Let $m \ge 1$, 
$\eta  \in ({1\over 2} - {1\over 4m},{1\over 2})$, $\eps \in [0,\eps_0]$,
and let $\kappa > 0$ be sufficiently small (depending only on $m$ and $\eta$). 
Let furthermore $\gamma = 2 - \nu$ with $\nu = 1/(32m)$, 
and consider the fixed point equation
\begin{equ}[e:abstrFP]
H= \CP \one_+ \Bigl(\Xi + \sum_{j=1}^m \hat a_j \CQ_{\le 0}\hat\Eps^{j-1} \bigl(\CQ_{\le 0}(\DD H)^{2j}\bigr)\Bigr) + P h_0\;,\end{equ} 
for some $h_0 \in \CC^{\gamma,\eta}_\eps$.
Then, for $\eps \le \eps_0$ with $\eps_0$ and the final time $T>0$ sufficiently small and 
for any model in $\MM_\eps$, there exists a unique solution to \eref{e:abstrFP} in 
$\CD^{\gamma,\eta}_\eps$. Furthermore, the time $T$ can be chosen uniformly 
over bounded sets of initial conditions in  $\CC_\eps^{\gamma,\eta}$, over bounded
sets in $\MM_\eps$, over bounded sets in the
space of parameters $\hat a_1,\ldots,\hat a_m$, and over $\eps \in [0,\eps_0]$.

Let $h_0^{(\eps)} \in \CC^{\gamma,\eta}_\eps$ be a sequence of elements such that there exists
$h_0 \in \CC^\eta$ with $\lim_{\eps \to 0} \|h_0;h_0^{(\eps)}\|_{\gamma,\eta;\eps} = 0$, 
and let $\Pi^{(\eps)} \in \MM_\eps$ be a
sequence of models such that there exists 
$\Pi \in \MM_0$ with $\lim_{\eps \to 0} \$\Pi^{(\eps)};\Pi\$_{\eps,0} = 0$.
Let $T > 0$ be fixed and assume that $H \in \CD^{\gamma,\eta}_0$ solves \eref{e:abstrFP} 
with model $\Pi$
up to some terminal time $T > 0$. Then, 
for $\eps > 0$ small enough, there exists a unique solution $H_\eps\in \CD^{\gamma,\eta}_\eps$ to \eref{e:abstrFP} 
with initial condition $h_0^{(\eps)}$
and model $\Pi_\eps$ up to time $T$, 
and $\lim_{\eps \to 0}\|H^{(\eps)};H\|_{\gamma,\eta;\eps} = 0$.
\end{theorem}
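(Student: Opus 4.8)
The plan is to prove Theorem~\ref{theo:FP} by a standard contraction-mapping argument, but carried out uniformly in $\eps$ in the $\CD^{\gamma,\eta}_\eps$ scale. The whole point of having set up Propositions~\ref{prop:derSing}--\ref{prop:epsSing} is that each building block of the right-hand side of \eref{e:abstrFP} has been shown to act nicely on the spaces $\CD^{\gamma,\eta}_\eps$ with constants that do not blow up as $\eps \to 0$; so most of the work is really a bookkeeping exercise to check that the homogeneities line up.

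First I would track the regularity exponents through one application of the map $\CM_H$ defined by the right-hand side of \eref{e:abstrFP}. Starting from $H \in \CD^{\gamma,\eta}_\eps$ taking values in $\scal{\CU}$, Proposition~\ref{prop:derSing} gives $\DD H \in \CD^{\gamma-1,\eta-1}_\eps$ with values in $\scal{\CU'}$ (regularity $-{1\over 2}-\kappa$). Then Proposition~\ref{prop:multSing} applied $2j$ times and Proposition~\ref{prop:proj} give $\CQ_{\le 0}(\DD H)^{2j} \in \CD^{\gamma_j,\eta_j}_\eps$ for suitable exponents (with $\gamma_j$ possibly negative, which is why the statement of Proposition~\ref{prop:multSing} was formulated without the assumption $\gamma>0$ and why the $\CQ_{\le 0}$ projections must not be replaced by $\CQ_{<\gamma_j}$, cf.\ the remark after \eref{e:abstrFPorig}). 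Then Proposition~\ref{prop:epsSing} applied to $\hat \Eps^{j-1}$ lifts the homogeneity back up by $j-1$, landing in $\CD^{\bar\gamma_j,\bar\eta_j}_\eps$; the worst of these terms is the top one $j=m$ and, for $\kappa$ small and $\nu = 1/(32m)$, one checks that it still has $\bar\gamma_j > -2$ and $\bar\eta_j \in(-2, \bar\gamma_j)$ so that the integration step applies. Adding the noise term $\Xi \in \CD^{\infty,|\Xi|}$ and multiplying by $\one_+$ on $\CD^{\gamma,\eta}_\eps$ (valid since $\eta \in (-1,1)$ and this is where the constraint $\eta > {1\over 2}-{1\over 4m}$ enters, guaranteeing $\eta_j > -1$ after taking products), then applying $\CP$ via Proposition~\ref{prop:intSing} (and the remark that it holds for $\CP$ as well), one gains two degrees of regularity and recovers an element of $\CD^{\gamma,\eta}_\eps$ with values in $\scal{\CU}$. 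The harmonic-extension term $P h_0 \in \CD^{\gamma,\eta}_\eps$ by Proposition~\ref{prop:epsLift} (composed with \cite[Lem.~7.5]{Regularity} to promote $\CC^{\gamma,\eta}_\eps$-data to the modelled-distribution level, exactly as $\CC^{\gamma,\alpha}_\eps$ was handled there). So $\CM_H$ maps a ball in $\CD^{\gamma,\eta}_\eps$ into itself, and the gain of regularity $\bar\gamma_j - \gamma > 0$ at every intermediate step yields a factor of a positive power of $T$ in front of the difference $\CM_{H_1} - \CM_{H_2}$, giving the contraction on $[0,T]$ for $T$ small; uniformity in $\eps$, in the initial data, and in the parameters $\hat a_j$ is immediate since all the bounds above were uniform.

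For the convergence statement, I would run the same fixed-point argument but now for the \emph{difference} of solutions, comparing $H_\eps \in \CD^{\gamma,\eta}_\eps$ (model $\Pi^{(\eps)} \in \MM_\eps$, data $h_0^{(\eps)}$) with $H \in \CD^{\gamma,\eta}_0$ (model $\Pi \in \MM_0$, data $h_0$), using the $\|\,\cdot\,;\,\cdot\,\|_{\gamma,\eta;\eps}$ distances and the difference bounds \eref{e:diffprod}, \eref{e:diffIntSing}, and the one in Proposition~\ref{prop:epsSing} involving $\$\Pi^{(\eps)};\Pi\$_{\eps,0}$, together with the difference bound in Proposition~\ref{prop:epsLift}. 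The recursive structure gives $\|H_\eps;H\|_{\gamma,\eta;\eps} \lesssim \|h_0^{(\eps)};h_0\|_{\gamma,\eta;\eps} + \$\Pi^{(\eps)};\Pi\$_{\eps,0} + T^\theta \|H_\eps;H\|_{\gamma,\eta;\eps}$ for some $\theta>0$ on a short time interval, whence convergence to $0$ there; one then iterates over a finite number of such intervals (keeping $\eps$ small enough at each step, which is possible since the length of each interval can be taken uniform in $\eps$ by the first part) to reach the fixed $T$. The existence of $H_\eps$ up to time $T$ for $\eps$ small comes from the fact that $H$ exists up to $T$ and the solutions stay in a bounded set uniformly as $\eps\to 0$, so no blow-up can occur before $T$.

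The main obstacle is not the abstract contraction, which is routine, but verifying that the exponent bookkeeping actually closes: one must check that with the specific choices $\gamma = 2-\nu$, $\nu=1/(32m)$, $\eta \in ({1\over 2}-{1\over 4m},{1\over 2})$, and $\kappa$ small, every intermediate space $\CD^{\gamma_j,\eta_j}_\eps$ and $\CD^{\bar\gamma_j,\bar\eta_j}_\eps$ arising from the $j=m$ term satisfies all the hypotheses of Propositions~\ref{prop:multSing}, \ref{prop:intSing}, \ref{prop:epsSing} (in particular $-2 < \eta_j$, $\bar\gamma_j \notin \N$, $\bar\eta_j \notin \mathbf{Z}$, $\eta_j < \gamma_j\wedge\alpha_j$), and that $\one_+$-multiplication is legitimate; this is exactly where the quantitative constraints in the statement of the theorem are forced, and it is the one place where the argument is genuinely delicate rather than formal. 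A secondary point requiring care is that the contraction estimate must be run in the \emph{affine} space of modelled distributions of the prescribed form \eref{e:U} with a fixed ``noise part'' $\CI(\Xi) + \CI(\hat\CF) + \dots$, so that one is really iterating only on the $\scal{\CU}$-valued remainder and the subtraction of the fixed singular part in the difference estimates makes the dangerous terms cancel.
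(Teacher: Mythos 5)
Your proposal is correct and follows essentially the same route as the paper: a contraction argument for the fixed point map in $\CD^{\gamma,\eta}_\eps$, uniform in $\eps$, built from Propositions~\ref{prop:derSing}, \ref{prop:multSing}, \ref{prop:proj}, \ref{prop:epsSing}, \ref{prop:intSing} and \ref{prop:epsLift} with precisely the exponent bookkeeping forced by $\gamma = 2-\nu$ and $\eta > \tfrac12 - \tfrac1{4m}$, followed by difference estimates in $\|\cdot;\cdot\|_{\gamma,\eta;\eps}$ and iteration over short time intervals to reach the fixed $T$. The only ingredient you leave implicit is that restarting the argument requires the trace $(\CR H_\eps)(t,\cdot)$ at positive times to lie in $\CC^{\gamma,\alpha}_\eps$ uniformly in $\eps$, together with the corresponding difference bound; this is exactly Proposition~\ref{prop:restart}, which is how the paper closes the iteration.
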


\begin{proof}
We first prove that the fixed point problem
\eref{e:abstrFP} can be solved locally with dependencies of the local existence time
that are uniform in $\eps$, provided that both the initial condition and the underlying
model are controlled in the corresponding $\eps$-dependent norms.
We consider \eref{e:abstrFP}
as a fixed point argument in $\CD^{\gamma,\eta}_\eps$. In other words, we
show that if we denote by $\CM$ the map
\begin{equ}[e:FPequ]
\CM(H) = \CP \one_+ \Bigl(\Xi + \sum_{j=1}^m \hat a_j \CQ_{\le 0}\hat \Eps^{j-1}\bigl( \CQ_{\le 0}(\DD H)^{2j}\bigr)\Bigr)+ P h_0^{(\eps)}\;,
\end{equ}
then, for sufficiently small values of the final time $T$ and uniformly in the
stated data, $\CM$ is a contraction mapping the centred ball of large enough radius $R$ in $\CD^{\gamma,\eta}_\eps$
into the ball of radius $R/2$.
Additional details, in particular the proof that solutions can be continued uniquely until
the explosion time in $\CC^\eta_\eps$, can be found in \cite[Sec.~7]{Regularity}.

Regarding the term $P h^{(\eps)}_0$, it follows from Proposition~\ref{prop:epsLift}, 
combined with our assumptions
on the initial conditions, that it belongs to $\CD^{\gamma,\eta}_\eps$, uniformly 
over $\eps \in [0,1]$, and that $\|P h_0^{(\eps)}; P h_0\|_{\gamma,\eta;\eps} \to 0$
as $\eps \to 0$.

Combining Propositions~\ref{prop:derSing}, \ref{prop:multSing} and \ref{prop:epsSing}, 
we conclude that if we set
\begin{equ}
\gamma_1 = \gamma - {1\over 2} - j - \kappa(2j-1)\;,\qquad \eta_1 = 2j(\eta-1)\;,
\end{equ}
then the map $H \mapsto  (\DD H)^{2j}$ is continuous from
$\CD^{\gamma,\eta}_\eps$ into $\CD^{\gamma_1,\eta_1}_\eps$. Note that
 $\gamma_1$ is negative as soon as $j \ge 2$, so that by Proposition~\ref{prop:proj}
the map $H \mapsto \CQ_{\le 0} (\DD H)^{2j}$ is also continuous from
$\CD^{\gamma,\eta}_\eps$ into $\CD^{\gamma_1,\eta_1}_\eps$ as soon as $j \ge 2$.
For $j = 1$, it turns out that one actually has $\CQ_{\le 0} (\DD H)^{2} = \CQ_{<\gamma_1} (\DD H)^{2}$
as a consequence of the fact that $\gamma_1 < {1\over 2} - {1\over 32m}$ and the homogeneities
appearing in $\CT_\ex$ are arbitrarily close (from below) to half-integers when $\kappa$ is small,
so that this term also belongs to $\CD^{\gamma_1,\eta_1}_\eps$. 


Since the homogeneities of elements of $\CW$ with homogeneity smaller than $2$ (say) 
are all of the form ${k\over 2} - \ell\kappa$
for $k$ and $\ell$ some integers with $\ell$ bounded by some fixed multiple of $m$,
we can apply Proposition~\ref{prop:epsSing} with $\delta = {1\over 2} - 2\nu$ 
provided that
we choose $\kappa$ sufficiently small. As a consequence, we see that
$H \mapsto  \hat \Eps^{j-1} \bigl(\CQ_{\le 0}(\DD H)^{2j}\bigr)$ is continuous from 
$\CD^{\gamma,\eta}_\eps$ into $\CD^{\delta,\eta_2}_\eps$ with 
\begin{equ}
\eta_2 = j(2\eta-1) + \kappa(2j-1) + {1\over 2} + \delta - \gamma
= j(2\eta-1) + \kappa(2j-1) - 1 - \nu \;.
\end{equ}
In order to be able to apply Proposition~\ref{prop:intSing}, we would like to 
guarantee that $\eta_2 > -2$. Provided that $\kappa$ is sufficiently small, this is
the case if $j(2\eta-1) > -1+ 2\nu$ for $j \le m$ which, keeping in mind
our choice of $\nu$, is guaranteed by the condition $\eta > {1\over 2} - {1\over 4m}$.


It then follows from Propositions~\ref{prop:intSing} and \ref{prop:proj} that, 
again provided that $\kappa$ is chosen
sufficiently small, there exists $\theta > 0$ such that 
$\CP \one_+ \CQ_{\le 0}\hat \Eps^{j-1} \bigl(\CQ_{\le 0}(\DD H)^{2j}\bigr)$ belongs to $\CD_\eta^{\gamma, \eta + \theta}$,
provided that 
\begin{equ}
j(2\eta-1) + \kappa(2j-1) + 1 - \nu \ge \eta + \theta\;.
\end{equ}
This is the case if $\eta (2j-1) > j-1 + 2\nu$ for $j=1,\ldots,m$, which in turn
is again guaranteed by the assumption that $\eta > {1\over 2} - {1\over 4m}$.
Since the heat kernel is non-anticipative, we actually know a little bit more:
as a consequence of \cite[Thm~7.1, Lem~7.3]{Regularity}, we know that
\begin{equ}
\|\CP \one_+ H\|_{\gamma,\eta} \le C T^\theta \|H\|_{\delta,\eta_2;\eps}\;,
\end{equ}
where $T$ denotes the length of the time interval over which the norms are taken.
As a consequence of our definitions, we then conclude that there exists a constant $C$
such that one has the bound
\begin{equ}
\|\CP \one_+ H\|_{\gamma,\eta;\eps} \le C (T+\eps)^\theta \|H\|_{\delta,\eta_2;\eps}\;.
\end{equ}
Combining these remarks, we see that for every $K > 1$ there exists a final time $T$ and
a constant $\eps_0$ such that, for all $\eps \in [0,\eps_0]$, the map $\CM$ defined in \eref{e:FPequ} 
maps the ball of radius $K$ in $\CD^{\gamma,\eta}_\eps$ into itself and is a contraction there, provided that the 
underlying model $\Pi \in \MM_\eps$ satisfies $\$\Pi\$_\eps \le K$ and that the initial condition
$h_0^{(\eps)}$ satisfies $\|h_0^{(\eps)}\|_{\eta,\eps} \le K/(2C)$ for $C$ as in \eqref{e:boundHarm}.

We now turn to the second part of the statement, namely the question of convergence
as $\eps \to 0$. We denote by $\CM_T$ the fixed point map given in \eref{e:FPequ}, where we make explicit 
the dependency on the terminal time $T$, and we write $\CM_T^{(\eps)}$ for the same map, but with
initial condition $h_0^{(\eps)} \in \CC^\eta_\eps$ and with respect to some model $\Pi^{(\eps)} \in \MM_\eps$.
Collecting all of the previously obtained estimates, we see that for $H \in \CD^{\gamma,\eta}_0$
and $H^{(\eps)} \in \CD^{\gamma,\eta}_\eps$, as well as corresponding models $\Pi \in \MM_0$ and $\Pi^{(\eps)} \in \MM_\eps$,
the fixed point map $\CM$ satisfies the bound
\begin{equ}
\|\CM_T^{(\eps)}(H^{(\eps)});\CM_T(H)\|_\eps 
\lesssim (T + \eps)^\theta \|H^{(\eps)};H\|_{\gamma,\eta;\eps} + \|\Pi^{(\eps)};\Pi\|_\eps + \|h_0^{(\eps)};h_0\|_{\eta;\eps} \;,
\end{equ}
where the proportionality constant is uniform over $T,\eps$ sufficiently small, as well as underlying models, initial conditions,
and modelled distributions $H$, $H^{(\eps)}$ belonging to a ball of fixed radius in the corresponding ``norms''.
It immediately follows that for sufficiently small final time $T$, one has
\begin{equ}[e:boundSolDist]
\|H^{(\eps)};H\|_{\gamma,\eta;\eps} \lesssim \|\Pi^{(\eps)};\Pi\|_\eps + \|h_0^{(\eps)};h_0\|_{\eta;\eps} \;.
\end{equ}
It remains to show that if $H$ is a solution to \eref{e:FPequ} up to some specified final time $T$, then 
the corresponding fixed point problem for $\CM_T^{(\eps)}$ also has a solution up to the same time $T$, provided that
$\eps$ is small enough, and the two underlying models and initial conditions are sufficiently close.
This is not completely trivial since it may well happen that $T$ is sufficiently large so that $\CM_T$ is no longer 
a contraction.

In view of \eref{e:boundSolDist}, it suffices to obtain a bound on the
solution, as well as the difference between solutions, at positive times in 
the same spaces $\CC^\eta_\eps$ that we choose
our initial condition in, so that we can iterate the bounds \eref{e:boundSolDist}. 
(See also the construction of maximal solutions in \cite[Prop.~7.11]{Regularity} which shows that a restarted solution
is again a solution of the original fixed point problem.)
This on the other hand immediately follows from Proposition~\ref{prop:restart}.
\end{proof}

To conclude this section, let us mention a straightforward way in which the 
solution map constructed in Theorem~\ref{theo:FP} actually relates to a PDE problem. 
Recall that, given any smooth
(actually continuous is enough) function $\zeta$,
the construction of Section~\ref{sec:canonical} yields a family
of maps $\LL_\eps\colon \CC^\infty \to \MM$  lifting $\zeta$ to an 
admissible model $(\Pi,\Gamma) = \LL_\eps(\zeta)$.
The following result is then immediate:

\begin{proposition}\label{prop:solutionCanonical}
Let $h_0 \in \CC^\gamma$ with $\gamma$ as in Theorem~\ref{theo:FP} and, given 
$\eps \in \R$ and $\zeta \in \CC^0$, let $H\in \CD^{\gamma,\eta}_\eps$ 
be the local solution to \eref{e:abstrFP} given by Theorem~\ref{theo:FP}
for the restriction to $\CT$ of the canonical model $\LL_\eps(\zeta)$.
Then, the function $h = \CR H$ is the classical (local) solution to the PDE
\begin{equ}
\d_t h = \d_x^2 h + \sum_{j=1}^m \eps^{j-1} \hat a_j (\d_x h)^{2j} + \zeta\;.
\end{equ}
\end{proposition}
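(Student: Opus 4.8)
The plan is to apply the reconstruction operator $\CR$ to the fixed point identity \eqref{e:abstrFP} and read off that $h=\CR H$ solves the Duhamel (mild) formulation of the stated PDE; since $\zeta$ is continuous the forcing is then a continuous function of space--time, so classical parabolic regularity promotes this to a classical solution, and uniqueness of the classical solution is inherited from the uniqueness part of Theorem~\ref{theo:FP}.

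\emph{Step 1: the canonical model is continuous.} Since $\zeta\in\CC^0$, an easy induction (following the proof of the Proposition in Section~\ref{sec:canonical}) shows that $\Pi_z\tau$ is a genuine continuous function for every $\tau$, both for $\LL_\eps(\zeta)$ on $\CT_\ex$ and for its restriction to $\CT$. Hence \eqref{e:propR} holds, $(\CR G)(z)=(\Pi_z G(z))(z)$ for every modelled distribution $G$ over this model, with the understanding that when $G$ has negative homogeneity this formula is taken as the definition of $\CR G$, exactly as at the end of Section~\ref{sec:epsk}. In particular $\CR\Xi=\Pi_z\Xi=\zeta$, and, using the identity $(\Pi_z\DD\tau)(\bar z)=\d_{\bar z}(\Pi_z\tau)(\bar z)$ recorded after Definition~\ref{def:admissible}, one gets $\CR\DD H=\d_x\CR H=\d_x h$ as continuous functions.

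\emph{Step 2: reconstruct the right-hand side.} From the standard properties of $\CP=\CK+R\CR$ applied to the continuous model (see \cite[Sec.~7]{Regularity}), $\CR\,\CP\one_+ G$ equals the mild solution with forcing $\one_+\CR G$ and zero initial data, while $\CR(Ph_0)$ is the harmonic extension of $h_0$. For the nonlinear terms, \eqref{e:propReconstr} (and its extension to $\gamma<0$ recorded at the end of Section~\ref{sec:epsk}) gives $\CR\hat\Eps^{j-1}(W)=\eps^{j-1}\CR W$, while multiplicativity of the continuous model gives $\CR\big((\DD H)^{2j}\big)=(\CR\DD H)^{2j}=(\d_x h)^{2j}$. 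Granting the claim of Step 3 that the projections $\CQ_{\le 0}$ are invisible to $\CR$, so that $\CR\CQ_{\le 0}\hat\Eps^{j-1}\big(\CQ_{\le 0}(\DD H)^{2j}\big)=\eps^{j-1}(\d_x h)^{2j}$, applying $\CR$ to \eqref{e:abstrFP} yields
\begin{equ}
h(t,\cdot)=P_t* h_0+\int_0^t P_{t-s}*\Big(\zeta(s,\cdot)+\sum_{j=1}^m\eps^{j-1}\hat a_j\,(\d_x h(s,\cdot))^{2j}\Big)\,ds\;,
\end{equ}
where $P$ is the heat kernel on $S^1$. This is the mild formulation of the claimed equation, and since its right-hand side is continuous, a Schauder bootstrap shows that $h$ is a classical (local) solution; uniqueness follows from Theorem~\ref{theo:FP} (or from classical PDE uniqueness).

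\emph{Step 3 (the main obstacle): the $\CQ_{\le 0}$'s do not change $\CR$.} This is where continuity of the model is essential. Every basis vector discarded by either projection is a product at least one of whose factors has the form $\CI(\cdot)$ or $\CI'(\cdot)$ (in particular $\Psi=\CI'(\Xi)$), or is $X^k$ with $|k|\ge 1$, possibly then acted on by a power $\Eps^\ell$. For the canonical model, the $k=0$ Taylor subtraction in \eqref{e:admissible3}--\eqref{e:admissible4} forces $(\Pi_z\CI(\cdot))(z)=(\Pi_z\CI'(\cdot))(z)=0$, and the exact cancellation between the leading term $\eps^\ell(\Pi_z\tau)(\cdot)$ and the $\EE^\ell_0$ correction in \eqref{e:canonical2} forces $(\Pi_z\Eps^\ell(\cdot))(z)=0$ in precisely the homogeneity range where $\Eps^\ell$ applied to such a factor still has positive homogeneity. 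Hence any symbol of strictly positive homogeneity occurring in $(\DD H)^{2j}$ or in $\hat\Eps^{j-1}\big(\CQ_{\le 0}(\DD H)^{2j}\big)$ contributes $0$ when $\Pi_z$ is evaluated at the base point, so removing those symbols does not change $(\Pi_z(\cdot)(z))(z)=\CR(\cdot)$. (For $j=1$ one may alternatively invoke, as in the proof of Theorem~\ref{theo:FP}, that $\CQ_{\le 0}(\DD H)^2=\CQ_{<\gamma_1}(\DD H)^2$ when $\kappa$ is small, so that the projection is invisible to $\CR$ by the Reconstruction Theorem.) I expect this vanishing argument --- tracking which homogeneities actually occur and verifying the cancellation in \eqref{e:canonical2} --- to be the only genuinely delicate point; the rest is bookkeeping with identities already established.
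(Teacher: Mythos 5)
Your overall route is the same as the paper's: apply $\CR$ to \eqref{e:abstrFP}, use admissibility together with $\CR \CP \one_+ = P * \one_+ \CR$, use \eqref{e:propReconstr} (and its extension to negative-order spaces via the point evaluation $(\CR U)(x)=(\Pi_x U(x))(x)$) plus multiplicativity of the canonical model, and check that the projections $\CQ_{\le 0}$ do not affect the reconstruction. The paper disposes of that last point in one line ("$\CR\CQ_{\le 0}H=\CR H$"); your Step~3 is where you add substance, and it is also where your justification goes wrong.

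The blanket claim that the canonical model satisfies $(\Pi_z\CI(\tau))(z)=(\Pi_z\CI'(\tau))(z)=0$ is false, and in particular false for the example you single out, $\Psi=\CI'(\Xi)$. In \eqref{e:admissible4} the term $f_z(\J_{1}\tau)$ is present only when $\J_1(\tau)\neq 0$, i.e.\ only when $|\tau|>-1$; for $|\tau|\le -1$ there is no subtraction at all, so that $(\Pi_z\Psi)(z)=(K'*\zeta)(z)\neq 0$ and likewise $(\Pi_z\CI'(\hat\CF))(z)\neq 0$ since $|\hat\CF|<-1$. (If your claim were true, $\CR\DD H$ would lose its noise contribution, contradicting your own Step~1.) The conclusion of Step~3 nevertheless survives, for a slightly different reason: a basis vector of $(\DD H)^{2j}$ of strictly positive homogeneity is a product of components of $\DD H$ whose homogeneities sum to a positive number, hence it contains at least one \emph{factor of strictly positive homogeneity}; such a factor is either $X^k$ with $|k|\ge 1$ or of the form $\CI'(\sigma)$ with $|\CI'(\sigma)|>0$, i.e.\ $|\sigma|>-1$, and for exactly these the $k=0$ term in \eqref{e:admissible4} is present, so the base-point evaluation vanishes and multiplicativity kills the whole product. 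Combined with your (correct) observation about the $\ell=0$ cancellation in \eqref{e:canonical2} for $\Eps^{j-1}(\sigma)$ with $|\sigma|\le 0<|\sigma|+j-1$, and the vanishing at the base point of the $X^\ell$, $|\ell|\ge 1$, corrections in \eqref{e:defEpshat}, this repairs the step and yields $\CR\CQ_{\le 0}(\cdot)=\CR(\cdot)$ for both projections. A final minor point: with $\zeta$ only in $\CC^0$ the ``classical'' solution should be read in the mild/Duhamel sense, as the paper does; a Schauder bootstrap to a genuine $C^{1,2}$ solution would require H\"older continuity of the forcing.
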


\begin{proof}
Applying the reconstruction operator to both sides of \eref{e:abstrFP} and using the 
facts that the model $\LL_\eps(\zeta)$ is admissible, that $\CR \CP \one_+ = P * \one_+\CR$ (see \cite[Section 4]{Regularity}), and that $\CR \CQ_{\le 0} H = \CR H$, we see that
\begin{equ}
h = P * \one_+ \Bigl(\zeta + \sum_{j=1}^m \hat a_j \CR\bigl(\hat \Eps^{j-1}\bigl(\CQ_{\le 0} (\DD H)^{2j}\bigr)\bigr)\Bigr) + P h_0\;,
\end{equ}
where $\one_+$ denotes the indicator function of the set $\{t \ge 0\}$.
The claim now follows from the fact that the reconstruction operator
obtain for the model $\LL_\eps(\zeta)$ satisfies
\begin{equ}
\CR \bigl(\hat \Eps^{j-1}\bigl(\CQ_{\le 0} (\DD H)^{2j}\bigr)\bigr) = \eps^{j-1} (\d_x h)^{2j}\;,
\end{equ}
as a consequence of \eref{e:propReconstr} which holds on $\CT$ by restriction.
\end{proof}

\begin{remark}
Note that the parameter $\eps$ only enters in the construction of the model $\LL_\eps(\zeta)$.
In particular, the solution map built in  Theorem~\ref{theo:FP} does not itself have any
knowledge of $\eps$. This is the crucial feature of our construction that then allows us
to send $\eps$ to $0$ in a ``transparent'' way.
\end{remark}

\section{Renormalisation}
\label{sec:renorm}

The purpose of this section is to build a family of transformations on the space $\MM$ of all
admissible models for the regularity structure $(\CT,\CG)$ (as opposed to $(\CT_\ex,\CG)$ where we would not find 
any convergent renormalized model.)
These transformations will be of the type
\begin{equ}[e:newModel]
\hat \Pi_x \tau = \bigl(\Pi_x \otimes f_x\bigr)\DeltaW M_0 \tau\;,\qquad 
\hat f_x(\sigma) = f_x(\Wickh \sigma)\;,
\end{equ}
where $M_0 \colon \CT \to \CT$, $\Wickh\colon \CT_+ \to \CT_+$, and 
$\DeltaW\colon \CT \to \CT \otimes \CT_+$ are linear maps with additional properties
guaranteeing that $(\hat \Pi, \hat f)$ is again an admissible model.
Of course, we could also have just defined one single map instead of the composition $\DeltaW M_0$, but it turns 
out that the effects of the two factors are easier to analyse separately.

\subsection{Renormalisation of the average speed}

We start by discussing the map $M_0$ since this is easier to define. At the level of the
equation, the effect of $M_0$ will simply be to add a constant term to the right hand side.
Denote by $\Bad \subset \CT$ the set of canonical basis vectors that are of one of
the following two types:
\begin{equs}[e:tauBB]
\tau &= \Eps^\ell\bigl(\Psi^{2\ell} \CI'(\Eps^m(\Psi^{2m+2}))\CI'(\Eps^n(\Psi^{2n+2}))\bigr)\;, \\
\tau &= \Eps^\ell\bigl(\Psi^{2\ell+1} \CI'(\Eps^m(\Psi^{2m+1}\CI'(\Eps^n(\Psi^{2n+2}))))\bigr)\;,
\end{equs}
where $\ell, m, n \ge 0$ are positive integers. Note that in both cases
one has $|\tau| = -2(\ell+m+n+2)\kappa$. For any $\tau \in \Bad$, we then define
$L_\tau\colon \CT \to \CT$ by setting $L_\tau \tau = \one$ and $L_\tau \bar \tau = 0$ for every
canonical basis vector $\bar \tau \neq \tau$.

Finally, given constants $C_\tau \in \R$, we set
\begin{equ}[e:defM0]
M_0 = \exp\Big(-\sum_{\tau \in\Bad} C_\tau L_\tau\Big) = 1 -  \sum_{\tau \in\Bad} C_\tau L_\tau\;.
\end{equ}
This defines a  map
$(\Pi,f) \mapsto (\hat \Pi,\hat f)$ on models $(\Pi,f)\in(\CT,\CG)$  by $\hat \Pi_z \tau = \Pi_z M_0\tau$ and
$\hat f_z = f_z$, taking reconstruction operator $\CR$ associated to $(\Pi,f)$ to
$\hat \CR$ associated to $(\hat \Pi, \hat f)$.
These enjoys the following properties:

\begin{proposition}\label{prop:commuteMGamma}\begin{enumerate}
\item
For every $\Gamma \in \CG$ and  $\tau \in \CT$,  $M_0 \Gamma \tau = \Gamma M_0 \tau$;\label{p1}
\item $M_0 \CI'(\tau) = \CI'(\tau)$;\label{p2}
\item  \label{p3}The map
$(\Pi,f) \mapsto (\hat \Pi,\hat f)$  is continuous on the space of all models for $(\CT,\CG)$ and maps the 
space $\MM$ of \textit{admissible} models into itself;
\item  \label{p4} $\hat \CR H = \CR H - \sum_{\tau \in \Bad} C_\tau u_\tau$.
\end{enumerate}
\end{proposition}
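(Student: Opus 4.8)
The four claims are essentially algebraic/bookkeeping facts about the very simple operator $M_0 = 1 - \sum_\tau C_\tau L_\tau$, so the plan is to dispatch them in order, each reducing to an observation about the basis vectors in $\Bad$.

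\emph{Proof of (1).} It suffices to check $M_0 \Gamma_f \tau = \Gamma_f M_0 \tau$ for $\tau$ a canonical basis vector and $f \in \CG_+$. Since $M_0$ acts diagonally, it is enough to show that for each $\tau \in \Bad$, $L_\tau$ commutes with $\Gamma_f$. Because the operators of $\CG$ are lower triangular, $\Gamma_f \bar\tau - \bar\tau \in \bigoplus_{\beta < |\bar\tau|}\CT_\beta$, so $L_\tau \Gamma_f \bar\tau = \scal{\Gamma_f\bar\tau,\tau}\,\one$, and this coefficient is nonzero only when $|\bar\tau| \ge |\tau|$. The key point is that every $\tau \in \Bad$ has \emph{maximal} homogeneity among basis vectors appearing in $\Gamma_f\tau$, i.e.\ the only basis vector $\bar\tau$ with $|\bar\tau| \ge |\tau|$ and $\scal{\Gamma_f\bar\tau,\tau}\neq 0$ is $\bar\tau = \tau$ itself (with coefficient $1$), and moreover $\Gamma_f\tau - \tau$ has no component on $\one$. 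Both facts follow from inspecting \eqref{e:tauBB}: the elements of $\Bad$ have the form $\CE^\ell(\cdots)$ with all inner factors $\Psi$ or $\CI'(\cdots)$, so any lower-order term produced by the action of $\Delta$ via \eqref{e:intProp2} and \eqref{e:epsProp} either still contains a factor $\Psi$ or an abstract integration (hence cannot be $\one$ or a higher-order basis vector), and applying the triangular structure of the remark before Section~\ref{sec:extended} shows $\tau$ itself is the unique maximal term. Hence $L_\tau\Gamma_f\bar\tau = \scal{\bar\tau,\tau}\one = \Gamma_f L_\tau\bar\tau$ for all $\bar\tau$, giving (1).

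\emph{Proof of (2).} Again $M_0 \CI'(\tau) = \CI'(\tau) - \sum_{\sigma\in\Bad}C_\sigma \scal{\CI'(\tau),\sigma}\one$. But no element of $\Bad$ is of the form $\CI'(\cdot)$: inspecting \eqref{e:tauBB}, every $\sigma\in\Bad$ is of the form $\CE^\ell(\cdots)$ with $\ell\ge 0$ applied to a product of two or more factors (the leading $\Psi^{2\ell}$ or $\Psi^{2\ell+1}$ times the $\CI'$-terms), so $\sigma$ is never itself a single $\CI'(\cdot)$ symbol. Hence $\scal{\CI'(\tau),\sigma} = 0$ for every $\sigma\in\Bad$, proving (2).

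\emph{Proof of (3) and (4).} Continuity is immediate because $M_0$ is a fixed linear map on the finite-dimensional space $\CT$, so $(\Pi,f)\mapsto(\Pi M_0, f)$ is linear (hence continuous) in $\Pi$. For admissibility one checks the conditions of Definition~\ref{def:admissible}: $\hat\Pi_z\one = \Pi_z M_0\one = \Pi_z\one = 1$ since $M_0\one = \one$ (no $\sigma\in\Bad$ equals $\one$, as they all have strictly negative homogeneity); \eqref{e:admissible1} holds because $M_0(X^k\tau) = X^k M_0\tau$ (elements of $\Bad$ contain no factor $X^k$, so $L_\sigma$ annihilates $X^k\tau$ unless $X^k\tau=\sigma$, impossible for $|k|>0$, and for $k=0$ it's trivial), and $\hat f = f$ is unchanged; the relations \eqref{e:admissible2}--\eqref{e:admissible4} for $\hat f, \hat\Pi$ follow from those for $f,\Pi$ together with (1) and (2): the key steps are $M_0\CI(\tau) = \CI(\tau)$ and $M_0\CI'(\tau)=\CI'(\tau)$ (the former by the same argument as (2): no $\sigma\in\Bad$ is of the form $\CI(\cdot)$), so $\hat\Pi_z\CI\tau = \Pi_z M_0\CI\tau = \Pi_z\CI\tau$ and likewise for $\CI'$, and $\hat f_z(\J_k\tau) = f_z(\J_k\tau)$ is determined by $\Pi_z$ exactly as required since $\CI\tau = \CI M_0\tau'$ whenever $\tau = M_0\tau'$ is what appears — more carefully, one checks $M_0$ commutes with the "admissible" recursion, which reduces to $M_0\CI = \CI$, $M_0\CI' = \CI'$, $M_0 X^k\cdot = X^k M_0\cdot$, all just shown. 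For (4), since $\Pi_z\tau$ are continuous functions one has from \eqref{e:propR} that $(\hat\CR H)(z) = (\hat\Pi_z H(z))(z) = (\Pi_z M_0 H(z))(z) = (\Pi_z H(z))(z) - \sum_{\sigma\in\Bad} C_\sigma \scal{H(z),\sigma}(\Pi_z\one)(z) = (\CR H)(z) - \sum_{\sigma\in\Bad}C_\sigma u_\sigma(z)$, where $u_\sigma(z) \eqdef \scal{H(z),\sigma}$ (using $\Pi_z\one = 1$); since $\scal{H(z),\sigma}$ is continuous in $z$ (as $H\in\CD^\gamma$ and $\sigma$ has fixed homogeneity) this is the claimed identity. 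One then extends from the dense set of continuous models by the continuity statement in (3) together with continuity of $\CR$.

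\emph{Main obstacle.} The only non-routine step is the structural claim underlying (1): that each $\tau\in\Bad$ is maximal in its own $\Gamma_f$-orbit with no $\one$-component appearing. This is where one must genuinely unwind the recursion \eqref{e:defDelta} on the specific expressions \eqref{e:tauBB} and use that abstract integrations $\CI'$ and the symbol $\CE$ never lower homogeneity all the way to $0$ once a noise factor $\Psi$ is present; everything else is formal bookkeeping, but this point deserves to be spelled out carefully (or cited from the analogous discussion in \cite[Sec.~8.1]{Regularity}).
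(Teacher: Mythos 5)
Your reduction of (1) is the right one and is the same as the paper's: since $L_\sigma$ kills every basis vector except $\sigma\in\Bad$, commutation of $M_0$ with $\Gamma_f$ amounts to $\scal{\Gamma_f\bar\tau,\sigma} = \scal{\bar\tau,\sigma}$ for every basis vector $\bar\tau$ and every $\sigma \in \Bad$, i.e.\ $(L_\sigma\otimes 1)\Delta\bar\tau = 0$ for $\bar\tau\neq\sigma$ and $=\one\otimes\one$ for $\bar\tau=\sigma$. The gap is in how you justify this. The half you argue — that $\sigma$ is the term of maximal homogeneity in $\Gamma_f\sigma$ and that no $\one$-component appears there — is immediate from lower triangularity and is not the issue. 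The half that actually needs work is the other direction: that $\sigma\in\Bad$ never occurs as a \emph{lower-order} first-slot term $\bar\tau^{(1)}$ of $\Delta\bar\tau$ for the (finitely many) basis vectors $\bar\tau\in\CT$ of homogeneity above $|\sigma|$; this is exactly what the paper checks (``by checking the few cases of $\bar\tau\in\CT$ with $|\bar\tau|>0$ we see that $\bar\tau^{(1)}\notin\Bad$''). Your stated criterion — that lower-order terms ``still contain a factor $\Psi$ or an abstract integration'' — cannot do this job even in principle, since every element of $\Bad$ contains such factors; and your closing ``main obstacle'' paragraph again identifies the key claim as maximality of $\sigma$ in its own expansion, which is the trivial direction. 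So the crucial structural verification for (1) is missing, even though the strategy is the paper's.

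Parts (2) and (3) are fine and essentially match the paper ((3) the paper simply quotes \cite[Prop.~2.30]{Regularity} where you verify admissibility by hand, which is acceptable since $M_0$ fixes $\one$, $X^k$, $\CI(\tau)$ and $\CI'(\tau)$ and modifies $\Pi_z\tau$ only by a constant, harmless for the bounds because $|\sigma|<0$). In (4), however, your extension from models with continuous $\Pi_z\tau$ to general admissible models rests on an unproved density claim (``dense set of continuous models''), which is not established anywhere and is not obviously true for this structure. It is also unnecessary: either restrict, as the paper does, to $H$ whose coefficients $u_\sigma$ along $\Bad$ are constant and read the identity off the definition of $\CR$, or, for general $H$, compare the distribution $\hat\CR H$ with $\CR H - \sum_{\sigma\in\Bad}C_\sigma u_\sigma(\cdot)$ directly: testing against $\phi_z^\lambda$ and using $\hat\Pi_z H(z) = \Pi_z H(z) - \sum_\sigma C_\sigma u_\sigma(z)$ together with the defining bound of the reconstruction operator and the H\"older continuity of $z\mapsto u_\sigma(z)$ shows the difference vanishes, with no density argument at all.
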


\begin{proof}
Note first that $\Delta L_\tau \tau = \one \otimes \one = (L_\tau \otimes 1)\Delta \tau$.
Furthermore, for any $\bar \tau \in \CT$, one has $\Delta \bar \tau = \bar \tau \otimes \one
+ \sum \bar \tau^{(1)}\otimes \bar \tau^{(2)}$ with $|\tau^{(1)}|<|\tau|$ and by checking the few cases
of $\bar\tau\in\CT$ with $|\bar\tau|>0$ we see that  
$\bar \tau^{(1)}\not\in\Bad$ for any$\bar\tau\in\CT$. It immediately follows that if $\bar \tau \neq \tau$, one has $(L_\tau \otimes 1)\Delta \bar \tau = 0$,
thus concluding the proof of \ref{p1}.
\ref{p2} follows from the definition of  $M_0$ since  $\CI'(\tau) \not\in \Bad$.
\ref{p3}  follows from $1$ together with \cite[Prop.~2.30]{Regularity}.
Let now $H \in \CD^\gamma$ be such that, for every $\tau \in \Bad$, the corresponding
coefficient $u_\tau$ of $H$ is constant. 
 Then, it immediately follows from
\eqref{e:defM0} and the definition of $\CR$ that one has the identity \ref{p4}.
\end{proof}

\subsection{Wick renormalisation}
\label{sec:renormOp}

We now describe maps $\Wickh\colon \CT_+ \to \CT_+$, and 
$\DeltaW\colon \CT \to \CT \otimes \CT_+$ corresponding to  \emph{Wick renormalisation} with respect to the
Gaussian structure generated by solutions to the linearised equation.
Here the extended regularity structure $\CT_\ex$ is particularly useful.  The way the maps $\Wickh$ and $\DeltaW$ are constructed is to first build them on $\CT_\ex$ and then define them on $\CT$ simply by
restriction.  The key defining properties on the renormalization group,
that $(\hat \Pi, \hat f)$ defined through \eqref{e:newModel} is in $\MM$ and that $\DeltaW \tau = \tau \otimes \one + \sum \hat\tau^{(1)}\otimes \hat\tau^{(2)}$,
with $|\hat\tau^{(1)}| > \tau$,
are inherited by descent from $\CT_\ex$, since $\CT$ is a sector of $\CT_\ex$.  Hence it suffices to construct $\Wickh\colon \CT_+ \to \CT_+$, and 
$\DeltaW\colon \CT_\ex \to \CT_\ex \otimes \CT_+$

We  first build  an associated map $\Wick\colon \CT_\ex \to \CT_\ex$ depending on a parameter $C^\W \in \R$ \label{seew}
by setting
\begin{equ}[e:defWick]
\Wick = \exp(-C^\W \WickL)\;,
\end{equ}\label{defWick}
where the generator $\WickL$  iterates over every occurrence of the
sub-expression $\Psi^2$ and sends it to $\one$.
More formally, 
\minilab{e:L0}
\begin{equ}[e:basicL0]
\WickL\Xi = \WickL \one = 0\;,\qquad \WickL \Psi^{j} = \binom{j}{2} \Psi^{j-2}\;,
\end{equ}
for every $j \ge 2$. This is extended to  $\CT_\ex$ by imposing
the \emph{Leibniz rule},
\minilab{e:L0}
\begin{equ}[e:propL0]
\WickL (\tau \CI'(\bar\tau)) = \WickL(\tau) \CI'(\bar\tau) + \tau \CI'(\WickL\bar\tau)\;,
\end{equ}
as well as the commutation relations
\minilab{e:L0}
\begin{equ}[e:commute]
\WickL \CI'(\tau) = \CI'(\WickL \tau)\;,
\quad \WickL \Eps^\ell(\tau) = \Eps^\ell(\WickL \tau)\;,
\quad \WickL (X^\ell \tau) = X^\ell(\WickL \tau)\;,
\end{equ}
for any two formal expressions $\tau$ and $\bar \tau$ with $\bar \tau \neq \Xi$.
Since all elements of $\CT_\ex$ can be obtained in this way, this
defines $\WickL$ uniquely. In particular, these definitions imply that
\begin{equ}[e:Hermite]
\Wick \Psi^m = \H_m(\Psi,C^\W)\;,
\end{equ}
where $\H_m(x,c)$ \label{defofhermite} denote the generalised Hermite polynomials given by 
$\H_2(x,c) = x^2 -c$, $\H_4(x,c) = x^4 - 6 cx^2 + 3c^2$, etc.

Denote now by $\RR_0$ the set of all linear maps 
$M \colon \CT_\ex \to \CT_\ex$ which fix $\Xi$ and $\one$ and commute with the 
abstract integration operators $\CI$, $\CI'$ and $\Eps^\ell$.
Recall then from \cite[Sec.~8]{Regularity} that if $M \in \RR_0$, then one can uniquely associate to it maps
$\DeltaM\colon \CT_\ex \to \CT_\ex\otimes \CT_+$ and $\hat M\colon \CT_+ \to \CT_+$
satisfying the properties
\begin{equs}[e:defining]
\hat M \J_k &= \CM(\J_k \otimes 1) \DeltaM\;,\\
\hat M \EE^\ell_k &= \CM(\EE^\ell_k \otimes 1) \DeltaM\;,\\
(1 \otimes \CM)(\Delta \otimes 1)\DeltaM &= (M \otimes \hat M) \Delta\;,\\
\hat M(\sigma_1\sigma_2) &= (\hat M \sigma_1)(\hat M \sigma_2)\;,\qquad \hat M X^k = X^k\;,
\end{equs}
where $\CM \colon \CT_+ \otimes \CT_+ \to \CT_+$ denotes the product in the Hopf algebra $\CT_+$.

\begin{remark}
At first sight, our regularity structure appears not to be exactly of the 
type considered in \cite[Sec.~8]{Regularity}.
However, it follows from \eqref{e:intProp} that $\Eps^\ell$ is nothing but an abstract integration
map of order $\ell$ on $\CT_\ex$. It is then straightforward to verify that the results 
of that section still apply \textit{mutatis mutandis}
to the present situation.
\end{remark}

We then define the renormalisation group $\RR$ for $\CT_\ex$ as follows:

\begin{definition}\label{def:renorm}
A linear map $M \in \RR_0$ belongs to $\RR$ if the associated map $\DeltaM$
is such that $\DeltaM \tau = \tau \otimes \one + \sum \tau_M^{(1)}\otimes \tau_M^{(2)}$,
for some elements $\tau_M^{(i)}$ satisfying $|\tau_M^{(1)}| > \tau$.
\end{definition}

\begin{remark}
The definition of $\RR$ given here does not appear to match the definition given in
\cite[Def.~8.41]{Regularity}, where we also imposed a similar condition on a second operator $\hDeltaM$
built from $M$. It turns out however that Definition~\ref{def:renorm}
actually implies that second condition, as we show in the appendix.
\end{remark}

With these definitions at hand, given $M \in \RR$, we can use it to 
build a map $(\Pi, f) \mapsto (\Pi^M, f^M)$ mapping admissible models to admissible models by setting
\begin{equ}
\Pi_z^M = (\Pi_z \otimes f_z) \DeltaM \;, \qquad f_z^M = f_z \circ \hat M\;,
\end{equ}
see \cite[Thm~8.44]{Regularity}.
It is furthermore straightforward to verify that if an admissible model 
$(\Pi,f)$ consists of smooth functions satisfying the
identity \eqref{e:canonical2} then, as a consequence of the second identity in \eqref{e:defining},
the renormalised model $(\Pi^M,f^M)$ is also guaranteed to satisfy this identity.
The remainder of this section is devoted to the proof that the map $\Wick$ given
in \eqref{e:defWick} does indeed belong to $\RR$.
In order to do this, we first make a few general considerations.
Given a linear map $M \colon \CT_\ex \to \CT_\ex$ in $\RR_0$,
we first show the following result.

\begin{proposition}\label{prop:renormInt}
Let $M \in \RR_0$ and let $\DeltaM$ and $\hat M$ be the unique maps satisfying \eqref{e:defining}. 
Let $\tau$ be a canonical basis element of $\CT_\ex$, and let $\DeltaM \tau = \tau^{(1)}_M \otimes \tau^{(2)}_M$
(with summation implicit) be such that $|\tau^{(1)}_M| \ge |\tau|$. Then, one has
\begin{equs}
\DeltaM \Eps^\ell(\tau) &= (\Eps^\ell \otimes 1)\DeltaM \tau - \sum_{|k| > |\tau| + \ell} {X^k \over k!} \otimes \EE^\ell_k(\tau_M^{(1)})\tau_M^{(2)}\;, \\
\DeltaM \CI'(\tau) &= (\CI' \otimes 1)\DeltaM \tau - \sum_{|k| > |\tau| + 1} {X^k \over k!} \otimes \J_{k+1}(\tau_M^{(1)})\tau_M^{(2)}\;, 
\end{equs}
and similarly for $\DeltaM \CI(\tau)$.
\end{proposition}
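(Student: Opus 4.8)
The plan is to derive the two identities directly from the defining relations \eqref{e:defining} together with the explicit form of the action of $\Delta$ on $\Eps^\ell(\tau)$ and $\CI'(\tau)$ given by \eqref{e:epsProp}, \eqref{e:intProp2} (and \eqref{e:intProp} for the $\CI$ case). Since $\Eps^\ell$ is an abstract integration operator of order $\ell$ on $\CT_\ex$ by the remark following \eqref{e:defining}, it suffices to treat $\Eps^\ell$; the $\CI'$ and $\CI$ cases are word-for-word the same with $\EE^\ell_k$ replaced by $\J_{k+1}$, resp. $\J_k$, and $\ell$ replaced by $1$, resp. $0$. So I would carry out the computation for $\DeltaM\Eps^\ell(\tau)$ and then simply remark that the other cases follow \emph{mutatis mutandis}.

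First I would recall that $\DeltaM$ is characterized by the third identity in \eqref{e:defining}, namely $(1\otimes\CM)(\Delta\otimes 1)\DeltaM = (M\otimes \hat M)\Delta$, together with the first two identities which express $\hat M$ on $\J_k$ and $\EE^\ell_k$ in terms of $\DeltaM$. The strategy, following \cite[Sec.~8]{Regularity} (in particular the proof there that $\DeltaM$ is well-defined and the recursions it satisfies), is to make an inductive ansatz: assume $\DeltaM\tau = \tau^{(1)}_M\otimes\tau^{(2)}_M$ is already known with $|\tau^{(1)}_M|\ge|\tau|$, and that $\DeltaM$ and $\hat M$ commute with $M$ in the appropriate sense on all subexpressions. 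Then I would write the candidate expression
\begin{equ}
\DeltaM \Eps^\ell(\tau) = (\Eps^\ell \otimes 1)\DeltaM \tau - \sum_{|k| > |\tau| + \ell} {X^k \over k!} \otimes \EE^\ell_k(\tau_M^{(1)})\tau_M^{(2)}
\end{equ}
and verify that it satisfies the characterizing identity. Applying $(1\otimes\CM)(\Delta\otimes 1)$ to the right-hand side and using \eqref{e:epsProp} for $\Delta\Eps^\ell$ on the left-hand side $(M\otimes\hat M)\Delta\Eps^\ell(\tau)$, the main point is a bookkeeping check: the term $(\Eps^\ell\otimes 1)\DeltaM\tau$ produces, after applying $\Delta\otimes 1$ and using \eqref{e:epsProp} again together with the induction hypothesis, exactly $(M\Eps^\ell\otimes \hat M)$-type terms plus spurious $\sum \frac{X^\ell}{\ell!}\otimes\frac{X^m}{m!}\EE^\ell_{\ell+m}(\cdots)$ terms; these spurious terms are then precisely cancelled by the second (correction) term, by the definition $\hat M\EE^\ell_k = \CM(\EE^\ell_k\otimes 1)\DeltaM$. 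The restriction $|k|>|\tau|+\ell$ in the sum is exactly what is needed so that each $\EE^\ell_k(\tau_M^{(1)})$ is a genuine (non-vanishing by the homogeneity conventions after \eqref{e:defDelta}, and non-polynomial-valued) generator in $\CW_+$, and also guarantees $|\tau^{(1)}_M|\ge|\tau|$ is preserved under $\Eps^\ell$, i.e. that the output still has the triangular structure needed to stay inside the inductive scheme.

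The main obstacle I expect is the careful tracking of which multi-indices appear and the verification that the homogeneity inequality $|\tau^{(1)}_M|\ge|\tau|$ (needed both to close the induction and to ensure the correction sums are finite) is genuinely propagated through $\Eps^\ell$ — this is where the condition $|k| > |\tau| + \ell$ does real work, since $\EE^\ell_k$ lowers homogeneity and one must check the trade-off with the $X^k/k!$ factor goes the right way. A second, more mechanical obstacle is simply that $\Eps^\ell$, unlike $\CI$, does not annihilate polynomials, so one cannot blindly quote the computation from \cite{Regularity}; I would point out explicitly (as is done in the remark after \eqref{e:defining} and in Section~\ref{sec:structGroup}) that this property is never used in the relevant part of \cite[Sec.~8]{Regularity}, so the derivation goes through unchanged. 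Once $\Eps^\ell$ is settled, the $\CI'$ case is the special structure of \eqref{e:intProp2} with the shift $k\mapsto k+1=k+(0,1)$ accounting for the extra spatial derivative, and the $\CI$ case uses \eqref{e:intProp} directly; both follow by repeating the argument verbatim, which I would state rather than re-derive.
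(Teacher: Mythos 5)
Your proposal is correct and takes essentially the same route as the paper: both arguments compare $(M\otimes \hat M)\Delta$ applied to the symbol with $D=(1\otimes\CM)(\Delta\otimes 1)$ applied to the claimed expression, using the coproduct formulas, the commutation of $M$ with the integration-type operators, the relation $\hat M \EE^\ell_k = \CM(\EE^\ell_k\otimes 1)\DeltaM$ (resp.\ its $\J$ analogue), and the hypothesis $|\tau^{(1)}_M|\ge|\tau|$ to identify the surplus range $|k|>|\tau|+\ell$, treating one operator in detail and noting the others are algebraically identical. The only step you should make explicit is that $D$ is invertible (it differs from the identity by a nilpotent map), which is what allows you to pass from ``the candidate satisfies the characterizing identity'' to ``the candidate equals $\DeltaM\Eps^\ell(\tau)$''; this is exactly how the paper concludes.
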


\begin{proof}
We use the shorthand $D = (1 \otimes \CM)(\Delta \otimes 1)$.
We only give a proof for $\CI’(\tau)$. The proofs for $\CI'(\tau)$ and $\Eps^\ell(\tau)$ 
are identical since these
operators have exactly the same algebraic properties.
Combining \eqref{e:intProp} with the first identity in \eqref{e:defining} and the fact 
that $\CI’$ and $M$ commute by assumption, we obtain the identity
\begin{equs}
(M \otimes \hat M)\Delta \CI’(\tau) &= (\CI’ M\otimes \hat M)\Delta \tau + \sum_{|k+\ell| < |\tau|+1}
{X^k \over k!} \otimes {X^\ell\over \ell!} \hat M \J_{k+\ell+1}(\tau)\\
&=(\CI’ \otimes 1)D\DeltaM \tau + \sum_{|k+\ell| < |\tau|+1}
{X^k \over k!} \otimes {X^\ell\over \ell!} \J_{k+\ell+1}(\tau^{(1)}_M)\tau^{(2)}_M\;.
\end{equs}
On the other hand, using again \eqref{e:intProp}, we also have the identity
\begin{equs}
D(\CI’ \otimes 1)\DeltaM \tau  &= 
(\CI’\otimes 1)D\DeltaM \tau
+ \sum_{k,\ell} {X^k \over k!}\otimes {X^\ell \over \ell!} \CM(\J_{k+\ell+1} \otimes 1)\DeltaM \tau \\
& = (\CI’\otimes 1)D\DeltaM \tau
+ \sum_{|k+\ell| < |\tau_M^{(1)}| + 1} {X^k \over k!}\otimes {X^\ell \over \ell!} \J_{k+\ell+1}(\tau^{(1)}_M)\tau^{(2)}_M\;,
\end{equs}
so that, since $|\tau^{(1)}_M| \ge |\tau|$ by assumption, one has
\begin{equ}[e:DIDeltaM]
D(\CI’ \otimes 1)\DeltaM \tau  = (M \otimes \hat M)\Delta \CI’(\tau) 
+ \sum_{|k+\ell| > |\tau| + 1} {X^k \over k!}\otimes {X^\ell \over \ell!} \J_{k+\ell+1}(\tau^{(1)}_M)\tau^{(2)}_M\;.
\end{equ}
At this stage we note that, if $\{\tau_k\}$ is any collection of elements of $\CT_\ex$ indexed
by the multiindex $k$, then it follows from the action of $\Delta$ on $X^m$ that one has the identity
\begin{equ}
D \Bigl({X^m \over m!} \otimes \tau_m\Bigr)
= \sum_{k + \ell = m} {X^k \over k!} \otimes {X^\ell \over \ell!}\tau_m\;.
\end{equ}
Combining this with \eqref{e:DIDeltaM}, we conclude that 
\begin{equ}
D(\CI’ \otimes 1)\DeltaM \tau  = (M \otimes \hat M)\Delta \CI’(\tau) + D\sum_{|k| > |\tau| + 1} {X^k \over k!}\otimes \J_{k+1}(\tau^{(1)}_M)\tau^{(2)}_M\;.
\end{equ}
Since furthermore $(M \otimes \hat M)\Delta \CI’(\tau) = D\DeltaM \CI’ \tau$ by the definition
\eqref{e:defining} of
$\DeltaM$ and since the linear map $D$ is invertible (it differs from the identity by a nilpotent
operator), the claim follows at once.
\end{proof}

\begin{proposition}\label{prop:prodRule}
Let $M \in \RR_0$, let $k \ge 0$ and let $V_0,\ldots, V_k$ be sectors of $\CT_\ex$ 
such that, if $\tau_i \in V_i$, then $\tau_0 \cdots \tau_k \in \CT_\ex$ and
$M(\tau_0 \cdots \tau_k) = (M\tau_0)\cdots (M\tau_k)$.
Then, one also has $\DeltaM (\tau_0 \cdots \tau_k) = (\DeltaM \tau_0)\cdots (\DeltaM \tau_k)$.
\end{proposition}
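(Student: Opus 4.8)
The plan is to reduce the claim, via the defining relations \eqref{e:defining} and the invertibility of a certain operator, to the multiplicativity of $\hat M$ together with the hypothesis on $M$. Write $D = (1\otimes\CM)(\Delta\otimes 1)$, so that the third identity in \eqref{e:defining} reads $D\,\DeltaM = (M\otimes\hat M)\Delta$. Since $D$ differs from the identity by a nilpotent operator (this was already used in the proof of Proposition~\ref{prop:renormInt}), it is invertible on each truncation of $\CT_\ex\otimes\CT_+$, so $\DeltaM = D^{-1}(M\otimes\hat M)\Delta$. It therefore suffices to show that
\begin{equ}
(M\otimes\hat M)\Delta(\tau_0\cdots\tau_k) = \bigl((M\otimes\hat M)\Delta\tau_0\bigr)\cdots\bigl((M\otimes\hat M)\Delta\tau_k\bigr)\;,
\end{equ}
together with the fact that $D^{-1}$ is multiplicative on the elements appearing on the right: applying $D^{-1}$ to both sides then yields exactly $\DeltaM(\tau_0\cdots\tau_k) = (\DeltaM\tau_0)\cdots(\DeltaM\tau_k)$, using that $D^{-1}\bigl((M\otimes\hat M)\Delta\tau_i\bigr) = \DeltaM\tau_i$ for each $i$.

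For the displayed identity, I would first use the multiplicativity \eqref{e:multProp} of $\Delta$ to write $\Delta(\tau_0\cdots\tau_k) = (\Delta\tau_0)\cdots(\Delta\tau_k)$. Since each $V_i$ is a sector, one has $\Delta\tau_i\in V_i\otimes\CT_+$ for $\tau_i\in V_i$ (a property of sectors, see \cite{Regularity}); hence $(\Delta\tau_0)\cdots(\Delta\tau_k)$ is a linear combination of elementary tensors $(\sigma_0\cdots\sigma_k)\otimes(\eta_0\cdots\eta_k)$ with $\sigma_i\in V_i$ and $\eta_i\in\CT_+$, and in particular $\sigma_0\cdots\sigma_k\in\CT_\ex$ by hypothesis applied to these elements. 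Applying $M\otimes\hat M$ to such a tensor, the hypothesis $M(\sigma_0\cdots\sigma_k) = (M\sigma_0)\cdots(M\sigma_k)$ and the multiplicativity of $\hat M$ from the last line of \eqref{e:defining} give $(M\sigma_0\otimes\hat M\eta_0)\cdots(M\sigma_k\otimes\hat M\eta_k)$; summing over the elementary tensors and regrouping the product produces the right-hand side of the display.

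The multiplicativity of $D$ (hence of $D^{-1}$ wherever the relevant products are defined) is a direct computation: from \eqref{e:multProp} and the fact that $\CM$ is an algebra morphism because $\CT_+$ is commutative, one has $D(\tau\otimes\sigma) = \sum\tau^{(1)}\otimes(\tau^{(2)}\sigma)$, and a short check shows that $D\bigl((\tau\bar\tau)\otimes(\sigma\bar\sigma)\bigr)$ and $D(\tau\otimes\sigma)\,D(\bar\tau\otimes\bar\sigma)$ agree after commuting factors in $\CT_+$; invertibility then transfers multiplicativity to $D^{-1}$. The one point requiring genuine care — rather than a serious obstacle — is that $\CT_\ex$ is not closed under multiplication, so throughout one must keep track that every product occurring is well-defined; this is exactly what the standing hypothesis ``$\tau_0\cdots\tau_k\in\CT_\ex$ for $\tau_i\in V_i$'' guarantees, once it is also applied to the first-factor components $\sigma_i\in V_i$ of the $\Delta\tau_i$. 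If one prefers to avoid doing this bookkeeping all at once, one can instead induct on $k$, at each step replacing $\tau_0\cdots\tau_{k-1}$ by a single element of the sector spanned by the products $V_0\cdots V_{k-1}$ (which is again stable under $\Delta$ by the sector property of each $V_i$) and reducing to the case $k=1$.
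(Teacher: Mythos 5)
Your proposal is correct and follows essentially the same route as the paper's proof: both reduce the claim to the identity $(M\otimes\hat M)\Delta(\tau_0\cdots\tau_k)=\prod_i (M\otimes\hat M)\Delta\tau_i$ (using multiplicativity of $\Delta$, of $\hat M$, and the hypothesis on $M$), and then apply the inverse of the multiplicative morphism $D=(1\otimes\CM)(\Delta\otimes 1)$, for which $\DeltaM=D^{-1}(M\otimes\hat M)\Delta$. The extra bookkeeping you carry out (the sector property ensuring $\Delta\tau_i\in V_i\otimes\CT_+$, so the hypothesis on $M$ applies to the first-factor components) is implicit in the paper's one-line appeal to the assumption, and is a welcome clarification rather than a deviation.
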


\begin{proof}
Let $\tau_i \in V_i$ as in the statement and set $\tau = \tau_0 \cdots \tau_k$.
Since $\hat M$ is a multiplicative morphism, 
it follows from our assumption that
\begin{equ}[e:multM]
(M \otimes \hat M)\Delta \tau = \prod_{i=0}^k (M \otimes \hat M)\Delta\tau_i\;.
\end{equ}
Since $\DeltaM \tau = D^{-1}(M \otimes \hat M)\Delta \tau$ (with $D$ as above) and since 
$D$ is a multiplicative morphism,
the claim follows at once by applying $D^{-1}$ to both sides of \eqref{e:multM}.
\end{proof}

We then have

\begin{proposition}\label{prop:renormProd}
Let $\Wick$ be as above,
 let $\DeltaW$ and $\Wickh$ be the corresponding maps satisfying \eqref{e:defining},
and let $\tau \in \CT_\ex$ be a canonical
basis vector of the form
\begin{equ}[e:decompTau]
\tau = \Psi^m \prod_{i=1}^k \CI'(\tau_i)\;,
\end{equ}
where $k,m \ge 0$, 
and the $\tau_i$ are canonical basis vectors with $\tau_i \neq\Xi$.
Then, one has
\begin{equ}[e:defDeltaW]
\DeltaW\tau = (\Wick\Psi^m \otimes \one) \prod_{i=1}^k \DeltaW \CI'(\tau_i)\;.
\end{equ}
\end{proposition}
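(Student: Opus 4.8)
The plan is to transport the multiplicativity of $\Wick$ along the factorisation $\tau = \Psi^m\cdot\CI'(\tau_1)\cdots\CI'(\tau_k)$ to $\DeltaW$, and then to compute the single remaining factor $\DeltaW\Psi^m$ by hand. First I would record that $\Wick = \exp(-C^\W\WickL)$ belongs to $\RR_0$: by \eqref{e:basicL0} and \eqref{e:commute} the generator $\WickL$ annihilates $\Xi$ and $\one$ and commutes with $\CI$, $\CI'$ and $\Eps^\ell$, and these properties pass to $\Wick$. Hence the maps $\DeltaW$, $\Wickh$ satisfying \eqref{e:defining} exist, and, with $D = (1\otimes\CM)(\Delta\otimes 1)$ (invertible, differing from the identity by a nilpotent operator), one has $\DeltaW = D^{-1}(\Wick\otimes\Wickh)\Delta$, exactly as in the proof of Proposition~\ref{prop:renormInt}.

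The first substantive step is the identity
\begin{equ}
\WickL\Bigl(\Psi^m\prod_{i=1}^k\CI'(\tau_i)\Bigr) = \bigl(\WickL\Psi^m\bigr)\prod_{i=1}^k\CI'(\tau_i) + \sum_{j=1}^k \Psi^m\,\CI'(\WickL\tau_j)\prod_{i\neq j}\CI'(\tau_i)\;,
\end{equ}
which I would prove by induction on $k$, peeling off the factor $\CI'(\tau_{k+1})$ using the Leibniz rule \eqref{e:propL0} — legitimate precisely because $\tau_{k+1}\neq\Xi$ — together with $\WickL\CI'(\tau_{k+1}) = \CI'(\WickL\tau_{k+1})$. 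The essential point is that $\Psi^m$ must be kept as a single indecomposable block: \eqref{e:propL0} is \emph{not} available with $\bar\tau = \Xi$, so one cannot split $\Psi^m = \Psi\cdot\Psi^{m-1}$, in accordance with $\WickL\Psi^m = \binom m2\Psi^{m-2}\neq 0$. Since $\WickL$ therefore acts as a derivation for this factorisation, its exponential $\Wick$ is multiplicative for it, and together with $\Wick\CI'(\tau_i) = \CI'(\Wick\tau_i)$ this gives $\Wick\tau = (\Wick\Psi^m)\prod_{i=1}^k\Wick\CI'(\tau_i)$.

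Next, exactly as in the proof of Proposition~\ref{prop:prodRule}, the multiplicativity of $\Wick$ for this factorisation (together with that of $\Delta$, $\Wickh$ and $D^{-1}$ and the relation $\DeltaW = D^{-1}(\Wick\otimes\Wickh)\Delta$) upgrades to
\begin{equ}
\DeltaW\tau = \DeltaW(\Psi^m)\prod_{i=1}^k\DeltaW(\CI'(\tau_i))\;.
\end{equ}
It then remains to identify $\DeltaW\Psi^m = \Wick\Psi^m\otimes\one$. Here I would use that $\Delta\Psi = \Psi\otimes\one$, since the correction terms in \eqref{e:intProp2} all carry a factor $\J_j(\Xi)$ with $|j|\geq 1 > |\Xi|+2$ and hence vanish; by multiplicativity of $\Delta$ this forces $\Delta\sigma = \sigma\otimes\one$ for every polynomial $\sigma$ in $\Psi$, in particular for $\sigma = \Wick\Psi^m = \H_m(\Psi,C^\W)$. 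Consequently $D(\Wick\Psi^m\otimes\one) = \Wick\Psi^m\otimes\one$, whereas $(\Wick\otimes\Wickh)\Delta\Psi^m = \Wick\Psi^m\otimes\Wickh\one = \Wick\Psi^m\otimes\one$, so that $\DeltaW\Psi^m = D^{-1}(\Wick\Psi^m\otimes\one) = \Wick\Psi^m\otimes\one$. Substituting into the previous display yields \eqref{e:defDeltaW}.

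The main obstacle is the bookkeeping in the first step: one must verify that $\WickL$, and hence $\Wick$, respects the factorisation into the block $\Psi^m$ and the blocks $\CI'(\tau_i)$ while \emph{not} being a derivation for any finer factorisation. This is exactly where the hypothesis $\tau_i\neq\Xi$ enters, and it is also the reason why the $\Psi$ factors cannot be peeled off $\DeltaW\tau$ the way the $\CI'(\tau_i)$ are.
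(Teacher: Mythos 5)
Your proposal is correct and follows essentially the same route as the paper: the Leibniz structure of $\WickL$ with respect to the factorisation into the block $\Psi^m$ and the blocks $\CI'(\tau_i)$ yields multiplicativity of $\Wick$, the mechanism of Proposition~\ref{prop:prodRule} transfers this to $\DeltaW$, and $\DeltaW\Psi^m = \Wick\Psi^m\otimes\one$ is verified by hand (a computation the paper merely asserts is easy). The one point the paper spells out that you leave implicit when invoking the Proposition~\ref{prop:prodRule} argument is that none of the first legs $\tau_i^{(1)}$ of $\Delta\tau_i$ can equal $\Xi$ (together with the $\CG$-stability of the span of the $\Psi^m$, which your computation $\Delta\Psi^m=\Psi^m\otimes\one$ already gives), so that $\Wick$ is also multiplicative on the products produced by $\Delta$; your Leibniz lemma covers this once that observation is recorded.
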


\begin{proof}
We first note the following very important fact. By the construction of $\CT_\ex$, if
$\CI'(\tau) \in \CT_\ex$ with $\tau \neq \Xi$, then $\tau$ cannot contain any factor $\Xi$
by the construction of $\CT_\ex$.
Therefore, by construction, $\WickL \tau_i$ does not contain any summand
proportional to $\Xi$ either. 
As a consequence of the
``Leibnitz rule'' satisfied by the $L_j$, this then shows that, for every $p \ge 0$,
\begin{equ}
(\WickL)^p \tau = \sum_{p_0+\ldots+p_k = p} \bigl((\WickL)^{p_0}\Psi^m\bigr) \prod_{i=1}^k \CI'((\WickL)^{p_i}\tau_i)\;,
\end{equ}
which in particular implies that 
\begin{equ}[e:actionM]
\Wick\tau = \bigl(\Wick\tau_{\ell,m,n}\bigr) \prod_{i=1}^k \CI'(\Wick\tau_i)\;.
\end{equ}

Similarly, one verifies that
if one writes $\Delta \tau_i = \tau_i^{(1)} \otimes \tau_i^{(2)}$ (with an implicit
summation over such terms), then none of the terms $\tau_i^{(1)}$ can be equal to $\Xi$.
Applying the definition of $\Delta$, one also verifies that the linear span of the vectors
$\Psi^m$ is stable under the action of the structure group $\CG$. Combining 
these observations, we see that Proposition~\ref{prop:prodRule} applies, so that
\begin{equ}
\DeltaW\tau = \bigl(\DeltaW\Psi^m\bigr) \prod_{i=1}^k \DeltaW\CI'(\tau_i)\;.
\end{equ}
The fact that $\DeltaW\Psi^m = (\Wick\Psi^m \otimes \one)$ can easily be verified ``by hand''
from \eqref{e:defining}.
\end{proof}

As a corollary of these two results, it is now easy to show that $\Wick \in \RR$. 

\begin{corollary}
One has $\Wick \in \RR$.
\end{corollary}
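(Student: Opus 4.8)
The goal is to show $\Wick \in \RR$, i.e.\ that $\DeltaW$ has the triangularity property of Definition~\ref{def:renorm}: $\DeltaW \tau = \tau \otimes \one + \sum \tau_M^{(1)} \otimes \tau_M^{(2)}$ with $|\tau_M^{(1)}| > |\tau|$. Since $\CT$ is a sector of $\CT_\ex$ and the maps descend by restriction (as discussed in Section~\ref{sec:renormOp}), it suffices to prove this on $\CT_\ex$. The plan is to proceed by induction on the structure of canonical basis vectors $\tau \in \CT_\ex$, using the fact that every element of $\CT_\ex$ is built from $\{\one, X_0, X_1, \Xi\}$ by the operators $\CI$, $\CI'$, $\Eps^\ell$ and multiplication, together with the recursive identities established in Propositions~\ref{prop:renormInt}, \ref{prop:prodRule} and \ref{prop:renormProd}.

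First I would handle the base cases: $\DeltaW \one = \one \otimes \one$, $\DeltaW \Xi = \Xi \otimes \one$, and $\DeltaW X^k = \sum_{\ell} \binom{k}{\ell} X^{\ell} \otimes X^{k-\ell}$ — these all satisfy the property trivially (for $X^k$ note $|X^\ell| \le |X^k|$ with the leading term $X^k \otimes \one$, and one checks the inequality is compatible, or rather these are handled exactly as in \cite{Regularity}). For the inductive step, a general canonical basis vector of $\CT_\ex$ is of the form $\tau = X^j \Psi^m \prod_{i=1}^k \CI'(\tau_i)$ up to also allowing outer $\Eps^\ell$ and $\CI$ factors; using Proposition~\ref{prop:renormProd} (and the analogous decomposition for the $\Eps$ and $\CI$ cases via Proposition~\ref{prop:renormInt}) we reduce to two things: controlling $\DeltaW$ on $\Psi^m$ and on each $\CI'(\tau_i)$. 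For $\Psi^m$, the identity $\DeltaW \Psi^m = \Wick \Psi^m \otimes \one = \H_m(\Psi, C^\W)\otimes \one$ from \eqref{e:Hermite} shows $\DeltaW\Psi^m$ is supported on $\Psi^j \otimes \one$ with $j \le m$; the leading term is $\Psi^m \otimes \one$ and all other terms $\Psi^{m-2p}\otimes \one$ have strictly larger homogeneity (since $|\Psi| < 0$, lowering the power of $\Psi$ raises homogeneity), so the strict inequality $|\tau_M^{(1)}| > |\tau|$ holds for the non-leading terms. For $\CI'(\tau_i)$, apply the inductive hypothesis to $\tau_i$ and then Proposition~\ref{prop:renormInt}: $\DeltaW \CI'(\tau_i) = (\CI' \otimes 1)\DeltaW \tau_i - \sum_{|p| > |\tau_i|+1} \frac{X^p}{p!} \otimes \J_{p+1}((\tau_i)_M^{(1)})(\tau_i)_M^{(2)}$. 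The first group of terms has leading part $\CI'(\tau_i)\otimes\one$ and lower-order parts with left factor $\CI'((\tau_i)_M^{(1)})$ of homogeneity $|(\tau_i)_M^{(1)}| + 1 > |\tau_i| + 1 = |\CI'(\tau_i)|$; the correction terms have left factor $X^p$ with $|p| > |\tau_i| + 1 = |\CI'(\tau_i)|$, again strictly larger homogeneity. The same computation applies verbatim to $\CI$ and to $\Eps^\ell$ (using the $\EE^\ell_k$ correction terms in Proposition~\ref{prop:renormInt} and the homogeneity shift $|\Eps^\ell \tau| = |\tau| + \ell$). Finally, the multiplicative property (Proposition~\ref{prop:prodRule}, applicable because $\Wick$ acts multiplicatively on the relevant sectors by \eqref{e:actionM}) gives that if each factor satisfies the triangularity property, so does their product: the leading term of the product is the product of leading terms, $\tau \otimes \one$, and every other term has at least one factor replaced by a strictly-higher-homogeneity piece, hence has strictly larger total homogeneity in the left slot.

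The one point requiring genuine care — and what I expect to be the main obstacle — is the bookkeeping ensuring that the strict inequality is preserved under products and is never degraded: one must check that replacing a single factor $\sigma_i$ by a non-leading term $(\sigma_i)_M^{(1)}$ with $|(\sigma_i)_M^{(1)}| > |\sigma_i|$, while keeping the other factors at their leading value $\sigma_j \otimes \one$, produces a term whose left homogeneity $\sum_j |\sigma_j| - |\sigma_i| + |(\sigma_i)_M^{(1)}|$ strictly exceeds $|\tau| = \sum_j |\sigma_j|$, which is immediate, and that simultaneous replacement of several factors only increases the gap further. The subtlety is purely that one must know each $\DeltaW \sigma_i$ genuinely has its unique leading term equal to $\sigma_i \otimes \one$ with everything else strictly above — this is exactly the inductive hypothesis, so the induction closes. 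I would also remark (referencing the Remark after Definition~\ref{def:renorm}) that the apparently missing second condition on $\hDeltaM$ from \cite[Def.~8.41]{Regularity} is automatic given Definition~\ref{def:renorm}, as shown in the appendix, so no extra work is needed here. This completes the proof that $\Wick \in \RR$.
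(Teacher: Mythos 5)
Your proof is correct and takes essentially the same route as the paper's: a structural induction over the operations $\CI$, $\CI'$, $\Eps^\ell$, multiplication by $X^\ell$, and $(\tau_1,\ldots,\tau_k)\mapsto \Psi^m\prod_i\CI'(\tau_i)$, starting from $\DeltaW\Xi = \Xi\otimes\one$ and invoking Propositions~\ref{prop:renormInt} and \ref{prop:renormProd} (the paper just states this in three lines, whereas you spell out the homogeneity bookkeeping). The only slip is your base-case formula $\DeltaW X^k=\sum_\ell\binom{k}{\ell}X^\ell\otimes X^{k-\ell}$: since $\Wick$ and $\Wickh$ fix polynomials one actually has $\DeltaW X^k = X^k\otimes\one$, which is harmless here and only makes the base case trivial, as you anticipated.
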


\begin{proof}
As a consequence of the construction of $\CT_\ex$ given in Section~\ref{sec:structure}, 
we see that every one of its basis elements can be built from
$\Xi$ by making use of the operations 
$\tau \mapsto \CI(\tau)$, $\tau \mapsto \CI'(\tau)$, $\tau \mapsto \Eps^\ell(\tau)$,
$\tau \mapsto X^\ell \tau$,
as well as $(\tau_1,\ldots,\tau_k) \mapsto \Psi^m \prod_{i=1}^k \CI'(\tau_i)$
with $\tau_i \neq\Xi$.
Since $\DeltaW \Xi = \Xi \otimes \one$ and since the upper triangular structure of $\DeltaW$
is preserved under all of these operations by Propositions~\ref{prop:renormInt} and 
\ref{prop:renormProd}, the claim follows.
\end{proof}

\subsection{Renormalised equations}

Let now $(\Pi,f) = \PPP\LL_\eps (\zeta)$, where $\zeta$ is a continuous function 
and the canonical lift $\LL_\eps$ is as in Section~\ref{sec:canonical}. 
We furthermore consider the renormalised model $(\hat \Pi, \hat f)$ given
by \eqref{e:newModel} with $M_0$ and $\Wick$ as in \eqref{e:defM0} and \eqref{e:defWick}.
In particular, $\Wick$ depends on the renormalisation constant $C^\W$ while
$M_0$ depends on  a collection of renormalisation constants $C_\tau$.

The aim of this section is to 
show that if $H$ solves the abstract fixed point problem \eref{e:abstrFP} for the model $(\hat \Pi, \hat f)$, then $h = \hat \CR H$, where $\hat \CR$ the reconstruction operator associated to the 
renormalised model, can be identified with the solution to a modified PDE.
In order to derive this new equation, we combine the explicit abstract form of the solutions
with the product formula given by Proposition~\ref{prop:renormProd}.
The result is the following, where $\CC$ denotes the space of continuous functions on
$\R \times S^1$:

\begin{proposition}\label{prop:solutionRenormalised}
Let $h_0 \in \CC^1$ and, given $\eps \in \R$ and $\zeta \in \CC^0$, 
let $H\in\CD^{\gamma,\eta}_\eps\subset \CD^{\gamma,\eta}$
be the local solution to \eref{e:abstrFP} given by Theorem~\ref{theo:FP} for 
the renormalised model $(\hat \Pi, \hat f)$ obtained from $\PPP\LL_\eps (\zeta)$ in the way 
described above.
Then, there exists a constant $c$ such that the 
function $h = \hat \CR H$ is the classical (local) solution to the PDE
\begin{equ}
\d_t h = \d_x^2 h + \sum_{j=1}^m \eps^{j-1} \hat a_j {\H}_{2j}\bigl(\d_x h, C^\W) + c + \zeta\;,
\end{equ}
with initial condition $h_0$.
\end{proposition}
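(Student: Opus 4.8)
The strategy is to apply the reconstruction operator $\hat\CR$ to both sides of the abstract fixed point equation \eqref{e:abstrFP} satisfied by $H$, exactly as in the proof of Proposition~\ref{prop:solutionCanonical}, but now keeping careful track of the effect of the renormalisation maps $M_0$ and $\Wick$ on the right hand side. Since $(\hat\Pi,\hat f)$ is still an admissible model (by Proposition~\ref{prop:commuteMGamma}\ref{p3} together with \cite[Thm~8.44]{Regularity}), the identities $\hat\CR\CP\one_+ = P*\one_+\hat\CR$ and $\hat\CR\CQ_{\le 0}H = \hat\CR H$ are available, so applying $\hat\CR$ gives
\begin{equ}
h = P*\one_+\Bigl(\zeta + \sum_{j=1}^m \hat a_j\,\hat\CR\bigl(\hat\Eps^{j-1}(\CQ_{\le 0}(\DD H)^{2j})\bigr)\Bigr) + Ph_0\;,
\end{equ}
and the whole content of the proposition reduces to computing the distribution $\hat\CR\bigl(\hat\Eps^{j-1}(\CQ_{\le 0}(\DD H)^{2j})\bigr)$ in terms of $h$ and its spatial derivative.

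\textbf{Key steps.} First I would recall from \eqref{e:DU} that $\DD H = \Psi + \CI'(\hat\CF) + h'\cdot\one + \text{(higher order)}$, so that $(\DD H)^{2j}$ is, up to terms of strictly positive homogeneity that are killed by $\CQ_{\le 0}$ and do not affect the reconstruction at a point, a polynomial in $\Psi$ and $h'$ built out of the powers $\Psi^{2j-n}$ with real coefficients $\binom{2j}{n}(h')^n$. Second, I would use the fact that, because the underlying model before renormalisation is a \emph{canonical} lift of a continuous function, the reconstruction is given pointwise by $(\hat\CR G)(z) = (\hat\Pi_z G(z))(z)$, and combine this with the product formula of Proposition~\ref{prop:renormProd}: the map $\DeltaW$ (hence $\Wick$) acts on $\Psi^m\prod\CI'(\tau_i)$ by sending the $\Psi^m$ factor to the Hermite polynomial $\H_m(\Psi,C^\W)$ and leaving the $\CI'$ factors essentially untouched (up to the upper-triangular corrections which, being of higher homogeneity, do not contribute at the level of the reconstructed function). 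Evaluating $\hat\Pi_z$ on this at $z$, the $\Psi$ contributions produce $\H_{2j-n}(h'(z),C^\W)$ in place of $(h'(z))^{2j-n}$ — well, more precisely the full $(\DD H)^{2j}$ after Wick renormalisation reconstructs to $\H_{2j}(\d_x h, C^\W)$ by the binomial identity for Hermite polynomials $\sum_n\binom{2j}{n}\H_{2j-n}(\Psi,c)\,(h')^n = \H_{2j}(\Psi+h',c)$ evaluated with $\Psi\mapsto 0$ after reconstruction. Then the operator $\hat\Eps^{j-1}$ contributes the factor $\eps^{j-1}$ via \eqref{e:propReconstr}, which holds for the canonical model and hence (after descent) also after renormalisation since the $\EE^k_\ell$-data are preserved. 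Third, the map $M_0$ contributes, via Proposition~\ref{prop:commuteMGamma}\ref{p4}, a \emph{constant} shift: $\hat\CR H = \CR H - \sum_{\tau\in\Bad}C_\tau u_\tau$, and one checks from the explicit form \eqref{e:U} of $H$ that the coefficients $u_\tau$ of the basis vectors $\tau\in\Bad$ in $(\DD H)^{2j}$ are constants (indeed the relevant $\tau$ in $\Bad$ have homogeneity $-2(\ell+m+n+2)\kappa$, just below $0$, and the corresponding coefficient in the abstract right hand side is a fixed real number times a product of the $\hat a_j$'s), so the $M_0$-renormalisation adds precisely a constant $c$ to the equation.

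\textbf{Main obstacle.} The delicate point is bookkeeping: one must verify that the only effect of $\Wick$ on the reconstruction of $\hat\Eps^{j-1}(\CQ_{\le 0}(\DD H)^{2j})$ is the replacement of the monomial by the Hermite polynomial, i.e.\ that all the ``upper-triangular'' correction terms $\hat\tau^{(1)}\otimes\hat\tau^{(2)}$ with $|\hat\tau^{(1)}|>|\tau|$ appearing in $\DeltaW$ either fall outside $\CQ_{\le 0}$ or contribute nothing when $\hat\Pi_z(\cdot)$ is evaluated at the base point $z$; and similarly that the corrections introduced by $\hat\Eps^{j-1}$ (the subtraction of $\sum_\ell \frac{X^\ell}{\ell!}f_z(\EE^{j-1}_\ell(\cdot))$ in \eqref{e:defEpshat}) do not survive reconstruction at $z$ except through the clean factor $\eps^{j-1}$. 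This is essentially the same computation as in \cite[Sec.~8]{Regularity} and in the proof of Proposition~\ref{prop:solutionCanonical}, adapted to the fact that our nonlinearity is a sum over $j$ of terms of differing homogeneity, some with $\gamma_j<0$; the projections $\CQ_{\le 0}$ are exactly what make this work, and the remark following \eqref{e:abstrFPorig} explains why one must not project onto $\CQ_{<\gamma_j}$ instead. Once these verifications are in place, collecting the three contributions gives
\begin{equ}
\d_t h = \d_x^2 h + \sum_{j=1}^m \eps^{j-1}\hat a_j\,\H_{2j}(\d_x h, C^\W) + c + \zeta\;,
\end{equ}
with $h(0,\cdot)=h_0$, which is the claim.
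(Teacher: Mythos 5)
Your overall route is the paper's: apply $\hat\CR$ to \eref{e:abstrFP}, use admissibility to pull out $P * \one_+$ and $Ph_0$, and then identify $\hat\CR\hat\Eps^{j-1}\bigl(\CQ_{\le 0}(\DD H)^{2j}\bigr)$ through three effects — Wick renormalisation producing Hermite polynomials, the factor $\eps^{j-1}$, and a constant coming from $M_0$ (using that the coefficients of the $\Bad$-symbols are constants). These are exactly the ingredients of the paper's proof, so the architecture is the same.

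Two of your intermediate assertions are, however, false as stated and should be replaced by the paper's actual computation. First, $(\DD H)^{2j}$ is \emph{not}, modulo terms killed by $\CQ_{\le 0}$, a polynomial in $\Psi$ and $h'$: expanding \eref{e:DU} produces summands such as $\Psi^{2j-1}\CI'(\hat \CF)$ or $\Psi^{2j-2}\CI'(\hat\CF)^2$ of strictly negative homogeneity, which survive the projection and whose reconstruction at the base point is a genuine function of $z$ (indeed these are precisely the terms responsible for the constants $C_\tau$, i.e.\ for $M_0$). Second, nothing is ``evaluated with $\Psi\mapsto 0$ after reconstruction'': for the canonical lift, $\Psi$ reconstructs to $(K'*\zeta)(z)\neq 0$, and the $\one$-component $h'$ is not the whole remainder. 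The correct mechanism, as in the paper, is to write $\DD H(z) = \Psi + U(z)$ with $U$ collecting \emph{all} remaining components, apply Proposition~\ref{prop:renormProd} and \eref{e:Hermite} to get $\DeltaW (\DD H(z))^{m} = \sum_{k+\ell=m}\binom{m}{k}\bigl(\H_k(\Psi,C^\W)\otimes\one\bigr)\bigl(\DeltaW U(z)\bigr)^\ell$, and then use multiplicativity of $\Pi_z$ (canonical lift) and of $f_z$ together with the Hermite binomial identity to reassemble, after applying $\Pi_z\otimes f_z$ and evaluating at $z$, into $\H_m\bigl((\RWick\DD H)(z),C^\W\bigr)=\H_m(\d_x h(z),C^\W)$. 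Finally, the identity giving the factor $\eps^{j-1}$ holds for the Wick-only renormalised reconstruction $\RWick$ (because the second identity in \eref{e:defining} preserves \eref{e:canonical2}) but \emph{fails} for $\hat\CR$; the clean way to organise your three contributions is the factorisation $\hat\CR = \RWick M_0$ of \eref{e:defRhat}, which turns your ``constant shift from $M_0$'' into the constant $c$ of \eref{e:multEps} while keeping \eref{e:propWick} applicable.
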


\begin{remark}
The constant $c$ is a suitable linear combination of the constants $C_\tau$ 
appearing in the definition \eqref{e:defM0} of $M_0$, with coefficients 
depending on the constants $\hat a_j$.
In principle, one can derive an explicit expression for it, but this expression does not
seem to be of particular interest. The only important fact is that if we
write
\begin{equ}
\tau_1 = \Psi \CI'(\Psi \CI'(\Psi^2))\;,\qquad
\tau_2 = \CI'(\Psi^2)^2\;,
\end{equ}
then the corresponding renormalisation constants $c_{\tau_1}$ and $c_{\tau_2}$ only ever arise as a multiple
of $4 c_{\tau_1} + c_{\tau_2}$. This is important since, as we will see 
in Theorem~\ref{theo:logs} below, 
these renormalisation constants need to be chosen to diverge logarithmically as $\eps \to 0$
and the particular form of this linear combination guarantees that these logarithmic divergencies
cancel out and are therefore not visible in the renormalised equations. 
\end{remark}

\begin{proof}
As in the proof of Proposition~\ref{prop:solutionCanonical}, we use the fact that the 
renormalised model is admissible to conclude that, when applying $\hat \CR$ to both sides of
\eref{e:abstrFP}, the function $h = \hat \CR H$ satisfies the identity
\begin{equ}[e:FPconcr]
h = P * \one_+ \Bigl(\zeta + \sum_{j=1}^m \hat a_j \hat \CR \bigl(\hat \Eps^{j-1}\bigl(\CQ_{\le 0} (\DD H)^{2j})\bigr)\Bigr) + P h_0\;.
\end{equ}
At this stage, the proofs diverge since it is no longer the case that $\hat \CR$ preserves the usual
product. The only fact that we can use is that $(\hat \CR F)(z) = \bigl(\hat \Pi_z F(z)\bigr)(z)$,
combined with the definition of the renormalised model $\hat \Pi$.

Denoting by $\RWick$\label{rwick} the reconstruction operator associated to the model
$(\Pi_x \otimes f_x)\DeltaW$, then it follows immediately from \eqref{e:propR} and \eqref{e:newModel} that one has the identity
\begin{equ}[e:defRhat]
\hat \CR U = \RWick M_0 U\;.
\end{equ}
Furthermore, as a consequence of the first identity in \eqref{e:defining} combined with 
\eqref{e:canonical2}, 
one has the identity
\begin{equ}[e:propWick]
\bigl(\RWick \hat \Eps^\ell(U)\bigr)(z)  = \eps^\ell \bigl(\RWick U\bigr)(z)\;,
\end{equ}
provided that the underlying model $(\Pi,f)$ is of the form $\LL_\eps(\zeta)$ for some smooth $\zeta$.
(Note though that this identity fails in general if we were to replace $\RWick$ by $\hat \CR$.)

It follows from the fact that $\DD\CP F$ differs from $\CI' F$ by a Taylor polynomial at each point
that if $H$ is the solution to \eqref{e:abstrFP}, then one can write
\begin{equ}
\DD H(z) = \Psi + U(z)\;,
\end{equ}
where the remainder $U$ only contains components proportional to either $\one$, $X$, or
$\CI'(\tau)$ with $\tau \neq \Xi$.
In particular, none of the components belongs to $\Bad$, so that one has the identity
\begin{equ}
\bigl(\hat \CR \DD H\bigr)(z) =\bigl(\RWick \DD H\bigr)(z) = \bigl(\Pi_z \Psi\bigr)(z) + \bigl((\Pi_z \otimes f_z)\DeltaW U(z)\bigr)(z)\;.
\end{equ}
On the other hand, for $\ell \ge 0$, 
we can apply the reconstruction operator to $\hat \Eps^\ell \bigl((\DD H)^{2\ell+2}\bigr)$
and combine \eqref{e:defRhat} with \eqref{e:propWick} and the definition of $M_0$ thus yielding
\begin{equs}[e:multEps]
\bigl(\hat \CR \hat \Eps^\ell \bigl(\CQ_{\le 0}(\DD H)^{2\ell+2}\bigr)\bigr)(z) &= 
\bigl(\RWick \hat \Eps^\ell \bigl(\CQ_{\le 0}(\DD H)^{2\ell+2}\bigr)\bigr)(z) + c \\
&= \eps^\ell \bigl(\RWick (\DD H)^{2\ell+2}\bigr)(z) + c\;,
\end{equs}
for some constant $c$. It thus remains to compute $\RWick (\DD H)^{m}$ for arbitrary $m$. As a consequence
of Proposition~\ref{prop:renormProd} and \eqref{e:Hermite}, we have
\begin{equs}
\DeltaW (\DD H(z))^{m} &= 
\sum_{k+\ell = m} \binom{m}{k} (\Wick \Psi^k \otimes \one) \bigl(\DeltaW U(z)\bigr)^\ell \\
&= \sum_{k+\ell = m} \binom{m}{k} (\H_k(\Psi,C^\W) \otimes \one) \bigl(\DeltaW U(z)\bigr)^\ell\;.
\end{equs}
At this stage, we use the fact that since our original model originates from a canonical
lift by assumption, it has the property that $\Pi_x \tau \bar \tau = \Pi_x \tau \, \Pi_x \bar \tau$.
Applying $\Pi_z \otimes f_z$ to both sides of this equality and combining this with the fact 
that $f_z$ is also multiplicative, we conclude that 
\begin{equs}
\bigl(\RWick (\DD H(z))^{m}\bigr)(z) &= \sum_{k+\ell = m} \binom{m}{k} \H_k\bigl(\bigl(\Pi_z \Psi\bigr)(z),C^\W\bigr) \bigl((\Pi_z \otimes f_z)\DeltaW U(z)\bigr)(z)^\ell  \\
&= \H_m\bigl((\RWick \DD H)(z), C^\W\bigr)\;.
\end{equs}
Combining this with \eqref{e:multEps} and \eqref{e:FPconcr}, the claim follows.
\end{proof}

\section{Convergence of the models}
\label{sec:convModel}

In this section, we now show how the renormalisation maps from the previous section
can be used to renormalise the models built from regularisations of space-time white
noise. From now on, we will use a graphical shorthand notation 
similar to the one used in \cite{KPZ} for symbols $\tau \in \CW$ which do not contain
the symbol $\CE$: dots represent the symbol $\Xi$, 
lines denote the operator $\CI'$, and the joining of symbols by their roots denotes their
product. For example, one has $\<1> = \CI'(\Xi) = \Psi$,
$\<2> = \Psi^2$, $\<21> = \Psi \CI'(\Psi^2)$, etc.
We will also assume from now on that $(\CT,\CG)$ has been truncated in the way specified in
the beginning of Section~\ref{sec:extended}

With the same graphical notations, we also define two additional renormalisation 
constants 
\begin{equ}[e:defRenormConst]
C_2^{(\eps)} = \;
\begin{tikzpicture}[baseline=0.6cm,scale=0.35]
\node at (0,0) [root] (0) {};
\node at (0,2) [dot] (1) {};
\node at (-2,2) [dot] (2) {};
\node at (0,4) [dot] (4) {};
\node at (2,3) [dot] (5) {};

\draw[kernel] (1) to (0);
\draw[keps] (2) to (0);
\draw[keps] (5) to (1);
\draw[kernel] (4) to (1);
\draw[keps] (5) to (4);
\draw[keps] (2) to (4);
\end{tikzpicture}\;,\qquad
C_3^{(\eps)} = \;
\begin{tikzpicture}[baseline=0.6cm,scale=0.35]
\node at (0,0) [root] (0) {};
\node at (-2,2) [dot] (1) {};
\node at (2,2) [dot] (2) {};
\node at (0,2) [dot] (3) {};
\node at (0,4) [dot] (4) {};

\draw[kernel] (1) to (0);
\draw[kernel] (2) to (0);
\draw[keps] (3) to (1);
\draw[keps] (3) to (2);
\draw[keps] (4) to (1);
\draw[keps] (4) to (2);
\end{tikzpicture}\;,
\end{equ}
where a plain arrow represents the kernel $K'$.
We will see in Section~\ref{sec:logs} below that these
two constants diverge logarithmically as $\eps \to 0$, but this is not important at the moment.
We also set $C_0^{(\eps)}$ to be the left hand side of \eqref{e:defrhopicture} and, for all $\tau \in \Bad \setminus \{\<22>,\<211>\}$  defined in the prelude to \eqref{e:tauBB}, we set 
\begin{equ}[e:defCtaueps]
C_\tau^{(\eps)} = \E \bigl(\PPi^{(\eps)}\Wick \tau\bigr) (0)\;,
\end{equ}
where $\Wick$ is the map defined in \eqref{e:defWick} with $C^\W = C_0^{(\eps)}$ and 
$\PPi^{(\eps)}\colon \CT \to \CC$ denotes the linear map 
defined recursively by
$\PPi^{(\eps)} \Xi = \xi^{(\eps)}$ and 
\begin{equ}
\PPi^{(\eps)} \Eps^k(\tau) = \eps^k \PPi^{(\eps)} \tau\;,\quad
\PPi^{(\eps)} \CI'(\tau) = K' * \PPi^{(\eps)} \tau\;,\quad
\PPi^{(\eps)} \tau \bar \tau = \bigl(\PPi^{(\eps)} \tau \bigr)\bigl(\PPi^{(\eps)} \bar\tau \bigr)\;.
\end{equ}
Note that the functions $\PPi^{(\eps)}\tau$ are stationary, so the choice of the evaluation
at $0$ in \eqref{e:defCtaueps} is irrelevant.

We then define a map 
$M^{(\eps)}$ acting on the space of admissible models by
\begin{equ}[e:defMeps]
M^{(\eps)}\colon (\Pi,f) \mapsto (\hat \Pi, \hat f)\;,
\end{equ}
with $(\hat \Pi, \hat f)$ as in \eqref{e:newModel}, where we set
\begin{equ}
C_{\<211s>} = 2C_2^{(\eps)}\;,\qquad C_{\<22s>} = 2C_3^{(\eps)}\;,
\end{equ}
as well as $C_\tau = C_\tau^{(\eps)}$ for $\tau \in \Bad \setminus \{\<22>,\<211>\}$.
With these notations at hand, the following is then the 
main result of this section.

\begin{theorem}\label{theo:convModel}
Let $\xi^{(\eps)}$ be as in \eref{e:xiEps} and consider the sequence of models 
on $\CT$ given by
\begin{equ}
\Z_\eps = M^{(\eps)}\PPP \LL_\eps(\xi^{(\eps)})\;.
\end{equ}
Then, there exists a random model $\Z$ such that $\$\Z_\eps; \Z\$_\eps \to 0$ in probability
as $\eps \to 0$. 
Furthermore, the limiting model $\Z = (\hat \Pi, \hat f)$ is independent of the choice of mollifier $\rho$
and it satisfies $\hat\Pi_z\tau = 0$ for every symbol $\tau$ containing at least one
occurrence of $\Eps$.
\end{theorem}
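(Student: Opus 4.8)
The plan is to reduce the statement to a collection of second-moment bounds on the individual objects $\PPi^{(\eps)}\tau$ (or rather their renormalised versions) and their differences across two values of $\eps$, using the general hypercontractivity argument for Wiener chaos exactly as in \cite[Sec.~10]{Regularity}. Since $\xi^{(\eps)}$ lives in a fixed first Wiener chaos, every symbol $\tau\in\CW$ maps to a finite linear combination of elements of Wiener chaoses of order at most $2m$ (the degree of $F$), and convergence in $L^2$ together with moments implies convergence in probability in the model ``norm'' $\$\cdot;\cdot\$_\eps$, where the $\eps$-dependent part of the norm is controlled via the bounds \eqref{e:modelEps}. Concretely, I would first establish that the renormalised objects $\E\big(\hat\Pi^{(\eps)}_0\tau\big)(\phi_0^\lambda)^2 \lesssim \lambda^{2|\tau|}$ uniformly in $\eps$ for every $\tau\in\bar\CW$, and similarly $\E\big(\hat\Pi^{(\eps)}_0\tau - \hat\Pi^{(\eps')}_0\tau\big)(\phi_0^\lambda)^2 \lesssim |\eps-\eps'|^\theta \lambda^{2|\tau|-\theta}$ for some $\theta>0$; Cauchy-completeness then produces the limiting model $\Z$. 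The pieces of $f^{(\eps)}_z$ on $\J_k$ are determined by $\Pi^{(\eps)}$ via \eqref{e:admissible}, so no separate work is needed there, whereas the pieces on $\EE^k_\ell$ must be handled using the renormalisation formulas and the explicit definition \eqref{e:canonical2B} together with \eqref{e:modelEpsWantedf}.

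The main analytic input, and the step I expect to be the principal obstacle, is obtaining the uniform-in-$\eps$ bounds on those diagrams $\hat\Pi^{(\eps)}\tau$ whose ``naive'' homogeneity is negative only because of the presence of $\Eps$-factors. For these, the kernel $\rho_\eps * K'$ behaves like $\CO(\eps^{-1})$ (compare \eqref{e:defrhopicture}), so each $\Eps$-factor must be used to absorb exactly one such divergence: the power of $\eps$ multiplying the diagram precisely compensates the $\eps^{-1}$ produced by the regularised derivative kernel hitting itself at short distances. The clean way to make this quantitative is the general multi-scale bound of Appendix~\ref{sec:bounds}, which controls exactly the ``generalised convolutions'' appearing here (those that cannot be decomposed into ordinary convolutions and products), and which also accommodates the renormalisation subtractions built into $M^{(\eps)}$. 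I would feed each relevant $\tau$ into that bound, checking in each case that the integrability conditions of the appendix hold with room to spare once $\kappa$ is small, thereby getting the required $\lambda^{2|\tau|}\eps^{2(|\tau|-\bar\gamma)}$-type estimates demanded by \eqref{e:modelEpsWantedPi}. The Wick renormalisation with $C^\W = C_0^{(\eps)}$ removes the divergences coming from contracted pairs $\Psi^2$, while the constants $C^{(\eps)}_2,C^{(\eps)}_3$ and the $C^{(\eps)}_\tau$ remove the remaining subdivergences in the elements of $\Bad$; that the logarithmically divergent pieces in $\tau_1,\tau_2$ enter only through $4c_{\tau_1}+c_{\tau_2}$ (as noted after Proposition~\ref{prop:solutionRenormalised}) is what makes the renormalised objects converge.

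Finally, for the vanishing statement: I would argue by induction on the homogeneity (equivalently, the number of symbols) that $\hat\Pi_z\tau = 0$ whenever $\tau$ contains at least one $\Eps$. Writing $\tau = \Eps^{k-1}(\tau_1\cdots\tau_{2k})$ (or $\tau$ built from such by further application of $\CI',\CI,\Eps^\ell$ and multiplication), observe that $\hat\Pi^{(\eps)}_z\Eps^{k-1}(\cdot)$ acquires the prefactor $\eps^{k-1}$ from \eqref{e:canonical2A} plus lower-order terms which, by the renormalisation scheme and the admissibility relations, are each either a constant multiple of the $C^{(\eps)}_\tau$'s or themselves contain $\Eps$; since $|\tau|<0$ for every such $\tau$ in our truncated $\bar\CW$, the uniform bound $\lambda^{|\tau|}\eps^{-|\tau|}$ combined with the fact that $|\tau|<0$ forces $\hat\Pi^{(\eps)}_z\tau(\phi_z^\lambda)\to 0$ as $\eps\to 0$ once we also use that the subtracted constants are tuned so that no finite limit survives. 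Independence of the limit from the mollifier $\rho$ then follows because the limiting model is characterised by the analytic bounds together with the admissibility identities \eqref{e:admissible}, which do not refer to $\rho$, and because all $\Eps$-containing symbols — the only ones where $\rho$ could have left a trace — are killed; uniqueness of the limit of $\hat\Pi$ on the $\rho$-independent part is then standard, as in \cite{Regularity}.
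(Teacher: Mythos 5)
Your overall strategy is essentially the paper's: reduce the theorem to per-symbol moment bounds via a Kolmogorov-type criterion (this is Proposition~\ref{prop:momentBounds}, resting on Proposition~\ref{prop:criterion}), verify the bounds \eqref{e:wantedKPZ}, \eqref{e:assumf}, \eqref{e:assumPi} symbol by symbol by writing each renormalised object in its Wiener chaos decomposition and feeding the resulting labelled graphs into Theorem~\ref{theo:ultimate}, and obtain the vanishing of the $\Eps$-symbols because each factor $\eps$ pairs with a self-contraction of $\rho_\eps * K'$; in the paper this pairing is encoded by the kernel $N_\eps = \eps (K'*\rho_\eps)*(K'*\rho_\eps)(-\cdot)$, which satisfies $\|N_\eps\|_{\delta;p}\lesssim \eps^\delta$ by \eqref{e:boundNeps}.

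Three of your specific claims need repair, though none changes the architecture. First, the quantitative form you assert for $\Eps$-symbols, a bound of type $\lambda^{|\tau|}\eps^{-|\tau|}$, is unachievable: for $\tau = \Eps^k(\Psi^{2k+2})$ and $\lambda\simeq\eps$ the variance of $\eps^k(\Psi^{(\eps)})^{\diamond(2k+2)}(\phi_0^\lambda)$ is of order $\eps^{-2}$, far larger than $(\lambda/\eps)^{2|\tau|}\simeq 1$. What one actually gets, and all one needs, is $\E|\cdot|^2\lesssim \eps^\delta\lambda^{2|\tau|+\delta}$ for a small $\delta>0$, obtained by assigning the edge carrying $N_\eps^{2k}$ an arbitrarily small homogeneity and paying for the $\eps^\delta$ gain out of the $\kappa$-room in $|\tau|$; the graph is then the one in \eqref{e:labelledG} with one extra low-order edge. (Similarly, the target for \eqref{e:modelEpsWantedPi} is $\lambda^{\bar\gamma}\eps^{|\tau|-\bar\gamma}$, not $\lambda^{|\tau|}\eps^{|\tau|-\bar\gamma}$; in the paper it is reached via the interpolation bound $\$K_\eps'\$_{\alpha;p}\lesssim\eps^{\alpha-2}$ applied to the symbols $\<20>$, $\<210>$, $\<10>$ with $\alpha>3/2$.) Second, the cancellation of the logarithms in $4C_2^{(\eps)}+C_3^{(\eps)}$ plays no role in the convergence of the model: $\hat\Pi^{(\eps)}\<22>$ and $\hat\Pi^{(\eps)}\<211>$ are each renormalised by their own, individually divergent, constant and each converges on its own; the combination $4C_2+C_3$ only matters for the finiteness of the constant in the renormalised equation (Theorem~\ref{theo:logs} together with Proposition~\ref{prop:solutionRenormalised}). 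Third, the reduction step is not literally ``as in \cite[Sec.~10]{Regularity}'': for the positive-homogeneity symbols $\Eps^k(\bar\tau)$ of \eqref{e:specialTau} there is no a priori control of $\Gamma_{z\bar z}$ (no analogue of \cite[Thm~5.14]{Regularity} for $\EE^k_\ell$), so the Kolmogorov argument must be supplemented; the paper does this by observing that the structure group acts trivially on the relevant $\bar\tau$, so that \eqref{e:equalfeps} and \eqref{e:boundgamma} convert the bound \eqref{e:assumf} on $D^\ell\hat f^{(\eps)}_z(\EE^k_0(\tau))$ into the missing $\Gamma$-estimates and the convergence on those components. With these corrections your proposal matches the paper's proof.
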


Before we turn to the proof of Theorem~\ref{theo:convModel}, we give a criterion
allowing to verify whether a sequence of models converges in $\MM_\eps$.

\subsection{A convergence criterion}

The following result is very useful. Here, we fix a sufficiently regular wavelet basis / 
multiresolution analysis with compactly supported elements
and we reuse the notation of \cite[Sec.~3.1]{Regularity}. In particular, $\Psi$
is a finite set of functions in $\CB$ such that the wavelet basis is obtained
by translations and rescalings of elements in $\Psi$ (we use the notation $\Psi$  to be consistent with \cite{Regularity}.  It should not be confused with the shorthand for $\CI'(\Xi)$ used elsewhere in the paper). Here, we follow the usual convention,
so $\psi_z^n$ denotes a wavelet basis function at level $n$ (scale $2^{-n}$) 
centred at some point $z$ in the level $n$ dyadic set $\Lambda^n$. We normalise 
these basis functions so that their $L^2$ norm (not the $L^1$ norm as before!) equals $1$.

Recall also that our definition of the spaces $\MM_\eps$ involves the constant
$\bar\gamma = 1-{1\over 32m}$ as defined in \eqref{e:modelEpsWantedPi}.

\begin{proposition}\label{prop:criterion}
Let $(\Pi^{(\eps)}, f^{(\eps)})$ be a family of models for the regularity structure $(\CT,\CG)$
converging to a limiting model $(\Pi,f)$ in the sense
that $\lim_{\eps\to 0}\$\Pi^{(\eps)} ; \Pi\$ = 0$.
Assume that, for some $\delta > 0$, one has
\begin{equ}[e:assumfeps]
\bigl|f_z^{(\eps)}\bigl(\EE^k_\ell(\tau)\bigr)\bigr| \le C \eps^{|\tau| + k - |\ell| + {\delta}}\;,
\end{equ} for $\tau$ and $k,\ell$ as in  \eqref{e:modelEpsWantedf}, and that furthermore
for $\tau \in \CU'$
\begin{equ}[e:assumPieps]
\bigl|\bigl(\Pi_z^{(\eps)} \tau\bigr)(\psi_z^n)\bigr| \le C 2^{-{3n \over 2} - \bar \gamma n} \eps^{|\tau| - \bar \gamma + \delta}\;,
\end{equ}
for every $n \ge 0$, every $z \in \Lambda^n$, and every $\psi \in \Psi$.
Then, one has $\lim_{\eps\to 0}\$\Pi^{(\eps)} ; \Pi\$_{\eps,0} = 0$.
\end{proposition}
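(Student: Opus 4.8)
The plan is to reduce the convergence $\$\Pi^{(\eps)};\Pi\$_{\eps,0}\to 0$ to its two constituent pieces. Recalling the definition \eqref{e:boundPidiff}, we have $\$\Pi^{(\eps)};\Pi\$_{\eps,0}=\$\Pi^{(\eps)};\Pi\$+\|\Pi^{(\eps)}\|_\eps$, and the first term tends to $0$ by hypothesis, so the entire content of the statement is the claim that $\|\Pi^{(\eps)}\|_\eps\to 0$ with $\|\cdot\|_\eps$ as in \eqref{e:modelEps}. This is a supremum of two families of terms (one involving $f_z(\EE^k_\ell(\tau))$, one involving $(\Pi_z\tau)(\phi_z^\lambda)$ for $\tau\in\CU'$ and $\lambda\le\eps$), so I would bound each family separately and show it is $O(\eps^{\delta'})$ for some $\delta'>0$.

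For the first family, the term in \eqref{e:modelEps} is $\eps^{|\ell|-k-|\tau|}\,|f_z^{(\eps)}(\EE^k_\ell(\tau))|$, and by the hypothesis \eqref{e:assumfeps} this is bounded by $C\eps^{|\ell|-k-|\tau|}\eps^{|\tau|+k-|\ell|+\delta}=C\eps^\delta$, uniformly in $z$ over the compact domain and over the finitely many relevant $(\tau,k,\ell)$ (finiteness is guaranteed because $\CW$ is finite after the truncation of Section~\ref{sec:extended}). So this family is $O(\eps^\delta)\to 0$.

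For the second family, the quantity to control is $\lambda^{-\bar\gamma}\eps^{\bar\gamma+\delta_0-|\tau|}|(\Pi_z^{(\eps)}\tau)(\phi_z^\lambda)|$ for $\tau\in\CU'$ with $|\tau|<\bar\gamma$, over test functions $\phi\in\CB$ with $\int\phi=0$, over $\lambda\le\eps$, where $\delta_0=\kappa/2$ is the ``$\delta$'' appearing in \eqref{e:modelEps} (which I keep notationally distinct from the $\delta$ in the hypothesis). The assumption \eqref{e:assumPieps} is phrased in terms of wavelet functions $\psi_z^n$ rather than arbitrary mean-zero test functions, so the key step is to pass from the wavelet bound to a bound on $(\Pi_z^{(\eps)}\tau)(\phi_z^\lambda)$. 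This is done exactly as in the wavelet arguments of \cite[Sec.~3.1 and Prop.~3.32]{Regularity}: expand $\phi_z^\lambda$ in the wavelet basis, note that at a given scale $\lambda\in(2^{-n-1},2^{-n}]$ only scales $m\ge n$ contribute significantly (the coarser scales contributing nothing because $\int\phi=0$ kills the constant component, and the finitely many scales around $n$ being handled directly), and sum the geometric series $\sum_{m\ge n}2^{-3m/2-\bar\gamma m}\cdot 2^{3(m-n)/2}$ arising from the number of wavelets intersecting the support of $\phi_z^\lambda$ and their $L^1$ normalisation. This yields $|(\Pi_z^{(\eps)}\tau)(\phi_z^\lambda)|\lesssim \lambda^{\bar\gamma}\eps^{|\tau|-\bar\gamma+\delta}$ (possibly after a logarithmic loss absorbed into a slightly smaller $\delta$), so the term in \eqref{e:modelEps} is $\lesssim \eps^{\bar\gamma+\delta_0-|\tau|}\cdot\eps^{|\tau|-\bar\gamma+\delta}=\eps^{\delta_0+\delta}\to 0$. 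Here I must check that the restriction $\lambda\le\eps$ is compatible with the wavelet sum; since we only sum scales $2^{-m}\le\lambda\le\eps$, the bound \eqref{e:assumPieps} is applied only for such $m$, which is exactly where it is assumed.

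The main obstacle is precisely this wavelet-to-test-function conversion for the second family: one has to verify carefully that the $\eps$-powers and the $\lambda$-powers track correctly through the geometric summation, in particular that no scale $2^{-m}>\eps$ ever needs to be invoked (so that the assumed bound, which only holds at scales below $\eps$, actually suffices), and that the worst case $|\tau|$ as small as possible is the one that is controlled. Combining the two families, we conclude $\|\Pi^{(\eps)}\|_\eps\to 0$, and hence $\$\Pi^{(\eps)};\Pi\$_{\eps,0}\to 0$, as required. $\qed$
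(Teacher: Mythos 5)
Your overall strategy coincides with the paper's: reduce $\$\Pi^{(\eps)};\Pi\$_{\eps,0}\to 0$ to $\|\Pi^{(\eps)}\|_\eps\to 0$, observe that the $f_z^{(\eps)}(\EE^k_\ell(\tau))$ part of \eqref{e:modelEps} is immediate from \eqref{e:assumfeps} (this part of your argument is fine), and convert the wavelet hypothesis \eqref{e:assumPieps} into a bound on $(\Pi^{(\eps)}_z\tau)(\eta_z^\lambda)$ for mean-zero $\eta\in\CB$ and $\lambda\le\eps$. However, the conversion step -- which you yourself identify as the main obstacle -- contains a genuine gap. Your claim that the scales coarser than $\lambda$ ``contribute nothing because $\int\phi=0$ kills the constant component'' is false: the vanishing integral only produces a gain factor of order $\lambda/2^{-n}$ in the coefficient of a wavelet at level $n$ coarser than $\lambda$, and when the sum is actually carried out these intermediate scales $\lambda\le 2^{-n}\lesssim\eps$ contribute at the \emph{same} leading order $\lambda^{\bar\gamma}$ as the fine scales (the sum $\sum_n \lambda\, 2^{(1-\bar\gamma)n}$ is dominated by $2^{-n}\approx\lambda$), so they cannot be discarded. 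The way the paper avoids ever invoking the hypothesis above scale $\eps$ is not by dismissing coarse scales but by starting the multiresolution at the level $N$ with $2^{-N}\approx\eps$: the expansion of $\eta_z^\lambda$ then consists of wavelets at levels $n\ge N$ \emph{plus} scaling functions $\phi^N_{z'}$ at scale $\approx\eps$. The scaling functions do not integrate to zero and are not covered by \eqref{e:assumPieps}, so they need a separate argument (their $\one$-contribution is of size $2^{-3N/2}\eps^{|\tau|}$, which is acceptable precisely because $2^{-N}\approx\eps$). Neither of these two pieces is treated in your sketch, and without them the final bound $\lambda^{\bar\gamma}\eps^{|\tau|-\bar\gamma+\delta'}$ does not follow.

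A second omission: \eqref{e:assumPieps} only controls $(\Pi^{(\eps)}_z\tau)(\psi^n_z)$ when the wavelet is centred at the base point $z\in\Lambda^n$ of the model, whereas in the expansion of $\eta_z^\lambda$ the wavelets are centred at points $z'\neq z$. One must therefore write $\Pi^{(\eps)}_z=\Pi^{(\eps)}_{z'}\Gamma^{(\eps)}_{z'z}$, apply the hypothesis to each component of $\Gamma^{(\eps)}_{z'z}\tau$ of the various homogeneities appearing in $\CU'$ (the needed uniform bounds on $\Gamma^{(\eps)}$ are available because $\Pi^{(\eps)}$ converges in $\MM$), and use that the component along $\one$ is annihilated since the wavelets integrate to zero; this recentering is what produces the bound \eqref{e:boundPipsi} in the paper and is absent from your proposal. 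In short: right decomposition and right target estimate, but the mechanism you propose for the scales between $\lambda$ and $\eps$ would fail, and the scaling-function and recentering steps are missing.
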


\begin{proof}
We only need to show that, for any test function $\eta \in \CB$ with $\int \eta(z)\,dz = 0$,
one has
\begin{equ}
\bigl|\bigl(\Pi_z^{(\eps)} \tau\bigr)(\eta_z^\lambda)\bigr| \lesssim \lambda^{\bar \gamma} \eps^{|\tau| - \bar \gamma + \delta}\;,
\end{equ}
provided that $\lambda \le \eps$, since this will then guarantee that
$\|\Pi^{(\eps)}\|_\eps \lesssim \eps^{\delta}$.
We fix $N \ge 0$ such that $2^{-N} \le \eps \le 2^{1-N}$ and we write
\begin{equ}[e:decompeta]
\eta_z^\lambda = \sum_{z'\in \Lambda^N} A^N_{z'} \phi_{z'}^N + \sum_{n \ge N} \sum_{\psi\in \Psi} A^{n,\psi}_{z'} \psi_{z'}^n\;. 
\end{equ}
It is then a simple consequence of the scaling properties of these objects that one has the bounds
\begin{equ}
|A^N_{z'}| \lesssim 2^{{3N \over 2}} (\lambda / 2^{-N})\;,\qquad 
|A^{n,\psi}_{z'}| \lesssim 
\left\{\begin{array}{cl}
	2^{{3n \over 2}} (\lambda / 2^{-n}) & \text{if $2^{-n} \ge \lambda$,} \\
	2^{{3n \over 2}} (2^{-n}/\lambda)^{3} & \text{otherwise.}
\end{array}\right.
\end{equ}
Note here that the factor $2^{{3n \over 2}}$ comes from the fact that the functions
$\psi_z^n$ and $\phi_z^n$ appearing in \eqref{e:decompeta} are normalised in 
$L^2$ rather than in $L^1$. Furthermore, the factor $\lambda / 2^{-n}$ appearing in 
the first two bounds is a consequence of the fact that $\eta$ integrates to $0$ by assumption
and the wavelet basis is sufficiently regular ($\CC^2$ is enough).

We furthermore obtain from \eqref{e:assumPieps} and the fact that $\Pi^{(\eps)}$ converges to a limit 
(and therefore is bounded in $\MM$, uniformly in $\eps$) the bound
\begin{equs}
\bigl|\bigl(\Pi_{z}^{(\eps)} \tau\bigr)(\psi_{z'}^n)\bigr| &= \bigl|\bigl(\Pi_{z'}^{(\eps)} \Gamma_{z'z}^{(\eps)}\tau\bigr)(\psi_{z'}^n)\bigr| \lesssim
2^{-{3n\over 2}}\sum_{\alpha} |z-z'|^{|\tau|-\alpha} 2^{-\bar \gamma n} \eps^{\alpha-\bar \gamma- {\delta \over 2}} \label{e:boundPipsi}\\
&\lesssim 2^{-{3n\over 2}}\sum_{\alpha} (\lambda + 2^{-n})^{|\tau|-\alpha} 2^{-\bar \gamma n} \eps^{\alpha-\bar \gamma+ \delta}
\lesssim 2^{-{3n\over 2}} 2^{-\bar \gamma n} \eps^{|\tau|-\bar \gamma+ \delta}\;,
\end{equs}
where $\alpha$ runs over all homogeneities less or equal to $|\tau|$ appearing in
$\CU'$ and the last inequality is a consequence of the fact that we only consider $\lambda$ and $n$ 
such that $\lambda + 2^{-n}\lesssim \eps$. (The term corresponding to $\one$, for which
\eqref{e:assumPieps} does not hold in principle, does not contribute since $\psi_{z'}^n$ integrates to $0$.)
The same bound (with $n$ replaced by $N$) can also be obtained for 
$\bigl|\bigl(\Pi_{z}^{(\eps)} \tau\bigr)(\phi_{z'}^N)\bigr|$. (In that case $\phi_{z'}^N$ does not 
integrate to $0$, but since $2^{-N} \approx \eps$, the contribution arising from
$\one$ is of order $2^{-{3N\over 2}}\eps^{|\tau|}$ which is in particular bounded by 
the right hand side of \eqref{e:boundPipsi}.)

It remains to note that, for fixed $n$, the number of non-vanishing values of $A_{z'}^{n,\psi}$
(or $A_{z'}^{N}$)
is of order $1$ if $2^{-n} \ge \lambda$ and of order $(\lambda / 2^{-n})^3$ otherwise. 
Combining all of these bounds and using the fact that $\bar \gamma \in (0,1)$, we finally obtain
\begin{equ}
\bigl|\bigl(\Pi_{z}^{(\eps)} \tau\bigr)(\eta_{z}^n)\bigr|
\lesssim \sum_{\lambda \le 2^{-n} \le \eps}\lambda 2^{(1-\bar \gamma)n} \eps^{|\tau|-\bar \gamma- {\delta \over 2}} + \sum_{2^{-n} \le \lambda} 2^{-\bar \gamma n} \eps^{|\tau|-\bar \gamma- {\delta \over 2}} \lesssim \lambda^{\bar \gamma} \eps^{|\tau|-\bar \gamma- {\delta \over 2}}\;,
\end{equ}
as required.
\end{proof}

As a consequence, we obtain the following Kolmogorov-type convergence criterion.

\begin{proposition}\label{prop:momentBounds}
Let $(\CT,\CG)$ be the regularity structure built in Section~\ref{sec:structure}
and let $\hat \Pi^{(\eps)}$ be as in Theorem~\ref{theo:convModel}. Assume that there exists $\delta > 0$ such
that, for every test function
$\eta \in \CB$, every $\tau \in \bar \CW$ with $|\tau| < 0$, every $x \in \R^2$ and 
every $\lambda \in (0,1]$ 
there exists a random variable $\bigl(\hat \Pi_z\tau\bigr)(\eta_z^\lambda)$
such that
\minilab{e:requiredBounds}
\begin{equ}[e:wantedKPZ]
\E \bigl|\bigl(\hat \Pi^{(\eps)}_z\tau\bigr)(\eta_z^\lambda)\bigr|^2
\lesssim \lambda^{2|\tau|+\delta}\;,\qquad 
\E \bigl|\bigl(\hat \Pi_z\tau - \hat \Pi^{(\eps)}_z\tau\bigr)(\eta_z^\lambda)\bigr|^2
\lesssim \eps^\delta \lambda^{2|\tau|+\delta}\;.
\end{equ}
Assume furthermore that, for $ \tau$ with  $\CE^k(\tau) \in  \CW_+$, one has
\minilab{e:requiredBounds}
\begin{equ}[e:assumf]
\E\, \bigl|D_z^\ell \hat f_z^{(\eps)}\bigl(\EE^k_0(\tau)\bigr)\bigr| \lesssim \eps^{|\tau| + k - |\ell| + \delta}\;,
\end{equ}
and that, for $\tau \in \CU'$, one has the bound
\minilab{e:requiredBounds}
\begin{equ}[e:assumPi]
\E\,\bigl|\bigl(\hat \Pi_z^{(\eps)} \tau\bigr)(\eta_z^\lambda)\bigr| \lesssim \lambda^{\bar \gamma+\delta} \eps^{|\tau| - \bar \gamma+\delta}\;,
\end{equ}
for $\lambda \le \eps$ and for test functions $\eta$ that integrate to $0$.
Then, there exists a random model $(\hat \Pi, \hat f) \in \MM_0$ such that
$\$\hat \Pi^{(\eps)}; \hat \Pi\$_\eps \to 0$ in probability as $\eps \to 0$.
\end{proposition}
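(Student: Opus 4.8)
The plan is to derive this from the general Kolmogorov-type criterion for models of \cite[Section~10]{Regularity}, which produces a limiting model together with convergence in the $\$\cdot\$$-distance, and then to feed the outcome into Proposition~\ref{prop:criterion} in order to upgrade to the $\eps$-dependent distance. (Here $\$\hat\Pi^{(\eps)};\hat\Pi\$_\eps$ is to be read as $\$\hat\Pi^{(\eps)};\hat\Pi\$_{\eps,0}$, which is the natural notion since the limit will lie in $\MM_0$.)

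First I would reduce the second moments in \eqref{e:wantedKPZ} to $L^p$ moments for every $p$: since $\xi^{(\eps)}$ is a smooth Gaussian field and $\hat\Pi^{(\eps)}$ is built from it by the canonical lift of Section~\ref{sec:canonical} followed by the renormalisation maps of Section~\ref{sec:renorm}, each $\bigl(\hat\Pi^{(\eps)}_z\tau\bigr)(\eta_z^\lambda)$ — and, being an $L^2$-limit, also $\bigl(\hat\Pi_z\tau\bigr)(\eta_z^\lambda)$ — lies in a fixed finite sum of homogeneous Wiener chaoses, uniformly over the finite set $\bar\CW$; equivalence of moments in a fixed chaos then turns \eqref{e:wantedKPZ}, \eqref{e:assumf} and \eqref{e:assumPi} into their $L^p$-analogues, with $\delta$ replaced by $p\delta/2$. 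Testing against a compactly supported wavelet basis (up to the $L^2$-normalisation factor $2^{3n/2}$ these are rescaled translates of fixed functions in $\CB$) and choosing $p$ large in terms of $|\tau|$ and $\delta$, one obtains exactly the bounds required by \cite[Section~10]{Regularity}; that machinery applies here verbatim, the operators $\Eps^\ell$ playing algebraically the role of abstract integration maps (cf.\ the remark following Proposition~\ref{prop:renormInt} and Appendix~\ref{sec:bounds}). This yields a random admissible model $(\hat\Pi,\hat f)$ for $(\CT,\CG)$ with $\$\hat\Pi^{(\eps)};\hat\Pi\$\to0$ in probability, $\hat f$ on the subalgebra of $\CT_+$ generated by the $X_i$ and the $\J_\ell(\cdot)$ being determined by $\hat\Pi$ via \eqref{e:admissible}. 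For the generators $\EE^k_\ell(\tau)$ I would argue separately: the maps $z\mapsto\hat f^{(\eps)}_z(\EE^k_\ell(\tau))$ are smooth for each $\eps$; \eqref{e:assumf} with derivative order $0$ bounds them by $\eps^{|\tau|+k+\delta}$ with $|\tau|+k>0$, the higher-order bounds furnish equicontinuity, and — using the explicit form of the renormalised model — the same applies for $\ell\neq0$, so all these components converge uniformly on compacts, in probability, to $0$; hence the limit model satisfies $\hat f_z(\EE^k_\ell(\tau))=0$ and thus lies in $\MM_0$.

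The final step is to verify the deterministic hypotheses \eqref{e:assumfeps}--\eqref{e:assumPieps} of Proposition~\ref{prop:criterion}, with $\delta$ halved and with an $\eps$-independent (but random) constant. This follows from the $L^p$ forms of \eqref{e:assumf} and \eqref{e:assumPi}, Markov's inequality, the fact that at scale $n$ only $\CO(2^{2n})$ dyadic points meet the fixed compact domain, and a Borel--Cantelli argument run along a dyadic sequence $\eps_j=2^{-j}$ (the passage from dyadic pairs $(z,2^{-n})$ to arbitrary $(z,\lambda)$ with $\lambda\le\eps_j$ being precisely the wavelet summation carried out inside the proof of Proposition~\ref{prop:criterion}); one then interpolates between consecutive dyadic values of $\eps$, which is legitimate since all the moment bounds are uniform in $\eps$. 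Invoking Proposition~\ref{prop:criterion} gives $\$\hat\Pi^{(\eps)};\hat\Pi\$_{\eps,0}\to0$ in probability, which is the assertion. I expect the one genuinely fiddly point — everything else being either standard chaos/Kolmogorov machinery or already packaged in Proposition~\ref{prop:criterion} — to be the bookkeeping around the $\EE^k_\ell$-components of $\hat f$: one must check that \eqref{e:assumf} really controls $\hat f^{(\eps)}$ on all $\EE$-type generators of $\CT_+$ with the powers of $\eps$ demanded in \eqref{e:modelEps}, which uses the explicit structure of the renormalised model rather than just its abstract defining properties.
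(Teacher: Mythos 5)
Your proposal is correct and follows essentially the same route as the paper: first a Thm~10.7-of-\cite{Regularity}-style chaos/Kolmogorov argument giving a limit model and convergence in $\$\cdot\$$, with the $\EE^k_\ell$-components of $\hat f^{(\eps)}$ treated separately and shown to vanish in the limit (so the limit lies in $\MM_0$), and then a verification of the hypotheses of Proposition~\ref{prop:criterion} to upgrade to the $\eps$-dependent distance, the bound \eqref{e:assumPieps} coming from \eqref{e:assumPi} by equivalence of moments and the fact that wavelets integrate to zero. The one place where your phrase ``applies verbatim'' oversimplifies is exactly the point you flag as fiddly: for the positive-homogeneity symbols \eqref{e:specialTau} the extension-theorem mechanism is unavailable (admissibility does not tie $f_z(\EE^k_\ell\tau)$ to $\Pi$), and the paper supplies the needed $\Pi$- and $\Gamma$-bounds by observing that the structure group acts trivially on the relevant $\bar\tau$, so that \eqref{e:assumf} controls everything through the identities \eqref{e:equalfeps} and \eqref{e:boundgamma} --- which is the precise form of the ``explicit structure of the renormalised model'' argument you sketch.
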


\begin{proof}
The proof goes in two steps: first, we show that there is a limiting model
$(\hat \Pi, \hat f)$ such that $\$\hat \Pi^{(\eps)};\hat \Pi\$ \to 0$ in probability,
and then we show that $\|\hat \Pi^{(\eps)}\|_\eps \to 0$ in probability. 
If we restrict ourselves to the sector $\CT_- \subset \CT$ spanned
by basis vectors in $\CW$ with negative (or vanishing) homogeneity, the first step
follows in exactly the same way as in \cite{Regularity,Etienne},
using \cite[Thm~10.7]{Regularity}.
This by itself does however not yet yield convergence on all of $\CT$. The reason 
for this is that it contains basis vectors of the form $\bar \tau = \Eps^k(\tau)$ with
$|\Eps^k(\tau)| > 0$. These do not satisfy the assumptions of \cite[Prop.~3.31]{Regularity} since 
one does not have any \textit{a priori} control over the components of
$\Gamma_{zz'}\bar \tau$. (Unlike in \cite{Regularity,Etienne} where, for vectors of the
form $\bar \tau = \CI(\tau)$, such a control was given by \cite[Thm~5.14]{Regularity}.) 

Note now that, by the definition \eqref{e:defCWbar}, all of the vectors of the form 
$\tau = \Eps^k(\bar \tau)$
appearing in $\bar \CW$ have $|\bar \tau| < 0$ (or $\bar \tau = \one$, but this case is
trivial). By simple inspection, we see that those
vectors such that furthermore $|\tau| > 0$ are necessarily of the form
\begin{equ}[e:specialTau]
\tau = \Eps^{j-1}(\bar \tau)\;,\qquad \bar \tau = \Psi^{2j-n} \CI'(\Eps^{k_1-1}\Psi^{2k_i})\cdots \CI'(\Eps^{k_\ell-1}\Psi^{2k_\ell})\;,
\end{equ}
with $n > 2$, $j \in \{\lceil n/2\rceil,\ldots,m\}$, and $k_i \in \{1,\ldots,m\}$.
At this stage, we note that since $|\CI'(\Eps^{k-1}\Psi^{2k})| < 0$, $|\Eps^{k-1}\Psi^{2k}| < 0$, and $|\Psi| < 0$, the structure group
acts trivially on $\bar \tau$, so that one has the identity
\begin{equ}[e:equalfeps]
f_z^{(\eps)}(\EE^k_\ell(\tau)) = D_z^\ell f_z\bigl(\EE^k_0(\tau)\bigr)\;.
\end{equ}
Setting $g(z) = f_z\bigl(\EE^k_0(\tau)\bigr)$ as a shorthand, it then follows
from \eqref{e:exprgammazbarz} that
\begin{equ}[e:boundgamma]
\gamma_{z\bar z}(\EE^k_\ell(\tau)) = g(\bar z) - \sum_{|m| < |\tau|+k-|\ell|} {(\bar z - z)^m \over m!}
D^{(m)}g(z)\;.
\end{equ}
It follows immediately from the Kolmogorov continuity test combined with \eqref{e:equalfeps} 
that the bound \eqref{e:assumf} implies not only that \eqref{e:assumfeps} holds, but also
that the required convergence of $\hat \Pi^{(\eps)}$ holds on every element $\tau$ of the form
\eqref{e:specialTau}. Through \eqref{e:boundgamma}, it also yields the missing bound
on $\Gamma_{z\bar z}$ on all of $\CT$.
From this point on, the proof that
$\$\hat \Pi^{(\eps)}; \hat \Pi\$ \to 0$ in probability as $\eps \to 0$ proceeds in 
exactly the same fashion as the proof of \cite[Thm~10.7]{Regularity}.

Since we have furthermore already shown that \eqref{e:assumfeps} holds, it only remains to show
that \eqref{e:assumPieps} holds as well. This however follows immediately from 
\eqref{e:assumPi}, using the 
equivalence of moments of random variables belonging to a fixed Wiener chaos in 
the same way as in \cite[Thm~10.7]{Regularity}, combined with the fact that
wavelet basis functions do indeed integrate to $0$.
\end{proof}

\subsection{Proof of Theorem~\ref{theo:convModel}}

\begin{proof}
As a consequence of Proposition~\ref{prop:momentBounds}, we only need to show 
that the bounds \eqref{e:requiredBounds} hold.
We start with the proof that \eqref{e:wantedKPZ} holds. Actually, as a consequence of
\cite[Thm~5.14]{Regularity}, we only require these bounds for symbols that are not of the
form $\CI(\tau)$ or $\CI'(\tau)$. Furthermore, it suffices to show
\eqref{e:wantedKPZ} for $z = 0$ by translation invariance, and most
of this section is devoted to this proof. We first consider those basis vectors 
that do not contain the symbol $\Eps$. 

\subsubsection[Term Psi^2]{The case \texorpdfstring{$\tau = \<2>$}{Psi^2}}

The first non-trivial symbol is given by $\tau = \<2>$. In order to represent 
the random variable $\bigl(\hat \Pi_0^{(\eps)}\tau\bigr)(\phi_0^\lambda)$ for some
test function $\phi$, we make use of the following graphical notation, which is 
essentially the same as in \cite{Etienne}.
Elements belonging to the $k$th Wiener chaos are represented by kernels with $k$ space-time 
arguments, via the map $f \mapsto I_k(f)$ described in \cite[Ch.~1.1.2]{Nualart}.
We will sometimes represent such a kernel by a graph which contains $k$ distinguished 
vertices of the type \tikz[baseline=-3] \node [var] {};, 
each of them representing one of the arguments of the kernel. A special vertex 
\tikz[baseline=-3] \node [root] {}; represents the origin $0$. All other vertices
represent integration variables.

Each line then represents a kernel, with 
\tikz[baseline=-0.1cm] \draw[kernel] (0,0) to (1,0);
representing the kernel $K'$, 
\tikz[baseline=-0.1cm] \draw[keps] (0,0) to (1,0);
representing the kernel $K_\eps' = \rho_\eps * K'$, and
\tikz[baseline=-0.1cm] \draw[testfcn] (1,0) to (0,0);
representing a generic test function $\phi_0^\lambda$ rescaled to 
scale $\lambda$.
Whenever we draw a barred arrow 
\tikz[baseline=-0.1cm] \draw[kernel1] (0,0) to (1,0);
this represents a factor $K'(\bar z- z) - K'(-z)$, where
$z$ and $\bar z$ are the coordinates of the starting and end
point respectively.

With these graphical notations at hand, we have the following expression for the
unrenormalised model $\Pi_0^{(\eps)}$:
\begin{equ}
\bigl(\Pi_0^{(\eps)}\<2>\bigr)(\phi_0^\lambda)
= 
\begin{tikzpicture}[scale=0.35,baseline=0.3cm]
	\node at (0,-1)  [root] (root) {};
	\node at (0,1)  [dot] (int) {};
	\node at (-1,3.5)  [var] (left) {};
	\node at (1,3.5)  [var] (right) {};
	
	\draw[testfcn] (int) to  (root);
	
	\draw[keps] (left) to (int);
	\draw[keps] (right) to (int);
\end{tikzpicture}\;
+ \;
\begin{tikzpicture}[scale=0.35,baseline=0.3cm]
	\node at (0,-1)  [root] (root) {};
	\node at (0,1)  [dot] (int) {};
	\node at (0,3.5) [dot] (top) {};
	
	\draw[testfcn] (int) to  (root);
	
	\draw[keps] (top) to[bend left=60] (int); 
	\draw[keps] (top) to[bend right=60] (int); 
\end{tikzpicture}\;.
\end{equ}
Here, via the correspondence explained above, the first term represents the element 
$I_2(f)$ of the second Wiener chaos associated to the kernel
\begin{equ}
f(z_1,z_2) = \int \phi_0^\lambda(z)K_\eps'(z-z_1)K_\eps'(z-z_2)\,dz\;,
\end{equ}
while the second term represents the constant
\begin{equ}
\int\int \phi_0^\lambda(z)\bigl(K_\eps'(z-\bar z)\bigr)^2\,d\bar z\,dz\;.
\end{equ}
All variables $z$, $\bar z$, etc appearing in these expressions denote space-time variables.

At this stage, we realise that for $\eps$ sufficiently small, the second term is 
identical to $C_0^{(\eps)} \int \phi_0^\lambda(z)\,dz = C_0^{(\eps)}\bigl(\Pi_0^{(\eps)}\one\bigr)(\phi_0^\lambda)$. As a consequence, this term cancels out exactly in the definition of $\hat \Pi_0^{(\eps)}\<2>$
and we have 
\begin{equ}[e:exprPi2]
\bigl(\hat \Pi_0^{(\eps)}\<2>\bigr)(\phi_0^\lambda)
= 
\begin{tikzpicture}[scale=0.35,baseline=0.3cm]
	\node at (0,-1)  [root] (root) {};
	\node at (0,1)  [dot] (int) {};
	\node at (-1,3.5)  [var] (left) {};
	\node at (1,3.5)  [var] (right) {};
	
	\draw[testfcn] (int) to  (root);
	
	\draw[keps] (left) to (int);
	\draw[keps] (right) to (int);
\end{tikzpicture}\;.
\end{equ}
We now argue that we can find random variables $\bigl(\Pi_0\<2>\bigr)(\phi_0^\lambda)$
so that the bound \eqref{e:wantedKPZ} does indeed hold. Note first that as a consequence 
of \cite[Ch.~1.1.2]{Nualart} and of the fact that symmetrisation
is a projection in $L^2$, a random variable $X$ belonging to the $k$th homogeneous Wiener chaos and 
represented by a kernel $K_X$ satisfies $\E X^2 \le k! \|K_X\|^2_{L^2}$. As a consequence
of \eqref{e:exprPi2}, one therefore has the bound
\begin{equ}[e:bound2]
\E |\bigl(\hat \Pi_0^{(\eps)}\<2>\bigr)(\phi_0^\lambda)|^2
\le 2\;
\begin{tikzpicture}[scale=0.35,baseline=0.25cm]
	\node at (0,-1)  [root] (root) {};
	\node at (-2,1)  [dot] (intl) {};
	\node at (2,1)  [dot] (intr) {};
	\node at (0,1)  [dot] (top1) {};
	\node at (0,3)  [dot] (top2) {};
	
	\draw[testfcn] (intl) to  (root);
	\draw[testfcn] (intr) to  (root);
	
	\draw[keps] (top1) to (intl);
	\draw[keps] (top1) to (intr);
	\draw[keps] (top2) to (intl);
	\draw[keps] (top2) to (intr);
\end{tikzpicture}\;.
\end{equ}
Furthermore, using the explicit form of the heat kernel, one can verify that the kernel $K_\eps'$
satisfies  
\begin{equ}
\sup_{\eps \in (0,1]} \|K_\eps'\|_{2;p} < \infty\;,
\end{equ}
where $\|\cdot\|_{\alpha;p}$ is given by \eqref{e:boundK} below.
(In particular, it also satisfies the same bound with $2$ replaced by $2+\kappa/2$.)
The right hand side of \eqref{e:bound2} is therefore precisely of the form 
$\CI_\lambda^\CG(K)$ for some
collection of kernels $K$ satisfying the assumptions of 
Section~\ref{sec:bounds} uniformly over $\eps \in (0,1]$
and for the labelled graph $\CG$ given by
\begin{equ}[e:labelledG]
\CG = 
\begin{tikzpicture}[scale=0.4,baseline=0.3cm]
	\node at (0,-1)  [root] (root) {};
	\node at (-3,1)  [dot] (intl) {};
	\node at (3,1)  [dot] (intr) {};
	\node at (0,1)  [dot] (top1) {};
	\node at (0,3)  [dot] (top2) {};
	
	\draw[dist] (intl) to (root);
	\draw[dist] (intr) to (root);
	
	\draw[->,generic] (top1) to  node[labl,pos=0.45] {\tiny 2+,0} (intl);
	\draw[->,generic] (top1) to node[labl,pos=0.45] {\tiny 2+,0} (intr);
	\draw[->,generic] (top2) to node[labl] {\tiny 2+,0} (intl); 
	\draw[->,generic] (top2) to  node[labl] {\tiny 2+,0} (intr); 
\end{tikzpicture}\;.
\end{equ}
(Here, the label $(2+,0)$ on an edge $e$ means that $a_e = 2+\kappa/2$ and $r_e = 0$.)
It is straightforward to verify that the assumptions of Theorem~\ref{theo:ultimate} are 
satisfied, so that one has the bound 
$\E |\bigl(\hat \Pi_0^{(\eps)}\<2>\bigr)(\phi_0^\lambda)|^2 \lesssim \lambda^{\alpha}$
where 
\begin{equ}
\alpha = \#\{{\rm  vertices~not ~adjacent~to~root}\} |s|- \sum_e a_e
= 2\cdot 3 - 8 - 2\kappa =  2|\<2>| +2\kappa\;,
\end{equ}
since the homogeneity of \<2> is $|\<2>| = -1-2\kappa$.
This bound holds uniformly over $\eps \in (0,1]$, so that it is indeed the required
first bound in \eqref{e:wantedKPZ}.

\begin{remark}
From now on, whenever we write $\CI_\lambda^\CG$ without specifying a collection
of kernels $K$, we really mean ``$\CI_\lambda^\CG(K)$ for a collection
of kernels $K$ satisfying the assumptions of 
Section~\ref{sec:bounds} uniformly over $\eps \in (0,1]$''.
\end{remark}

 We still need to 
obtain the second bound in \eqref{e:wantedKPZ}. This however can be obtained in exactly the
same way as soon as we note that, when considering the difference between $\Pi_0$ and 
$\hat \Pi^{(\eps)}_0$, we obtain a sum of expressions of the type \eqref{e:exprPi2},
but in each term some of the instances of $K_\eps'$ are replaced by $K'$
and exactly one instance is replaced by $K_\eps' - K'$. 
We then use the fact that $K'$ satisfies the same bound as $K_\eps'$, while $K_\eps' - K'$
satisfies the improved bound
\begin{equ}
\|K_\eps' - K'\|_{2+\kappa/2;p} \lesssim \eps^{\kappa/2}\;,
\end{equ}
as a consequence of \cite[Lem.~10.17]{Regularity}.
This is the reason for using labels $2+ \kappa/2$ in \eqref{e:labelledG} rather than
$2$, since although $\sup_{\eps \in (0,1]}\|K_\eps'\|_{2+\kappa/2;p} < \infty$,
one has $\|K_\eps' - K'\|_{2;p} \not\to 0$ as $\eps \to 0$.
This is the same for all of the symbols, so we only ever explicitly show how to obtain
the first bound in \eqref{e:wantedKPZ} with the understanding that the second bound then
follows in the same way.

\subsubsection[Term PsiIPsi]{The case \texorpdfstring{$\tau = \<11>$}{PsiIPsi}}

We now turn to $\tau = \<11>$. This time, one has $\hat \Pi_0^{(\eps)}\<11> = \Pi_0^{(\eps)}\<11>$ so that,
similarly to before, we have the identity
\begin{equ}[e:iden1]
\bigl(\hat \Pi_0^{(\eps)}\<11>\bigr)(\phi_0^\lambda)
= 
\begin{tikzpicture}[scale=0.35,baseline=0.3cm]
	\node at (0,-1)  [root] (root) {};
	\node at (0,1)  [dot] (int) {};
	\node at (-1.5,1)  [dot] (centre) {};
	\node at (-1.5,3.5)  [var] (left) {};
	\node at (0,3.5)  [var] (right) {};
	
	\draw[testfcn] (int) to  (root);
	
	\draw[keps] (left) to (centre);
	\draw[kernel1] (centre) to (int);
	\draw[keps] (right) to (int);
\end{tikzpicture}\;
+ \;
\begin{tikzpicture}[scale=0.35,baseline=0.3cm]
	\node at (0,-1)  [root] (root) {};
	\node at (0,1)  [dot] (int) {};
	\node at (-1,2.25) [dot] (cent) {};
	\node at (0,3.5) [dot] (top) {};
	
	\draw[testfcn] (int) to  (root);
	
	\draw[keps] (top) to[bend left=60] (int); 
	\draw[keps] (top) to (cent); 
	\draw[kernel1] (cent) to (int); 
\end{tikzpicture}\;
= \;
\begin{tikzpicture}[scale=0.35,baseline=0.3cm]
	\node at (0,-1)  [root] (root) {};
	\node at (0,1)  [dot] (int) {};
	\node at (-1.5,1)  [dot] (centre) {};
	\node at (-1.5,3.5)  [var] (left) {};
	\node at (0,3.5)  [var] (right) {};
	
	\draw[testfcn] (int) to  (root);
	
	\draw[keps] (left) to (centre);
	\draw[kernel1] (centre) to (int);
	\draw[keps] (right) to (int);
\end{tikzpicture}\;
- \;
\begin{tikzpicture}[scale=0.35,baseline=0.3cm]
	\node at (0.75,-1)  [root] (root) {};
	\node at (0,1)  [dot] (int) {};
	\node at (1.5,1) [dot] (cent) {};
	\node at (0.75,3) [dot] (top) {};
	
	\draw[testfcn] (int) to  (root);
	
	\draw[keps] (top) to (int); 
	\draw[keps] (top) to (cent); 
	\draw[kernel] (cent) to (root); 
\end{tikzpicture}\;.
\end{equ}
In order to see this, recall that the barred arrow represents a difference $K'(\bar z - z) - K'(-z)$, so that one
has the identity
\begin{equ}
\begin{tikzpicture}[scale=0.35,baseline=0.3cm]
	\node at (0,-1)  [root] (root) {};
	\node at (0,1)  [dot] (int) {};
	\node at (-1,2.25) [dot] (cent) {};
	\node at (0,3.5) [dot] (top) {};
	
	\draw[testfcn] (int) to  (root);
	
	\draw[keps] (top) to[bend left=60] (int); 
	\draw[keps] (top) to (cent); 
	\draw[kernel1] (cent) to (int); 
\end{tikzpicture}
\;=\;
\begin{tikzpicture}[scale=0.35,baseline=0.3cm]
	\node at (0,-1)  [root] (root) {};
	\node at (0,1)  [dot] (int) {};
	\node at (-1,2.25) [dot] (cent) {};
	\node at (0,3.5) [dot] (top) {};
	
	\draw[testfcn] (int) to  (root);
	
	\draw[keps] (top) to[bend left=60] (int); 
	\draw[keps] (top) to (cent); 
	\draw[kernel] (cent) to (int); 
\end{tikzpicture}
\;- \;
\begin{tikzpicture}[scale=0.35,baseline=0.3cm]
	\node at (0.75,-1)  [root] (root) {};
	\node at (0,1)  [dot] (int) {};
	\node at (1.5,1) [dot] (cent) {};
	\node at (0.75,3) [dot] (top) {};
	
	\draw[testfcn] (int) to  (root);
	
	\draw[keps] (top) to (int); 
	\draw[keps] (top) to (cent); 
	\draw[kernel] (cent) to (root); 
\end{tikzpicture}\;.
\end{equ}
The first term appearing on the right hand side of this expression vanishes
because the kernel $(K'* K_\eps')\cdot K'$ is odd under the substitution $(t,x) \mapsto (t,-x)$ (recall that we assumed that the mollifier $\rho$ 
is even under that substitution), 
so that it integrates to $0$, thus yielding \eqref{e:iden1}.

Since random variables belonging to Wiener chaoses of different order are orthogonal,
we obtain as before the bound
\begin{equ}
\E \bigl|\bigl(\hat \Pi_0^{(\eps)}\<11>\bigr)(\phi_0^\lambda)\bigr|^2
\le 2\;
\begin{tikzpicture}[scale=0.35,baseline=0.25cm]
	\node at (0,-1)  [root] (root) {};
	\node at (-2,1)  [dot] (intl) {};
	\node at (2,1)  [dot] (intr) {};
	\node at (0,1)  [dot] (top1) {};
	\node at (0,3)  [dot] (top2) {};
	\node at (-2,3)  [dot] (topl) {};
	\node at (2,3)  [dot] (topr) {};
	
	\draw[testfcn] (intl) to  (root);
	\draw[testfcn] (intr) to  (root);
	
	\draw[keps] (top1) to (intl);
	\draw[keps] (top1) to (intr);
	\draw[keps] (top2) to (topl);
	\draw[keps] (top2) to (topr);
	\draw[kernel1] (topl) to (intl);
	\draw[kernel1] (topr) to (intr);
\end{tikzpicture}
+ \left(
\begin{tikzpicture}[scale=0.35,baseline=0.3cm]
	\node at (0.75,-1)  [root] (root) {};
	\node at (0,1)  [dot] (int) {};
	\node at (1.5,1) [dot] (cent) {};
	\node at (0.75,3) [dot] (top) {};
	
	\draw[testfcn] (int) to  (root);
	
	\draw[keps] (top) to (int); 
	\draw[keps] (top) to (cent); 
	\draw[kernel] (cent) to (root); 
\end{tikzpicture}
\right)^2\;.
\end{equ}
Both terms separately can be bounded in the same way as before.  This time however
the first term is given by $\CI_\lambda^\CG$ for the graph 
\begin{equ}
\CG = 
\begin{tikzpicture}[scale=0.4,baseline=0.3cm]
	\node at (0,-1)  [root] (root) {};
	\node at (-3,1)  [dot] (intl) {};
	\node at (3,1)  [dot] (intr) {};
	\node at (0,1)  [dot] (top1) {};
	\node at (0,3)  [dot] (top2) {};
	\node at (-3,3)  [dot] (topl) {};
	\node at (3,3)  [dot] (topr) {};
	
	\draw[dist] (intl) to (root);
	\draw[dist] (intr) to (root);
	
	\draw[->,generic] (top1) to  node[labl,pos=0.45] {\tiny 2+,0} (intl);
	\draw[->,generic] (top1) to node[labl,pos=0.45] {\tiny 2+,0} (intr);
	\draw[->,generic] (top2) to node[labl] {\tiny 2+,0} (topl); 
	\draw[->,generic] (top2) to  node[labl] {\tiny 2+,0} (topr); 
	\draw[->,generic] (topl) to node[labl] {\tiny 2,1} (intl); 
	\draw[->,generic] (topr) to  node[labl] {\tiny 2,1} (intr); 
\end{tikzpicture}\;,
\end{equ}
i.e.\ the two vertical edges have $r_e = 1$. Again, it is straightforward to verify
that Assumption~\ref{ass:mainGraph} is verified, so that the required bounds
follow.

\subsubsection[Term PsiIPsi]{The case \texorpdfstring{$\tau = \<21>$}{PsiIPsi}}
\label{sec:psiIpsi}

We now turn to $\tau = \<21>$ which is slightly trickier. One can verify from its recursive definition that
the structure group acts trivially on \<21>, so that 
one has similarly to before the identity
\begin{equ}[e:expr21]
\bigl(\hat \Pi_0^{(\eps)}\<21>\bigr)(\phi_0^\lambda)
=
\begin{tikzpicture}[scale=0.35,baseline=1.1cm]
	\node at (0.8,1.5)  [root] (root) {};
	\node at (-1.3,3)  [dot] (left1) {};
	\node at (-1.3,5)  [dot] (left2) {};
	\node at (0.8,3) [var] (variable2) {};
	\node at (0.8,4.3) [var] (variable3) {};
	\node at (0.8,5.7) [var] (variable4) {};
	
	\draw[testfcn] (left1) to (root);

	\draw[kernel] (left2) to  (left1);
	\draw[keps] (variable2) to  (left1); 
	\draw[keps] (variable3) to  (left2); 
	\draw[keps] (variable4) to (left2);
\end{tikzpicture}
+ 2\;
\begin{tikzpicture}[scale=0.35,baseline=1.1cm]
	\node at (0.8,1.5)  [root] (root) {};
	\node at (-1.3,3)  [dot] (left1) {};
	\node at (-1.3,5)  [dot] (left2) {};
	\node at (0.8,4) [dot] (variable2) {};
	\node at (0.8,5.7) [var] (variable3) {};
	
	\draw[testfcn] (left1) to (root);

	\draw[kernel] (left2) to  (left1);
	\draw[keps] (variable2) to  (left1); 
	\draw[keps] (variable3) to  (left2); 
	\draw[keps] (variable2) to (left2);
\end{tikzpicture}\;,
\end{equ}
where the second term comes from the product formula
\cite{Nualart}. This time, it turns out that when trying to
``na\"\i vely'' apply Theorem~\ref{theo:ultimate}, its conditions fail to be
satisfied for the second term. Denote however by $Q_\eps$ the kernel
\begin{equ}
Q_\eps(z) = \begin{tikzpicture}[scale=0.35,baseline=1.3cm]
	\node at (-1.3,3)  [var] (left1) {};
	\node at (-1.3,5)  [root] (left2) {};
	\node at (0.8,4) [dot] (variable2) {};

	\draw[kernel] (left2) to  (left1);
	\draw[keps] (variable2) to  (left1); 
	\draw[keps] (variable2) to (left2);
\end{tikzpicture}
= K(z) \int K_\eps(z-\bar z)K_\eps(-\bar z)\,d\bar z\;.
\end{equ}
It then follows by symmetry as above that $\int Q_\eps(z) \,dz = 0$. As a consequence,
for any $\eps > 0$, the distribution $\Ren Q_\eps(z)$ given by \eqref{e:defRen} with $I_{e,k} = 0$
is exactly \textit{the same}
as simple integration against $Q_\eps$, without any renormalisation. 
Furthermore, it follows easily from \cite[Sec.~10]{Regularity} that there is a limiting kernel $Q$
such that $\sup_{\eps \in (0,1]}\|Q_\eps\|_{3,p} < \infty$
and $\|Q_\eps - Q\|_{3+\kappa,p} \lesssim \eps^\kappa$.
Writing \tikz[baseline=-0.1cm] \draw[kernelBig] (0,0) -- (1,0); as a graphical notation for the
kernel $Q_\eps = \Ren Q_\eps$, we can rewrite \eqref{e:expr21} as
\begin{equ}
\bigl(\hat \Pi_0^{(\eps)}\<21>\bigr)(\phi_0^\lambda)
=
\begin{tikzpicture}[scale=0.35,baseline=1.1cm]
	\node at (0.8,1.5)  [root] (root) {};
	\node at (-1.3,3)  [dot] (left1) {};
	\node at (-1.3,5)  [dot] (left2) {};
	\node at (0.8,3) [var] (variable2) {};
	\node at (0.8,4.3) [var] (variable3) {};
	\node at (0.8,5.7) [var] (variable4) {};
	
	\draw[testfcn] (left1) to (root);

	\draw[kernel] (left2) to  (left1);
	\draw[keps] (variable2) to  (left1); 
	\draw[keps] (variable3) to  (left2); 
	\draw[keps] (variable4) to (left2);
\end{tikzpicture}
+ 2\;
\begin{tikzpicture}[scale=0.35,baseline=1.1cm]
	\node at (0.8,1.5)  [root] (root) {};
	\node at (-1.3,3)  [dot] (left1) {};
	\node at (-1.3,5)  [dot] (left2) {};
	\node at (0.8,5.7) [var] (variable3) {};
	
	\draw[testfcn] (left1) to (root);

	\draw[kernelBig] (left2) to  (left1);
	\draw[keps] (variable3) to  (left2); 
\end{tikzpicture}\;
\end{equ}
and  bound
$
\E \bigl|\bigl(\hat \Pi_0^{(\eps)}\<21>\bigr)(\phi_0^\lambda)\bigr|^2$ by a constant multiple of 
$
|\CI_\lambda^\CG| + |\CI_\lambda^{\bar \CG}|$ for graphs $\CG$ and $\bar \CG$ given by
\begin{equ}
\CG = 
\begin{tikzpicture}[scale=0.35,baseline=1.2cm]
	\node at (0,1.5)  [root] (root) {};
	\node at (-3,3)  [dot] (left1) {};
	\node at (-3,6)  [dot] (left2) {};
	\node at (3,3)  [dot] (right1) {};
	\node at (3,6)  [dot] (right2) {};
	\node at (0,4) [dot] (variable2) {};
	\node at (0,5) [dot] (variable3) {};
	\node at (0,7) [dot] (variable4) {};
	
	\draw[dist] (left1) to (root);
	\draw[dist] (right1) to (root);
	
	\draw[->,generic] (left2) to node[labl,pos=0.45] {\tiny 2,0} (left1);
	\draw[->,generic] (variable2) to node[labl] {\tiny 2+,0} (left1); 
	\draw[->,generic] (variable3) to  node[labl] {\tiny 2+,0} (left2); 
	\draw[->,generic] (variable4) to  node[labl] {\tiny 2+,0} (left2);
	
	\draw[->,generic] (right2) to node[labl,pos=0.45] {\tiny 2,0} (right1);
	\draw[->,generic] (variable2) to node[labl] {\tiny 2+,0} (right1); 
	\draw[->,generic] (variable3) to  node[labl] {\tiny 2+,0} (right2); 
	\draw[->,generic] (variable4) to  node[labl] {\tiny 2+,0} (right2);
\end{tikzpicture}
\;,\qquad \bar \CG = 
\begin{tikzpicture}[scale=0.35,baseline=1.2cm]
	\node at (0,1.5)  [root] (root) {};
	\node at (-3,3)  [dot] (left1) {};
	\node at (-3,6)  [dot] (left2) {};
	\node at (3,3)  [dot] (right1) {};
	\node at (3,6)  [dot] (right2) {};
	\node at (0,7) [dot] (variable3) {};
	
	\draw[dist] (left1) to (root);
	\draw[dist] (right1) to (root);
	
	\draw[->,generic] (left2) to node[labl,pos=0.45] {\tiny 3+,-1} (left1);
	\draw[->,generic] (variable3) to  node[labl] {\tiny 2+,0} (left2); 
	
	\draw[->,generic] (right2) to node[labl,pos=0.45] {\tiny 3+,-1} (right1);
	\draw[->,generic] (variable3) to  node[labl] {\tiny 2+,0} (right2); 
\end{tikzpicture}
\;.
\end{equ}
Again, Assumption~\ref{ass:mainGraph} can easily be checked for both of these graphs
so that, in view of the above comments, Theorem~\ref{theo:ultimate} applies and yields the desired bounds.

\subsubsection[Term IPsi22]{The case \texorpdfstring{$\tau = \<22>$}{IPsi22}}

Again, the structure group acts trivially on $\<22>$ and one has the identity
$\DeltaW M_0 \<22> = \bigl(\<22> - C_3^{(\eps)} \one\bigr) \otimes \one$. As a consequence,
we obtain the identity
\begin{equ}
\hat \Pi_0^{(\eps)}\<22> = \bigl(K' *  \Pi_0^{(\eps)}\<2>\bigr)^2 - 2C_3^{(\eps)}\;.
\end{equ}
When testing against the test function $\phi_0^\lambda$, it follows from the product formula 
and the definition of $C_3^{(\eps)}$ that
the Wiener chaos decomposition of this expression is given by
\begin{equ}
\bigl(\hat \Pi_0^{(\eps)}\<22>\bigr)(\phi_0^\lambda)
=
\begin{tikzpicture}[scale=0.35,baseline=.9cm]
	\node at (0,0)  [root] (root) {};
	\node at (0,2)  [dot] (int) {};
	\node at (-2,3)  [dot] (left) {};
	\node at (-1,5) [var] (variable3) {};
	\node at (-3,5) [var] (variable4) {};

	\node at (2,3)  [dot] (right) {};
	\node at (1,5) [var] (variable1) {};
	\node at (3,5) [var] (variable2) {};
	
	\draw[testfcn] (int) to (root);

	\draw[kernel] (left) to  (int);
	\draw[kernel] (right) to  (int);
	\draw[keps] (variable2) to  (right); 
	\draw[keps] (variable1) to  (right); 
	
	\draw[keps] (variable3) to  (left); 
	\draw[keps] (variable4) to (left);
\end{tikzpicture}
+ 4\;
\begin{tikzpicture}[scale=0.35,baseline=.9cm]
	\node at (0,0)  [root] (root) {};
	\node at (0,2)  [dot] (int) {};
	\node at (-2,3)  [dot] (left) {};
	\node at (-3,5) [var] (variable4) {};

	\node at (2,3)  [dot] (right) {};
	\node at (0,4) [dot] (top) {};
	\node at (3,5) [var] (variable2) {};
	
	\draw[testfcn] (int) to (root);

	\draw[kernel] (left) to  (int);
	\draw[kernel] (right) to  (int);
	\draw[keps] (variable2) to  (right); 
	\draw[keps] (top) to  (right); 
	
	\draw[keps] (top) to  (left); 
	\draw[keps] (variable4) to (left);
\end{tikzpicture}\;.
\end{equ}
Note that the term appearing in the Wiener chaos of order $0$ is cancelled out 
exactly by the renormalisation constant $C_3^{(\eps)}$, which is why it does not appear here.
Similarly to before, it is now straightforward to reduce ourselves to the 
situation of Theorem~\ref{theo:ultimate} and to verify that Assumption~\ref{ass:mainGraph}
holds for the two resulting labelled graphs.

\subsubsection[Term IPsi22]{The case \texorpdfstring{$\tau = \<211>$}{IPsi22}}

This time the structure group acts nontrivially on \<211> and it follows from 
\eqref{e:admissible4} combined with the definition of the renormalisation map $M^{(\eps)}$
that
\begin{equ}
\hat \Pi_0^{(\eps)}\<211> = \bigl(\bigl(K' *  \hat \Pi_0^{(\eps)}\<21>\bigr)(\cdot) - \bigl(K' *  \hat \Pi_0^{(\eps)}\<21>\bigr)(0)\bigr)(K'*\xi^{(\eps)}) - 2C_2^{(\eps)}\;.
\end{equ}
As a consequence, one has the identity
\begin{equ} 
\bigl(\hat \Pi_0^{(\eps)}\<211>\bigr)(\phi_0^\lambda)
=
\begin{tikzpicture}[scale=0.35,baseline=0.8cm]
	\node at (0,-0.8)  [root] (root) {};
	\node at (-2,1)  [dot] (left) {};
	\node at (-2,3)  [dot] (left1) {};
	\node at (-2,5)  [dot] (left2) {};
	\node at (0,1) [var] (variable1) {};
	\node at (0,3) [var] (variable2) {};
	\node at (0,4.3) [var] (variable3) {};
	\node at (0,5.7) [var] (variable4) {};
	
	\draw[testfcn] (left) to (root);

	\draw[kernel1] (left1) to   (left);
	\draw[kernel] (left2) to  (left1);
	\draw[keps] (variable2) to  (left1); 
	\draw[keps] (variable1) to   (left); 
	\draw[keps] (variable3) to   (left2); 
	\draw[keps] (variable4) to   (left2);
\end{tikzpicture}
\;+\;
\begin{tikzpicture}[scale=0.35,baseline=0.8cm]
	\node at (0,-0.8)  [root] (root) {};
	\node at (-2,1)  [dot] (left) {};
	\node at (-2,3)  [dot] (left1) {};
	\node at (-2,5)  [dot] (left2) {};
	\node at (0,2) [dot] (variable1) {};
	\node at (0,4.3) [var] (variable3) {};
	\node at (0,5.7) [var] (variable4) {};
	
	\draw[testfcn] (left) to (root);

	\draw[kernel1] (left1) to   (left);
	\draw[kernel] (left2) to  (left1);
	\draw[keps] (variable1) to  (left1); 
	\draw[keps] (variable1) to   (left); 
	\draw[keps] (variable3) to   (left2); 
	\draw[keps] (variable4) to   (left2);
\end{tikzpicture}
\;+2\;
\begin{tikzpicture}[scale=0.35,baseline=0.8cm]
	\node at (0,-0.8)  [root] (root) {};
	\node at (-2,1)  [dot] (left) {};
	\node at (-2,3)  [dot] (left1) {};
	\node at (-2,5)  [dot] (left2) {};
	\node at (0,1) [var] (variable1) {};
	\node at (0,4) [dot] (variable3) {};
	\node at (0,5.7) [var] (variable4) {};
	
	\draw[testfcn] (left) to (root);

	\draw[kernel1] (left1) to   (left);
	\draw[kernel] (left2) to  (left1);
	\draw[keps] (variable3) to  (left1); 
	\draw[keps] (variable1) to   (left); 
	\draw[keps] (variable3) to   (left2); 
	\draw[keps] (variable4) to   (left2);
\end{tikzpicture}
\;+2\;
\begin{tikzpicture}[scale=0.35,baseline=0.8cm]
	\node at (0,-0.8)  [root] (root) {};
	\node at (-2,1)  [dot] (left) {};
	\node at (0,3)  [dot] (left1) {};
	\node at (-2,5)  [dot] (left2) {};
	\node at (0,5) [var] (variable1) {};
	\node at (-2,3) [dot] (variable3) {};
	\node at (-0.5,6) [var] (variable4) {};
	
	\draw[testfcn] (left) to (root);

	\draw[kernel1] (left1) to   (left);
	\draw[kernel] (left2) to  (left1);
	\draw[keps] (variable3) to  (left); 
	\draw[keps] (variable1) to   (left1); 
	\draw[keps] (variable3) to   (left2); 
	\draw[keps] (variable4) to   (left2);
\end{tikzpicture}
\;+2\;
\begin{tikzpicture}[scale=0.35,baseline=0.8cm]
	\node at (0,-0.8)  [root] (root) {};
	\node at (-2,1)  [dot] (left) {};
	\node at (0,3)  [dot] (left1) {};
	\node at (-2,5)  [dot] (left2) {};
	\node at (-2,3) [dot] (variable3) {};
	\node at (0,5.7) [dot] (variable4) {};
	
	\draw[testfcn] (left) to (root);

	\draw[kernel1] (left1) to   (left);
	\draw[kernel] (left2) to  (left1);
	\draw[keps] (variable3) to  (left); 
	\draw[keps] (variable4) to   (left1); 
	\draw[keps] (variable3) to   (left2); 
	\draw[keps] (variable4) to   (left2);
\end{tikzpicture}
- 2C_2^{(\eps)}\;
\begin{tikzpicture}[scale=0.35,baseline=-0.1cm]
	\node at (0,-0.8)  [root] (root) {};
	\node at (0,1)  [dot] (left) {};
	
	\draw[testfcn] (left) to (root);
\end{tikzpicture}\;.
\end{equ}
At this stage we not again that the last two terms cancel each other out, except
for the fact that one of the arrows in the penultimate term is ``barred''. 
Using agin the notation \tikz[baseline=-0.1cm] \draw[kernelBig] (0,0) -- (1,0);
for the kernel $Q_\eps$, we can therefore rewrite this as
\begin{equ} 
\bigl(\hat \Pi_0^{(\eps)}\<211>\bigr)(\phi_0^\lambda)
=
\begin{tikzpicture}[scale=0.35,baseline=0.8cm]
	\node at (0,-0.8)  [root] (root) {};
	\node at (-2,1)  [dot] (left) {};
	\node at (-2,3)  [dot] (left1) {};
	\node at (-2,5)  [dot] (left2) {};
	\node at (0,1) [var] (variable1) {};
	\node at (0,3) [var] (variable2) {};
	\node at (0,4.3) [var] (variable3) {};
	\node at (0,5.7) [var] (variable4) {};
	
	\draw[testfcn] (left) to (root);

	\draw[kernel1] (left1) to   (left);
	\draw[kernel] (left2) to  (left1);
	\draw[keps] (variable2) to  (left1); 
	\draw[keps] (variable1) to   (left); 
	\draw[keps] (variable3) to   (left2); 
	\draw[keps] (variable4) to   (left2);
\end{tikzpicture}
\;+\;
\begin{tikzpicture}[scale=0.35,baseline=0.8cm]
	\node at (0,-0.8)  [root] (root) {};
	\node at (-2,1)  [dot] (left) {};
	\node at (-2,3)  [dot] (left1) {};
	\node at (-2,5)  [dot] (left2) {};
	\node at (0,4.3) [var] (variable3) {};
	\node at (0,5.7) [var] (variable4) {};
	
	\draw[testfcn] (left) to (root);

	\draw[kernelBig] (left1) to   (left);
	\draw[kernel] (left2) to  (left1);
	\draw[keps] (variable3) to   (left2); 
	\draw[keps] (variable4) to   (left2);
\end{tikzpicture}
\;-\;
\begin{tikzpicture}[scale=0.35,baseline=0.8cm]
	\node at (0,-0.8)  [root] (root) {};
	\node at (-2,1)  [dot] (left) {};
	\node at (0,3)  [dot] (left1) {};
	\node at (0,5)  [dot] (left2) {};
	\node at (-2,3) [dot] (variable1) {};
	\node at (-2,4.3) [var] (variable3) {};
	\node at (-2,5.7) [var] (variable4) {};
	
	\draw[testfcn] (left) to (root);

	\draw[kernel] (left1) to   (root);
	\draw[kernel] (left2) to  (left1);
	\draw[keps] (variable1) to  (left1); 
	\draw[keps] (variable1) to   (left); 
	\draw[keps] (variable3) to   (left2); 
	\draw[keps] (variable4) to   (left2);
\end{tikzpicture}
\;+2\;
\begin{tikzpicture}[scale=0.35,baseline=0.8cm]
	\node at (0,-0.8)  [root] (root) {};
	\node at (-2,1)  [dot] (left) {};
	\node at (-2,3)  [dot] (left1) {};
	\node at (-2,5)  [dot] (left2) {};
	\node at (0,1) [var] (variable1) {};
	\node at (0,5.7) [var] (variable4) {};
	
	\draw[testfcn] (left) to (root);

	\draw[kernel1] (left1) to   (left);
	\draw[kernelBig] (left2) to  (left1);
	\draw[keps] (variable1) to   (left); 
	\draw[keps] (variable4) to   (left2);
\end{tikzpicture}
\;+2\;
\begin{tikzpicture}[scale=0.35,baseline=0.8cm]
	\node at (0,-0.8)  [root] (root) {};
	\node at (-2,1)  [dot] (left) {};
	\node at (0,3)  [dot] (left1) {};
	\node at (-2,5)  [dot] (left2) {};
	\node at (0,5) [var] (variable1) {};
	\node at (-2,3) [dot] (variable3) {};
	\node at (-0.5,6) [var] (variable4) {};
	
	\draw[testfcn] (left) to (root);

	\draw[kernel1] (left1) to   (left);
	\draw[kernel] (left2) to  (left1);
	\draw[keps] (variable3) to  (left); 
	\draw[keps] (variable1) to   (left1); 
	\draw[keps] (variable3) to   (left2); 
	\draw[keps] (variable4) to   (left2);
\end{tikzpicture}
\;-2\;
\begin{tikzpicture}[scale=0.35,baseline=0.8cm]
	\node at (0,-0.8)  [root] (root) {};
	\node at (-2,1)  [dot] (left) {};
	\node at (0,3)  [dot] (left1) {};
	\node at (-2,5)  [dot] (left2) {};
	\node at (-2,3) [dot] (variable3) {};
	\node at (0,5.7) [dot] (variable4) {};
	
	\draw[testfcn] (left) to (root);

	\draw[kernel] (left1) to   (root);
	\draw[kernel] (left2) to  (left1);
	\draw[keps] (variable3) to  (left); 
	\draw[keps] (variable4) to   (left1); 
	\draw[keps] (variable3) to   (left2); 
	\draw[keps] (variable4) to   (left2);
\end{tikzpicture}\;.
\end{equ}
At this stage, we can once again reduce ourselves to the situation of Theorem~\ref{theo:ultimate} just
as above.

\subsubsection{Symbols containing \texorpdfstring{$\Eps$}{Eps}}
\label{sec:Eps}

We now turn to the proof of \eqref{e:wantedKPZ} for those symbols $\tau$ with $|\tau| < 0$
which contain at least one occurrence of the symbol $\Eps$. 

We first consider symbols of the type $\tau = \Eps^k(\Psi^{2k+2})$. Note that for $k = 0$, one
has $\tau = \<2>$, which has already been treated, so we assume that $k \ge 1$. 
Thanks to \eqref{e:Hermite}, the choice of renormalisation constant in the definition of
$M^{(\eps)}$, and the definition of the Wick product, one has the identity
\begin{equ}
\bigl(\hat \Pi_0^{(\eps)} \tau\bigr)(\phi_0^\lambda) = \bigl(\hat \Pi_0^{(\eps)} \Eps^k(\Psi^{2k+2})\bigr)(\phi_0^\lambda) = \eps^k \bigl((\hat \Pi_0^{(\eps)} \Psi)^{\diamond (2k+2)}\bigr)(\phi_0^\lambda)\;,
\end{equ}
which can also be written as
\begin{equ}[e:exprPi2eps]
\bigl(\hat \Pi_0^{(\eps)}\tau\bigr)(\phi_0^\lambda)
= \eps^k
\begin{tikzpicture}[scale=0.35,baseline=0.3cm]
	\node at (0,-1)  [root] (root) {};
	\node at (0,1)  [dot] (int) {};
	\node at (-2.5,3.5)  [var] (left) {};
	\node at (2.5,3.5)  [var] (right) {};
	\node at (0,3.5)  {$\cdots$};
	
	\draw[testfcn] (int) to  (root);
	
	\draw[keps] (left) to (int);
	\draw[keps] (right) to (int);
\draw [decorate,decoration={brace,amplitude=7pt}]
(-2.6,4) -- (2.6,4) node [midway,yshift=0.5cm] 
{\scriptsize $(2k+2)$ times};
\end{tikzpicture}\;.
\end{equ}
At this stage, we introduce the shorthand \tikz[baseline=-0.1cm] \draw[dots] (0,0) -- (1,0); for the kernel $N_\eps \eqdef \eps (K'*\rho_\eps)* (K'*\rho_\eps)(-\cdot)$, namely
\begin{equ}
\begin{tikzpicture}[scale=0.35,baseline=-0.1cm]
	\node at (-2,0)  [dot] (left) {};
	\node at (2,0)  [dot] (right) {};
	\draw[dots] (left) to (right);
\end{tikzpicture} = 
\eps\;
\begin{tikzpicture}[scale=0.35,baseline=-0.1cm]
	\node at (-3,0)  [dot] (left) {};
	\node at (0,0)  [dot] (cent) {};
	\node at (3,0)  [dot] (right) {};
	\draw[keps] (cent) to (left);
	\draw[keps] (cent) to (right);
\end{tikzpicture}
\;.
\end{equ}
(Since this kernel is symmetric, its orientation is irrelevant so we do not
draw any arrow on it.)
With this notation, we then obtain in a similar way to before the bound
\begin{equ}[e:boundSimpleMultiple]
\E \bigl|\bigl(\hat \Pi_0^{(\eps)}\tau\bigr)(\phi_0^\lambda)\bigr|^2
\le
\begin{tikzpicture}[scale=0.35,baseline=0.25cm]
	\node at (0,-1)  [root] (root) {};
	\node at (-3,1)  [dot] (intl) {};
	\node at (3,1)  [dot] (intr) {};
	\node at (0,3)  [dot] (top1) {};
	\node at (0,4.5)  [dot] (top2) {};
	
	\draw[testfcn,bend right=20] (intl) to  (root);
	\draw[testfcn,bend left=20] (intr) to  (root);
	
	\draw[keps,bend right=20] (top1) to (intl);
	\draw[keps,bend left=20] (top1) to (intr);
	\draw[keps,bend right=40] (top2) to (intl);
	\draw[keps,bend left=40] (top2) to (intr);
	\draw[dots] (intl) to node[labl] {\scriptsize $(2k)$} (intr) ;
\end{tikzpicture}\;,
\end{equ}
where we wrote
\tikz[baseline=-0.1cm] \draw[dots] (0,0) to node[labl]{\scriptsize $(2k)$} (1.8,0); as a shorthand
for $N_\eps^{2k}$ on the right.
We also note that, as a consequence of \cite[Lem.~10.17]{Regularity} and the scaling properties of 
$K'$, one has the bound 
\begin{equ}[e:boundNeps]
\|N_\eps\|_{\delta;p} \lesssim \eps^\delta\;,
\end{equ}\label{def_of_Neps}
for every $\delta \in (0,1]$ and every $p > 0$.
As a consequence, we are again in the setting of Theorem~\ref{theo:ultimate}, 
with a graph $\tilde \CG$ that is exactly the same as the graph $\CG$ in 
\eqref{e:labelledG}, except for an additional
edge with $a_e$ arbitrarily small connecting the left and right vertices. 
Since Assumption~\ref{ass:mainGraph} is an open condition\, any graph $\tilde \CG$ obtained
from another graph $\CG$ by the addition of some new edges with $a_e = \delta$ 
or the increase of the homogeneities of some edges by $\delta$ satisfies 
Assumption~\ref{ass:mainGraph} for $\delta$ sufficiently small, provided that 
the original graph $\CG$ satisfies it.
Combining this with \eqref{e:boundNeps}, it follows that one has the bound
\begin{equ}
\E \bigl|\bigl(\hat \Pi_0^{(\eps)}\tau\bigr)(\phi_0^\lambda)\bigr|^2
\lesssim \eps^\delta \lambda^{|\tau| + \delta}\;,
\end{equ} 
for some sufficiently small choice of $\delta$.
In particular, the bounds \eqref{e:wantedKPZ} are satisfied with $\hat \Pi_0 \tau = 0$.

Something similar happens for all other symbols containing at least one instance of $\Eps$.
Indeed, consider next $\tau = \Eps^{k}\bigl(\Psi^{2k+1}\CI'(\Eps^{\ell}\Psi^{2\ell+2})\bigr)$ with
$k, \ell \ge 0$,
which is the ``$\Eps$-decorated'' version of $\tau = \<21>$. As a consequence of 
\eqref{e:Hermite}, Proposition~\ref{prop:renormProd}, and the fact that
$\hat \Pi^{(\eps)}$ is again an admissible model by construction (see \cite[Sec.~8]{Regularity}),
we conclude that one has the identity
\begin{equ}[e:exprWick]
\bigl(\hat \Pi_0^{(\eps)}\tau\bigr)(z) = 
\eps^{k+\ell} (K' * \xi^{(\eps)})(z)^{\diamond (2k+1)} \bigl(K' * (K' * \xi^{(\eps)})^{\diamond (2\ell+2)}\bigr)(z)\;,
\end{equ}
where we use the symbol $\diamond$ to denote the Wick product (or rather Wick power in this
case), see \cite{Nualart}.
Similarly to above, the kernel $Q_\eps^{(m)} = Q_\eps N_\eps^m$ \label{defofqeps} is odd for 
every $m \ge 0$, so that it can again be identified with $\Ren Q_\eps^{(m)}$.
Furthermore, it is of order $3+\delta$ for any $\delta > 0$ and 
$\|Q_\eps^{(m)}\|_{3+\delta,p} \lesssim \eps^\delta$ for $\delta \in (0,1)$
provided that $m > 0$.
Combining this with \eqref{e:exprWick} we conclude that, for every 
sufficiently small exponent
$\delta, \bar \delta > 0$, one has again a bound of the type
\begin{equ}
\E \bigl|\bigl(\hat \Pi_0^{(\eps)}\tau\bigr)(\phi_0^\lambda)\bigr|^2
\le
|\CI_\lambda^\CG| + |\CI_\lambda^{\bar \CG}|\;,
\end{equ}
but this time the two labelled graphs $\CG$ and $\bar \CG$ are given by
\begin{equ}
\CG = 
\begin{tikzpicture}[scale=0.35,baseline=1.2cm]
	\node at (0,1.5)  [root] (root) {};
	\node at (-3,3)  [dot] (left1) {};
	\node at (-3,6)  [dot] (left2) {};
	\node at (3,3)  [dot] (right1) {};
	\node at (3,6)  [dot] (right2) {};
	\node at (0,4) [dot] (variable2) {};
	\node at (0,5) [dot] (variable3) {};
	\node at (0,7) [dot] (variable4) {};
	
	\draw[dist] (left1) to (root);
	\draw[dist] (right1) to (root);
	
	\draw[generic] (left2) to node[labl,pos=0.45] {\tiny $\delta$,0} (right2);
	\draw[generic] (left1) to node[labl,pos=0.45] {\tiny $\delta$,0} (right1);

	\draw[->,generic] (left2) to node[labl,pos=0.45] {\tiny 2+$\delta$,0} (left1);
	\draw[->,generic,bend right=15] (variable2) to node[labl] {\tiny 2+,0} (left1); 
	\draw[->,generic,bend left=15] (variable3) to  node[labl] {\tiny 2+,0} (left2); 
	\draw[->,generic,bend right=15] (variable4) to  node[labl] {\tiny 2+,0} (left2);
	
	\draw[->,generic] (right2) to node[labl,pos=0.45] {\tiny 2+$\delta$,0} (right1);
	\draw[->,generic,bend left=15] (variable2) to node[labl] {\tiny 2+,0} (right1); 
	\draw[->,generic,bend right=15] (variable3) to  node[labl] {\tiny 2+,0} (right2); 
	\draw[->,generic,bend left=15] (variable4) to  node[labl] {\tiny 2+,0} (right2);
\end{tikzpicture}
\;,\qquad \bar \CG =
\begin{tikzpicture}[scale=0.35,baseline=1.2cm]
	\node at (0,1.5)  [root] (root) {};
	\node at (-3,3)  [dot] (left1) {};
	\node at (-3,6)  [dot] (left2) {};
	\node at (3,3)  [dot] (right1) {};
	\node at (3,6)  [dot] (right2) {};
	\node at (0,7) [dot] (variable3) {};
	
	\draw[dist] (left1) to (root);
	\draw[dist] (right1) to (root);
	
	\draw[generic] (left2) to node[labl,pos=0.45] {\tiny $\delta$,0} (right2);
	\draw[generic] (left1) to node[labl,pos=0.45] {\tiny $\delta$,0} (right1);
	\draw[->,generic] (left2) to node[labl,pos=0.45] {\tiny 3+$\delta$,-1} (left1);
	\draw[->,generic,bend right=15] (variable3) to  node[labl] {\tiny 2+,0} (left2); 
	
	\draw[->,generic] (right2) to node[labl,pos=0.45] {\tiny 3+$\delta$,-1} (right1);
	\draw[->,generic,bend left=15] (variable3) to  node[labl] {\tiny 2+,0} (right2); 
\end{tikzpicture}\;.
\end{equ}
Furthermore, again as a consequence of the bound \eqref{e:boundNeps} and the
corresponding bound for $Q_\eps^{(m)}$, it follows that
as soon as $k + \ell > 0$, at least one of the factors $\|K_e\|_{a_e;p}$ appearing
in Theorem~\ref{theo:ultimate} is bounded by $\eps^\delta$,
thus yielding the required bound.

\subsubsection{Additional bounds on \texorpdfstring{$\hat \Pi^{(\eps)}$}{Pieps}}

We now turn to the proof of the bound \eref{e:assumPi}. This bound is of course 
non-trivial only for symbols $\tau$ with $|\tau| < \bar \gamma$. The bound for $\tau = \<1>$
is very easy to obtain so we do not dwell on it. Regarding $\tau = \<20>$, we can write
it as in \eqref{e:exprPi2} as
\begin{equ}
\bigl(\hat \Pi^{(\eps)}_z\<20>\bigr)(\eta_z^\lambda) =
\begin{tikzpicture}[scale=0.35,baseline=-0.1cm]
	\node at (-1,0)  [root] (root) {};
	\node at (3,0)  [dot] (int) {};
	\node at (1,0)  [dot] (bot) {};
	\node at (4,-1.5)  [var] (left) {};
	\node at (4,1.5)  [var] (right) {};
	
	\draw[kernel] (int) to (bot);
	\draw[testfcn] (bot) to  (root);
	
	\draw[keps] (left) to (int);
	\draw[keps] (right) to (int);
\end{tikzpicture}\;.
\end{equ}
Since the test function $\eta$ integrates to
$0$, this is \textit{equal} to
\begin{equ}
\bigl(\hat \Pi^{(\eps)}_z\<20>\bigr)(\eta_z^\lambda) =
\begin{tikzpicture}[scale=0.35,baseline=-0.1cm]
	\node at (-1,0)  [root] (root) {};
	\node at (3,0)  [dot] (int) {};
	\node at (1,0)  [dot] (bot) {};
	\node at (4,-1.5)  [var] (left) {};
	\node at (4,1.5)  [var] (right) {};
	
	\draw[kernel1] (int) to (bot);
	\draw[testfcn] (bot) to  (root);
	
	\draw[keps] (left) to (int);
	\draw[keps] (right) to (int);
\end{tikzpicture}\;,
\end{equ}
so that we have the bound
\begin{equ}
\E\bigl|\bigl(\hat \Pi^{(\eps)}_z\<20>\bigr)(\eta_z^\lambda)\bigr|^2 
= 2 \;
\begin{tikzpicture}[scale=0.35,baseline=0.5cm]
	\node at (0,0)  [root] (root) {};
	\node at (-2,1)  [dot] (intl) {};
	\node at (2,1)  [dot] (intr) {};
	\node at (-2,3)  [dot] (tl) {};
	\node at (2,3)  [dot] (tr) {};
	\node at (0,2)  [dot] (top1) {};
	\node at (0,4)  [dot] (top2) {};
	
	\draw[testfcn] (intl) to  (root);
	\draw[testfcn] (intr) to  (root);
	
	\draw[kernel1] (tl) to (intl);
	\draw[kernel1] (tr) to (intr);

	\draw[keps] (top1) to (tl);
	\draw[keps] (top1) to (tr);
	\draw[keps] (top2) to (tl);
	\draw[keps] (top2) to (tr);
\end{tikzpicture}\;.
\end{equ}
At this point, we note that, as a consequence of \cite[Lem.~10.7]{Regularity}, we have the bound\begin{equ}
\$K_\eps'\$_{\alpha;p} \lesssim \eps^{\alpha-2}\;,
\end{equ}
for every $\alpha \in [1,2]$. For such values of $\alpha$, we can therefore write
\begin{equ}
\E\bigl|\bigl(\hat \Pi^{(\eps)}_z\<20>\bigr)(\eta_z^\lambda)\bigr|^2 
\lesssim \eps^{4(\alpha-2)} |\CI_\lambda^\CG|\;,\qquad
\CG = 
\begin{tikzpicture}[scale=0.5,baseline=0.8cm]
	\node at (0,0)  [root] (root) {};
	\node at (-2,1)  [dot] (intl) {};
	\node at (2,1)  [dot] (intr) {};
	\node at (-2,3)  [dot] (tl) {};
	\node at (2,3)  [dot] (tr) {};
	\node at (0,2)  [dot] (top1) {};
	\node at (0,4)  [dot] (top2) {};
	
	\draw[dist] (intl) to  (root);
	\draw[dist] (intr) to  (root);
	
	\draw[->,generic] (tl) to node[labl,pos=0.45] {\tiny 2,1} (intl);
	\draw[->,generic] (tr) to node[labl,pos=0.45] {\tiny 2,1} (intr);

	\draw[->,generic] (top1) to  node[labl] {\tiny $\alpha$,0} (tl);
	\draw[->,generic] (top1) to node[labl] {\tiny $\alpha$,0} (tr);
	\draw[->,generic] (top2) to node[labl] {\tiny $\alpha$,0} (tl);
	\draw[->,generic] (top2) to node[labl] {\tiny $\alpha$,0} (tr);
\end{tikzpicture}\;.
\end{equ}
One can now verify that as long as $\alpha > {3\over 2}$, the conditions
of Theorem~\ref{theo:ultimate} are satisfied, so that one has the bound
\begin{equ}
\E\bigl|\bigl(\hat \Pi^{(\eps)}_z\<20>\bigr)(\eta_z^\lambda)\bigr|^2 
\lesssim \eps^{4\alpha-8} \lambda^{8-4\alpha}\;.
\end{equ}
In particular, since $\bar \gamma < 1$, we can choose $\alpha$ such that $8-4\alpha = \bar \gamma + \kappa$,
so that the required bound \eqref{e:assumPi} follows for $\tau = \<20>$.

We now turn to $\tau = \<210>$.
Following the exact same procedure, combined with the steps from 
Section~\ref{sec:psiIpsi}, we see that in this case one has
\begin{equ}
\E\bigl|\bigl(\hat \Pi^{(\eps)}_z\<210>\bigr)(\eta_z^\lambda)\bigr|^2 
\lesssim \eps^{2(\alpha-2)} \bigl(|\CI_\lambda^\CG| + |\CI_\lambda^{\bar \CG}|\bigr)\;,
\end{equ}
where the graphs $\CG$ and $\bar \CG$ are given by
\begin{equ}
\CG = 
\begin{tikzpicture}[scale=0.35,baseline=1cm]
	\node at (0,-1)  [root] (root) {};
	\node at (-3,0.5)  [dot] (leftb) {};
	\node at (3,0.5)  [dot] (rightb) {};
	\node at (-3,3)  [dot] (left1) {};
	\node at (-3,6)  [dot] (left2) {};
	\node at (3,3)  [dot] (right1) {};
	\node at (3,6)  [dot] (right2) {};
	\node at (0,4) [dot] (variable2) {};
	\node at (0,5) [dot] (variable3) {};
	\node at (0,7) [dot] (variable4) {};
	
	\draw[dist] (leftb) to (root);
	\draw[dist] (rightb) to (root);
	\draw[->,generic] (left1) to node[labl,pos=0.45] {\tiny 2,1} (leftb);
	\draw[->,generic] (right1) to node[labl,pos=0.45] {\tiny 2,1} (rightb);
	
	\draw[->,generic] (left2) to node[labl,pos=0.45] {\tiny 2,0} (left1);
	\draw[->,generic] (variable2) to node[labl] {\tiny $\alpha$,0} (left1); 
	\draw[->,generic] (variable3) to  node[labl] {\tiny 2,0} (left2); 
	\draw[->,generic] (variable4) to  node[labl] {\tiny 2,0} (left2);
	
	\draw[->,generic] (right2) to node[labl,pos=0.45] {\tiny 2,0} (right1);
	\draw[->,generic] (variable2) to node[labl] {\tiny $\alpha$,0} (right1); 
	\draw[->,generic] (variable3) to  node[labl] {\tiny 2,0} (right2); 
	\draw[->,generic] (variable4) to  node[labl] {\tiny 2,0} (right2);
\end{tikzpicture}
\;,\qquad \bar \CG = 
\begin{tikzpicture}[scale=0.35,baseline=1cm]
	\node at (0,-1)  [root] (root) {};
	\node at (-3,0.5)  [dot] (leftb) {};
	\node at (3,0.5)  [dot] (rightb) {};
	\node at (-3,3)  [dot] (left1) {};
	\node at (-3,6)  [dot] (left2) {};
	\node at (3,3)  [dot] (right1) {};
	\node at (3,6)  [dot] (right2) {};
	\node at (0,7) [dot] (variable3) {};
	
	\draw[dist] (leftb) to (root);
	\draw[dist] (rightb) to (root);
	\draw[->,generic] (left1) to node[labl,pos=0.45] {\tiny 2,1} (leftb);
	\draw[->,generic] (right1) to node[labl,pos=0.45] {\tiny 2,1} (rightb);
	
	\draw[->,generic] (left2) to node[labl,pos=0.45] {\tiny $\alpha$+1,0} (left1);
	\draw[->,generic] (variable3) to  node[labl] {\tiny 2,0} (left2); 
	
	\draw[->,generic] (right2) to node[labl,pos=0.45] {\tiny $\alpha$+1,0} (right1);
	\draw[->,generic] (variable3) to  node[labl] {\tiny 2,0} (right2); 
\end{tikzpicture}
\;.
\end{equ}
Again, one can verify that the assumptions of Theorem~\ref{theo:ultimate}
hold provided that we choose $\alpha > {3\over 2}$ so that we then obtain the bound
\begin{equ}
\E\bigl|\bigl(\hat \Pi^{(\eps)}_z\<210>\bigr)(\eta_z^\lambda)\bigr|^2 
\lesssim \eps^{2\alpha-4} \lambda^{5-2\alpha}\;.
\end{equ}
Again, the required bound follows since $\bar \gamma < 1$.
The case $\tau = \<10>$ follows in a very similar way.
All other symbols in $\CU'$ of homogeneity below $1$
are just ``decorated'' versions of \<1>, \<20>, \<210>, 
or \<10> and can therefore be treated in exactly the same way as in Section~\ref{sec:Eps}.

\subsubsection{Bounds on \texorpdfstring{$\hat f^{(\eps)}$}{feps}}

It now only remains to show that the bounds \eqref{e:assumf} also hold. 
For this, we recall from \eqref{e:specialTau} that the only symbols 
$\tau$ such that $|\tau| < 0$ and 
$|\Eps^{j-1}(\tau)| > 0$ for some $j > 1$ are all of the form 
\begin{equ}[e:mytau]
\tau = \Psi^{2j-n} \CI'(\Eps^{k_1-1}\Psi^{2k_i})\cdots \CI'(\Eps^{k_\ell-1}\Psi^{2k_\ell})\;,
\end{equ}
with $n > 2$, $j \in \{\lceil n/2\rceil,\ldots,m\}$, and $k_i \in \{1,\ldots,m\}$.
In order to bound $\hat f^{(\eps)}_z(\EE^{j-1}_0(\tau))$, note first that, 
setting $\Psi^{(\eps)}(z) = (K' * \xi^{(\eps)})(z)$,
it is a straightforward calculation to show that one has the bounds
\begin{equ}[e:boundPsieps]
\E |D^k\Psi^{(\eps)}(z)|^2 \lesssim \eps^{-1-2|k|}\;,\qquad
|\E D^k\Psi^{(\eps)}(0)D^k\Psi^{(\eps)}(z)| \lesssim (|z| + \eps)^{-1-2|k|}\;,
\end{equ}
for every multiindex $k$. Let now $\{k_1,\ldots,k_m\}$ be a finite collection of 
such multiindices and set 
\begin{equ}
\Psi^{(\eps)}_{k_1,\ldots,k_m}(z) = D^{k_1}\Psi^{(\eps)}(z)\diamond\ldots\diamond D^{k_m}\Psi^{(\eps)}(z)\;.
\end{equ}
Combining this with \eqref{e:boundPsieps} and Lemma~\ref{lem:twoscale} below,
it is not difficult to see that
\begin{equ}
\E |K'* \Psi_{k_1,\ldots,k_m}^{(\eps)}(z)|^2 \lesssim \eps^{2-m-2\sum_{i=1}^m|k_i|}\;.
\end{equ}
In particular, setting $\Phi_\ell^{(\eps)}(z) = \bigl(K'*(\Psi^{(\eps)})^{\diamond \ell}\bigr)(z)$,
one has the bound
\begin{equ}[e:boundDerPhi]
\E |D^k\Phi_{\ell}^{(\eps)}(z)|^2 \lesssim \eps^{2-\ell-2|k|}\;.
\end{equ}
We now note that, for $\tau$ as in \eqref{e:mytau}, one has
\begin{equ}
\hat f^{(\eps)}_z(\EE^{j-1}_0(\tau)) = \eps^{j-1+|k|-\ell} \Psi^{(\eps)}(z)^{\diamond (2j-n)}\Phi_{2k_1}^{(\eps)}(z)\cdots \Phi_{2k_\ell}^{(\eps)}(z)\;.
\end{equ}
Combining \eqref{e:boundDerPhi} and \eqref{e:boundPsieps} with the generalised Leibniz
rule and the equivalence of moments for random variables
belonging to a Wiener chaos of finite order, we conclude that
\begin{equ}
\E |D^m\hat f^{(\eps)}_z(\EE^{j-1}_0(\tau))| \lesssim \eps^{{n\over 2}-1-|m|}\;.
\end{equ}
The bound \eqref{e:assumf} now follows immediately.
\end{proof}

\subsection{Behaviour of the renormalisation constants}
\label{sec:logs}

The goal of this section is to provide precise asymptotic results on the 
behaviour of the renormalisation constants $C_\tau^{(\eps)}$ for $\tau \in \Bad$ 
appearing in the construction of our model. We have the following convergence result.

\begin{theorem}\label{theo:logs}
Let $C_2^{(\eps)}$ and $C_3^{(\eps)}$ be as in \eref{e:defRenormConst} and let $C_\tau^{(\eps)}$ be as in \eqref{e:defCtaueps}. Then, there exists
a constant $c \in \R$ depending both on the choice of $K$ and of the mollifier $\rho$ such that
\begin{equ}[e:logsCancel]
\lim_{\eps \to 0} \bigl(C_3^{(\eps)} + 4C_2^{(\eps)}\bigr) = c\;.
\end{equ}
Furthermore, for every $\tau \in \Bad \setminus\{\<22>,\<211>\}$, there exists a constant $c_\tau \in \R$ such that $\lim_{\eps \to 0} C_\tau^{(\eps)} = c_\tau$, and these constants are
independent of the choice of kernel $K$.
\end{theorem}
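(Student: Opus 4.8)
The plan is to analyze each renormalisation constant by expressing it in terms of the basic kernels $K'$ and $K'_\eps = \rho_\eps * K'$ and tracking which divergences survive the limit $\eps \to 0$. The starting point is the graphical representation: each $C_\tau^{(\eps)}$ for $\tau \in \Bad$ is an integral over a labelled tree (or near-tree), where each edge carries either $K'$, $K'_\eps$, or $N_\eps$ (for the $\Eps$-decorated symbols), and the Wick renormalisation via $\Wick$ with $C^\W = C_0^{(\eps)}$ has already subtracted off the ``tadpole'' contractions of the form $\Psi^2 \mapsto C_0^{(\eps)}$. Since $|\tau| = -2(\ell+m+n+2)\kappa$ for $\tau \in \Bad$, each such integral is only logarithmically divergent (the power-counting degree is $0$ up to the $\kappa$-corrections), so one expects each $C_\tau^{(\eps)}$ individually to behave like a constant plus lower-order terms, except for the two special symbols $\<211>$ and $\<22>$ whose constants $C_2^{(\eps)}$ and $C_3^{(\eps)}$ genuinely diverge logarithmically.

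First I would handle the generic case $\tau \in \Bad \setminus \{\<22>, \<211>\}$. For these, the key observation is that the relevant integral, after Wick renormalisation, has an integrand that is absolutely integrable uniformly in $\eps$: the subtraction built into $\Wick$ kills precisely the divergent subdiagram, and the remaining integral is of the form $\CI^\CG$ for a labelled graph $\CG$ satisfying Assumption~\ref{ass:mainGraph} with strictly negative total homogeneity bounded away from zero only by $\kappa$-amounts, but crucially with \emph{no} internal subdivergence once the Wick contraction is removed. One then invokes a dominated-convergence argument, using that $K'_\eps \to K'$ pointwise away from the origin and that $\|K'_\eps\|_{2+\kappa/2;p}$ is bounded uniformly (as recalled in Section~\ref{sec:convModel}), together with the bound $\|K'_\eps - K'\|_{2+\kappa/2;p} \lesssim \eps^{\kappa/2}$ from \cite[Lem.~10.17]{Regularity}, to conclude that $C_\tau^{(\eps)} \to c_\tau$ where $c_\tau$ is the same integral with every $K'_\eps$ replaced by $K'$. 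Independence of the kernel $K$ follows because, for these non-divergent integrals, any two admissible choices of $K$ differ by a smooth compactly supported correction, and the resulting difference in $c_\tau$ can be shown to vanish by the same symmetry/parity arguments (the kernel $\rho$ is even) that were used to show various terms integrate to zero in Section~\ref{sec:convModel}; more precisely, one shows the limiting integrand is unchanged when $K$ is modified within its admissible class because the modification only affects the behaviour near the origin where the integral was already convergent, and the far-field behaviour is fixed (it is the heat kernel).

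The harder part is \eqref{e:logsCancel}, the statement that the \emph{sum} $C_3^{(\eps)} + 4C_2^{(\eps)}$ converges even though each summand diverges logarithmically. The strategy is to write $C_2^{(\eps)}$ and $C_3^{(\eps)}$ explicitly from \eqref{e:defRenormConst} and extract their divergent parts. Each is an integral over a four-vertex graph; the logarithmic divergence comes from a region where two of the integration variables collide at a rate governed by the singularity of $K'_\eps \cdot K'_\eps$ convolved appropriately. The claim is that the divergent part of $C_3^{(\eps)}$ is exactly $-4$ times the divergent part of $C_2^{(\eps)}$. To see this, I would identify the common ``core'' subdiagram responsible for the divergence in both — it is essentially the kernel $Q_\eps(z) = K(z)\int K_\eps(z-\bar z) K_\eps(-\bar z)\,d\bar z$ introduced in Section~\ref{sec:psiIpsi}, or rather its derivative-dressed analogue — and perform a Taylor expansion of the smooth remaining factor around the collision point. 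The coefficient of the logarithm then factorises as (divergent scalar) $\times$ (combinatorial factor), and the combinatorial factors for the two graphs are in ratio $4:1$ with opposite signs; this is exactly the cancellation flagged in the remark after Proposition~\ref{prop:solutionRenormalised}, where it was noted that $c_{\tau_1}$ and $c_{\tau_2}$ with $\tau_1 = \Psi\CI'(\Psi\CI'(\Psi^2))$, $\tau_2 = \CI'(\Psi^2)^2$ only ever appear in the combination $4c_{\tau_1} + c_{\tau_2}$. Once the divergent parts are shown to cancel, the remaining finite part converges by the same dominated-convergence argument as above, and one reads off the limiting constant $c$; it depends on both $K$ and $\rho$ because the finite part of a logarithmically divergent integral is not scheme-independent. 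The main obstacle, then, is the careful bookkeeping of the divergent parts: one must set up the Taylor subtraction so that the ``renormalised'' pieces are manifestly uniformly integrable while the extracted logarithms are computed with the correct coefficients and signs, which requires tracking the precise combinatorics of how the Wick products in $\Pi^{(\eps)}_0 \Wick\tau$ pair up.
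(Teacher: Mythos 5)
Your treatment of the decorated constants $C_\tau^{(\eps)}$, $\tau \in \Bad\setminus\{\<22>,\<211>\}$, does not work as stated. By \eqref{e:defCtaueps} these constants carry an explicit factor $\eps^{\ell+m+n}$ with $\ell+m+n>0$ (through $\PPi^{(\eps)}\Eps^k\tau = \eps^k\PPi^{(\eps)}\tau$), and the Wick renormalisation only removes self-contractions within each group; it does not remove the power-law short-distance divergence of the remaining contracted integral, which is of order $\eps^{-(\ell+m+n)}$. So there is no uniformly integrable majorant, dominated convergence does not apply, and the limit is emphatically \emph{not} ``the same integral with $K_\eps'$ replaced by $K'$'': the whole quantity localises at scale $\eps$, and after the parabolic rescaling (cf.\ \eqref{e:formulaCtauEps}) the limit $c_\tau$ is expressed through the limits of the rescaled kernels $P_\eps$ and $\CS_\eps^{(2)}K'$, i.e.\ through the $\rho$-mollified heat kernel. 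In particular $c_\tau$ genuinely depends on $\rho$ (the theorem claims independence of $K$ only), while your prescription would yield a $\rho$-independent -- and, given the surviving $\eps^{\ell+m+n}$ prefactor, vanishing or ill-defined -- limit. Convergence of the rescaled integrals is also not automatic: the behaviour at large rescaled scales is borderline, and the paper needs the two-scale convolution estimates of Lemma~\ref{lem:twoscale} and, in the marginal case ``$a=0$, $c=1$'', an additional cancellation coming from oddness of the kernel (Lemma~\ref{lem:gainOne}). Your argument for $K$-independence also has the geometry inverted: $K$ agrees with the heat kernel \emph{near the origin} and is modified at scales of order one, so it is precisely the localisation at scale $\eps$ (not convergence of a scale-one integral) that makes the cutoff irrelevant.

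For \eqref{e:logsCancel} the proposal is a plan rather than a proof. Extracting the logarithmic coefficients of $C_2^{(\eps)}$ and $C_3^{(\eps)}$ and checking that they are in ratio $4:1$ with opposite signs is exactly the content of the statement, and you give no mechanism for either the factor $4$ or the sign; the remark after Proposition~\ref{prop:solutionRenormalised} only records that the combination $4c_{\tau_1}+c_{\tau_2}$ is the one entering the equation, and itself defers to this theorem, so citing it is circular. Moreover the divergence here is an overall-scale divergence (all internal vertices at a common scale between $\eps$ and $1$), not a two-point collision with a smooth co-factor, so ``Taylor expansion of the smooth remaining factor around the collision point'' is not a well-defined subtraction scheme for these graphs. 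What makes the cancellation tractable in the paper is an exact identity: the inner bubble satisfies $2\int P'(z-\bar z)P'(-\bar z)\,d\bar z = P(z)+P(-z)$ (Lemma~\ref{lem:Keps}), which rewrites both constants in terms of $K_{\eps,\rho}$ and $K_{\eps,\rho}'$, and a single integration by parts (using $\d\bigl(K_{\eps,\rho}^2\bigr) = 2K_{\eps,\rho}K_{\eps,\rho}'$) identifies the divergent term of $C_3^{(\eps)}$ with \emph{minus} the divergent term of $4C_2^{(\eps)}$; the four remaining terms in \eqref{e:finalExprLogs} are then shown to converge via Lemma~\ref{lem:twoscale} together with parity and support arguments. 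Without an identity of this kind, the claimed cancellation -- and hence the existence of the limit $c$ -- remains unproven.
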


\begin{remark}
The statement \eqref{e:logsCancel} is non-trivial since in general both 
of these constants diverge logarithmically as
$\eps \to 0$, see \cite{KPZ}. 
Note furthermore that although it is very similar, this theorem does 
\textit{not} follow immediately from \cite[Lem.~6.5]{KPZ} because 
here we consider space-time regularisations of the noise.
\end{remark}

For the remainder of this section, it turns out to be more convenient to work
with the rescaled kernel 
\begin{equ}
K_{\eps,\rho}(z) \eqdef \bigl(\rho * \CS_\eps^{(1)} K\bigr)(z)\;,
\end{equ}
where the scaling operator $\CS_\eps^{(\alpha)}$ is defined by
\begin{equ}
\bigl(\CS_\eps^{(\alpha)} K\bigr)(t,x) = \eps^{\alpha} K(\eps^2 t, \eps x)\;.
\end{equ}
This is because in the rescaled variables, our kernels will turn out to converge to
non-trivial limits, which is something that would not be easily seen in the original
variables.
Similarly to before, $K_{\eps,\rho}'$ denotes the spatial derivative
of $K_{\eps,\rho}$. A simple change of variables then shows that
\eqref{e:defRenormConst} is still valid if we interpret 
\tikz[baseline=-0.1cm] \draw[keps] (0,0) -- (1,0); as an instance of the rescaled
kernel $K_{\eps,\rho}'$ instead of the kernel $\rho_{\eps} * K'$ and 
\tikz[baseline=-0.1cm] \draw[kernel] (0,0) -- (1,0); as an instance of
$(\CS_\eps^{(1)} K)' = \CS_\eps^{(2)} (K')$ instead of $K'$.
We make use of these interpretations for the remainder of this section.

Before we turn to the proof of Theorem~\ref{theo:logs}, we provide
a number of useful technical results. In order to state our first result,
we introduce the family of norms
\begin{equ}
\|F\|_{\alpha,\beta} = \sup_{|z| \le 1} |z|^\alpha |F(z)| + \sup_{|z| \ge 1} |z|^\beta |F(z)|\;,
\end{equ}
and we denote by $\CB_{\alpha,\beta}$ the Banach space consisting of the functions
$F \colon \R^{d+1} \to \R$ such that $\|F\|_{\alpha,\beta} < \infty$.
Here, for $z = (t,x)$, we denoted by $|z| = |x| + \sqrt{|t|}$ its parabolic norm.

\begin{remark}\label{rem:Keps}
It is straightforward to show that $K_{\eps,\rho}$ and $K_{\eps,\rho}'$
belong to $\CB_{0,1}$ and $\CB_{0,2}$ respectively and that, for every $\kappa > 0$,
they converge to limits in  $\CB_{0,1-\kappa}$ and $\CB_{0,2-\kappa}$ respectively. 
These limits are given by $P_\rho$ and $P_\rho'$ respectively, where $P_\rho = P*\rho$.
\end{remark}

Our first preparatory result shows how convolution
acts in these spaces.

\begin{lemma}\label{lem:twoscale}
Suppose that for $j=1,2$, $F_j$ are functions on $\R^{d+1}$ with parabolic scaling such that $F_i \in \CB_{\alpha_i,\beta_i}$
with $\alpha_i < d+2$, $i=1,2$ and $\beta_1 + \beta_2 > d+2$. Then there exists $C > 0$ such that
\begin{equ}[e:boundconvol]
\|F_1\ast F_2\|_{\alpha,\beta} \le C \|F_1\|_{\alpha_1,\beta_1}\|F_2\|_{\alpha_2,\beta_2}\;,
\end{equ}
with $\alpha = 0 \vee (\alpha_1+\alpha_2 -d-2)$ and $\beta = (\beta_1+\beta_2 -d-2)\wedge 
\beta_1\wedge \beta_2$.
\end{lemma}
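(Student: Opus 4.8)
\textbf{Proof strategy for Lemma~\ref{lem:twoscale}.}

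The plan is to reduce the convolution bound to a handful of elementary integral estimates, organised according to the relative location of the two points $y$ and $z-y$ with respect to the point $z$ at which we evaluate $F_1 * F_2$. First I would fix $z \in \R^{d+1}$ and split the domain of the convolution integral $\int F_1(y) F_2(z-y)\,dy$ into three regions: the ``inner'' region $\{|y| \le |z|/2\}$, the ``outer'' region $\{|z-y| \le |z|/2\}$, and the ``far'' region where both $|y|$ and $|z-y|$ are comparable to (or larger than) $|z|$. The parabolic scaling $|\lambda \cdot z|_\s = \lambda|z|_\s$ means that all the volume elements scale like $\lambda^{d+2}$, so the counting of powers is exactly as in the Euclidean case with $d$ replaced by $d+1$ and the Lebesgue measure weighted accordingly; this is the only place where the ``$d+2$'' (rather than ``$d+1$'') enters.

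The key steps, in order: (1) In the inner region, on $\{|y| \le |z|/2\}$ one has $|z-y| \asymp |z|$, so $F_2(z-y)$ is bounded by $\|F_2\|_{\alpha_2,\beta_2}$ times $|z|^{-\alpha_2}$ (if $|z|\le 1$) or $|z|^{-\beta_2}$ (if $|z|\ge 1$), and the remaining integral $\int_{|y|\le|z|/2}|F_1(y)|\,dy$ is estimated by cutting dyadically: the contribution of $|y|\asymp 2^{-k}$ is $\lesssim 2^{-k(d+2)} \cdot (2^{-k})^{-\alpha_1} \|F_1\|$ for $2^{-k}\le 1$, and the hypothesis $\alpha_1 < d+2$ guarantees the sum over $k\ge 0$ converges, giving a bound of order $|z|^{0\vee(d+2-\alpha_1)}$ times $\|F_1\|$ for $|z|\le 1$, while for $2^{-k}\ge 1$ (only relevant when $|z|\ge 1$) the decay rate $\beta_1$ controls the sum. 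Multiplying by the $F_2$ bound yields exactly the claimed exponents $\alpha$ and $\beta$ in this region. (2) The outer region is handled identically with the roles of $F_1$ and $F_2$ swapped, using $\alpha_2 < d+2$. (3) In the far region, when $|z|\le 1$ one simply uses $|F_1(y)||F_2(z-y)| \le \|F_1\|\|F_2\| |y|^{-\alpha_1}|z-y|^{-\alpha_2}$ over a region of bounded volume and integrates, the singularities at $y=0$ and $y=z$ being integrable precisely because $\alpha_i < d+2$; when $|z|\ge 1$, both factors decay, and one estimates $\int |y|^{-\beta_1}|z-y|^{-\beta_2}\,dy$ over $\{|y|,|z-y|\gtrsim|z|\}$, which is $\lesssim |z|^{(d+2)-\beta_1-\beta_2}$ by the condition $\beta_1+\beta_2 > d+2$, together with the ``mixed'' subregions where one argument is small-but-the-other-large, producing the $\beta_1\wedge\beta_2$ terms in $\beta$.

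The main obstacle is purely bookkeeping: making sure that the four-way minimum $\beta = (\beta_1+\beta_2-d-2)\wedge\beta_1\wedge\beta_2$ emerges cleanly from the large-$|z|$ analysis. The $\beta_1+\beta_2-d-2$ term comes from the ``genuinely far'' region where both arguments are of size $\asymp|z|$; the individual $\beta_1$ (resp.\ $\beta_2$) terms arise from the crossover regions where $F_1$ is evaluated near a fixed-size neighbourhood of the origin while $F_2$ is evaluated far away (resp.\ vice versa), and in these regions the local integrability of $F_1$ near $0$ (from $\alpha_1 < d+2$) contributes only a constant and the decay is governed by a single $\beta_i$. I would also note that the $0\vee(\cdot)$ truncation in $\alpha$ reflects the fact that once $\alpha_1+\alpha_2 \le d+2$ the convolution is bounded near the origin, and the small-$|z|$ dyadic sums in steps (1)--(2) then yield a genuinely bounded (rather than singular) contribution. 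None of this requires any ideas beyond careful region decomposition, so I would present it compactly, stating the dyadic estimate as a sub-lemma if space permits.
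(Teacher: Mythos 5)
Your decomposition is essentially the one used in the paper: you split according to the relative sizes of $|y|$ and $|z-y|$ compared with $|z|$, treat $|z|\le 1$ and $|z|\ge 1$ separately, and the power counting in each region reproduces the paper's regions $A_1$--$A_4$ (small $|z|$) and $B_1$--$B_5$ (large $|z|$), so the exponents $\alpha$ and $\beta$ come out correctly.

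One step as written does not hold, although the fix is immediate. For $|z|\le 1$ your ``far'' region $\{y\,:\,|y|>|z|/2,\ |z-y|>|z|/2\}$ is \emph{not} of bounded volume, and on its unbounded part the pointwise bound $|F_1(y)|\le \|F_1\|_{\alpha_1,\beta_1}\,|y|^{-\alpha_1}$ is not available: for $|y|\ge 1$ the norm only controls $|y|^{\beta_1}|F_1(y)|$, and $|y|^{-\beta_1}$ need not be dominated by $|y|^{-\alpha_1}$. You must split off the tail, say $|y|\ge 2$, where $|z-y|\ge |y|-|z|\ge |y|/2$, bound the integrand by $|y|^{-\beta_1-\beta_2}$ and invoke the hypothesis $\beta_1+\beta_2>d+2$ to obtain a finite, $z$-independent contribution; this is precisely the paper's region $A_4$. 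With that correction the argument is complete; the remaining imprecision (the exponent ``$|z|^{0\vee(d+2-\alpha_1)}$'' in your step (1), where the $0\vee$ is vacuous because $\alpha_1<d+2$) is harmless.
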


\begin{proof}
The condition $\alpha_i<d+2$, $i=1,2$ is required or the integral defining 
$F_1*F_2$ diverges at small scales.  Similarly, we 
need $\beta_1+\beta_2 >d+2$ for the integral to  converge at large scales. 
By bilinearity, we can (and will from now on) assume that 
$\|F_j\|_{\alpha_j,\beta_j} = 1$ for $j \in \{1,2\}$.

Let first $|z| \le 1$ and write
\begin{equ}[e:convol]
\bigl(F_1*F_2\bigr)(z) = \int_{\R^{d+1}} F_1(y)\,F_2(z-y)\,dy\;.
\end{equ}
We now break the domain of integration into four regions $\{A_i\}_{i=1}^4$ 
and we bound it separately in each of them. We set
\begin{equs}
A_1 &= \{y\,:\, |y| \le 2|z|\;\&\; |y| \le |z-y|\}\;, \\
A_2 &= \{y\,:\, |y| \le 2|z|\;\&\; |y| > |z-y|\}\;, \\
A_3 &= \{y\,:\, |y| \in (2|z|,2)\}\;, \\
A_4 &= \{y\,:\, |y| > 2\}\;.
\end{equs}
For $y \in A_1$, since $|z| \le |y| + |z-y|$, we have $|z-y| \ge |z|/2$,
so that $|F_1(y)F_2(z-y)| \le |z|^{-\alpha_2} |y|^{-\alpha_1}$. Integrating
this bound over $\{|y| \le 2|z|\}$ yields a bound proportional to $|z|^{d+2-\alpha_1-\alpha_2}$. Exchanging the roles of $y$ and $z-y$, we obtain the same
bound for the integral over $A_2$. 
For $y \in A_3$, have  $|z-y| \ge |y| - |z| \ge |y|/2$ and $|z-y| \le 3$, so that 
$|F_1(y)F_2(z-y)| \lesssim |y|^{-\alpha_1-\alpha_2}$. Integrating this bound over 
$A_3$ yields this time a bound proportional to $1 + |z|^{d+2-\alpha_1-\alpha_2}$.
Finally, on $A_4$, we also have $|z-y| \ge |y|/2$, but we additionally have
$|y| \ge 2$, so that this time $|F_1(y)F_2(z-y)| \lesssim |y|^{-\beta_1-\beta_2}$.
Since $\beta_1 + \beta_2 > d+2$ by assumption, this is integrable over $|y| \ge 2$,
so that we obtain a bound proportional to $1$, thus completing the required
bound on $\bigl|\bigl(F_1*F_2\bigr)(z)\bigr|$ for $|z| \le 1$.

For $|z| \ge 1$, we break the domain of integration for \eqref{e:convol} into
five regions $\{B_i\}_{i=1}^5$, namely
\begin{equs}
B_1 &= \{y\,:\, |y| \le 1/2\}\;, \\
B_2 &= \{y\,:\, |z-y| \le 1/2\}\;, \\
B_3 &= \{y\,:\, |y| \le 2|z|\;\&\; |y| \le |z-y|\} \setminus B_1\;, \\
B_4 &= \{y\,:\, |y| \le 2|z|\;\&\; |y| > |z-y|\} \setminus B_2\;, \\
B_5 &= \{y\,:\, |y| > 2|z|)\}\;. 
\end{equs} 
On $B_1$, we have $|z-y| \ge |z| - |y| \ge |z|/2$ so that, since furthermore
$|z| \ge 1$, one has 
\begin{equ}[e:boundB1]
|F_1(y)F_2(z-y)| \lesssim |z|^{-\beta_2}|y|^{-\alpha_1}\;.
\end{equ}
Integrating this over $B_1$ yields a bound of the order $|z|^{-\beta_2}$
since we assumed that $\alpha_1 < d+2$. In the case of $B_2$,
we similarly obtain a bound of the order $|z|^{-\beta_1}$.
On $B_3$, we have instead
$|F_1(y)F_2(z-y)| \lesssim |z|^{-\beta_2}|y|^{-\beta_1}$,
which we integrate
over $|y| \in (1/2,2|z|]$, so that we obtain a bound of the order
of $|z|^{-\beta_2} \bigl(1 + |z|^{d+2-\beta_1}\bigr)$.
In the same way, the integral over $B_4$ yields a bound of the
order of $|z|^{-\beta_1} \bigl(1 + |z|^{d+2-\beta_2}\bigr)$.
Finally, for $y \in B_5$, we have $|z-y| \ge |y| - |z| \ge |y|/2$
so that $|F_1(y)F_2(z-y)| \lesssim |y|^{-\beta_1-\beta_2}$, thus yielding
a bound of the order $|z|^{d+2-\beta_1-\beta_2}$. Collecting all of these
bounds completes the proof.
\end{proof}

We also need a slightly stronger conclusion in a special case. In order to 
formulate this, we introduce the family of norms
\begin{equ}
\|F\|_{\alpha,\beta;1} = \sup_{|z| \le 1} |z|^\alpha |F(z)| + \sup_{|z| \ge 1} |z|^\beta \bigl(|F(z)|+ |z| \,|\nabla_x F(z)|+ |z|^2 \,|\d_t F(z)|\bigr)\;,
\end{equ}
and we denote by $\CB_{\alpha,\beta;1}$ the Banach space consisting of the functions
$F \colon \R^{d+1} \to \R$ such that $\|F\|_{\alpha,\beta;1} < \infty$.

\begin{lemma}\label{lem:gainOne}
Let $F_j$ as in Lemma~\ref{lem:twoscale}, but with $\beta_1 > d+2 > \beta_2 > 0$, $\alpha_i < d+2$,
and such that additionally $\int F_1(z)\,dz = 0$ and $\|F_2\|_{\alpha_1,\beta_1;1} < \infty$. 
Then, one has the stronger conclusion 
$\beta = (\beta_1+\beta_2-d-2) \wedge (\beta_2+1)$.
\end{lemma}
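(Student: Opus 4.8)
The plan is to upgrade the large-scale decay furnished by Lemma~\ref{lem:twoscale} by exploiting the cancellation $\int F_1=0$; the small-scale behaviour does not change. Note first that the hypotheses of Lemma~\ref{lem:twoscale} are met here (in particular $\beta_1+\beta_2>d+2$ because $\beta_1>d+2$ and $\beta_2>0$), so that lemma already supplies the bound $|z|^\alpha|(F_1*F_2)(z)|\lesssim1$ for $|z|\le1$ with $\alpha=0\vee(\alpha_1+\alpha_2-d-2)$, and also $|(F_1*F_2)(z)|\lesssim1$ for $1\le|z|\le2$ (since the exponent $(\beta_1+\beta_2-d-2)\wedge\beta_1\wedge\beta_2$ it produces is strictly positive). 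Hence the only thing left to prove is the improved large-scale bound $|(F_1*F_2)(z)|\lesssim|z|^{-\beta}$ for $|z|\ge2$, with $\beta=(\beta_1+\beta_2-d-2)\wedge(\beta_2+1)$, and using $\int F_1=0$ I would write $(F_1*F_2)(z)=\int F_1(y)\bigl(F_2(z-y)-F_2(z)\bigr)\,dy$ and split the $y$-integral at $|y|=|z|/2$.

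On the \emph{near} region $|y|\le|z|/2$ the whole segment $\{z-\theta y:\theta\in[0,1]\}$ has parabolic norm at least $|z|/2\ge1$ (parabolic subadditivity), so the extra control on $F_2$ provided by the hypothesis, namely $|\nabla_xF_2(z)|\lesssim|z|^{-\beta_2-1}$ and $|\d_tF_2(z)|\lesssim|z|^{-\beta_2-2}$ for $|z|\ge1$, applies along it. Writing $y=(s,w)$ and using $F_2(z-y)-F_2(z)=-\int_0^1(s\,\d_tF_2+w\cdot\nabla_xF_2)(z-\theta y)\,d\theta$ together with $|s|\le|y|^2$, $|w|\le|y|$ and $|z-\theta y|\ge|z|/2$, one obtains $|F_2(z-y)-F_2(z)|\lesssim|y|^2|z|^{-\beta_2-2}+|y|\,|z|^{-\beta_2-1}\lesssim|y|\,|z|^{-\beta_2-1}$; this is exactly the extra factor $|y|/|z|$ that Lemma~\ref{lem:twoscale} does not see. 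Consequently the near contribution is at most $|z|^{-\beta_2-1}\int_{|y|\le|z|/2}|F_1(y)|\,|y|\,dy$, whose piece over $|y|\le1$ is finite because $\alpha_1<d+2$, while $\int_{1<|y|\le|z|/2}|F_1(y)|\,|y|\,dy\lesssim\int_1^{|z|/2}r^{d+2-\beta_1}\,dr$ is $\lesssim1$ when $\beta_1>d+3$ and $\lesssim|z|^{d+3-\beta_1}$ when $d+2<\beta_1<d+3$. In either case the near contribution is $\lesssim|z|^{-(\beta_2+1)}+|z|^{-(\beta_1+\beta_2-d-2)}\lesssim|z|^{-\beta}$, which is precisely the claimed exponent.

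On the \emph{far} region $|y|>|z|/2$ I would just bound $|F_2(z-y)-F_2(z)|\le|F_2(z-y)|+|F_2(z)|$ and rerun the large-$|z|$ estimates from the proof of Lemma~\ref{lem:twoscale}: the $|F_2(z)|$ term is handled via $|F_2(z)|\lesssim|z|^{-\beta_2}$ and $\int_{|y|>|z|/2}|F_1(y)|\,dy\lesssim|z|^{d+2-\beta_1}$ (the tail being summable because $\beta_1>d+2$), while for the $|F_2(z-y)|$ term one splits according to whether $|z-y|\lesssim|z|$ or $|z-y|\gtrsim|z|$, using $\alpha_2<d+2$ to absorb the singularity of $F_2$ near $y\approx z$ and $\beta_1+\beta_2>d+2$ to control the range $|y|\gg|z|$. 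All of these contributions are $O(|z|^{-(\beta_1+\beta_2-d-2)})=O(|z|^{-\beta})$. Adding the near and far estimates then yields the lemma.

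The only genuinely delicate point is the parabolic bookkeeping in the near-region mean-value estimate — the time variable scaling like the square of the space variable, so that $|s|\le|y|^2$ together with the weaker $\d_tF_2$-bound is exactly what keeps that term harmless — plus the minor nuisance of peeling off $1\le|z|\le2$ and handing it to Lemma~\ref{lem:twoscale}. The sole borderline value is $\beta_1=d+3$, for which the bulk integral $\int_1^{|z|/2}r^{-1}\,dr$ contributes a spurious logarithm; this is dealt with by the standard device of applying the already-established statement with $\beta_1$ replaced by $\beta_1-\kappa$ for arbitrarily small $\kappa>0$ (legitimate since $\|F_1\|_{\alpha_1,\beta_1-\kappa}\le\|F_1\|_{\alpha_1,\beta_1}$), which is enough for every use of the lemma.
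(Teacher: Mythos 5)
Your proof is correct and follows essentially the same route as the paper's: the cancellation identity $(F_1*F_2)(z)=\int F_1(y)\bigl(F_2(z-y)-F_2(z)\bigr)\,dy$, the gradient/mean-value estimate $|F_2(z-y)-F_2(z)|\lesssim|y|\,|z|^{-\beta_2-1}$ on $|y|\le|z|/2$, and brutal bounds on the remaining region (your sub-split of the far region matches the paper's two regions $|z|/2\le|y|\le2|z|$ and $|y|\ge2|z|$). Your explicit parabolic bookkeeping in the mean-value step and your remark about the borderline $\beta_1=d+3$ logarithm are refinements the paper passes over silently, but they do not change the argument.
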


\begin{proof}
We only need to consider $|z| \ge 2$ say and, as before we want to estimate the
integral
\begin{equ}[e:convol2]
\bigl(F_1*F_2\bigr)(z) = \int_{\R^{d+1}} F_1(y)\,\bigl(F_2(z-y) - F_2(z)\bigr)\,dy\;.
\end{equ}
The reason why this identity holds is of course that we assumed that $F_1$ 
integrates to $0$. This time, we break the integral into three regions. 

First, we consider the case $|y| \le |z|/2$. In this case, as a simple consequence of our bounds
on the derivatives of $F_2$, one has
\begin{equ}
\bigl|F_2(z-y) - F_2(z)\bigr| \lesssim |y|\, |z|^{-\beta_2-1}\;.
\end{equ}
On the other hand, one has 
\begin{equ}
\int_{|y| \le |z|} |y|\,|F_1(y)|\,dy \lesssim |z|^{0 \vee (d+3-\beta_1)}\;.
\end{equ}
Combining the two yields a bound of the required form. For the integral over the
region $|y| \ge 2|z|$, we use the ``brutal'' bound
\begin{equ}
\bigl|F_2(z-y) - F_2(z)\bigr| \lesssim |z|^{-\beta_2}\;,
\end{equ}
so that this integral is bounded by $|z|^{-\beta_2} \int_{|y| \ge |z|} |F_1(y)|\,dy$.
Since we assumed that $\beta_1 > d+2$, this integral converges and is of order
$|z|^{d+2-\beta_1}$ thus yielding the required bound. 
Finally, in the region $|z|/2 \le |y| \le 2|z|$, we bound $|F_1(y)|$ by
$|z|^{-\beta_1}$. Since $\beta_2 < d+2$, the integral of 
$\bigl|F_2(z-y) - F_2(z)\bigr|$ over that region
can be bounded by $|z|^{d+2-\beta_2}$, thus again yielding the correct bound. 
\end{proof}

\begin{remark}
Lemmas~\ref{lem:twoscale} and \ref{lem:gainOne} immediately extend to the case of arbitrary scalings
by replacing each instance of $d+2$ by the scaling dimension of the underlying space. 
\end{remark}

Before we turn to the proof of Theorem~\ref{theo:logs}, we define a kernel
$P_\eps$ by
\begin{equ}[e:defPeps]
P_\eps(z) = 
\begin{tikzpicture}[scale=0.35,baseline=1.3cm]
	\node at (-1.3,3)  [var] (left1) {};
	\node at (-1.3,5)  [root] (left2) {};
	\node at (0.8,4) [dot] (variable2) {};

	\draw[keps] (variable2) to  (left1); 
	\draw[keps] (variable2) to (left2);
\end{tikzpicture}
= \int K_{\eps,\rho}'(z-\bar z)K_{\eps,\rho}'(-\bar z)\,d\bar z\;.
\end{equ}
We then have the following result.

\begin{lemma}\label{lem:Keps}
With $P_\eps$ as above, define kernels $R_\eps$, $\tilde R_\eps$ through the identities
\begin{equ}
2P_\eps(z) = K_{\eps,\rho}(z) + K_{\eps,\rho}(-z) + R_\eps^{(1)}(z) + \bigl(\CS_\eps^{(1)} R_\eps^{(2)}\bigr)(z)\;,\quad 
\CS_\eps^{(2)} K' = K_{\eps,\rho}' + \tilde R_\eps\;.
\end{equ}
Then, $R_\eps^{(1)}$, $R_\eps^{(2)}$ and $\tilde R_\eps$ satisfy the bounds
\begin{equ}
\|R_\eps^{(1)}\|_{0,2} + \|R_\eps^{(2)}\|_{0,4}  + \|\tilde R_\eps\|_{2,3} \le C \;,
\end{equ}
for some $C$ independent of $\eps \in (0,1]$. Furthermore, for every $\kappa > 0$,
these kernels converge in $\CB_{0,2-\kappa}$, $\CB_{0,4}$ and $\CB_{2,3-\kappa}$ respectively as $\eps \to 0$. In the case of $R_\eps^{(1)}$, the limit is $0$ and in the case of 
$\tilde R_\eps$ it is independent of the choice of $K$.
\end{lemma}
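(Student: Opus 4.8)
The plan is to reduce both displayed identities to a short–scale analysis of the heat kernel, the conceptual core being the elementary identity $2\,P'\ast\check P' = P+\check P$, where $\check f(z):=f(-z)$. This follows immediately on taking Fourier transforms: with the convention $\widehat{f}(\tau,k)=\int f e^{-i\tau t-ikx}$ one has $\widehat{P'}=ik/(i\tau+k^2)$, hence $2|\widehat{P'}|^2 = 2k^2/(\tau^2+k^4) = \widehat P+\widehat{\check P}$. Since $\CS_\eps^{(1)}P=P$ and $\CS_\eps^{(2)}K'=(\CS_\eps^{(1)}K)'$, this is exactly what makes the two decompositions plausible: modulo the mollification by $\rho$ and the truncation of $P$ into $K$, the kernel $2P_\eps = 2\,K_{\eps,\rho}'\ast\check{K_{\eps,\rho}'}$ behaves like $K_{\eps,\rho}+K_{\eps,\rho}(-\cdot)$ at all scales, while $\CS_\eps^{(2)}K'$ differs from $K_{\eps,\rho}'=\rho\ast\CS_\eps^{(2)}K'$ only through the mollification error $(\delta_0-\rho)\ast\CS_\eps^{(2)}K'$. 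Throughout one writes $K'=P'+g'$ with $g=K-P$, so $g$ vanishes on $\{|z|<1/2\}$ and is smooth where it is nonzero, and one uses the exact scale invariances $\CS_\eps^{(1)}P=P$, $\CS_\eps^{(2)}P'=P'$.

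I would first dispose of $\tilde R_\eps$, which is the cleanest. By definition $\tilde R_\eps = (\delta_0-\rho)\ast\CS_\eps^{(2)}K' = (\delta_0-\rho)\ast P' + (\delta_0-\rho)\ast\CS_\eps^{(2)}g'$. The first summand is $\eps$-independent and equals $P'-P_\rho'$; it is $\partial_x$ of $P-P_\rho$, homogeneous of order $-1$ near the origin and Gaussian at infinity, so it lies in every $\CB_{2,\beta}$, it only involves $P$ and $\rho$, and it is the claimed $K$-independent limit. For the second summand, $\CS_\eps^{(2)}g'$ is smooth, supported on the parabolic annulus $\{(2\eps)^{-1}\le|z|\le\eps^{-1}\}$, with $|\partial^k\CS_\eps^{(2)}g'(z)|\lesssim\eps^{2+|k|}$ there; since $\delta_0-\rho$ has vanishing integral and compact support, a first-order Taylor expansion yields $|(\delta_0-\rho)\ast\CS_\eps^{(2)}g'(z)|\lesssim\eps^{3}$, still supported where $|z|\sim\eps^{-1}$, so its $\CB_{2,3}$-norm is $O(\eps)$ and its $\CB_{2,3-\kappa}$-norm is $O(\eps^{\kappa})\to0$. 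This gives the uniform bound and the convergence for $\tilde R_\eps$.

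Next, for $R_\eps^{(1)}$ and $R_\eps^{(2)}$ I would expand $2P_\eps$ around the heat kernel. Using $K'=P'+g'$, the scale invariance $\CS_\eps^{(2)}P'=P'$, and the identity $2\,P'\ast\check P'=P+\check P$, one gets $2\,\CS_\eps^{(2)}K'\ast\check{\CS_\eps^{(2)}K'} = \CS_\eps^{(1)}K + \CS_\eps^{(1)}\check K + \CS_\eps^{(1)}\Psi_\eps$, where $\Psi_\eps$ gathers the cross terms $2P'\ast\check{g'}+2g'\ast\check{P'}+2g'\ast\check{g'}$, rescaled back to unit scale; by the heat-kernel identity and Lemma~\ref{lem:twoscale} the order $-1$ singularity cancels, so $\Psi_\eps$ is a smooth function, bounded near the origin, essentially supported at scales $|z|\lesssim1$ (its tail being inherited from $g$), hence uniformly bounded in $\CB_{0,4}$. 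Convolving with $\rho\ast\check\rho$ and splitting the result into its macroscopic part (scales $\gtrsim 1$) and its part concentrated at scales $\lesssim\eps^{-1}$: the former equals $K_{\eps,\rho}+K_{\eps,\rho}(-\cdot)$ plus a mollification error which, using that $\delta_0-\rho$ and $\delta_0-\check\rho$ have vanishing integral and applying Lemma~\ref{lem:gainOne}, is $O(\eps)$ in $\CB_{0,2}$ and tends to $0$ in $\CB_{0,2-\kappa}$; this is $R_\eps^{(1)}$. The remaining part has the form $\CS_\eps^{(1)}R_\eps^{(2)}$ for a kernel $R_\eps^{(2)}$ which, up to the $\eps$-rescaled mollifier, is the macroscopic content of $\Psi_\eps$; it converges in $\CB_{0,4}$ to the corresponding limiting kernel, whose dependence on $K$ (through $g$) is consistent with the statement not asserting $K$-independence here.

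The main obstacle is the third step: producing the decomposition $2P_\eps = K_{\eps,\rho}+K_{\eps,\rho}(-\cdot)+R_\eps^{(1)}+\CS_\eps^{(1)}R_\eps^{(2)}$ in which $R_\eps^{(1)}$ genuinely vanishes in the limit. This forces one to exploit the cancellation of the leading $|z|^{-1}$ behaviour coming from $2P'\ast\check P'=P+\check P$ \emph{before} performing any estimate, and then to extract an extra power of $\eps$ from every mollification error via the vanishing-integral property of $\delta_0-\rho$, using (where a single power is not enough) the spatial symmetry of $\rho$ to gain a second order — exactly the role anticipated in the footnote to \eqref{e:xiEps}. The convolution estimates themselves are routine once Lemmas~\ref{lem:twoscale} and~\ref{lem:gainOne} are available; it is the bookkeeping of which remainder lives at which scale, and the identification of the correct scale-$\eps^{-1}$ profile $R_\eps^{(2)}$, that requires care.
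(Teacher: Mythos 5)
Your overall route is the paper's: everything rests on the exact identity $2P'\ast P'(-\cdot)=P+P(-\cdot)$, on the fact that $K$ agrees with $P$ near the origin, and on convolving the resulting identity at the level of $K$ with the mollifiers; your treatment of $\tilde R_\eps=(\delta_0-\rho)\ast\CS_\eps^{(2)}K'$ is also exactly the paper's, and its conclusion survives your two slips there ($\CS_\eps^{(2)}(K'-P')$ is supported on the \emph{complement} of a ball of radius $\sim\eps^{-1}$, not on an annulus, so that summand is only $O(1)$ in $\CB_{2,3}$, though still $O(\eps^\kappa)$ in $\CB_{2,3-\kappa}$; and $P-P\ast\rho$ is not ``Gaussian at infinity'' --- it decays polynomially in the parabolic distance, and the needed membership of $P'-P'\ast\rho$ in $\CB_{2,3}$ comes from the vanishing integral of $\delta_0-\rho$, not from Gaussian decay).

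In the second half there are two genuine gaps. First, your $\Psi_\eps$, defined as the three cross terms $2P'\ast g'(-\cdot)+2g'\ast P'(-\cdot)+2g'\ast g'(-\cdot)$ with $g=K-P$, is \emph{not} bounded in $\CB_{0,4}$: since $g\approx -P$ at infinity, these cross terms carry a $|z|^{-1}$ tail. The object that is actually tame is $2K'\ast K'(-\cdot)-K-K(-\cdot)$, i.e.\ your cross terms \emph{plus} $-(g+g(-\cdot))$; this is smooth and compactly supported (it is the kernel $R$ of the paper's proof), after which $R_\eps^{(2)}=\rho^{(2)}_\eps\ast R$ is trivially bounded and convergent in $\CB_{0,4}$ --- a fixable bookkeeping error, but as written your $\CB_{0,4}$ claim fails. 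Second, and more seriously, the step you yourself single out as the crux does not work: the mollification error $R^{(1)}_\eps=(\rho^{(2)}-\rho)\ast\CS_\eps^{(1)}K$ plus its reflection is \emph{not} $O(\eps)$ in $\CB_{0,2}$ and does not tend to $0$. The vanishing integral of $\delta_0-\rho$ (and the spatial symmetry of $\rho$) buys decay in $|z|$ --- hence the uniform $\CB_{0,2}$ bound and convergence in $\CB_{0,2-\kappa}$ --- but no smallness in $\eps$: since $\CS_\eps^{(1)}K=P$ on any fixed compact set for $\eps$ small, $R^{(1)}_\eps$ converges to $(\rho^{(2)}-\rho)\ast P$ plus its reflection, which is a nonzero (though $K$-independent, $\rho$-dependent) kernel; e.g.\ its spatial integral at fixed time does not vanish unless the time-marginals match. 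So no argument of the kind you sketch can make the limit vanish; what can, and should, be proved is uniform boundedness in $\CB_{0,2}$ and convergence in $\CB_{0,2-\kappa}$, which is also all that is used later (in Lemma~\ref{lem:defRenormConstTweak} and Theorem~\ref{theo:logs} only the \emph{convergence} of $R^{(1)}_\eps$ enters, never the value of its limit).
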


\begin{proof}
The claim for $\tilde R_\eps$ is straightforward to show. 
Regarding $R_\eps$, an explicit calculation shows that 
if we denote by $P$ the heat kernel, one has the identity
\begin{equ}
2\int P'(z-\bar z)P'(-\bar z)\,d\bar z = P(z) + P(-z)\;.
\end{equ}
Since $K$ is compactly supported and agrees with $P$ in some neighbourhood
of the origin, this immediately implies that there exists a smooth compactly supported 
function $R$ such that
\begin{equ}
2\int K'(z-\bar z)K'(-\bar z)\,d\bar z = K(z) + K(-z) + R(z)\;.
\end{equ}
Convolving with $\rho^{(2)}$ and then rescaling, we conclude that
\begin{equ}
2P_\eps(z) = \bigl(\rho^{(2)} * \CS_\eps^{(1)}K\bigr)(z)
+ \bigl(\rho^{(2)} * \CS_\eps^{(1)}K\bigr)(-z) + \CS_\eps^{(1)} \bigl(\rho_\eps^{(2)}*R\bigr)(z)\;,
\end{equ}
so that we can set
\begin{equ}
R_\eps^{(2)} = \rho_\eps^{(2)}*R\;,\quad
R_\eps^{(1)}(z) = \bigl((\rho^{(2)} - \rho) * \CS_\eps^{(1)}K\bigr)(z)
+ \bigl((\rho^{(2)} - \rho) * \CS_\eps^{(1)}K\bigr)(-z)\;.
\end{equ}
The required bounds then follow easily.
\end{proof}

\begin{lemma}\label{lem:defRenormConstTweak}
Let $\tilde C_2^{(\eps)}$ and $\tilde C_3^{(\eps)}$ be defined by
the identities
\begin{equ}[e:defRenormConstTweak]
C_2^{(\eps)} = \;
\begin{tikzpicture}[baseline=0.6cm,scale=0.35]
\node at (0,0) [root] (0) {};
\node at (0,2) [dot] (1) {};
\node at (-2,2) [dot] (2) {};
\node at (0,4) [dot] (4) {};
\node at (2,3) [dot] (5) {};

\draw[keps] (1) to (0);
\draw[keps] (2) to (0);
\draw[keps] (5) to (1);
\draw[keps] (4) to (1);
\draw[keps] (5) to (4);
\draw[keps] (2) to (4);
\end{tikzpicture} + \tilde C_2^{(\eps)}\;,\qquad
C_3^{(\eps)} = \;
\begin{tikzpicture}[baseline=0.6cm,scale=0.35]
\node at (0,0) [root] (0) {};
\node at (-2,2) [dot] (1) {};
\node at (2,2) [dot] (2) {};
\node at (0,2) [dot] (3) {};
\node at (0,4) [dot] (4) {};

\draw[keps] (1) to (0);
\draw[keps] (2) to (0);
\draw[keps] (3) to (1);
\draw[keps] (3) to (2);
\draw[keps] (4) to (1);
\draw[keps] (4) to (2);
\end{tikzpicture} + \tilde C_3^{(\eps)}\;.
\end{equ}
Then both $\tilde C_2^{(\eps)}$ and $\tilde C_3^{(\eps)}$ converge to finite limits 
as $\eps \to 0$, and these limits do not depend on the choice of the cutoff kernel $K$.
\end{lemma}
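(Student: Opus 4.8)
The plan is to reduce the two graphs in \eqref{e:defRenormConst} to elementary two–vertex integrals, and then to show that the subtractions defining $\tilde C_2^{(\eps)}$ and $\tilde C_3^{(\eps)}$ leave behind only graphical integrals that converge absolutely and uniformly in $\eps$.

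First I would integrate out, in each of the graphs in \eqref{e:defRenormConst}, the two ``apex'' vertices, i.e.\ the vertices incident only to a pair of dashed edges. By \eqref{e:defPeps} each such integration produces a factor $P_\eps$, so that in the rescaled variables of this section both $C_2^{(\eps)}$ and $C_3^{(\eps)}$ take the form of an integral over two space–time variables of the product of two copies of $\CS_\eps^{(2)}K'$ (carried by the two original plain edges) and two copies of $P_\eps$. The right–hand sides of \eqref{e:defRenormConstTweak} are obtained by turning the two plain edges into dashed ones, i.e.\ by replacing each $\CS_\eps^{(2)}K'$ by $K_{\eps,\rho}'$, the apex integrations being unaffected. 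Writing $\CS_\eps^{(2)}K' = K_{\eps,\rho}' + \tilde R_\eps$ as in Lemma~\ref{lem:Keps} and expanding bilinearly, one sees that $\tilde C_2^{(\eps)}$ and $\tilde C_3^{(\eps)}$ are finite sums of two–vertex, four–edge graphical integrals, each of which carries at least one edge labelled by $\tilde R_\eps$, the remaining edges being labelled by $K_{\eps,\rho}'$ or $P_\eps$.

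The crux is a power–counting observation. By Lemma~\ref{lem:Keps} and Remark~\ref{rem:Keps}, $K_{\eps,\rho}'$ is bounded in $\CB_{0,2}$ uniformly in $\eps$ and converges in $\CB_{0,2-\kappa}$ to $P_\rho'$; $P_\eps$ is bounded in $\CB_{0,1}$ and converges in $\CB_{0,1-\kappa}$ to $\tfrac12\bigl(P_\rho+P_\rho(-\,\cdot\,)\bigr)$ (here one uses Lemma~\ref{lem:Keps} together with $R_\eps^{(1)}\to 0$ and $\CS_\eps^{(1)}R_\eps^{(2)}\to 0$ in the relevant norms); and $\tilde R_\eps$ is bounded in $\CB_{2,3}$ and converges in $\CB_{2,3-\kappa}$ — crucially with large–scale decay exponent $3$, \emph{strictly larger} than the exponent $2$ of $K_{\eps,\rho}'$, while having the same small–scale blow–up $|z|^{-2}$. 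The constants $C_2^{(\eps)}$, $C_3^{(\eps)}$ are only marginally (logarithmically) divergent, and the divergence is a purely large–scale effect in the rescaled variables; replacing even one of the two plain edges by $\tilde R_\eps$ therefore strictly improves the large–scale power count and leaves the small–scale one unchanged, so each of the two–vertex integrals occurring in $\tilde C_2^{(\eps)}$, $\tilde C_3^{(\eps)}$ is absolutely convergent, uniformly in $\eps$. Concretely this can be checked either by verifying Assumption~\ref{ass:mainGraph} for each of the (finitely many) resulting labelled graphs and invoking Theorem~\ref{theo:ultimate} as in the rest of Section~\ref{sec:convModel}, or by peeling off one more vertex and applying the elementary product estimate together with Lemmas~\ref{lem:twoscale} and \ref{lem:gainOne} (the latter using that $K_{\eps,\rho}'$ and $\tilde R_\eps$ integrate to zero).

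Convergence then follows by the usual telescoping argument: $\tilde C_2^{(\eps)}$ minus the same expression evaluated at the limiting kernels $P_\rho'$, $\tfrac12\bigl(P_\rho+P_\rho(-\,\cdot\,)\bigr)$ and the $\CB_{2,3-\kappa}$–limit of $\tilde R_\eps$ is a finite sum of two–vertex integrals in which exactly one edge has been replaced by the corresponding ``remainder'' kernel, whose norm tends to $0$, while the rest of the integral stays uniformly bounded by the preceding paragraph; hence $\tilde C_2^{(\eps)}$ converges, and likewise $\tilde C_3^{(\eps)}$. Since $P_\rho'$, $P_\rho$ and the limit of $\tilde R_\eps$ are all independent of the choice of $K$ by Lemma~\ref{lem:Keps} and Remark~\ref{rem:Keps}, so are these limits. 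The main obstacle is the power–counting step of the previous paragraph: one must confirm that the single improved decay exponent contributed by $\tilde R_\eps$ genuinely converts the marginal logarithmic divergence of $C_2^{(\eps)}$, $C_3^{(\eps)}$ into an absolutely convergent integral, and in particular that all the relevant subgraph conditions hold, paying attention to the honest $|z|^{-2}$ singularities of $\tilde R_\eps$ and $\CS_\eps^{(2)}K'$ near the origin.
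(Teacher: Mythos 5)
Your strategy is the same as the paper's: reduce to two integration variables by integrating out the degree-two vertices (producing $P_\eps$ factors as in \eqref{e:defPeps}), telescope the two plain edges via $\CS_\eps^{(2)}K' = K_{\eps,\rho}' + \tilde R_\eps$ (your $\tilde R_\eps$ is the paper's $D_\eps$), and conclude from the convergence of $K_{\eps,\rho}'$, $\CS_\eps^{(2)}K'$, $P_\eps$, $\tilde R_\eps$ to $K$-independent limits together with the two-scale calculus of Section~\ref{sec:logs}. The gap is in your central power-counting claim: it is \emph{not} true that placing one factor $\tilde R_\eps$ on either plain edge makes every resulting two-vertex integral absolutely convergent uniformly in $\eps$. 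For $\tilde C_2^{(\eps)}$, consider the term in which $\tilde R_\eps$ sits on the plain edge joining the two integration vertices (the edge not adjacent to the root, which after the reduction runs parallel to a $P_\eps$ edge). The logarithmic divergence of $C_2^{(\eps)}$ does not come only from the regime where all distances are large; there is a second marginal regime where that inner edge has length of order one while the two remaining distances are of order $R$, $1\ll R\lesssim \eps^{-1}$. There the argument of $\tilde R_\eps$ is small, so its improved large-scale exponent buys nothing: the root edge contributes $R^{-2}$, the $P_\eps$ edge to the root contributes $R^{-1}$, and the region has dimension $|\s|=3$, so in absolute value this term still diverges like $\log(1/\eps)$. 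Quantitatively, grouping the parallel kernels into $H_\eps = \tilde R_\eps\cdot P_\eps \in \CB_{2,4}$, Lemma~\ref{lem:twoscale} only gives $K_{\eps,\rho}'\ast H_\eps \in \CB_{0,2}$, and the final pairing against $P_\eps\in\CB_{0,1}$ has total decay exponent $3=|\s|$, exactly borderline.

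That term converges only conditionally, through a parity cancellation of the kind used for $Q_\eps$ in Section~\ref{sec:psiIpsi} and in the proof of Theorem~\ref{theo:logs}: $H_\eps$ is odd under $x\mapsto -x$ (so $\int H_\eps = 0$) and has large-scale exponent $4>|\s|$, so Lemma~\ref{lem:gainOne}, applied with $F_1=H_\eps$ and $F_2=K_{\eps,\rho}'$ (which is bounded in $\CB_{0,2;1}$ uniformly in $\eps$), upgrades the convolution to large-scale exponent $3$, after which the pairing with $P_\eps$ converges and the limit is identified as before. You do mention Lemma~\ref{lem:gainOne} and oddness in passing, but with the wrong cancellation (what matters is not that $\tilde R_\eps$ or $K_{\eps,\rho}'$ integrate to zero on their own, but that the composite kernel on the parallel pair of edges does, with decay exponent strictly larger than $|\s|$), and you defer precisely the step where pure power counting fails, so it must actually be carried out. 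Finally, your fallback via Theorem~\ref{theo:ultimate} and Assumption~\ref{ass:mainGraph} is not available here: that machinery assumes kernels supported in the unit ball and controls small-scale behaviour, whereas the divergence of these constants is a large-scale effect in the rescaled variables, which is exactly why Section~\ref{sec:logs} works with the $\CB_{\alpha,\beta}$ spaces instead. (For $\tilde C_3^{(\eps)}$ your power counting is correct: the doubled $P_\eps^2$ edge has exponent $2$ and every difference term is absolutely convergent by Lemma~\ref{lem:twoscale} alone.)
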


\begin{proof}
Comparing \eqref{e:defRenormConstTweak} to \eqref{e:defRenormConst} and writing
\tikz[baseline=-0.1cm] \draw[Deps] (0,0) -- (1,0); for the kernel 
$D_\eps \eqdef \CS^{(2)}_\eps K' - K_{\eps,\rho}'$, 
we have
\begin{equ}
\tilde C_2^{(\eps)} = 
\begin{tikzpicture}[baseline=0.6cm,scale=0.35]
\node at (0,0) [root] (0) {};
\node at (0,2) [dot] (1) {};
\node at (-2,2) [dot] (2) {};
\node at (0,4) [dot] (4) {};
\node at (2,3) [dot] (5) {};

\draw[keps] (1) to (0);
\draw[keps] (2) to (0);
\draw[keps] (5) to (1);
\draw[Deps] (4) to (1);
\draw[keps] (5) to (4);
\draw[keps] (2) to (4);
\end{tikzpicture}
\;+\;
\begin{tikzpicture}[baseline=0.6cm,scale=0.35]
\node at (0,0) [root] (0) {};
\node at (0,2) [dot] (1) {};
\node at (-2,2) [dot] (2) {};
\node at (0,4) [dot] (4) {};
\node at (2,3) [dot] (5) {};

\draw[Deps] (1) to (0);
\draw[keps] (2) to (0);
\draw[keps] (5) to (1);
\draw[kernel] (4) to (1);
\draw[keps] (5) to (4);
\draw[keps] (2) to (4);
\end{tikzpicture}\;.
\end{equ}
At this point, we note that $K_{\eps,\rho}'$, $\CS^{(2)}_\eps K'$, and 
$D_\eps$ converge in $\CB_{0,2-\kappa}$, $\CB_{0,2-\kappa}$, and  $\CB_{0,3-\kappa}$ 
respectively, and that these limits do not depend on the choice of cutoff $K$. 
The claim for $\tilde C_2^{(\eps)}$ now follows by repeatedly 
applying Lemma~\ref{lem:twoscale}. The constant $\tilde C_3^{(\eps)}$ can be dealt with 
in a very similar fashion. 
\end{proof}

We now have finally all the ingredients required for the proof of Theorem~\ref{theo:logs}.

\begin{proof}[of Theorem~\ref{theo:logs}]
We first prove that \eqref{e:logsCancel} holds. 
Since we also need the kernel $K_{\eps,\rho}$ in this proof, we use for it the graphical
notation \tikz[baseline=-0.1cm] \draw[kbase] (0,0) to (1,0); As a consequence of Lemmas~\ref{lem:defRenormConstTweak}, \ref{lem:Keps}, and \ref{lem:twoscale},
we have the identities
\begin{equs}[e:exprCeps]
4C_2^{(\eps)} &= \;
\begin{tikzpicture}[baseline=0.6cm,scale=0.35]
\node at (0,0) [root] (0) {};
\node at (0,2) [dot] (1) {};
\node at (0,4) [dot] (4) {};

\draw[keps] (1) to (0);
\draw[kbase] (4) to[bend left=60] (0);
\draw[kbase] (4) to[bend right=60] (1);
\draw[keps] (4) to (1);
\end{tikzpicture}
+
\begin{tikzpicture}[baseline=0.6cm,scale=0.35]
\node at (0,0) [root] (0) {};
\node at (0,2) [dot] (1) {};
\node at (0,4) [dot] (4) {};

\draw[keps] (1) to (0);
\draw[kbase] (0) to[bend left=60] (4);
\draw[kbase] (1) to[bend right=60] (4);
\draw[keps] (4) to (1);
\end{tikzpicture}
+
\begin{tikzpicture}[baseline=0.6cm,scale=0.35]
\node at (0,0) [root] (0) {};
\node at (0,2) [dot] (1) {};
\node at (0,4) [dot] (4) {};

\draw[keps] (1) to (0);
\draw[kbase] (4) to[bend left=60] (0);
\draw[kbase] (1) to[bend left=60] (4);
\draw[keps] (4) to (1);
\end{tikzpicture}
+
\begin{tikzpicture}[baseline=0.6cm,scale=0.35]
\node at (0,0) [root] (0) {};
\node at (0,2) [dot] (1) {};
\node at (0,4) [dot] (4) {};

\draw[keps] (1) to (0);
\draw[kbase] (0) to[bend left=60] (4);
\draw[kbase] (4) to[bend left=60] (1);
\draw[keps] (4) to (1);
\end{tikzpicture}
 + (\ldots)\;,\\
4C_3^{(\eps)} &= 2\;
\begin{tikzpicture}[baseline=0.4cm,scale=0.35]
\node at (0,0) [root] (0) {};
\node at (-1.5,2) [dot] (1) {};
\node at (1.5,2) [dot] (2) {};

\draw[keps] (1) to (0);
\draw[keps] (2) to (0);
\draw[kbase] (1) to (2);
\draw[kbase] (1) to[bend left=60] (2);
\end{tikzpicture}
+ 2\;
\begin{tikzpicture}[baseline=0.4cm,scale=0.35]
\node at (0,0) [root] (0) {};
\node at (-1.5,2) [dot] (1) {};
\node at (1.5,2) [dot] (2) {};

\draw[keps] (1) to (0);
\draw[keps] (2) to (0);
\draw[kbase] (2) to (1);
\draw[kbase] (1) to[bend left=60] (2);
\end{tikzpicture} + (\ldots)\;,
\end{equs}
where $(\ldots)$ denotes an expression that converges to a finite limit as $\eps \to 0$. This can easily be shown in a way similar to the proof of 
Lemma~\ref{lem:defRenormConstTweak}. For example, one of the additional
terms appearing in the right hand side of $C_2^{(\eps)}$ is given by 
\begin{equ}[e:oneterm]
\bigl((R_\eps^{(1)} \cdot K_{\eps,\rho}')*K_{\eps,\rho}'*P_\eps\bigr)(0)
+ \bigl((R_\eps^{(2)} \cdot K_\eps')*K_\eps'*K_\eps'*K_\eps'(-\cdot)\bigr)(0)
\end{equ}
To show that this converges to a finite limit, one uses the fact that,
by Remark~\ref{rem:Keps} and Lemma~\ref{lem:Keps}, 
$R_\eps^{(1)} \cdot K_{\eps,\rho}'$, $K_{\eps,\rho}'$ and $P_\eps$ converge 
as $\eps \to 0$
in $\CB_{0,4-\kappa}$, $\CB_{0,2-\kappa}$, and $\CB_{0,1-\kappa}$ respectively, for 
every $\kappa > 0$. It then suffices to take $\kappa$ sufficiently small and to apply 
Lemma~\ref{lem:twoscale} twice to show that the first term in \eqref{e:oneterm}
converges to a finite limit. Regarding the second term of \eqref{e:oneterm}, 
both $R_\eps^{(2)} \cdot K_\eps'$ and $K_\eps'$ converge to limits in 
$\CB_{2+\kappa,3}$ for any $\kappa > 0$
so that its convergence can again be reduced to repeated applications of
Lemma~\ref{lem:twoscale}. The other terms appearing in the remainder
terms of \eqref{e:exprCeps} can be dealt with in an analogous way.

At this stage, we perform an integration by parts for the integration variable 
represented by the top-left vertex in the first term for $C_3^{(\eps)}$. This yields the
exact identity
\begin{equ}
\begin{tikzpicture}[baseline=0.4cm,scale=0.35]
\node at (0,0) [root] (0) {};
\node at (-1.5,2) [dot] (1) {};
\node at (1.5,2) [dot] (2) {};

\draw[keps] (1) to (0);
\draw[keps] (2) to (0);
\draw[kbase] (1) to (2);
\draw[kbase] (1) to[bend left=60] (2);
\end{tikzpicture}
= -2\;
\begin{tikzpicture}[baseline=0.4cm,scale=0.35]
\node at (0,0) [root] (0) {};
\node at (-1.5,2) [dot] (1) {};
\node at (1.5,2) [dot] (2) {};

\draw[kbase] (1) to (0);
\draw[keps] (2) to (0);
\draw[keps] (1) to (2);
\draw[kbase] (1) to[bend left=60] (2);
\end{tikzpicture}\;,
\end{equ}
where the factor $2$ comes from the fact that the derivative of $(K_{\eps,\rho})^2$
(the two arrows linking the two top vertices) equals $2K_{\eps,\rho} K_{\eps,\rho}'$. 
Inserting this into the above expression for $C_3^{(\eps)}$ yields 
\begin{equ}
C_3^{(\eps)} = -\;
\begin{tikzpicture}[baseline=0.4cm,scale=0.35]
\node at (0,0) [root] (0) {};
\node at (-1.5,2) [dot] (1) {};
\node at (1.5,2) [dot] (2) {};

\draw[kbase] (1) to (0);
\draw[keps] (2) to (0);
\draw[keps] (1) to (2);
\draw[kbase] (1) to[bend left=60] (2);
\end{tikzpicture}
+ {1\over 2}\;
\begin{tikzpicture}[baseline=0.4cm,scale=0.35]
\node at (0,0) [root] (0) {};
\node at (-1.5,2) [dot] (1) {};
\node at (1.5,2) [dot] (2) {};

\draw[keps] (1) to (0);
\draw[keps] (2) to (0);
\draw[kbase] (2) to (1);
\draw[kbase] (1) to[bend left=60] (2);
\end{tikzpicture} + (\ldots)\;.
\end{equ}
We now note that the first term in this expression is identical
to the first term appearing in the expression
for $4C_2^{(\eps)}$. As a consequence, we have
\begin{equ}[e:finalExprLogs]
C_3^{(\eps)} + 4C_2^{(\eps)}
= 
\begin{tikzpicture}[baseline=0.6cm,scale=0.35]
\node at (0,0) [root] (0) {};
\node at (0,2) [dot] (1) {};
\node at (0,4) [dot] (4) {};

\draw[keps] (1) to (0);
\draw[kbase] (0) to[bend left=60] (4);
\draw[kbase] (1) to[bend right=60] (4);
\draw[keps] (4) to (1);
\end{tikzpicture}
+
\begin{tikzpicture}[baseline=0.6cm,scale=0.35]
\node at (0,0) [root] (0) {};
\node at (0,2) [dot] (1) {};
\node at (0,4) [dot] (4) {};

\draw[keps] (1) to (0);
\draw[kbase] (4) to[bend left=60] (0);
\draw[kbase] (1) to[bend left=60] (4);
\draw[keps] (4) to (1);
\end{tikzpicture}
+
\begin{tikzpicture}[baseline=0.6cm,scale=0.35]
\node at (0,0) [root] (0) {};
\node at (0,2) [dot] (1) {};
\node at (0,4) [dot] (4) {};

\draw[keps] (1) to (0);
\draw[kbase] (0) to[bend left=60] (4);
\draw[kbase] (4) to[bend left=60] (1);
\draw[keps] (4) to (1);
\end{tikzpicture}
+{1\over 2}\;
\begin{tikzpicture}[baseline=0.4cm,scale=0.35]
\node at (0,0) [root] (0) {};
\node at (-1.5,2) [dot] (1) {};
\node at (1.5,2) [dot] (2) {};

\draw[keps] (1) to (0);
\draw[keps] (2) to (0);
\draw[kbase] (2) to (1);
\draw[kbase] (1) to[bend left=60] (2);
\end{tikzpicture} + (\ldots)\;.
\end{equ}
It is therefore sufficient to show that the four terms appearing on the right hand
side of this expression all converge to finite limits
as $\eps \to 0$.

To bound the first two terms, we use the easily shown fact that the kernel $K_{\eps,\rho}'(z)K_{\eps,\rho}(-z)$ converges to $P_\rho'(z) P_\rho(-z)$ (where we set $P_\rho = P * \rho$)
 $\CB_{0,\beta}$ for every $\beta > 0$. 
The fact that these terms converge to finite limits independent of the choice of $K$ 
then immediately follows 
by applying Lemma~\ref{lem:twoscale} twice.
A virtually identical argument allows to deal with the fourth term.
Concerning the third term appearing in the right hand side of \eqref{e:finalExprLogs},
we note that, by Remark~\ref{rem:Keps} and Lemma~\ref{lem:twoscale}, the kernel
$F_\eps \eqdef (K_{\eps,\rho}K_{\eps,\rho}')*K_{\eps,\rho}'$ converges to a limit in
$\CB_{0,2-\kappa}$ for any $\kappa > 0$, and is supported in $\{(t,x)\,:\, t > -C\}$, for some fixed constant $C>0$. Since the kernel
$K_{\eps,\rho}$ also has the same support property and converges in $\CB_{0,1-\kappa}$, the product $F_\eps(z)K_{\eps,\rho}(-z)$ converges in $\CB_{0,3-\kappa}$
and is supported in $\{(t,x)\,:\, |t| \le C\}$. It is straightforward to conclude
that such a function is absolutely integrable for $\kappa$ small enough,
and the claim then follows. 

It remains to show that the constants $C_\tau^{(\eps)}$ have finite limits for all
$\tau \in \Bad \setminus\{\<22>,\<211>\}$, where $\Bad$ was defined in \eqref{e:tauBB}.
Let us first consider
elements $\tau$ of the form 
\begin{equ}
\tau = \Eps^\ell\bigl(\Psi^{2\ell} \CI'(\Eps^m(\Psi^{2m+2}))\CI'(\Eps^n(\Psi^{2n+2}))\bigr)\;, 
\end{equ}
with $\ell + m + n > 0$, which is essentially a ``decorated'' version of $\<22>$.
By the definition \eqref{e:defCtaueps} of $C_\tau^{(\eps)}$ combined with the definitions
of $\Wick$ and $\PPi^{(\eps)}$, we have the identity
\begin{equ}
C_\tau^{(\eps)} = \eps^{\ell+m+n} \E \bigl((\Psi^{(\eps)})^{\diamond (2\ell)}\Phi_{2m+2}^{(\eps)}\Phi_{2n+2}^{(\eps)}\bigr)(0)\;,
\end{equ}
where we used the notations $\Psi^{(\eps)} = K' * \xi^{(\eps)}$
and $\Phi_\ell^{(\eps)} = (K' * (\Psi^{(\eps)})^{\diamond \ell})$ 
as in \eqref{e:boundPsieps} and \eqref{e:boundDerPhi}.
Using graphical notations similar to before and the properties of the Wick product, 
the expectation appearing in this expression 
is given by all possible ways of performing pairwise contractions of all 
nodes of the type \tikz[baseline=-3] \node [var] {};
without ever contracting two nodes belonging to the same
``group'' in the following graph:
\begin{equ}
\begin{tikzpicture}[scale=0.35,baseline=0.3cm]
\def\x{6}
\def\hf{3}
\def\w{1.6}
	\node at (0,1)  [dot] (int) {};
	\node at (-\w,3.5)  [var] (left) {};
	\node at (\w,3.5)  [var] (right) {};
	\node at (0,3.5)  {$\cdots$};
	
	\draw[keps] (left) to (int);
	\draw[keps] (right) to (int);
\draw [decorate,decoration={brace,amplitude=7pt}]
(-\w-0.1,4) -- (\w+0.1,4) node [midway,yshift=0.5cm] 
{\scriptsize $(2m+2)$};
	\node at (\x,1)  [dot] (int2) {};
	\node at (\x-\w,3.5)  [var] (left2) {};
	\node at (\x+\w,3.5)  [var] (right2) {};
	\node at (\x,3.5)  {$\cdots$};
	
	\draw[keps] (left2) to (int2);
	\draw[keps] (right2) to (int2);
\draw [decorate,decoration={brace,amplitude=7pt}]
(\x-\w-0.1,4) -- (\x+\w+0.1,4) node [midway,yshift=0.5cm] 
{\scriptsize $(2n+2)$};

	\node at (\hf,-2)  [root] (root) {};
	\draw[kernel,bend left=30] (int2) to (root);
	\draw[kernel,bend right=30] (int) to (root);

	\node at (\hf-\w,0.5)  [var] (left3) {};
	\node at (\hf+\w,0.5)  [var] (right3) {};
	\node at (\hf,0.5)  {$\cdots$};
	\draw[keps] (left3) to (root);
	\draw[keps] (right3) to (root);
\draw [decorate,decoration={brace,amplitude=7pt}]
(\hf-\w-0.1,1) -- (\hf+\w+0.1,1) node [midway,yshift=0.5cm] 
{\scriptsize $(2\ell)$};
\end{tikzpicture}
\end{equ}
It is clear that such a pairing can exist only when no such group is larger than the two 
others combined, i.e.\ when $m \le \ell + n$, $n \le \ell + m$, and
$\ell \le m+n+2$. If one of these conditions fails, one has $C_\tau^{(\eps)} = 0$ and the
statement is trivial. If they are satisfied on the other hand, 
one obtains with the same graphical notations as in \eqref{e:boundSimpleMultiple}
the identity
\begin{equ}[e:formulaCtau]
C_\tau^{(\eps)} =  \eps^{\ell+m+n} C_{\ell,m,n}
\begin{tikzpicture}[scale=0.35,baseline=-0.25cm]
	\node at (0,1)  [dot] (int) {};
	\node at (6,1)  [dot] (int2) {};

	\node at (3,-2)  [root] (root) {};
	\draw[kernel,bend left=30] (int2) to (root);
	\draw[kernel,bend right=30] (int) to (root);

	\draw[dots,bend right=30] (int2) to node[labl] {\scriptsize $(a)$} (root);
	\draw[dots,bend left=30] (int) to node[labl] {\scriptsize $(b)$} (root);
	\draw[dots,bend left=30] (int) to node[labl] {\scriptsize $(c)$} (int2);
\end{tikzpicture}
\;,
\end{equ}
where the integer values $a$, $b$ and $c$ are related to $\ell$, $m$ and $n$
by $a+b = 2\ell$, $a+c = 2m+2$, $b+c = 2n+2$, and the combinatorial factor 
$C_{\ell,m,n}$ is given by
\begin{equ}
C_{\ell,m,n} = {(2m+2)!(2n+2)!(2\ell)! \over a!b!c!}\;.
\end{equ}
The above conditions on $\ell$, $m$, $n$ precisely guarantee that $a$, $b$ and $c$ 
are positive. In order to show that $C_\tau^{(\eps)}$ converges to a limit as $\eps \to 0$,
we note first that as before we can perform a change of variables such that
one actually has
\begin{equ}[e:formulaCtauEps]
C_\tau^{(\eps)} =  C_{\ell,m,n}
\begin{tikzpicture}[scale=0.35,baseline=-0.25cm]
	\node at (0,1)  [dot] (int) {};
	\node at (6,1)  [dot] (int2) {};

	\node at (3,-2)  [root] (root) {};
	\draw[kernel,bend left=30] (int2) to (root);
	\draw[kernel,bend right=30] (int) to (root);

	\draw[dots,bend right=30] (int2) to node[labl] {\scriptsize $(a)$} (root);
	\draw[dots,bend left=30] (int) to node[labl] {\scriptsize $(b)$} (root);
	\draw[dots,bend left=30] (int) to node[labl] {\scriptsize $(c)$} (int2);
\end{tikzpicture}
\;,
\end{equ}
provided that we now interpret \tikz[baseline=-0.1cm] \draw[dots] (0,0) to node[labl]{\scriptsize $(k)$} (1.5,0); as $P_\eps^k$ and \tikz[baseline=-0.1cm] \draw[kernel] (0,0) -- (1,0); as $\CS_\eps^{(2)}K'$.
As a consequence of Lemma~\ref{lem:Keps}, combined with the properties of the
scaling operator and the definition of $K$, the kernel $P_\eps$ converges to 
a limit $P_0$ in $\CB_{0,1-\kappa}$ for every $\kappa > 0$. Similarly, the kernel
$\CS_\eps^{(2)}K'$ converges to $P'$ (the spatial derivative of the 
heat kernel $P$) in $\CB_{2,2-\kappa}$ for every $\kappa > 0$.
In all cases, these limits are independent of the choice of kernel $K$.

Write $\tilde P_\eps^{(a)} = P_\eps^a \CS_\eps^{(2)}K'$ as a shorthand.
As a consequence of the above, the kernels $\tilde P_\eps^{(a)}$, $\tilde P_\eps^{(b)}$,
and $P_\eps^c$ converge in $\CB_{2,2+a-\kappa}$, $\CB_{2,2+b-\kappa}$ and $\CB_{0,c-\kappa}$
respectively. We now distinguish between two different cases. First, we consider the case
$c = 0$. In this case we see from \eqref{e:formulaCtauEps} that 
\begin{equ}
C_\tau^{(\eps)} =  C_{\ell,m,n} \int \tilde P_\eps^{(a)}(z)\,dz
\int \tilde P_\eps^{(b)} (z)\,dz\;.
\end{equ}
Since the kernels $\tilde P_\eps^{(a)}$ and $\tilde P_\eps^{(a)}$ are odd under the
substitution $x \mapsto -x$, we have $C_\tau^{(\eps)} = 0$ in this case so the claim
is trivial.
In the case $c > 0$, we obtain from \eqref{e:formulaCtauEps} the identity
\begin{equ}
C_\tau^{(\eps)} =  C_{\ell,m,n} \bigl(\tilde P_\eps^{(a)} * \tilde P_\eps^{(a)}(-\cdot) * P_\eps^c \bigr)(0)\;.
\end{equ}
To show that this converges, note first that as a consequence of
Lemma~\ref{lem:twoscale}, $\tilde P_\eps^{(a)} * \tilde P_\eps^{(a)}(-\cdot)$ converges in 
$\CB_{1,\beta}$ to some limit $\tilde P^{(a,b)}$ for every $\beta < (1+a+b) \wedge (2+a) \wedge (2+b)$.
There are now three cases. If $a = b = 0$, then $\tilde P^{(a,b)} \in \CB_{1,1-\kappa}$. 
In this case one has $\ell = 0$ and $c = m + n + 2 \ge 3$, so that 
$P_\eps^c$ converges in $\CB_{0,3-\kappa}$. Lemma~\ref{lem:twoscale} then 
implies that the convolution
converges in $\CB_{0,0}$, so that $C_\tau^{(\eps)}$ converges.
If $a > b = 0$, then $\tilde P^{(a,b)} \in \CB_{1,2-\kappa}$. In this case, since $b=0$ and $b+c = 2n+2$, one has $c \ge 2$ so that $P_\eps^{c}$ 
converges in $\CB_{0,2-\kappa}$. This does again allow us to apply Lemma~\ref{lem:twoscale}
to show that $C_\tau^{(\eps)}$ converges in this case. The case $b > a = 0$ is of 
course identical. In the last case when 
both $a$ and $b$ are strictly positive, one has $\tilde P^{(a,b)} \in \CB_{1,3-\kappa}$.
Since we assumed $c > 0$, this again allows us to apply Lemma~\ref{lem:twoscale} to cover
this last case as well. 

We now turn to
\begin{equ}
\tau = \Eps^\ell\bigl(\Psi^{2\ell+1} \CI'(\Eps^m(\Psi^{2m+1}\CI'(\Eps^n(\Psi^{2n+2}))))\bigr)\;,
\end{equ}
the ``decorated'' version of $\<211>$. In this case, an argument virtually identical to above
shows that one has
\begin{equ}[e:formulaCtauEps2]
C_\tau^{(\eps)} =  C_{\ell,m,n}
\begin{tikzpicture}[scale=0.35,baseline=-0.25cm]
	\node at (0,1)  [dot] (int) {};
	\node at (6,1)  [dot] (int2) {};

	\node at (3,-2)  [root] (root) {};
	\draw[kernel,bend right=30] (root) to (int2);
	\draw[kernel,bend right=30] (int) to (root);

	\draw[dots,bend right=30] (int2) to node[labl] {\scriptsize $(a)$} (root);
	\draw[dots,bend left=30] (int) to node[labl] {\scriptsize $(b)$} (root);
	\draw[dots,bend left=30] (int) to node[labl] {\scriptsize $(c)$} (int2);
\end{tikzpicture}
\;,
\end{equ}
but this time the constants $a$, $b$, $c$ satisfy
\begin{equ}[e:condabc]
a+c = 2\ell +1 \;,\qquad a+b = 2m+1\;,\qquad b+c = 2n+2\;.
\end{equ}
As before one has $C_\tau^{(\eps)} = 0$ when $c = 0$ so that we can assume $c > 0$.
As before, we then have
\begin{equ}
C_\tau^{(\eps)} =  C_{\ell,m,n} \bigl(\tilde P^{(a,b)}_\eps * P_\eps^c\bigr)(0)\;,
\end{equ}
this time with $\tilde P^{(a,b)}_\eps = \tilde P_\eps^{(a)} * \tilde P_\eps^{(b)}$
which, as before, converges to a limit $\tilde P^{(a,b)}$ in $\CB_{1,\beta}$
for every $\beta < (1+a+b) \wedge (2+a) \wedge (2+b)$.
The case $a = b = 0$ is impossible since one has $a + b \ge 1$, so assume first
$a > b = 0$. As before, this implies that $c \ge 2$, so that this case is covered
by Lemma~\ref{lem:twoscale} as above. The case where $a,b,c> 0$ is also covered in exactly
the same way as above.
This time however, the case $b > a = 0$ is not the same
as the case $a > b =0$ since the conditions \eqref{e:condabc} are no longer symmetric 
under $a \leftrightarrow b$. If $c \ge 2$, then this case is covered in the same way as before.

However, it can happen this time that $a = 0$ and $c = 1$, which is not covered by
Lemma~\ref{lem:twoscale} anymore. Our assumptions then imply that $b \ge 3$, so that 
$\tilde P_\eps^{(b)}$ is integrable.
We furthermore exploit the fact that $\tilde P_\eps^{(b)}$ is odd, so that it
actually integrates to $0$ and we are in the setting of Lemma~\ref{lem:gainOne}
with $\alpha_1 = \alpha_2 = 2$, $\beta_1 = 5-\kappa$, and $\beta_2 = 2-\kappa$.
This shows that in this case $\tilde P_\eps^{(a,b)}$ converges to
$\tilde P^{(a)}$ not only in $\CB_{1,\beta}$ for $\beta < 2$, but also for all
$\beta < 3$. Lemma~\ref{lem:twoscale} now applies to show that the convolution
with $P_\eps$ converges in $\CB_{0,0}$, thus yielding the required convergence
and concluding the proof.
\end{proof}

\section{Main convergence results}

We are now ready to collect the various results from the previous sections in
order to prove the main convergence results of this article.

\subsection{Weak asymmetry regime}

We have the 
following result, which allows us to identify solutions driven by the
model $\Z$ with the Hopf-Cole solutions to the KPZ equation. 

\begin{proposition}\label{prop:CH}
Let $\gamma, \eta$ be as in Theorem~\ref{theo:FP}
and let $H \in \CD^{\gamma,\eta}$ be the solution to \eref{e:abstrFP} given by
Theorem~\ref{theo:FP} for the model $\Z$ given by Theorem~\ref{theo:convModel},
and with initial condition $h_0 \in \CC^\eta$.
Then, there exists a constant $c$ depending only on the choice of cutoff kernel $K$ 
such that the function $h(t,x) = \bigl(\CR H\bigr)(t,x) - \lambda^3 c t$ 
is almost surely equal to $h_{\Hopf}^{(\lambda)}$ with $\lambda = \hat a_1$.
\end{proposition}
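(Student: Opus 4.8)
The strategy is to show that the renormalised equation solved by $h = \CR H$ for the model $\Z$ is, up to a deterministic constant drift, the Hopf--Cole equation for the KPZ equation with $\lambda = \hat a_1$, and then invoke the known uniqueness of Hopf--Cole solutions. The key observation is that $\Z$ arises as the limit of the renormalised canonical lifts $\Z_\eps = M^{(\eps)}\PPP\LL_\eps(\xi^{(\eps)})$, and that by Proposition~\ref{prop:solutionRenormalised} the reconstruction $h_\eps = \hat\CR H_\eps$ of the solution to \eref{e:abstrFP} driven by $\Z_\eps$ solves the classical PDE
\begin{equ}
\d_t h_\eps = \d_x^2 h_\eps + \sum_{j=1}^m \eps^{j-1}\hat a_j \H_{2j}(\d_x h_\eps, C_0^{(\eps)}) + c_\eps + \xi^{(\eps)}\;,
\end{equ}
where $c_\eps$ is the linear combination of the $C_\tau^{(\eps)}$ described in the remark after Proposition~\ref{prop:solutionRenormalised}. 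By Theorem~\ref{theo:logs} the relevant logarithmic divergences cancel, so $c_\eps$ converges to a finite constant $\lambda^3 c$ (the scaling by $\lambda^3 = \hat a_1^3$ coming from the trilinearity in the coefficients noted in the remarks to Theorem~\ref{theo:weakAsym}); write $c_\eps = \lambda^3 c + o(1)$.

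First I would pass to the limit $\eps \to 0$ on the abstract side: by Theorem~\ref{theo:convModel} one has $\$\Z_\eps; \Z\$_\eps \to 0$ in probability, and by Theorem~\ref{theo:FP} (with $h_0^{(\eps)} := h_0$, which trivially satisfies the convergence hypothesis) the solutions $H_\eps \in \CD^{\gamma,\eta}_\eps$ converge to $H \in \CD^{\gamma,\eta}_0$ in the sense $\|H_\eps; H\|_{\gamma,\eta;\eps}\to 0$, hence $\CR_\eps H_\eps \to \CR H$ in probability in $\CC^\eta$ by continuity of the reconstruction operator. Thus $h_\eps - \lambda^3 c t \to h - \lambda^3 c t$, i.e.\ $h(t,x) = \lim_{\eps\to 0}(h_\eps(t,x) - \lambda^3 c t) - \lambda^3 c t + \lambda^3 c t$; more carefully, $h_\eps - \lambda^3 c_\eps' t$ with $c_\eps' \to c$ converges to $h$, and the target quantity $h(t,x) - \lambda^3 c t$ is the limit of $h_\eps(t,x) - c_\eps t$.

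Next I would compare with the Hopf--Cole solution. Since white noise mollified at scale $\eps$ is a standard object, the function $Z_\eps := \exp(\lambda h_\eps)$ (after the spatial shift removing the $j\ge 2$ Hermite terms' contribution in the $\eps\to0$ limit — actually, the point is subtler: the higher Hermite terms $\eps^{j-1}\H_{2j}(\d_x h_\eps, C_0^{(\eps)})$ for $j\ge 2$ carry positive powers of $\eps$ only after the Wick renormalisation, and the content of the remarks following Theorem~\ref{theo:weakAsym} is precisely that they do \emph{not} vanish but are already accounted for in the definition of $\hat a_1$ via $\lambda$) — here is where I would be careful. The cleanest route is: the limit $h$ solves, by construction and the remarks, the same equation as the regularity-structures solution of the quadratic KPZ equation $\d_t h = \d_x^2 h + \lambda(\d_x h)^2 + \xi$ with $\lambda = \hat a_1$ up to the constant $\lambda^3 c t$, because both are obtained as limits of the \emph{same} family of renormalised models restricted to the quadratic fixed point problem — this uses that $\hat\Pi_z\tau = 0$ for symbols containing $\Eps$ (Theorem~\ref{theo:convModel}), so that in the limit the contributions of all $j\ge 2$ terms collapse into the model, exactly as in the quadratic case treated in \cite{KPZ}. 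Then I invoke \cite[Thm.~6.5]{Regularity} (or the corresponding identification theorem in \cite{KPZ}) which states that the regularity-structures solution of quadratic KPZ, after subtracting the appropriate constant $\lambda^3 c t$ determined by the cutoff kernel $K$, coincides almost surely with $h_\Hopf^{(\lambda)}$.

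\textbf{Main obstacle.} The delicate point is the bookkeeping of constants: one must verify that the constant $c_\eps$ produced by $M_0$ in Proposition~\ref{prop:solutionRenormalised}, together with the logarithmically-divergent-but-cancelling combination $C_3^{(\eps)} + 4C_2^{(\eps)}$ from Theorem~\ref{theo:logs} and the finite limits $c_\tau$ of the remaining $C_\tau^{(\eps)}$, assembles into precisely $\lambda^3 c$ with the \emph{same} $c$ (depending only on $K$) that appears in the quadratic-KPZ identification theorem, and that no residual finite shift survives from the $j \ge 2$ Hermite terms beyond what is absorbed into $\lambda = \hat a_1$. Establishing this requires tracking the trilinear dependence on the coefficients $\hat a_j$ noted in the remarks, and checking that in the limit $\eps\to0$ the only surviving effect of the higher-order nonlinearities on $h$ itself (as opposed to on the model, where they vanish) is through a deterministic drift already of the form $\lambda^3 c t$. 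I expect this constant-chasing, rather than any analytic estimate, to be the crux; the convergence of solutions and the final invocation of uniqueness of Hopf--Cole solutions are then routine given the machinery already in place.
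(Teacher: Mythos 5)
There is a genuine gap, and it sits exactly where you flag your ``main obstacle''. Your route works with the approximating models $\Z_\eps = M^{(\eps)}\PPP\LL_\eps(\xi^{(\eps)})$, so the renormalised PDE solved by $\CR_\eps H_\eps$ carries the full Hermite sum $\sum_j \eps^{j-1}\hat a_j \H_{2j}(\d_x h_\eps, C_0^{(\eps)})$ together with a drift constant that is a linear combination of \emph{all} the $C_\tau^{(\eps)}$, $\tau\in\Bad$, with coefficients depending trilinearly on all the $\hat a_j$ (this is precisely the remark following Proposition~\ref{prop:solutionRenormalised} and the remark to Theorem~\ref{theo:weakAsym}). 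To conclude that the limit differs from $h_\Hopf^{(\hat a_1)}$ by exactly $\hat a_1^3 c\, t$ with $c$ depending only on $K$, you would have to prove that the weak limits of the higher Hermite terms cancel the finite limits $c_\tau$ of the decorated renormalisation constants, and you would separately have to justify the assertion that the solution of the \emph{full} fixed point problem for the limiting model reconstructs to the same distribution as the solution of the quadratic problem (the terms $\hat\Eps^{j-1}(\CQ_{\le 0}(\DD H)^{2j})$, $j\ge 2$, are not zero as modelled distributions; only their reconstructions and polynomial components vanish for a model in $\MM_0$, so dropping them from the fixed point map needs an argument). Neither step is carried out, and the first one is not ``routine constant-chasing'': it is the whole content of the proposition. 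Also, taking $h_0^{(\eps)} := h_0$ in Theorem~\ref{theo:FP} is not legitimate for a general $h_0\in\CC^\eta$, since $\|\cdot;\cdot\|_{\gamma,\eta;\eps}$ requires more regularity of the approximants.

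The paper's proof dissolves both difficulties with one device that your proposal is missing: Proposition~\ref{prop:convCH} re-approximates the \emph{same} limiting model $\Z$ by the models $\tilde\Z_\eps = \tilde M^{(\eps)}\LL_0(\xi^{(\eps)})$, i.e.\ the canonical lift with ``$\eps=0$'' in the symbol $\Eps$ and with $C_\tau=0$ for all decorated $\tau$. These agree with $\Z_\eps$ on all symbols not containing $\Eps$ and vanish (like the limit) on the others, so they converge to $\Z$ in $\MM_0$; moreover they are honest $\MM_0$-models, which is what allows the initial condition to be taken in $\CC^\eta$. For this second approximation, Proposition~\ref{prop:solutionRenormalised} produces exactly the mollified \emph{quadratic} KPZ equation \eqref{e:approxKPZeps} with drift $\hat a_1 C_0^{(\eps)} + 2\hat a_1^3\bigl(4C_2^{(\eps)}+C_3^{(\eps)}\bigr)$, so the Hopf--Cole transform and \cite[Thm~4.7]{Etienne} apply verbatim, Theorem~\ref{theo:logs} gives convergence of $4C_2^{(\eps)}+C_3^{(\eps)}$, and the constant comes out automatically in the form $\hat a_1^3(\hat c_0-c_0)$, depending only on $K$. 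In short: the identification should be performed by changing the approximating sequence of models, not by passing to the limit along the original one; without that idea your argument leaves the decisive cancellations unproved.
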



In order to prove this result, we give an alternative construction of the
model $\Z$. This will allow us to obtain Proposition~\ref{prop:CH} as an
essentially immediate consequence of \cite[Thm~4.7]{Etienne}.
To formulate this preliminary result,
we define $\tilde M^{(\eps)}$ exactly as $M^{(\eps)}$, but this time
with $C_\tau = 0$ for every $\tau$ of the form \eqref{e:tauBB} with 
$\ell + m + n > 0$.
Using the same notations as above, we then have the following result:

\begin{proposition}\label{prop:convCH}
Let $\xi^{(\eps)}$ be given by \eref{e:xiEps} and consider the sequence
of models on $\CT$ given by
\begin{equ}
\tilde \Z_\eps = \tilde M^{(\eps)} \LL_0(\xi^{(\eps)})\;,
\end{equ}
with $\LL_0$ defined in Section~\ref{sec:canonical}.
Then, one has $\tilde \Z_\eps \to \Z$ in $\MM_0$ in probability, where $\Z$ is the same
(random) model as in Theorem~\ref{theo:convModel}.
\end{proposition}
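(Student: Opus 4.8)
The plan is to compare $\tilde\Z_\eps$ directly with $\Z_\eps = M^{(\eps)}\PPP\LL_\eps(\xi^{(\eps)})$ from Theorem~\ref{theo:convModel}, for which convergence is already established, and to show that the differences introduced by (a) replacing $\LL_\eps$ with $\LL_0$ and (b) dropping the renormalisation constants $C_\tau$ for $\tau$ of the form \eqref{e:tauBB} with $\ell+m+n>0$ both vanish in the relevant topology as $\eps\to 0$. The point is that these two modifications only ever affect symbols containing at least one occurrence of $\Eps$, since $\LL_\eps$ and $\LL_0$ agree on all $\Eps$-free symbols (both are defined by \eqref{e:admissible3}, \eqref{e:admissible4}, \eqref{e:canonical}) and differ only through \eqref{e:canonical2}, and the constants $C_\tau$ with $\ell+m+n>0$ correspond precisely to $\Eps$-decorated versions of $\<22>$ and $\<211>$. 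So on the $\Eps$-free sector the two families of models are \emph{identical}, and there the convergence $\tilde\Z_\eps\to\Z$ is exactly the statement already proved in Theorem~\ref{theo:convModel} restricted to that sector.

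The remaining work is to control the $\Eps$-containing symbols. First I would observe that for $\tilde\Z_\eps$, by construction, every basis vector $\tau$ containing an occurrence of $\Eps$ has $\PPi^{(\eps)}_{\mathrm{canonical}}\tau$ built from \eqref{e:canonical2} with the real parameter set to $0$ rather than $\eps$; after applying $\tilde M^{(\eps)}$ (which for these symbols only subtracts $C_\tau^{(\eps)}$ when $\ell+m+n=0$, i.e.\ for the ``plain'' $\<22>$ and $\<211>$, whose constants are $2C_3^{(\eps)}$ and $2C_2^{(\eps)}$ as in $M^{(\eps)}$), one checks that $\hat\Pi_z^{(\eps),\sim}\tau$ coincides with $\hat\Pi_z^{(\eps)}\tau$ built from $\Z_\eps$ up to terms that carry at least one extra power of $\eps$ (through the factor $\eps^{k}$ in \eqref{e:canonical2A} being replaced by $0$, versus being replaced by $\eps^k$, $k\ge1$) together with terms involving the omitted constants $C_\tau^{(\eps)}$ with $\ell+m+n>0$. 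The latter, by Theorem~\ref{theo:logs}, converge to finite limits $c_\tau$, so the discrepancy between $\tilde\Z_\eps$ and $\Z_\eps$ on these symbols is $O(\eps^\delta)$ plus bounded quantities $C_\tau^{(\eps)}$ acting on test functions tested at scale $\lambda$, which the analysis in Section~\ref{sec:convModel} (in particular the bounds \eqref{e:wantedKPZ}, \eqref{e:assumf}) shows scale appropriately; the key fact is that in the limit one has $\hat\Pi_z\tau=0$ for any $\tau$ containing $\Eps$ (last assertion of Theorem~\ref{theo:convModel}), so \emph{both} $\tilde\Z_\eps$ and $\Z_\eps$ converge to the same limit on these symbols, namely $0$. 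Concretely, I would re-run the estimates of Section~\ref{sec:Eps} verbatim: each $\Eps$-decorated symbol, under either renormalisation, reduces via \eqref{e:Hermite} and Proposition~\ref{prop:renormProd} to an expression built from the kernels $N_\eps$, $Q_\eps^{(m)}$ which satisfy $\|N_\eps\|_{\delta;p}\lesssim\eps^\delta$ and $\|Q_\eps^{(m)}\|_{3+\delta;p}\lesssim\eps^\delta$ by \eqref{e:boundNeps}, giving moment bounds $\lesssim\eps^\delta\lambda^{|\tau|+\delta}$; the only new input needed is that replacing $\eps^k$ by $0$ in \eqref{e:canonical2} does not spoil this — but it can only \emph{remove} terms, hence trivially preserves the bound.

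Assembling: I would invoke Proposition~\ref{prop:momentBounds}, checking its hypotheses \eqref{e:requiredBounds} for $\hat\Pi^{(\eps),\sim}$ — the bounds \eqref{e:wantedKPZ} hold on $\Eps$-free symbols by identity with $\Z_\eps$ and on $\Eps$-containing symbols by the Section~\ref{sec:Eps} argument (with limit $0$), and the bounds \eqref{e:assumf} on $\hat f^{(\eps),\sim}$ hold because the omission of the $C_\tau$ with $\ell+m+n>0$ does not touch the $f_z(\EE^k_0(\tau))$ at all (those are governed by \eqref{e:canonical2B}, unaffected by $M_0$), while setting the model parameter to $0$ only improves the powers of $\eps$. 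This yields a random model $(\hat\Pi^\sim,\hat f^\sim)\in\MM_0$ with $\$\tilde\Z_\eps;(\hat\Pi^\sim,\hat f^\sim)\$_\eps\to 0$ in probability; since $\tilde\Z_\eps$ and $\Z_\eps$ agree on the $\Eps$-free sector and both vanish on the rest in the limit, $(\hat\Pi^\sim,\hat f^\sim)=\Z$, completing the proof. The main obstacle I anticipate is purely bookkeeping: carefully matching, symbol by symbol, the renormalisation constants implicitly produced by $\tilde M^{(\eps)}\LL_0$ against those produced by $M^{(\eps)}\PPP\LL_\eps$, and verifying that no logarithmically-divergent constant has been silently dropped — this is exactly where the combination $4c_{\tau_1}+c_{\tau_2}$ and the cancellation \eqref{e:logsCancel} from Theorem~\ref{theo:logs} must be used to see that the constants actually retained in $\tilde M^{(\eps)}$ (namely $2C_2^{(\eps)}$, $2C_3^{(\eps)}$) are precisely the ones whose divergence is needed, while those dropped ($C_\tau^{(\eps)}$, $\ell+m+n>0$) are genuinely convergent by Theorem~\ref{theo:logs}.
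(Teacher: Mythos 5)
Your proposal is correct in substance and follows the same skeleton as the paper's own argument: split the symbols according to whether they contain $\Eps$, identify $\tilde \Z_\eps$ with $\Z_\eps$ on the $\Eps$-free sector (same canonical lift there, same $\DeltaW$, same constants for $\<22>$ and $\<211>$), and show that on the $\Eps$-containing sector everything converges to the zero value that the limit $\Z$ takes there. The one thing you miss is a substantial simplification: under $\LL_0$ the identities \eqref{e:canonical2} are applied with the parameter equal to $0$, so $\Pi_z\tau$ and $f_z(\EE^k_\ell(\tau))$ vanish \emph{identically} on every symbol containing $\Eps$, and since $\tilde M^{(\eps)}$ drops the constants $C_\tau$ for all decorated $\tau$ (and the Wick map and $M_0$ cannot reintroduce anything nonzero through $f_z$), one has $\tilde\Pi^{(\eps)}_z\tau = 0$ exactly, for every $\eps$. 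Hence there is nothing to estimate on that sector — no re-run of the Section~\ref{sec:Eps} bounds and no appeal to Proposition~\ref{prop:momentBounds} is needed; the convergence statement reduces entirely to Theorem~\ref{theo:convModel} on the $\Eps$-free sector together with the observation that $\tilde f^{(\eps)}_z(\EE^k_\ell(\tau)) = 0$ pins down $\tilde f^{(\eps)}$ and places $\tilde\Z_\eps$ in $\MM_0$. Also, your anticipated use of the logarithmic cancellation \eqref{e:logsCancel} is misplaced for this proposition: that cancellation matters when identifying the renormalised \emph{equation} and the Hopf--Cole limit (Propositions~\ref{prop:solutionRenormalised} and \ref{prop:CH}), not for the convergence of the models, where the retained constants $2C_2^{(\eps)}$, $2C_3^{(\eps)}$ are exactly those already built into $M^{(\eps)}$ and handled by Theorem~\ref{theo:convModel}.
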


\begin{remark}
Note that in the statement of Proposition~\ref{prop:convCH}, we consider
the lift $\LL_0$ instead of the lift $\LL_\eps$. Since we furthermore 
set $C_\tau = 0$ for every formal expression $\tau$ containing the symbol $\Eps$, 
the model $\tilde \Z_\eps$ yields $0$ when applied to any formal expression
that includes a power of $\Eps$.
\end{remark}

\begin{proof}
By the combined definitions of $\LL_0$ and $\tilde M^{(\eps)}$ (in particular the 
fact that $C_\tau = 0$ for every $\tau$ of the form \eqref{e:tauBB} with 
$\ell + m + n > 0$), the model $\tilde \Z_\eps = (\tilde \Pi^{(\eps)},\tilde f^{(\eps)})$
satisfies $\tilde \Pi^{(\eps)}_z\tau = 0$ for every symbol $\tau$ that contains at least one
occurrence of $\Eps$. Therefore, any limiting model $\tilde \Pi$ must satisfy
$\tilde \Pi_z \tau = 0$ for such symbols, which is indeed the case for $\hat \Pi$.

Regarding the symbols $\tau$ not containing $\Eps$, we see from the definition
of $\LL_\eps$ in Section~\ref{sec:canonical} that both $\LL_\eps(\xi^{(\eps)})$ and $\LL_0(\xi^{(\eps)})$
act in exactly the same way on these symbols. Furthermore, the map
$\DeltaW$ appearing in \eqref{e:newModel} is the same for the constructions of 
$M^{(\eps)}$ and $\tilde M^{(\eps)}$, and the maps $M_0$ 
(also appearing in \eqref{e:newModel}) coincide on all elements not containing the symbol
$\Eps$. Therefore, we have $\tilde \Pi^{(\eps)}_z \tau = \hat \Pi^{(\eps)}_z \tau$
for every $\tau$ not containing $\Eps$. The claim (including that the models
$\tilde \Z_\eps$ converge in $\MM_0$) immediately follows from the fact 
that, $\tilde f^{(\eps)}$ is uniquely determined from $\tilde \Pi^{(\eps)}$ by the
condition that our models are admissible and satisfy $\tilde f_z^{(\eps)}(\EE_\ell^k(\tau)) = 0$
for every $\tau$.
\end{proof}

\begin{proof}[of Proposition~\ref{prop:CH}]
By Proposition~\ref{prop:convCH}, $h$ is the limit in probability of $h_\eps$, where 
$h_\eps = \CR^{(\eps)} H_\eps$, with $H_\eps$ the solution to the fixed point problem
associated to the model $\hat \Z_\eps$ and $\CR^{(\eps)}$ the corresponding
reconstruction operator. (Note that $\hat \Z_\eps$ is a model in $\MM_0$ and the convergence
takes place there. As a consequence, we can take an initial condition in $\CC^\eta$ even
for $\eps \neq 0$.)

However, we know from Proposition~\ref{prop:solutionRenormalised} that
$h_\eps$ is the classical strong solution to the semilinear PDE
\begin{equ}[e:approxKPZeps]
\d_t h_\eps = \d_x^2 h_\eps + \hat a_1 (\d_x h_\eps)^2 + \hat \xi^{(\eps)} - \hat a_1 C_0^{(\eps)} - c_\eps\;,
\end{equ}
where the constant $c_\eps$ is given by
\begin{equ}
c_\eps = 2 \hat a_1^3 \bigl(4C_{2}^{(\eps)} + C_{3}^{(\eps)}\bigr)\;.
\end{equ}
This constant converges to a finite limit of the form $\hat a_1^3 c_0$ with $c_0 \in \R$ 
depending in general both on the mollifier $\rho$ and the (arbitrary) choice of kernel $K$ 
by Theorem~\ref{theo:logs}.
In particular, a simple application of the chain rule shows that 
$Z_\eps = \exp(\hat a_1 h_\eps)$ is
the mild solution to 
\begin{equ}[e:approxSHE]
\d_t Z_\eps = \d_x^2 Z_\eps + \hat a_1 Z_\eps \,\xi^{(\eps)} - \hat a_1\bigl(\hat a_1 C_0^{(\eps)} + c_\eps\bigr) Z_\eps\;.
\end{equ} 
It was recently shown in \cite[Thm~4.7]{Etienne} (but see also \cite{KPZ}) 
that, for every $T>0$, the family $Z_\eps$ converges in probability in $\CC^\eta([0,T]\times S^1)$
to a limit $Z$ and that, provided that the renormalisation constant $c_\eps$ 
is suitably chosen (of the form $\hat a_1^3 \hat  c_0$ for some $\hat c_0$ depending
only on the choice of mollifier), 
this limit is almost surely equal to the solution to the stochastic heat equation \eqref{eq:SHE}
with $\lambda = \hat a_1$. This shows that the limit of \eqref{e:approxSHE} 
is given by
\begin{equ}
Z = \exp\bigl(\hat a_0^3(\hat c_0- c_0)t\bigr) Z^{(\hat a_1)}\;.
\end{equ}
Since we know that $Z^{(\hat a_1)}$ remains strictly positive \cite{MR1462228},
this implies in particular that
$h_\eps - \hat a_0^3 (\hat c_0 - c_0) t$ converges in probability to $h_{\Hopf}^{(\lambda)}$,
thus proving the claim with $c = \hat c_0 - c_0$.
The fact that $c$ depends only on $K$ and not on the choice of mollifier $\rho$ is a simple 
consequence of the fact that neither the limiting model $\Z$ nor the Hopf-Cole solution
depend on $\rho$. (But the limiting model $\Z$ does depend on the choice of $K$,
this is why there is no ``canonical'' value for $c$.)
\end{proof}

We are now ready to collect all of these results to prove the main 
convergence result of this article.

\begin{proof}[of Theorem  \ref{theo:weakAsym}]
Writing $\hat h_\eps = h_\eps - (\eps^{-1}\hat \lambda + c) t$, we first note
that $\hat h_\eps$ solves the equation
\begin{equ}
\d_t \hat h_\eps = \d_x^2 \hat h_\eps + {1\over \eps}F\bigl(\sqrt \eps \d_x h_\eps\bigr)  - \eps^{-1}\hat \lambda - c + \xi^{(\eps)}\;.
\end{equ}
Define now coefficients $\hat a_k$ implicitly by imposing the identity between polynomials
\begin{equ}
F(x) = \sum_{k=0}^m \hat a_k H_{2k}(x,C_0)\;,
\end{equ}
where $H_k(x,c)$ denotes the $k$th generalised Hermite polynomial as 
in \eqref{e:Hermite}.
One can check that the coefficients $\hat a_k$ are then given by
\begin{equ}
\hat a_k = {1\over k!} \int F^{(k)}(x)\,\mu_0(dx)\;.
\end{equ}
As a consequence of Proposition~\ref{prop:solutionRenormalised}, it then follows
that, provided that the constant $c$ is suitably chosen and that we set $\hat \lambda = \hat a_0$, 
one has $\hat h_\eps = \CR H$, where $H$ solves the fixed point problem
\eqref{e:abstrFP} for the renormalised model $\hat \Z_\eps$ considered in 
Theorem~\ref{theo:convModel}.
The (local in time) convergence of $h_\eps$ to a limit $h$ now follows 
by combining the convergence of $\hat \Z_\eps$ given in
Theorem~\ref{theo:convModel} with Theorem~\ref{theo:FP}.
The identification of the limit as the Hopf-Cole solution (provided that the constant
$c$ is suitably chosen) is given by Proposition~\ref{prop:CH}.
Since we know that the Hopf-Cole solutions are global, we immediately obtain
convergence over any fixed time interval from the last statement of Theorem~\ref{theo:FP}.
\end{proof}

\subsection{Intermediate disorder regime}

We now prove Theorem \ref{theo:convDisorder}. 
Let us first consider the special case where $F$ is a polynomial, so that $\tilde F = 0$. In this case,
we can rewrite the nonlinearity of \eref{e:rewritten} as
\begin{equ}
\sum_{k=0}^{2p-1} a_{p+k}\eps^{k \over 2p-1} \eps^{p + k - 1}  (\d_x h_\eps)^{2(p+k)}\;,
\end{equ}
which suggests that we should set $F_\eps(x) =  \sum_{k=0}^{2p-1} a_{p+k}\eps^{k \over 2p-1} x^{2(p+k)}$ and
define coefficients $\hat a_k^{(\eps)}$ as before by
\begin{equ}
\hat a_k^{(\eps)} = {1\over k!} \int F_\eps^{(k)}(x)\,\mu_0(dx)\;.
\end{equ}
In this case, one has in particular $\hat a_2^{(\eps)} \to \lambda$, with $\lambda$ as in 
\eqref{e:defLambdaIntermediate}, as well as $\hat a_0^{(\eps)} = \eps C_\eps$, with $C_\eps$
as in \eqref{e:defLambdaIntermediate}. One also has $\hat a_k^{(\eps)} \to \hat a_k$ for some $\hat a_k$
proportional to $a_p$ for $k \le p$,
and $\hat a_k^{(\eps)} \to 0$ for $k \in \{p,\ldots,3p-1\}$. With these notations at hand, we consider the
fixed point problem
\begin{equs}
H_\eps &= \CP \one_+ \Bigl(\Xi + \sum_{j=1}^{3p-1} \hat a_j^{(\eps)} \CQ_{\le 0}\hat\Eps^{j-1} \bigl(\CQ_{\le 0}(\DD H_\eps)^{2j}\bigr) + 
\eps^{-{3p-1\over 2p-1}} \tilde F \bigl(\eps^{p\over 2p-1} \CR \DD H_\eps\bigr)\one\Bigr)\\
&\qquad + P h_0^{(\eps)}\;,\label{e:abstrFPWeak}
\end{equs}
where $\CR$ denotes the reconstruction operator. 
Note now that if $H_\eps$ solves this fixed point equation and belongs to $\CD^{\gamma,\eta}_\eps$, 
then $\DD H_\eps$ is necessarily of the form
\begin{equ}
\DD H_\eps = \CI'(\Xi) + U'\;, 
\end{equ}
with $U' \in \CD^{\gamma,\eta}_\eps$ and $U'$ taking values in the subspace of $\CT$ spanned by $\one$
and elements with strictly positive homogeneity. In particular, by \eqref{e:defgammaetaeps}
and \cite[Def.~6.2]{Regularity}, $\CR U'$ is a continuous function
such that
\begin{equ}[e:boundRU']
\bigl|\bigl(\CR U'\bigr)(t,x)\bigr| \lesssim (\eps^2 + |t|)^{\eta-1\over 2} \|H_\eps\|_{\gamma,\eta;\eps}\;.
\end{equ}
It is also straightforward to show that 
\begin{equ}[e:wantedNoise]
\bigl(\Pi_z^{(\eps)} \CI'(\Xi)\bigr)(z) = |(K'*\xi^{(\eps)})(z)| \lesssim \eps^{-{1\over 2}-\kappa}\;,
\end{equ}
for any $\kappa > 0$, 
uniformly over compact domains. This shows that, for $\eta > {1\over 2}-\kappa$, the map
\begin{equ}
H_\eps \mapsto \eps^{{1\over 2}+\kappa} \CR \DD H_\eps\;,
\end{equ}
is locally Lipschitz continuous from $\CD^{\gamma,\eta}_\eps$ into $\CC$ (the space of continuous functions on the 
compact domain $D$ endowed with the supremum norm), uniformly over models $(\Pi,\Gamma) \in \MM_\eps$ with $\$\Pi\$_\eps$
bounded and furthermore satisfying \eqref{e:wantedNoise}.
Combining this with the fact that $|\tilde F(u) - \tilde F(v)| \lesssim |u-v|(|u|^{6p-1}+ |v|^{6p-1})$ for $u$ and $v$ 
bounded, we conclude that, provided that $\kappa < 1/(12p^2)$, the map
\begin{equ}
H_\eps \mapsto \tilde F^{(\eps)}(H_\eps) \eqdef \eps^{-{3p-1\over 2p-1}} \tilde F \bigl(\eps^{p\over 2p-1} \CR \DD H_\eps\bigr)\;,
\end{equ}
is locally Lipschitz continuous from $\CD^{\gamma,\eta}_\eps$ into $\CC$ (the space of continuous functions on the 
compact domain $D$ endowed with the supremum norm), with both norm and Lipschitz constant bounded uniformly over 
$\eps \in (0,1]$, $H_\eps$ in bounded balls of $\CD^{\gamma,\eta}_\eps$, and models in $\MM_\eps$ with bounded norm
satisfying \eqref{e:wantedNoise} for a fixed proportionality constant.
As a matter of fact, both the norm and the Lipschitz constant of $\tilde F^{(\eps)}$ are bounded by $\eps^\theta$ for some 
$\theta > 0$. Since the map $u \mapsto P*\one_+ u$, where $P$ denotes the heat kernel, maps $\CC$ into
$\CD^{\gamma,\eta}_\eps$ with norm bounded uniformly in $\eps$ and behaving like $T^\theta$ for some $\theta> 0$, where
$T$ is the local existence time under consideration, we can proceed as in the proof of Theorem~\ref{theo:FP}
to conclude that \eqref{e:abstrFPWeak} admits local solutions with a local existence time uniform over initial conditions
and models as just discussed.

As in the proof of Theorem~\ref{theo:FP}, one shows that as $\eps \to 0$, assuming that 
$\|\Pi^{(\eps)};\Pi\$_\eps \to 0$ for some model $\Pi$ and that the bound \eqref{e:wantedNoise}
holds uniformly over $\eps \in (0,1]$, one has $\|H_\eps;H\|_{\gamma,\eta;\eps}\to 0$,
where $H$ solves the fixed point problem
\begin{equ}[e:abstrFPLimit]
H_\eps = \CP \one_+ \Bigl(\Xi + \sum_{j=1}^{p} \hat a_j \CQ_{\le 0}\hat\Eps^{j-1} \bigl(\CQ_{\le 0}(\DD H_\eps)^{2j}\bigr)\Bigr) + P h_0\;.
\end{equ}
We now conclude exactly as before, noting that if we take for $\Pi^{(\eps)}$ the model
$\Z_\eps$ considered in Theorem~\ref{theo:convModel} then, as a consequence of 
Proposition~\ref{prop:solutionRenormalised}, $\CR H_\eps$ 
is precisely equal to $h_\eps - (C_\eps + c_\eps)t$, for the same constant $C_\eps$ as 
in the statement and some constant $c_\eps$ converging to a limit $c \in \R$.

\appendix

\section{A bound on generalised convolutions}
\label{sec:bounds}

In this section we obtain an estimate which allows us to bound the kind of generalised
convolutions of kernels appearing in the construction of quite general models built from Gaussian (and other) processes.

The basic ingredients are the following:  A finite directed multigraph $\CCG = (\CCV,\CCE)$ with edges 
$e \in \CCE$ labelled by pairs $(a_e, r_e) \in \R_+ \times \ZZ$, and  kernels 
$K_e \colon \R^d \setminus\{0\}\to \R$ which are compactly supported in the ball of radius $1$ around the origin. 
By multigraph we mean that we allow (a finite number of) multiple edges between vertices.  However, we will not allow edges from a vertex
to itself (loops).  We will always assume that every vertex has either an outgoing or incoming edge.
The exponent $a_e$ describes the singularity of the kernel $K_e$ at the origin 
in the sense that we assume that, for every $p>0$ and every edge $e \in \CCE$, 
the quantity $\|K_e\|_{a_e;p}$ is finite, where
\begin{equ}[e:boundK]
\|K\|_{\alpha;p} \eqdef \sup_{\|x\|_\s \le 1\atop |k|_\s < p} \|x\|_\s^{\alpha +|k|_\s}|D^k K(x)| <\infty\;.
\end{equ}
The constant $r_e$ will be used to allow for a 
renormalisation of the singularity.  The kernels are otherwise assumed
to be smooth.  If $r_e < 0$, then we will in addition be given a collection of real numbers $\{I_{e,k}\}_{|k|_\s < |r_e|}$ used to identify a Schwartz distribution associated to the singularity   (see \eqref{e:defRen}).

We will always consider the situation
where $\CCG$ contains a finite number $M\ge 1$ (typically $M = 2$) 
of distinguished edges $e_{\star,1},\ldots,e_{\star,M}$ 
connecting a distinguished vertex $0 \in \CCV$ to $M$ distinct vertices
$v_{\star,1},\ldots,v_{\star,M}$,
and all with label $(a_e,r_e)=(0,0)$.
In other words, the graphs we 
consider will always be of the following type:
\begin{center}
\begin{tikzpicture}
	[dot/.style={circle,fill=black!90,inner sep=0pt, minimum size=1.5mm},
	empty/.style={inner sep=0pt, minimum size=1.5mm},
   	spine/.style={very thick}]
	\begin{scope}
		  \node[cloud, fill=gray!20, cloud puffs=16, cloud puff arc= 100,
		        minimum width=4cm, minimum height=2.3cm, aspect=1] at (0,2) {$\cdots$};

		  \node at (0,0)  [dot,label=below:$0$] (zero) {};
		  \node at (-2,1)  [dot,label=left:$v_{\star,1}$] (left) {};
		  \node at (2,1)  [dot,label=right:$v_{\star,M}$] (right) {};
		\draw[->,spine] (zero) to node [sloped,below] {\small$(0,0)$} (left);
		\draw[->,spine] (zero) to node [sloped,below] {\small$(0,0)$} (right);

		  \foreach \y in {2.6,2,1.5} {
		  	\node at (-1,\y)  (nl) {};
			\draw[dashed,spine] (nl) to node [left] {} (left);
		  	\node at (1,\y)   (nr) {};
			\draw[dashed,spine] (nr) to node [left] {} (right);
		  	}

		  \foreach \y in {-.8,0,.8} {
		  	\node at (\y,1.4)  (tp) {};
			\draw[dashed,spine] (tp) to node [left] {} (zero);
		  	}

	\end{scope}
\end{tikzpicture}
\end{center}
We will use the notation $\CCV_\star \subset \CCV$ for the set consisting
of the special
vertex $0$, plus the vertices $v_{\star,i}$, and we write $\CCV_0 = \CCV \setminus \{0\}$.

  Given a directed edge $e \in \CCE$, we write 
$e_\pm$ for the 
two vertices so that $e = (e_-,e_+)$ is directed from $e_-$ to $e_+$. In cases where there is more than one edge connecting $e_-$ to $e_+$ we will always \textit{assume that at most one can have nonzero renormalization $r_e$, and in that case $r_e$ must be positive}.  Then we may identify the multigraph with a graph  $(\CCV,\hat\CCE)$ where the multi edges from $e_-$ to $e_+$ are concatenated to one edge whose
 label $(\hat a_e,r_e)$ is simply the sum of the labels on the original multi edges.  The rest of the assumptions are most easily stated in 
terms of these new labels on the \textit{resulting directed graph} $(\CCV,\hat\CCE)$, \textit{although the application will be to the generalized convolution on the original graph}. We will also sometimes make the abuse of notation
that identifies $e$ with the set $\{e_-, e_+\}$ even though our edges are directed.
\label{star}
A subset $\bar\CCV \subset \CCV$ has \textit{outgoing edges} $\CCE^\uparrow (\bar \CCV)  =  
 \{ e\in \CCE : e\cap \bar\CCV = e_-\}$, \textit{incoming edges} $\CCE^\downarrow (\bar \CCV)  =   \{ e\in \CCE : e\cap \bar\CCV = e_+\}$, \textit{internal edges} $\CCE_0 (\bar \CCV)  =   \{ e\in \CCE : e\cap \bar\CCV = e\}$, and
\textit{incident edges} $\CCE (\bar \CCV)  =   \{ e\in \CCE : e\cap \bar\CCV \neq \emptyset\}$.
We will also use
$
\CCE_+ (\bar \CCV)  =   \{ e\in \CCE(\bar \CCV) :  r_e>0\}$ to denote the edges with positive renormalization, $\CCE_+^\uparrow = \CCE_+ \cap \CCE^\uparrow$ and $\CCE_+^\downarrow = \CCE_+ \cap \CCE^\downarrow$. 

%
\begin{assumption}\label{ass:mainGraph}
The resulting directed graph $(\CCV,\hat\CCE)$ with labels $(\hat a_e,r_e)$ satisfies:  
No edge containing the vertex $0$ may have $r_e > 0$; no edge with $r_e \neq 0$
connects two elements in $\CCV_\star$ and $0 \in e \Rightarrow r_e = 0$; no more than one edge with 
negative renormalization $r_e<0$ may emerge from the same vertex; and
\begin{claim}
\item[1.] For all $e\in \hat\CCE$, one has $ \hat a_e + (r_e \wedge 0) < |\s|$; 
\item[2.] For every subset $\bar \CCV \subset \CCV_0$ of cardinality at least $3$,  
\begin{equ}[e:assEdges]
\sum_{e  \in \hat\CCE_0(\bar \CCV)} \hat a_e < (|\bar \CCV| - 1)|\s|\;;
\end{equ}
\item[3.]  For every subset $\bar \CCV \subset \CCV$ containing $0$ of cardinality at least $2$,  
\begin{equ}[e:assEdges2]
\sum_{e  \in \hat \CCE_0 (\bar \CCV) } \hat a_e 
+\sum_{e  \in \hat\CCE^{\uparrow}_+ (\bar \CCV)  } (\hat a_e+ r_e -1)  - \sum_{e  \in \hat\CCE^{\downarrow}_+ (\bar \CCV)  }  r_e< (|\bar \CCV| - 1)|\s|\;;
\end{equ}
\item[4.]  For every non-empty subset $\bar \CCV \subset \CCV\setminus \CCV_\star$,
\begin{equ}[e:assEdges3]
 \sum_{e\in \hat\CCE(\bar\CCV)\setminus \hat\CCE^{\downarrow}_+(\bar \CCV) }  \hat a_e 
 +\sum_{e\in \hat \CCE^{\uparrow}_+(\bar\CCV)}  r_e
- \sum_{e \in \hat \CCE^\downarrow_+(\bar \CCV)} (r_e-1)
> |\bar \CCV||\s|\;.
\end{equ}
\end{claim}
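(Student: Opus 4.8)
Assumption~\ref{ass:mainGraph} is not a theorem to be proved in the abstract; it is the standing hypothesis under which the main bound of Appendix~\ref{sec:bounds} (Theorem~\ref{theo:ultimate}) is established. Accordingly, the ``statement'' here is the \emph{definition} of the class of admissible labelled graphs, and the only substantive content that needs checking is that this condition is consistent and actually verified for the graphs produced in Section~\ref{sec:convModel}. My plan therefore has two parts: (i) record the elementary observations that make the four conditions well-posed, and (ii) explain the general mechanism by which they are checked in practice, since every graph appearing in the proof of Theorem~\ref{theo:convModel} is checked against precisely this list.

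\textbf{Well-posedness of the conditions.} First I would note that the preamble restrictions ($0\in e\Rightarrow r_e=0$; no edge with $r_e\neq 0$ joining two vertices of $\CCV_\star$; at most one negatively-renormalised edge out of any vertex; when there are parallel edges at most one carries a nonzero $r_e$ and then $r_e>0$) guarantee that the passage to the ``resulting directed graph'' $(\CCV,\hat\CCE)$ with summed labels is unambiguous: the summed renormalisation $r_e$ on a concatenated edge is well-defined because at most one summand is nonzero, and $\hat a_e=\sum a_{e'}$ over the parallel edges $e'$. Then conditions 1--4 are statements purely about $(\CCV,\hat\CCE)$. Condition~1 is a local integrability requirement at coincident points of the two endpoints of a single (concatenated) edge; conditions 2 and 3 control the ``clustering'' of small subsets of vertices (the distinction being whether the special vertex $0$ is inside the cluster, which changes the treatment of the edges $e_{\star,i}$ and of positively-renormalised edges crossing the cluster boundary); condition~4 is the complementary ``large-scale'' bound ensuring integrability when a subset of non-special vertices is sent to infinity. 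The asymmetry between $\CCE^{\uparrow}_+$ and $\CCE^{\downarrow}_+$ in \eqref{e:assEdges2}--\eqref{e:assEdges3} reflects the fact that a positively-renormalised kernel, after the subtraction \eqref{e:defRen}, behaves like a kernel of improved homogeneity $\hat a_e+r_e$ near the relevant endpoint but is unchanged away from it; one checks that the arithmetic in \eqref{e:assEdges3} combined with \eqref{e:assEdges2} is exactly what the recursive multiscale decomposition in the proof of Theorem~\ref{theo:ultimate} demands.

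\textbf{How the conditions are verified for the concrete graphs.} For each symbol $\tau$ treated in Section~\ref{sec:convModel}, the graph $\CG$ arises as $\E|(\hat\Pi^{(\eps)}_0\tau)(\phi_0^\lambda)|^2$ after expanding into Wiener chaos: one takes two copies of the ``tree'' representing $\hat\Pi^{(\eps)}_0\tau$, glues their roots to the common vertex $0$ via the two edges $e_{\star,1},e_{\star,2}$ carrying the test function, and pairs up the noise vertices. The edges then carry labels $(2+\kappa/2,0)$ for $K'_\eps$ (or $K'$), $(2,1)$ where a ``barred'' kernel $K'(\bar z-z)-K'(-z)$ appears (this is the source of $r_e=1$), $(3+\kappa,-1)$ for the kernel $Q_\eps$, and $\delta$ or $2+\delta$ type labels for the $N_\eps$, $Q^{(m)}_\eps$ insertions in the $\Eps$-decorated cases. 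I would verify conditions 1--4 for each such graph by the same routine: condition~1 is immediate since $\hat a_e<4=|\s|$ in every case; for conditions 2 and 3 one enumerates the subsets $\bar\CCV$ (there are only finitely many up to the obvious symmetry of the doubled graph, and the binding ones are the small ``leaf clusters'' around a single noise vertex and the ``vertical'' clusters joining the two trees) and checks the displayed inequality is strict; condition~4 is checked on the complementary large subsets. The key structural fact that makes all of these go through is precisely the subcriticality encoded in Lemma~\ref{lem:subcritical} together with the homogeneity count $|\<2>|=-1-2\kappa$, etc.: the ``naive'' degree of divergence of each subgraph is $2|\tau|_{\bar\CCV}$ where $|\tau|_{\bar\CCV}$ is the homogeneity of the sub-symbol, and subcriticality with $\kappa$ small gives the required strict inequalities with room to spare. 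The main obstacle, and the only place where the verification is genuinely delicate rather than routine, is in the cases $\tau=\<21>,\<211>$ (Sections~\ref{sec:psiIpsi} and the $\<211>$ subsection): there the naive graph \emph{fails} condition~3 on the cluster consisting of the ``inner'' subtree, and one is forced to first integrate out that subtree by hand, observing that the resulting kernel $Q_\eps$ integrates to zero by the spatial symmetry of $\rho$, hence may be identified with its renormalisation $\Ren Q_\eps$ and treated as a single edge of improved homogeneity $3+\kappa$; only after this reduction does the graph satisfy Assumption~\ref{ass:mainGraph}. Since ``Assumption~\ref{ass:mainGraph} is an open condition'' (as used explicitly for the $\Eps$-decorated graphs), once it holds for the finitely many ``base'' graphs it automatically holds for every graph obtained by adjoining edges of small homogeneity or raising homogeneities by a small $\delta$, which closes the verification uniformly in $\eps$.
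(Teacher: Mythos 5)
You correctly recognise that the displayed statement is Assumption~\ref{ass:mainGraph} itself, which the paper never proves: it is the standing hypothesis of Theorem~\ref{theo:ultimate1}, and the only work attached to it is its verification for the specific labelled graphs arising in Section~\ref{sec:convModel}, which your sketch reproduces faithfully (including the genuinely delicate point that the na\"\i ve graphs for $\<21>$ and $\<211>$ fail the assumption and must first be reduced via the kernel $Q_\eps$, identified with $\Ren Q_\eps$ because it integrates to zero by the spatial symmetry of $\rho$, and the use of the fact that the assumption is an open condition to handle the $\Eps$-decorated graphs). One minor slip: in this paper the parabolic scaling is $\s=(2,1)$ in $1+1$ dimensions, so $|\s|=3$ rather than $4$; your verification of condition~1 still goes through since every $\hat a_e + (r_e\wedge 0)$ occurring in Section~\ref{sec:convModel} is at most $2+\kappa$.
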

\end{assumption}

Next we describe the renormalization procedure.
If $r_e < 0$, then, in a way reminiscent of \cite{BP,Hepp,Zimmermann}, 
we associate to $K_e$ the distribution, 
\begin{equ}[e:defRen]
\bigl(\Ren K_e\bigr)(\phi) = \int K_e(x) \Bigl(\phi(x) - \sum_{|k|_\s < |r_e|} {x^k \over k!} D^k\phi(0)\Bigr)\,dx + \sum_{|k|_\s < |r_e|} {I_{e,k} \over k!} D^k\phi(0).
\end{equ}   \label{defofren}
Note that  Assumption~\ref{ass:mainGraph}.1 and  \eref{e:boundK} imply that the integral in the definition of $\Ren K_e$ converges, so that
this definition actually makes sense.

Of course, if $\int |K_e(x)| |x|^k dx <\infty$ for $|k|_\s < |r_e|$ and $I_{e,k} = \int K_e(x) x^k dx$, then one just has
$\bigl(\Ren K_e\bigr)(\phi) = \int K_e(x)\phi(x)\,dx$.  
For $r_e \ge 0$, we just define $\bigl(\Ren K_e\bigr)(\phi) = \int K_e(x)\phi(x)\,dx$.

\eqref{e:defRen} defines a distributional ``kernel'' $\hat K_e$ for $r_e < 0$ acting on smooth $\phi$ on $\R^d\times\R^d$ by
\begin{equ}
 \hat K_e (  \phi)
\eqdef\tfrac12\int  
\Ren K_e( \phi_{z})\,dz\;,
\end{equ} where $\phi_{ z}(\bar z)\eqdef\phi((z+\bar z)/2,(z-\bar z)/2)$.
Of course if $\hat K_e$ is a function $\hat K_e ( x_{e_-}, x_{e_+})$ we will have   $\hat K_e(\phi)=\int \int \hat K_e ( x_{e_-}, x_{e_+}) \phi(x_{e_-},x_{e_+})\,dx_{e_-}\,dx_{e_+}$ and if $\phi(x_{e_-},x_{e_+}) = \phi_1(x_{e_-})\phi_2(x_{e_+})$, $\hat K_e(\phi)= \int \int \hat K_e ( x_{e_-}, x_{e_+}) \phi_1(x_{e_-})\phi_2(x_{e_+})\,dx_{e_-}\,dx_{e_+}$.

For $r_e \ge 0$, we define
\begin{equ}[e:defKhatn2]
\hat K_e(x_{e_-}, x_{e_+}) = 
K_e(x_{e_+}-x_{e_-}) - \sum_{|j|_\s < r_e} {x_{e_+}^j \over j!} D^j K_e(-x_{e_-})\;,
\end{equ}

\begin{remark}
In principle, one may encounter situations where more sophisticated renormalization procedures
are required. For the purpose of the present article however, the procedure described here is
sufficient.
\end{remark}

For a smooth test function $\phi$, let
$
\phi_\lambda(x) = \lambda^{-|\s|} \phi(x/\lambda)$.  The key quantity of interest is the generalized convolution
\begin{equ}\label{genconv}
\CI^\CCG(\phi_\lambda,K) \eqdef  \int_{(\R^d)^{\CCV_0}} \prod_{e\in \CCE}{\hat K}_e(x_{e_-}, x_{e_+})\prod_{i=1}^M \phi_\lambda(x_{v_{\star,i}}) \,dx\,
.
\end{equ}
It is not obvious that the right hand side of 
\eqref{genconv} even makes sense, but actually it is not so hard to see that our 
conditions imply that the distributions $\Ren K_e$, $r_e<0$ are only acting on the smooth parts of the other kernels.  The fact that it does make sense is part of the following statement,  which is the main 
 result of this section.

\begin{theorem}\label{theo:ultimate1}  Let $\CCG=(\CCV,\CCE)$ be a finite directed multigraph with labels $\{a_e,r_e\}_{e\in \CCE}$ and kernels
$\{K_e\}_{e\in \CCE}$ with  the resulting graph satisfying Assumption~\ref{ass:mainGraph} and its preamble.
Then, there exist  $C, p<\infty$ depending only on the 
structure of the graph $(\CCV,\CCE)$ and the labels $r_e$ such that 
\begin{equ}\label{genconvBound}
\CI^\CCG(\phi_\lambda,K) 
\le C\lambda^{\tilde \alpha}\prod_{e\in \CCE} \|K_e\|_{a_e;p}\;,
\end{equ}
for $0<\lambda\le 1$, where \begin{equ}\tilde \alpha = |\s||\CCV\setminus \CCV_\star| - \sum_{e\in \CCE} a_e. \end{equ}
In particular, the generalized convolution
in \eqref{genconv} is well-defined, and if $K_{e,m}\to K_e$ pointwise on $x\in \R^d\setminus\{0\}$ as $m\to \infty$,  for each $e$, and satisfying \eqref{e:boundK} uniformly in $m$, then $\CI^\CCG(\phi_\lambda,K_m)\to \CI^\CCG(\phi_\lambda,K)$.
\end{theorem}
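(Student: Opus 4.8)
The plan is to prove Theorem~\ref{theo:ultimate1} by a multiscale (dyadic) decomposition of all the kernels, reducing the bound on $\CI^\CCG(\phi_\lambda,K)$ to a combinatorial estimate on a sum over ``scale assignments'' to the edges of the graph. First I would recall the standard apparatus (as in \cite[Sec.~10]{Regularity} or \cite{Hepp,FMRS}): fix a wavelet-type partition of unity and write each kernel $K_e$, and each renormalised kernel $\hat K_e$, as a sum $\sum_{n_e \ge 0} K_e^{(n_e)}$ of pieces supported on the annulus $\|x\|_\s \sim 2^{-n_e}$, with the bounds $\|D^k K_e^{(n_e)}\|_\infty \lesssim 2^{n_e(a_e + |k|_\s)}\|K_e\|_{a_e;p}$ coming from \eqref{e:boundK}. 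The subtracted Taylor polynomials in \eqref{e:defRen} and \eqref{e:defKhatn2} are handled in the usual way: for $r_e > 0$ the difference $\hat K_e(x_{e_-},x_{e_+}) = K_e(x_{e_+}-x_{e_-}) - \sum_{|j|_\s < r_e} \tfrac{x_{e_+}^j}{j!}D^j K_e(-x_{e_-})$ is estimated either by the ``naive'' bound when $\|x_{e_+}\|_\s \gtrsim \|x_{e_-}\|_\s$ or by the Taylor-remainder bound $\lesssim \|x_{e_+}\|_\s^{r_e} \sup \|D^{r_e}K_e\|$ otherwise, which effectively upgrades the homogeneity at the incoming vertex; similarly for $r_e<0$ the distribution $\Ren K_e$ tested against a smooth function picks up the same kind of gain. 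This is where Assumption~\ref{ass:mainGraph}.1, $\hat a_e + (r_e\wedge 0) < |\s|$, is used to guarantee the renormalised objects are genuinely (integrable against smooth functions on the complement of the relevant diagonals) well-defined.

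The core of the argument is then the bound on $\sum_{\mathbf n} |\CI^\CCG_{\mathbf n}(\phi_\lambda,K)|$ over scale vectors $\mathbf n = (n_e)_{e\in\CCE}$, where $\CI^\CCG_{\mathbf n}$ is the corresponding piece of the integral. I would introduce for each scale vector the ``cluster tree'': group the vertices by looking, for each threshold $2^{-k}$, at the connected components of the subgraph of edges with scale $\ge k$, obtaining a hierarchy of subsets $\bar\CCV \subset \CCV$. The standard power-counting lemma (Hepp sectors) says that integrating out the $|\CCV_0|$ internal variables, organised according to this tree, produces a factor that is a product over clusters $\bar\CCV$ of $2$-powers whose exponents are exactly the quantities appearing on the left of \eqref{e:assEdges}, \eqref{e:assEdges2}, \eqref{e:assEdges3} (minus $(|\bar\CCV|-1)|\s|$ or $|\bar\CCV||\s|$ depending on whether $\bar\CCV$ meets $\CCV_\star$). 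The three combinatorial hypotheses 2--4 are precisely the conditions guaranteeing that each such exponent is negative, which makes the geometric sum over the $n_e$ converge; the edges $e_{\star,i}$ and the test functions $\phi_\lambda$ contribute the overall scaling $\lambda^{\tilde\alpha}$ because $\phi_\lambda$ localises the $v_{\star,i}$ to scale $\lambda$ and forces all scales to be $\lesssim \lambda^{-1}$ near the special vertices. The distinction between subsets that do and do not contain $0$, and between incoming/outgoing positive-renormalisation edges $\CCE_+^\uparrow,\CCE_+^\downarrow$, corresponds exactly to whether a cluster is being integrated as a ``bulk'' cluster or anchored at the origin, and whether the positive-renormalisation subtraction improves or worsens the local power counting at that cluster; this is the bookkeeping encoded in 2, 3, 4 respectively. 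I would carry this out by induction on $|\CCV_0|$, peeling off the innermost (smallest-scale) cluster first, exactly in the spirit of \cite[Thm~10.3]{Regularity}; the hypotheses 2--4 are stable under this peeling, which is why the induction closes.

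Finally, for the well-definedness and continuity statement: once the bound \eqref{genconvBound} is established with a fixed $p$ and a constant $C$ depending only on the combinatorial data, well-definedness of $\CI^\CCG(\phi_\lambda,K)$ is immediate since the multiscale sum converges absolutely. For the convergence $\CI^\CCG(\phi_\lambda,K_m)\to\CI^\CCG(\phi_\lambda,K)$ under pointwise convergence $K_{e,m}\to K_e$ with uniform $\|K_{e,m}\|_{a_e;p}$ bounds, I would use a standard $3\eps$ / dominated-convergence argument on the scale sum: split $\sum_{\mathbf n}$ into $\max_e n_e \le N$ and $\max_e n_e > N$; the tail is uniformly (in $m$) bounded by $C_N \to 0$ using \eqref{genconvBound} applied to a graph with slightly perturbed exponents (Assumption~\ref{ass:mainGraph} being an open condition, exactly as exploited in Section~\ref{sec:Eps}), and for the finitely many remaining scales one uses that on each fixed annulus $K_{e,m}^{(n_e)} \to K_e^{(n_e)}$ uniformly together with dominated convergence for the finite-dimensional integral. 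The main obstacle, as in all such results, is the bookkeeping in the inductive power-counting step: correctly tracking how the renormalisation subtractions (the $r_e$ terms) modify the exponent attached to each cluster in the Hepp-sector decomposition, and checking that hypotheses 2--4 are exactly what is needed and remain valid after peeling off an innermost cluster. Getting the signs and the $\CCE_+^\uparrow$ versus $\CCE_+^\downarrow$ contributions right in \eqref{e:assEdges2} and \eqref{e:assEdges3} is the delicate part; everything else is routine once that is in place.
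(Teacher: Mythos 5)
Your overall strategy -- dyadic decomposition of all kernels, a hierarchical clustering of the integration variables, and a power-counting sum controlled by Assumption~\ref{ass:mainGraph} -- is the same family of argument as the paper's, which organises the clustering via a minimal-spanning-tree (Kruskal) binary tree and a single summation lemma over labelled trees (Lemma~\ref{theo:sumTree}) rather than via Hepp sectors with inductive peeling; that organisational difference is harmless. The genuine gap is in your treatment of the negatively renormalised edges ($r_e<0$). You assert that $\Ren K_e$ "picks up the same kind of gain" as the $r_e>0$ subtraction and that hypotheses 2--4 of Assumption~\ref{ass:mainGraph} are then "precisely the conditions guaranteeing that each cluster exponent is negative". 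Neither is true as stated. Conditions 2--4 contain no $r_e$-gain for negative renormalisation at all: the only place where $|r_e|$ enters for $r_e<0$ is condition~1, $\hat a_e + r_e < |\s|$, i.e.\ the gain is needed (and available) only for the cluster consisting of the single pair $\{e_-,e_+\}$. If you run the power counting on absolute values, exactly as your proposal describes, the exponent attached to an innermost pair joined by an edge with $\hat a_e \ge |\s|$ is $|\s|-\hat a_e \le 0$ and the geometric sum over its scale diverges; this is precisely why the paper's ``na\"\i ve bound'' (Lemma~\ref{lem:simpleBoundKhat}) does not suffice and an ``improved bound'' is needed.

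Harvesting that gain is a cancellation argument, not a size estimate, and it requires two ingredients your proposal does not supply. First, the dyadic pieces of $\Ren K_e$ must be honest functions $K_e^{(n)}$ that reproduce the renormalised distribution \emph{and} annihilate polynomials of scaled degree $<|r_e|$ at every scale $n>0$; distributing the subtracted Taylor terms and the constants $I_{e,k}$ of \eqref{e:defRen} across scales so that both properties hold is the (telescoping) construction of Lemma~\ref{lem:kernelsRenorm} and is not automatic. Second, the gain materialises only when $e_-$ and $e_+$ are much closer to each other than to every other vertex (the paper's set $\Amin$ of innermost pairs of the cluster tree): there one Taylor-expands the \emph{rest of the integrand} in $x_{e_+}$ around $x_{e_-}$ (the operators $\Op_e^{r_e}$), uses the vanishing moments of $K_e^{(n)}$ to kill the polynomial part, and gains the factor $2^{(\ell_{e_\Uparrow}-\ell_{e_\uparrow})|r_e|}$ that restores condition~1 of Lemma~\ref{theo:sumTree}; when the pair is not innermost the remaining kernels are not smooth at the relevant scale and no gain is obtained -- which is consistent, because then only conditions 2--3 (stated without any such gain) are needed. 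Your closing remark correctly flags the bookkeeping of the $r_e$ terms as the delicate point, but the proposal never identifies this mechanism (vanishing moments plus Taylor expansion of the complementary integrand, restricted to innermost pairs), and without it the central power-counting step fails. The final continuity statement, by contrast, is fine: your scale-splitting/dominated-convergence argument, using that Assumption~\ref{ass:mainGraph} is open so the tail can be summed with slightly perturbed exponents, is a legitimate alternative to the paper's multilinearity/Cauchy-sequence reduction.
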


Note in particular, that the bound \eqref{genconvBound} for genuine distributions, i.e. kernels $K_e$ with non-integrable singularities at $0$ and $r_e<0$, follows immediately once we prove the bound for regularizations of the kernels,
but with the norms on the right hand side independent of the regularization.  This has the consequence 
that within the proof, we can assume without loss of generality that all the kernels are smooth on all of $\R^d$.
The theorem will be proved in subsections \ref{decomp}-\ref{lastsubsection}.

\subsection{Decomposition}\label{decomp}

To simplify notations in what follows, we will start by enhancing the set of edges in our graph to include any $(v,w) \in \CCV^2$ for which there is not already one, or several, edges in $\CCE$.  To all such new directed edges we simply assign the 
kernel $\hat K_{(v,w)} \equiv 1$, so that, since every vertex of the original graph had either and incoming or outgoing edge, \eqref{genconv} is unaffected, and the fact that these new kernels do not have compact support is
irrelevant. These new edges necessarily come with $a_e=r_e= 0$.  We will abuse notation somewhat by henceforth referring to this enhanced graph as $\CCG=(\CCV, \CCE)$.

Now define a sequence of kernels $\{K_e^{(n)}\}_{n \ge 0}$ through the following
\begin{lemma}\label{lem:kernelsRenorm}
If $K_e$ are as above, then there exist $\{K_e^{(n)}\}_{n \ge 0}$ satisfying:
\begin{enumerate}
\item  $K_e(x) = \sum_{n \ge 0} K_e^{(n)}(x)$ for all $x \neq 0$;
\item  $\bigl(\Ren K_e\bigr)(\phi) = \sum_{n \ge 0} \int K_e^{(n)}(x)\,\phi(x)\,dx$ for smooth test functions $\phi$;
\item $K_e^{(n)}$ is supported in the annulus
$2^{-(n+2)}\le  \|x\|_\s \le  2^{-n}$;
\item for some $C<\infty$
\begin{equation}\label{e:boundKn}\sup_{|k| \le p, n\ge 0} 2^{-(a_e + |k|_\s)n}
|D^k K_e^{(n)}(x)| \le C \| K_e\|_{a_e;p}\, ;
\end{equation}
\item if $r_e < 0$, then $
\int P(x) K_e^{(n)}(x)\,dx = 0$
for all $n > 0$ and all polynomials $P$ with scaled degree strictly less than $|r_e|$.
\end{enumerate}
\end{lemma}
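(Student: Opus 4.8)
The plan is to construct the $K_e^{(n)}$ by a standard Littlewood--Paley-type dyadic decomposition of each kernel, modified in the singular case $r_e<0$ so as to preserve the vanishing-moment property (5). First I would fix a smooth partition of unity adapted to the parabolic scaling: choose $\varphi\in\CC^\infty$ supported in $\{2^{-2}\le \|x\|_\s\le 1\}$ with $\sum_{n\ge 0}\varphi(\CS_{2^n}x)=1$ for all $0<\|x\|_\s\le 1$, where $\CS_\lambda$ denotes the parabolic scaling $x\mapsto \lambda x$ in the appropriate anisotropic sense (so $\CS_\lambda(t,x)=(\lambda^2 t,\lambda x)$ in our case). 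For $n>0$, and in the regular case $r_e\ge 0$, simply set $\tilde K_e^{(n)}(x)=\varphi(\CS_{2^n}x)K_e(x)$, together with a boundary term at $n=0$ absorbing the part of $K_e$ supported near $\|x\|_\s=1$ (this uses that $K_e$ is compactly supported in the unit ball). Property (3) is then immediate from the support of $\varphi$, and property (4) follows from the Leibniz rule together with the bound $\|K_e\|_{a_e;p}<\infty$ in \eref{e:boundK}: each derivative hitting $\varphi(\CS_{2^n}\cdot)$ produces a factor $2^{n|k|_\s}$, and the homogeneity bound on $D^kK_e$ on the annulus gives the matching $2^{a_e n}$.

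The key modification is for $r_e<0$, where we must also arrange property (5), i.e.\ $\int P(x)K_e^{(n)}(x)\,dx=0$ for every polynomial $P$ of scaled degree $<|r_e|$ and every $n>0$. Starting from the naive pieces $\tilde K_e^{(n)}=\varphi(\CS_{2^n}\cdot)K_e$, I would subtract a correction supported on the same annulus that kills the finitely many moments in question: concretely, pick smooth functions $\psi_k$, indexed by multiindices $k$ with $|k|_\s<|r_e|$, supported in $\{2^{-2}\le\|x\|_\s\le 1\}$ and biorthogonal to the monomials, $\int x^j\psi_k(x)\,dx=\delta_{jk}$, set $\psi_k^{(n)}(x)=2^{n(|\s|+|k|_\s)}\psi_k(\CS_{2^n}x)$ so that $\int x^j\psi_k^{(n)}(x)\,dx=\delta_{jk}$, and define
\begin{equ}
K_e^{(n)}(x) = \tilde K_e^{(n)}(x) - \sum_{|k|_\s<|r_e|}\Bigl(\int y^k\tilde K_e^{(n)}(y)\,dy\Bigr)\psi_k^{(n)}(x)\;,
\end{equ}
for $n>0$, with the $n=0$ term taking up the slack (it must collect both the unit-ball boundary part and the sum of all the subtracted correction terms, so that property (1) still holds pointwise away from $0$; note $K_e^{(0)}$ is smooth with bounded support, so it causes no trouble). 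Property (3) still holds, property (5) holds by construction, and for property (4) I must check that the subtracted pieces also obey \eref{e:boundKn}: this follows because $\bigl|\int y^k\tilde K_e^{(n)}(y)\,dy\bigr|\lesssim 2^{-(|\s|+|k|_\s)n}\cdot 2^{a_e n}\cdot\|K_e\|_{a_e;p}$ by the size bound on $\tilde K_e^{(n)}$ on the annulus of volume $\sim 2^{-|\s|n}$ (here I use Assumption~\ref{ass:mainGraph}.1, $a_e+(r_e\wedge 0)<|\s|$, which for $r_e<0$ with $|k|_\s<|r_e|$ guarantees $a_e+|k|_\s<|\s|$ so the monomial moments are finite and controlled), and $\|D^j\psi_k^{(n)}\|_\infty\lesssim 2^{n(|\s|+|k|_\s+|j|_\s)}$, so the product has exactly the right scaling $2^{(a_e+|j|_\s)n}$.

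It then remains to verify properties (1) and (2). Property (1) is the pointwise telescoping identity: away from the origin only finitely many $n$ contribute for each $x$ (by the support property), and the correction terms were designed to sum to zero on $\R^d\setminus\{0\}$ after being dumped into $K_e^{(0)}$, so $\sum_n K_e^{(n)}=K_e$ pointwise. For property (2), in the case $r_e\ge 0$ there is nothing to do since $\Ren K_e$ is plain integration. For $r_e<0$, I would observe that $\sum_{n\ge 0}\int K_e^{(n)}(x)\phi(x)\,dx$ converges absolutely for $\phi\in\CC^\infty$: by property (5) each term with $n>0$ can be rewritten against $\phi(x)-\sum_{|k|_\s<|r_e|}\frac{x^k}{k!}D^k\phi(0)$ (a Taylor remainder vanishing to scaled order $|r_e|$ at $0$, hence of size $\lesssim 2^{-|r_e|n}$ on the annulus), so the series converges; matching this against the definition \eref{e:defRen} of $\Ren K_e$ and reconciling the finitely many moment constants $I_{e,k}$ against the $n=0$ contribution gives (2) --- there is a harmless redistribution of which piece carries the $D^k\phi(0)$ terms, which one checks by expanding both sides. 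The main obstacle is bookkeeping the $n=0$ term: it must simultaneously (i) restore the decomposition to equal $K_e$ pointwise, (ii) absorb the subtracted moment corrections, and (iii) be compatible with the renormalization constants $I_{e,k}$ in \eref{e:defRen}; since $K_e^{(0)}$ is smooth and compactly supported, none of these spoil the estimates, but one has to be careful that property (5) is \emph{not} claimed for $n=0$ (and indeed it need not hold there), which is why the statement restricts to $n>0$.
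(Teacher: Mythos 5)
Your treatment of the case $r_e\ge 0$ coincides with the paper's (dyadic partition of unity in the parabolic scale, Leibniz rule for property 4), and your idea of restoring vanishing moments with rescaled bump functions biorthogonal to the monomials is also the same device the paper uses ($\eta_k^{(n)}$ there). However, the bookkeeping for $r_e<0$ has a genuine gap: you subtract from each annular piece its \emph{own} moments and then propose to dump the totality of these corrections, $\sum_{n>0}\sum_k c_{k,n}\psi_k^{(n)}$ with $c_{k,n}=\int y^k\tilde K_e^{(n)}(y)\,dy$, into $K_e^{(0)}$. Since $\psi_k^{(n)}$ is supported in the annulus at scale $2^{-n}$, the resulting $K_e^{(0)}$ is supported all the way down to the origin, so property 3 fails for $n=0$ (and the uniform bound 4 at $n=0$ fails as well whenever $a_e>|\s|+|k|_\s$, since then $|c_{k,n}|\,\|\psi^{(n)}_k\|_\infty$ grows in $n$). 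Moreover, the justification you give — that Assumption~\ref{ass:mainGraph}.1 ``guarantees $a_e+|k|_\s<|\s|$ so the monomial moments are finite and controlled'' — is incorrect: for $r_e<0$ the assumption only gives $a_e<|\s|+|r_e|$, and the total moments $\sum_n c_{k,n}$ may genuinely diverge; this divergence is precisely why the renormalisation constants $I_{e,k}$ appear in the definition of $\Ren K_e$, so the ``harmless redistribution'' in your verification of property 2 runs into a divergent series and cannot be checked by ``expanding both sides''.

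The paper resolves exactly this point by spreading the corrections across scales in a telescoping fashion with running constants: it sets $I^{(0)}_{e,k}=I_{e,k}-\int x^k\Psi^{(-)}K_e\,dx$, $I^{(n)}_{e,k}=I^{(n-1)}_{e,k}-\int x^k\Psi^{(n)}K_e\,dx$, and defines $K_e^{(n)}=\Psi^{(n)}K_e+\sum_k\bigl(\eta_k^{(n)}I^{(n)}_{e,k}-\eta_k^{(n-1)}I^{(n-1)}_{e,k}\bigr)$ for $n>0$. Each piece then lives on the union of two consecutive dyadic annuli (so properties 3 and 4 survive), property 5 holds because the moment of the $n$-th piece is $\int x^k\Psi^{(n)}K_e+I^{(n)}_{e,k}-I^{(n-1)}_{e,k}=0$, the partial sums telescope so that property 1 holds pointwise away from the origin, and the moments of the partial sums are \emph{exactly} $I_{e,k}$ at every stage, which is what yields property 2 in the limit without ever needing the total moments to converge. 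To repair your argument you would need to replace the global dump into $K_e^{(0)}$ by this (or an equivalent) scale-by-scale cancellation in which the level-$n$ correction carries the difference of running constants rather than minus its own full moment.
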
\begin{proof} 
We first treat the case $r_e \ge 0$. Let $\psi \colon \R \to [0,1]$ be a
smooth function supported on $[3/8,1]$ and such that $\sum_{n \in \ZZ} \psi(2^{n} x) = 1$ for every $x \neq 0$, and let
\begin{equ}[psisupn]
\Psi^{(n)}(x) = \psi(2^n x)\;,
\end{equ}
so that $\Psi^{(n)}$ is supported in $2^{-(n+2)}\le  \|x\|_\s \le  2^{-n}$,
satisfies \eref{e:boundKn} with $a_n$ replaced by $0$, and sums up to $1$.
We also use the shorthands $\Psi^{(\le N)}(x) = \sum_{n\le N}\Psi^{(n)}(x)$ and $\Psi^{(-)}(x) = \Psi^{(\le 0)}(x)$.

Let $K_e^{(0)}(x) = \Psi^{(-)}(x)\, K_e(x)$ and
$K_e^{(n)}(x) = \Psi^{(n)}(x)\, K_e(x)$ for $n>0$. 
As a consequence of \eref{e:boundK}, and the fact that $|D^k\Psi^{(n)}(x) |\lesssim \|x\|_\s^{-|k|_\s}$, it is then straightforward to verify that 
$K_e^{(n)}$ does indeed satisfy the claimed properties.

In the case $r_e < 0$, the situation is a little less straightforward since then 2.\
doesn't follow from 1.\ and 4., and since we then also want to impose 5. In order to achieve 
this, we first note that it is possible
to find functions $\eta_k \colon \R^d \to \R$ which are supported in the 
annulus $\{x\,:\, \|x\|_\s \in [1/4,1/2]\}$ 
and are such that $\int x^\ell \eta_k(x) \,dx = \delta_{k,\ell}$ for
every $\ell$ with $|\ell|_\s < |r_e|$. 
We also set
\begin{equ}
\eta_k^{(n)}(x_1,\ldots,x_d) = 2^{n(|\s| + |k|_\s)}\eta_k(2^{n\s_1}x_1,\ldots, 2^{n\s_d}x_d)\;.
\end{equ}
We then set ~$I_{e,k}^{(0)} \eqdef I_{e,k} - \int x^k \Psi^{(-)}(x)\, K_e(x) \,dx$,
\begin{equs}
K^{(0)}_e(x) \eqdef \Psi^{(-)}(x)\, K_e(x) + \sum_{|k|_\s < |r_e|} \eta_k^{(0)}(x) I_{e,k}^{(0)}\;,\end{equs}
and recursively for $n > 0$, ~~$I_{e,k}^{(n)} \eqdef I_{e,k}^{(n-1)} - \int x^k \Psi^{(n)}(x)\, K_e(x) \,dx$,
\begin{equs}
K^{(n)}_e(x) &\eqdef \Psi^{(n)}(x)\, K_e(x) + \sum_{|k|_\s < |r_e|} \bigl(\eta_k^{(n)}(x) I_{e,k}^{(n)} - \eta_k^{(n-1)}(x) I_{e,k}^{(n-1)}\bigr)\;.
\end{equs}
With this definition, it is then straightforward to verify that 1 is satisfied due to the fact that the additional terms form a 
telescopic sum. 4 is satisfied since $\Psi^{(n)}$
satisfies \eref{e:boundKn} with $a_n$ replaced by $0$.  Finally,
as a consequence of the definition of the coefficients $I_{e,k}^{(n)}$ one
has $
 \int \sum_{j=0}^nK_e^{(j)}(x) x^k dx=I_{e,k} 
$ for $|k|_\s < |r_e|$ which proves 2 in the limit as $n\to\infty$ by 1.
\end{proof}

\begin{definition}\label{d:Khat} For  $\n \in \N^3$  define  $\hat K_e^{(\n )}(x,y)$ as  follows:
{\em If} $r_e \le 0$, then  $\hat K_e^{(\n)} = 0$ unless 
$\n = (k,0,0)$ in which case  $\hat K_e^{(\n)}(x,y) = K_e^{(k)}(y-x)$ with $K_e^{(k)}$
given by Lemma~\ref{lem:kernelsRenorm};
{\em if} $r_e > 0$, then\begin{equ}[e:defKhatn]
\hat K_e^{(k,p,m)}(x,y) = \Psi^{(k)}(y-x)\Psi^{(p)}(x)\Psi^{(m)}(y) 
\Bigl(K_e(y-x) - \sum_{|j|_\s < r_e} {y^j \over j!} D^j K_e(-x)\Bigr)\;,
\end{equ}
where the functions $\Psi^{(k)}$ are defined in \eref{psisupn}.\end{definition}

For $\n \colon \CCE \to \N^3$, let
\begin{equ}\label{def:kayhat}
\hat K^{(\n)}(x) = \prod_{e \in \CCE} \hat K_e^{(\n_e)}(x_{e_-},x_{e_+})
\end{equ}
so that if $K_e$ are smooth on all of $\R^d$, 
\begin{equ}
\CI^\CCG(\phi_\lambda,K) =\sum_{\n } \int_{(\R^d)^{\CCV_0}} \hat K^{(\n )}(x)\,\prod_{i=1}^M \phi_\lambda(x_{v_{\star,i}}) \,dx\;.
\end{equ}
For $\lambda \in (0,1]$, let
\begin{equ}
\CN_\lambda\eqdef\{ 
\n \colon \CCE \to \N^3 : 2^{-|\n_{e_{\star,i}}|} \le \lambda,  i=1,\ldots,M\}
\end{equ}
where $
e_{\star,i} = (0, v_{\star,i})$ and $|\n_{e_{\star,i}}| =m$ from above since by assumption $r_{e_{\star,i}}=0$.
Let
\begin{equ}[e:bigsum]
\CI_\lambda^\CCG(K) \eqdef \sum_{\n \in \CN_\lambda} \int_{(\R^d)^{\CCV_0}} \hat K^{(\n )}(x)\,dx\;.
\end{equ}

\begin{remark}  The main reason to add all the extra edges with $K_e=1$ is that $\n$ now completely determines the
distance (up to a factor $4$) between any two coordinates $x_v$ and $x_w$ of $x\in(\R^d)^{\CCV_0}$.
\end{remark}

 Theorem \ref{theo:ultimate1} follows from 
 
\begin{lemma}\label{theo:ultimate}
Under the same assumptions as Theorem \ref{theo:ultimate1}, there exist  $C, p<\infty$ depending only on the 
structure of the graph $(\CCV,\CCE)$ and the labels $r_e$ such that \begin{equ}\label{e:wantedBound2}
|\CI_\lambda^\CCG(K)| \le C \lambda^\alpha \prod_{e\in \CCE} \|K_e\|_{a_e;p}\;,\qquad \lambda \in (0,1]\;,
\end{equ}
where $\alpha =|\s| |\CCV_0| - \sum_{e \in \CCE} a_e$.
\end{lemma}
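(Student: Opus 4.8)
The plan is to prove Lemma~\ref{theo:ultimate} by induction on the cardinality of $\CCV$, following the multiscale / Hepp-sector bookkeeping that is classical in the renormalisation of Feynman integrals and was adapted to this kind of problem in \cite{KPZ}. The starting point is the decomposition \eqref{e:bigsum}: one writes $\CI_\lambda^\CCG(K)$ as a sum over scale assignments $\n\in\CN_\lambda$, and for each fixed $\n$ one organises the integration over the vertices in $\CCV_0$ from the innermost scale (largest $n$, smallest distance) to the outermost. First I would reduce to the case where all kernels are genuinely smooth on $\R^d$ (this is legitimate because \eqref{e:wantedBound2} is uniform in regularisations of the kernels, as the paper already observes right after the statement of Theorem~\ref{theo:ultimate1}), and I would add the dummy edges with $K_e\equiv 1$ so that the scale assignment $\n$ determines all pairwise distances up to a factor $4$.

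The core of the argument is the following: fix $\n$, and let $\bar\CCV$ be a subtree of vertices sitting at the finest scale $n_{\max}$, i.e. a maximal cluster whose internal edges all carry scale $\ge n_{\max}$. Integrating out one vertex of $\bar\CCV$ (or, in the renormalised case $r_e<0$, using property~5 of Lemma~\ref{lem:kernelsRenorm} to gain a Taylor-remainder factor $2^{-|r_e|n}$), one picks up a power of $2^{-n_{\max}}$ whose exponent is exactly controlled by the left-hand sides of the inequalities in Assumption~\ref{ass:mainGraph}. More precisely, the convergence inequality \eqref{e:assEdges} governs clusters disjoint from $0$ and from the $v_{\star,i}$; inequality \eqref{e:assEdges2} governs clusters containing $0$; and the ``super-renormalisation'' inequality \eqref{e:assEdges3} governs clusters disjoint from $\CCV_\star$ when a negatively-renormalised edge leaves the cluster, ensuring that the gained Taylor factor is not over-counted. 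The edges with $r_e>0$ are handled via \eqref{e:defKhatn}: the subtracted Taylor polynomial in $y$ lets us replace $\hat a_e$ by $\hat a_e+r_e$ for the purpose of the bound on an outgoing edge, at the price of a factor $2^{-(r_e-1)n}$ on the corresponding incoming edge, which is exactly the arithmetic appearing in \eqref{e:assEdges2} and \eqref{e:assEdges3}. One then shows that after integrating out the innermost cluster one obtains a new graph on a strictly smaller vertex set, with new labels $(\hat a_e', r_e')$ still satisfying Assumption~\ref{ass:mainGraph}; this is where one must check, somewhat tediously, that all four inequalities are preserved under contraction of a cluster and that the labels stay in the admissible ranges (e.g. $\hat a_e' + (r_e'\wedge 0) < |\s|$). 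The induction closes once $\CCV$ is reduced to $\CCV_\star$, where the remaining integral is an explicit convolution against $\phi_\lambda$ and yields the claimed power $\lambda^\alpha$; summing the geometric series over all $\n\in\CN_\lambda$ converges precisely because of the strict inequalities.

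A technical point I would isolate as a separate sublemma is the passage from the fixed-$\n$ bound to the summed bound \eqref{e:wantedBound2}: for each $\n$ the integral is bounded by $\prod_e 2^{-\text{(something)}|\n_e|}\prod_e\|K_e\|_{a_e;p}$, and one must show that the exponents are such that $\sum_{\n\in\CN_\lambda}$ of these products is summable and of order $\lambda^\alpha$. This is a standard ``forest formula'' / Hepp-sector summation, but it requires the strict inequalities in Assumption~\ref{ass:mainGraph} together with the constraint $2^{-|\n_{e_{\star,i}}|}\le\lambda$ defining $\CN_\lambda$; the constraint is what converts one of the ``marginal'' directions into the explicit $\lambda$-power rather than a divergence. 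I would also need the elementary convolution estimate that if $F_1\in\CB_{\alpha_1,\cdot}$ and $F_2\in\CB_{\alpha_2,\cdot}$ with $\alpha_1+\alpha_2-|\s| \ge 0$ then $F_1 * F_2$ is $\CB_{\alpha_1+\alpha_2-|\s|,\cdot}$ at small scales and bounded otherwise — the scale-by-scale analogue of Lemma~\ref{lem:twoscale}, but localised to single annuli, which is immediate from property~3 and \eqref{e:boundKn}.

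The main obstacle, and the part that will consume most of the write-up, is the bookkeeping in the inductive step: showing that contracting an innermost cluster $\bar\CCV$ produces a graph whose labels still satisfy \emph{all four} inequalities of Assumption~\ref{ass:mainGraph}, \emph{for every} admissible choice of which edges carry renormalisation and of the sign of $r_e$. In particular one has to track carefully how the positive-renormalisation gains on outgoing edges and losses on incoming edges redistribute when a cluster is collapsed to a point, and to verify the compatibility conditions (at most one negative-$r_e$ edge per vertex, no renormalised edge touching $0$ or joining two elements of $\CCV_\star$) are inherited. This is conceptually routine but combinatorially heavy; the cleanest path is to phrase the induction in terms of the ``resulting directed graph'' $(\CCV,\hat\CCE)$ throughout and only return to the multigraph $\CCG$ at the very end, exactly as the preamble to Assumption~\ref{ass:mainGraph} sets up. Once the inductive step is established, the convergence of \eqref{genconv} and the stability statement ($K_{e,m}\to K_e$ pointwise with uniform bounds implies $\CI^\CCG(\phi_\lambda,K_m)\to\CI^\CCG(\phi_\lambda,K)$) follow by dominated convergence, since the per-$\n$ bounds are uniform in $m$ and summable.
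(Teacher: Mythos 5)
Your plan is the classical Hepp--sector induction (contract innermost clusters, recurse on a smaller graph), which is genuinely different from what the paper does: the paper never contracts, but instead maps each scale assignment to a labelled binary tree via minimal-spanning-tree clustering, implements all renormalisation cancellations once and for all in the integrand (the operators $\Op_e^{r}$ applied to the edges in $\Amin$), and then reduces the whole estimate to a single combinatorial summation lemma over order-preserving tree labellings (Lemma~\ref{theo:sumTree}), whose hypotheses are checked directly from Assumption~\ref{ass:mainGraph}. The difference of route would be fine, but as written your central inductive step is asserted rather than established, and it contains a genuine gap. First, integrating out the interior of an innermost cluster does not return an object of the same form: if the cluster is joined to three or more external vertices, the result is a genuine multi-point kernel, not a product of two-point kernels indexed by edges of a smaller graph, so the claim that one obtains ``a new graph with new labels $(\hat a_e',r_e')$ still satisfying Assumption~\ref{ass:mainGraph}'' is not even well-defined without enlarging the class of objects (or replacing contraction by sup-bounds, which changes the induction and forfeits the cancellations you need next).

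Second, and more seriously, for an edge with $r_e<0$ the Taylor-subtraction gain is \emph{not} a factor $2^{-|r_e|n_{\max}}$ attached to the innermost scale at which $e_-$ and $e_+$ cluster: it is a gain \emph{relative to the coarser scale} at which the pair $\{e_-,e_+\}$ merges with the rest of the configuration, i.e.\ the factor $2^{(\ell_{e_\Uparrow}-\ell_{e_\uparrow})|r_e|}$ of \eqref{e:boundKtilde}, and it is paid for by derivatives of the \emph{external} kernels evaluated at that coarser scale (this is why the norms carry the derivative index $p$, and why the paper only exploits the cancellation for edges in $\Amin$, where the subtree below $e_\uparrow$ is exactly the pair $\{e_-,e_+\}$). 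At the moment you contract the innermost cluster you do not yet know $\ell_{e_\Uparrow}$, so this gain cannot be recorded as a modified label on the contracted graph; the arithmetic by which conditions 3 and 4 of Assumption~\ref{ass:mainGraph} absorb these deferred gains and losses is precisely the content of the final verification in the paper (the functions $\eta$ and $\tilde\eta$ and the check of the hypotheses of Lemma~\ref{theo:sumTree}), and it is exactly the step your proposal postpones as ``conceptually routine but combinatorially heavy''. The same remark applies to your final geometric-series summation: one needs both the positivity of $\sum_{\u\ge\v}\eta_\u$ and the negativity of $\sum_{\u\not\ge\v}\eta_\u$ below the distinguished node to extract the power $\lambda^\alpha$ rather than a divergence or a logarithm, and nothing in the sketch guarantees these signed sums survive your contraction bookkeeping. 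Until the contraction step is given a precise meaning and these scale-mismatched renormalisation gains are tracked through it, the proof is not complete.
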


To see that Lemma \ref{theo:ultimate} implies Theorem \ref{theo:ultimate1} for smooth kernels, we use the fact that the rescaled test function  can be viewed as just another kernel  $K_{e_{*,i}}(v_{*,i}):= \phi_\lambda(v_{*,i})$ with $a_e= 0$ and $\|K_{e_{*,i}}\|_{a_e;p}=\lambda^{-|\s|}$.

To see that it suffices to prove Theorem \ref{theo:ultimate1} for smooth kernels, we argue as follows.  Given a labelled graph $\CCG$, let $p$ be given by the theorem.  Given singular kernels $K_e$ with $\|K_e\|_{a_e;p}<\infty$, $e\in\CCE$, let
$K_{e,m}$ be smooth kernels with $\|K_{e,m} - K_e\|_{a_e;p}\to 0$ as $m\to \infty$ for each $e$.  By the 
multilinearity it is not hard to see that the real numbers $\CI^\CCG(\phi_\lambda,K_m)$, $m=1,2,\ldots$ form a  Cauchy sequence, and therefore have a unique limit,
which, in addition, satisfies the bound \eqref{genconvBound}.

The lemma will be proved in Subsections~\ref{trees}--\ref{lastsubsection}. Throughout this section, 
the symbol $\sim$ denotes a bound from above and below, with proportionality
constants that only depend on $|\CCV|$.  Note that all constructions are finite so, for example, 
the constants appearing in inductive proofs are allowed to get worse at each stage, and no effort has been made
to optimize the dependence on the size of the graph $\CCG$.  Note that we can reduce ourselves to the case where all 
$\|K_e\|_{\alpha;p} =1$  by multilinearity, so we will not follow these norms in the sequel.

\subsection{Multiscale clustering} 
\label{trees} 

It turns out to be convenient to think of the integral in \eqref{e:bigsum} as over $x\in (\R^d)^{\CCV}$, with 
$x_0=0$ and we will use this convention throughout the proof.
Since our kernels are smooth,
the set of $x\in (\R^d)^{\CCV}$ where any  two different 
$\|x_v-x_w\|_\s$ coincide can be ignored in the integral in \eqref{e:bigsum}. 
To other points $x\in (\R^d)^{\CCV}$ 
we will associate a labelled rooted binary tree $T$ whose leaves are the $v\in \CCV$.

We will use the terminology {\it node}
instead of {\it vertex} to distinguish the nodes of this tree from the vertices $\CCV$ 
of the original graph, and denote them
by $\v$, $\w$, etc.   
A {\it leaf} is a node of degree 1. An {\it inner node} is one of degree at least 2.  
A rooted tree comes with a partial order, $\v \ge \w$  means that $\w$ belongs
to the shortest path connecting $\v$ to the root.  In genealogical terms, $\w$ is an ancestor of $\v$.  For any two nodes $\v$ and $\w$,
we write $\v \wedge \w$ for the unique node such that for any node $\u$ satisfying
$\u \le \v$ and $\u \le \w$,
one necessarily has $\u \le (\v \wedge \w)$, i.e., $\v\wedge \w$ is the 
most recent common ancestor of $\v$ and $\w$.  We will furthermore impose that
every inner node has exactly two descendants, that only the inner nodes are labelled, by natural numbers, 
and that the labelling $\ell$ of the inner nodes respects the partial order in the sense 
 that $\ell_\v \ge \ell_\w$ whenever $\v \ge \w$.  Note that the leaves of the tree will sometimes be denoted $v,w$ since they are also elements of 
 $\CCV$.

The way the tree is constructed  is as follows:  
First consider the complete undirected weighted graph with 
vertices $v\in \CCV$, and edge weight $\|x_v-x_w\|_\s$ assigned to the edge $(v,w)$, $v,w\in \CCV$.
A minimal spanning tree can be constructed, for example, by Kruskal's algorithm \cite{MR0078686}: Choose first the
edge of minimal weight, then successively add the edge with the smallest weight which is not in the tree already,
as long as adding it does not create a loop, in which case, it is skipped and we attempt to add the next smallest weight.  Since the edge weights can be strictly ordered, there is no ambiguity in this definition.  The binary
tree $T$ with leaves $v\in \CCV$ simply records the order in which edges were added to the minimal spanning
tree:  At the stage when the edge $(v,w)$ is  added to the minimal spanning tree, the branch containing $v$ is joined to the branch containing $w$.

Now for each node $\v$ we let 
\begin{equ}
\ell_{\v} = \max_{v\wedge w=\v}\lfloor -\log_2 \|x_v-x_w\|_\s\rfloor \,. 
\end{equ}
From the construction, if $\v \ge \w$, then $\ell_\v \ge \ell_\w$.

Given a set of vertices $\CCV$,  denote by $\CCT(\CCV)$  the set of rooted labelled binary
trees $(T,\ell)$ as above, with an order preserving labelling $\ell$,
which have $\CCV$ as their set of leaves. 
From the construction, a  generic $x\in (\R^d)^{\CCV}$ corresponds to an element $(T,\ell)$ of $\CCT(\CCV)$.
The downside of course is that  we can only partially read off the edge lengths $\|x_v-x_w\|_\s$ from $(T,\ell)$.
More precisely, for any two leaves $v, w \in \CCV$, one has
$
\|x_v - x_{w}\|_\s \sim 2^{-\ell_{v \wedge w}}$,
however the constants of proportionality can be quite poor.  In particular, it is not hard to see that
\begin{equ}[e:distancexx]
2^{-\ell_{v \wedge w}}\le \|x_v - x_{w}\|_\s \le |\CCV| 2^{-\ell_{v \wedge w}}\;,
\end{equ}
and that the upper bound cannot really be improved (for example, place the points co-linearly, with the largest gap at one end.) 
In applications such as cladograms, this renders such constructions essentially worthless, however, in our application, it only 
means that the resulting constant $C$ on the right hand side of \eqref{e:wantedBound2} will depend badly  on the size of the vertex set $\CCV$.  Since in any subcritical stochastic PDE there are only finitely many 
universal objects to control, the resulting bound suffices for our purposes.


\begin{definition} 
For $c= \log|\CCV|+2$, let  $\CN(T,\ell)$ consist of all functions $\n \colon \CCE \to \N^3$  such that for every edge $e = (v,w)$ with $r_e \le 0$, one has $\n_e = (k,0,0)$ with
$|k - \ell_{v\wedge w}| \le c$, and for every edge $e = (v,w)$ with $r_e > 0$, one has $\n_e = (k,p,m)$ with
$|k - \ell_{v\wedge w}| \le c$, $|p - \ell_{v\wedge 0}| \le c$, and
$|m - \ell_{w\wedge 0}| \le c$.
\end{definition}

 If $\n \colon \CCE\to \N^3$ 
is such that $\int_{(\R^d)^{\CCV}} \hat K^{(\n)}(x)\,dx$  is non-vanishing, then the  support of $ \hat K^{(\n)}$ is non-empty.  From \eref{e:bigsum},  
 $x \in \R^{\CCV}$ is in that support only if it belongs to the support of 
 $\hat K_e^{(\n_e)}(x_{e_-},x_{e_+})$ for every $e \in \CCE$.  Let $(T,\ell) \in \CCT(\CCV_0)$ be the tree associated to 
 $x \in \R^{\CCV}$.  
If
 $r_e \le 0$, then from Definition \ref{d:Khat}, we have that $\n_e = (m,0,0)$ and 
  $\hat K_e^{(\n_e)}(x_{e_-},x_{e_+})=K^{(m)} (x_{e_+}-x_{e_-})\neq0$. From 3 of Lemma~\ref{lem:kernelsRenorm} we have
  $\|x_{e_+}-x_{e_-}\|_\s \in [2^{-m-2},2^{-m}]$, and then from \eqref{e:distancexx}, we have 
  $|m - \ell_{v\wedge w}| \le c$.  
 If $r_e > 0$, the kernel $\hat K_e^{(\n_e)}$ with 
$\n_e = (k,p,m)$ is given by
\eref{e:defKhatn}, so that for $x$ to belong to its support we must have 
$\|x_{e_+}-x_{e_-}\|_\s \in [2^{-k-2},2^{-k}]$, $\|x_{e_+}\|_\s \in [2^{-p-2},2^{-p}]$, as well as
$\|x_{e_-}\|_\s\in [2^{-m-2},2^{-m}]$, 
which in the same way implies $|k - \ell_{v\wedge w}| \le c$, $|p - \ell_{v\wedge 0}| \le c$, and
$|m - \ell_{w\wedge 0}| \le c$.
Hence we have

\begin{lemma}\label{lem:treelike}  For 
every $\n\colon \CCE \to \N^3$ such that $\int_{(\R^d)^{\CCV_0}} \hat K^{(\n )}(x)\,dx$ from \eref{e:bigsum} is non-vanishing, there exists 
an element $(T,\ell) \in \CCT(\CCV)$ with $\n\in \CN(T,\ell)$. \qed
\end{lemma}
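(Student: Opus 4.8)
The claim is that whenever the integrand $\hat K^{(\n)}$ has non-vanishing integral, there is a labelled tree $(T,\ell)\in\CCT(\CCV)$ with $\n\in\CN(T,\ell)$. The proof is essentially already worked out in the paragraph preceding the statement, so my plan is to organize that reasoning into a clean argument.

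\textbf{Step 1: produce a point and its tree.} The plan is to start from the fact that $\int_{(\R^d)^{\CCV_0}}\hat K^{(\n)}(x)\,dx \neq 0$, which forces the support of $\hat K^{(\n)}$ to be non-empty; fix a generic point $x\in(\R^d)^{\CCV}$ (with $x_0=0$) in that support, avoiding the measure-zero set where two distances $\|x_v-x_w\|_\s$ coincide, so that Kruskal's algorithm runs unambiguously. This produces a well-defined $(T,\ell)\in\CCT(\CCV)$ as in Subsection~\ref{trees}, together with the two-sided comparison \eqref{e:distancexx}, namely $2^{-\ell_{v\wedge w}}\le\|x_v-x_w\|_\s\le|\CCV|2^{-\ell_{v\wedge w}}$ for all leaves $v,w$.

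\textbf{Step 2: localize each edge kernel.} For each edge $e=(v,w)$, membership of $x$ in the support of $\hat K_e^{(\n_e)}(x_{e_-},x_{e_+})$ constrains $\n_e$ via the support properties of the $\Psi^{(\cdot)}$ bumps and item~3 of Lemma~\ref{lem:kernelsRenorm}. I would split into the two cases of Definition~\ref{d:Khat}. If $r_e\le 0$, then $\n_e=(k,0,0)$ and $K^{(k)}(x_{e_+}-x_{e_-})\neq 0$ forces $\|x_{e_+}-x_{e_-}\|_\s\in[2^{-k-2},2^{-k}]$; combined with \eqref{e:distancexx} applied to the pair $(v,w)$, this gives $|k-\ell_{v\wedge w}|\le c$ for $c=\log|\CCV|+2$. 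If $r_e>0$, then $\n_e=(k,p,m)$ and the three bump factors in \eqref{e:defKhatn} force simultaneously $\|x_{e_+}-x_{e_-}\|_\s\in[2^{-k-2},2^{-k}]$, $\|x_{e_-}\|_\s\in[2^{-p-2},2^{-p}]$ (here recall $x=x_{e_-}$ in the notation of \eqref{e:defKhatn}) and $\|x_{e_+}\|_\s\in[2^{-m-2},2^{-m}]$; using $x_0=0$ and applying \eqref{e:distancexx} to the pairs $(v,w)$, $(v,0)$, $(w,0)$ yields $|k-\ell_{v\wedge w}|\le c$, $|p-\ell_{v\wedge 0}|\le c$, $|m-\ell_{w\wedge 0}|\le c$.

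\textbf{Step 3: conclude.} These are precisely the defining inequalities of $\CN(T,\ell)$, so $\n\in\CN(T,\ell)$ and the lemma follows.

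I do not expect any genuine obstacle here: the only things to be careful about are matching the orientation conventions ($e_-$ versus $e_+$ and which argument of $\hat K_e$ is evaluated at the origin in \eqref{e:defKhatn}), checking that the constant $c=\log|\CCV|+2$ absorbs both the factor $|\CCV|$ in \eqref{e:distancexx} and the constant $2$ coming from the width $[2^{-k-2},2^{-k}]$ of the dyadic annuli, and noting that the new ``padding'' edges with $\hat K_e\equiv 1$ carry $\n_e=(0,0,0)$ and impose no constraint, so they are consistent with any tree. Since the statement already ends with \qed in the excerpt, the write-up can be kept to these three short steps.
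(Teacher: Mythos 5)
Your proposal is correct and follows essentially the same argument the paper gives in the paragraph preceding the lemma: take a point in the (non-empty) support, build its labelled tree, and read off the constraints on $\n_e$ from the supports of the dyadic pieces via \eqref{e:distancexx}, which are exactly the defining inequalities of $\CN(T,\ell)$. Your bookkeeping of which component of $\n_e$ goes with $x_{e_-}$ versus $x_{e_+}$ matches \eqref{e:defKhatn} and the definition of $\CN(T,\ell)$, so no issues there.
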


Denote by $\CCT_\lambda(\CCV)$,
 the subset of those labelled trees in $\CCT(\CCV)$ with the property that 
$2^{-\ell_{v\wedge w}} \le \lambda$ for any two leaves $v,w \in \CCV_\star$ (as defined on page~\pageref{star}).
As a consequence of Lemma~\ref{lem:treelike},
we can turn the sum over
$\CN_\lambda$ appearing in the definition of $\CI_\lambda^\CCG(K)$  into a sum over $\CCT_\lambda(\CCV)$:
\begin{equ}\label{eq:a13}
|\CI_\lambda^\CCG(K)| \lesssim \sum_{(T,\ell) \in \CCT_\lambda(\CCV)} \sum_{n \in \CN(T,\ell)} \Bigl|  \int_{(\R^d)^{\CCV}} \hat K^{(\n)}(x)\,dx\Bigr|\;.
\end{equ}

In order to bound the right hand side we will use the following construction.
Consider a rooted binary tree $T$ with a fixed distinguished inner node
$v_\star$ (in particular it has at least one inner node). 
We will denote by $T^\circ$ the set of inner nodes of $T$.
Since the tree is binary, every node of the subtree $T^\circ \subset T$
has exactly two children (in $T$), so that $T^\circ$, together 
with its partial order, actually determines the full tree $T$.
We then consider the set $\CN_\lambda(T^\circ)$ of all integer labelings $\ell \colon T^\circ \to \N$ which preserve the partial order of the tree $T^\circ$ as above and are such that
$2^{-\ell_{\v_\star}} \le \lambda$.  Finally, given a function
$\eta \colon T^\circ \to \R$, we write
\begin{equ}
\CI_\lambda(\eta) = \sum_{\ell \in \CN_\lambda(T^\circ)}\prod_{\v \in T^\circ} 2^{-\ell_\v \eta_\v}\;.
\end{equ}
Setting $|\eta| = \sum_{\v\in T^\circ} \eta_\v$,
we then have the following bound

 \begin{lemma}\label{theo:sumTree}
Assume that $\eta$ satisfies the following two properties: 
\begin{enumerate}
\item For every $\v \in T^\circ$, one has $\sum_{\u \ge \v} \eta_\u > 0$.
\item For every $\v \in T^\circ$ such that $\v\le \v_\star$, one has 
$\sum_{\u \not\ge \v} \eta_\u < 0$, provided that this sum contains at least one term.
\end{enumerate}
Then, one has $\CI_\lambda(\eta) \lesssim \lambda^{|\eta|}$, uniformly over $\lambda \in (0,1]$.
\end{lemma}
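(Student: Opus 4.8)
The plan is to prove Lemma~\ref{theo:sumTree} by induction on the cardinality of $T^\circ$, the set of inner nodes. The base case is $|T^\circ| = 1$, i.e.\ $T^\circ = \{\v_\star\}$, where the sum $\CI_\lambda(\eta) = \sum_{\ell_{\v_\star} \ge -\log_2\lambda} 2^{-\ell_{\v_\star}\eta_{\v_\star}}$ converges (since condition (1) with $\v = \v_\star$ gives $\eta_{\v_\star} > 0$) to something of order $\lambda^{\eta_{\v_\star}} = \lambda^{|\eta|}$; this is just the geometric series. For the inductive step, I would carry out a careful \emph{root elimination}: let $\v_0$ be the root of $T^\circ$ and let $\v_1, \v_2$ be its two children in $T^\circ$ (if they exist as inner nodes). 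The key observation is that in the labelling constraint, $\ell_{\v_0}$ ranges over $\{-\log_2\lambda, \ldots, \ell_{\v_1}\wedge \ell_{\v_2}\}$ if $\v_0 \neq \v_\star$, or over $\{-\log_2\lambda,\ldots\}$ with the additional coupling to children if $\v_0 = \v_\star$. Summing the geometric series in $\ell_{\v_0}$ first reduces the problem to a tree with one fewer inner node, but with the exponents $\eta$ modified near where $\v_0$ was removed.

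More precisely, here is the order of operations. First I would treat the case where the root $\v_0$ of $T^\circ$ is \emph{not} $\le \v_\star$ is impossible (the root is an ancestor of everything, so $\v_0 \le \v_\star$ always holds unless $\v_0 = \v_\star$). So either $\v_0 = \v_\star$, or $\v_0 < \v_\star$ and the constraint on $\ell_{\v_0}$ is $-\log_2\lambda \le \ell_{\v_0} \le \min(\ell_{\v_1},\ell_{\v_2})$ where $\v_1,\v_2$ are its children (one of which sits above $\v_\star$). By condition (2) applied at $\v_0$ (assuming the sum is non-trivial, i.e.\ $\v_0$ has a sibling subtree — but $\v_0$ is the root so $\{\u\not\ge\v_0\} = \emptyset$; hence condition (2) is vacuous at the root). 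So actually at the root the constraint from below is just $\ell_{\v_0}\ge -\log_2\lambda$, and I sum $\sum_{\ell_{\v_0} = \lceil -\log_2\lambda\rceil}^{\min(\ell_{\v_1},\ell_{\v_2})} 2^{-\ell_{\v_0}\eta_{\v_0}}$. Here I split into two cases depending on the sign of $\eta_{\v_0}$: if $\eta_{\v_0} > 0$ (which happens when $\v_0 = \v_\star$, or more generally when $\sum_{\u\ge\v_0}\eta_\u = |\eta| > 0$ forces... no, $\eta_{\v_0}$ itself need not be positive). When $\eta_{\v_0} < 0$ the geometric sum is dominated by its top term $2^{-\min(\ell_{\v_1},\ell_{\v_2})\eta_{\v_0}}$, which amounts to moving the weight $\eta_{\v_0}$ onto whichever of $\v_1,\v_2$ has the larger label; when $\eta_{\v_0} > 0$ it is dominated by the bottom term $\lambda^{\eta_{\v_0}}$, pulling out a clean factor of $\lambda^{\eta_{\v_0}}$ and leaving the children unconstrained from below by $\v_0$.

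The main obstacle — and the reason the two hypotheses are stated exactly as they are — is bookkeeping the modified exponents so that conditions (1) and (2) are preserved for the smaller tree. When I eliminate $\v_0$ with $\eta_{\v_0} < 0$ by pushing its weight onto, say, $\v_1$ (the child with the larger label), the new exponent at $\v_1$ becomes $\eta_{\v_1} + \eta_{\v_0}$, and one must check: (a) for condition (1), that $\sum_{\u\ge\v}\tilde\eta_\u > 0$ still holds for all $\v$ in the reduced tree — this works because the partial sums $\sum_{\u\ge\v}\eta_\u$ over the original tree are unchanged for $\v$ in the subtree rooted at $\v_1$ or $\v_2$, and the new root is $\v_1$ (or $\v_2$) with partial sum $\eta_{\v_1}+\eta_{\v_0}+(\text{descendants}) = \sum_{\u\ge\v_1,\text{orig}}\eta_\u + \eta_{\v_0}$, hmm — this needs the min-label structure, i.e.\ one actually sums over a \emph{restricted} region and the inequality $\ell_{\v_1}\le\ell_{\v_2}$; so really I would not literally merge the subtrees but rather observe the region $\{\ell_{\v_1}\le\ell_{\v_2}\}$ and the complement separately, in each of which one child-label dominates. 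This case analysis on which of $\ell_{\v_1},\ell_{\v_2}$ is larger, iterated down the tree, is the combinatorial heart of the argument. Condition (2) needs to be checked at the node that becomes the new distinguished node $\v_\star$ (if $\v_0 = \v_\star$ was eliminated, the new $\v_\star$ is its child above the old one, wait — more carefully, $\v_\star$ must remain an inner node of the reduced tree, so one only eliminates the root when the root is not $\v_\star$, and when the root \emph{is} $\v_\star$ one eliminates a leaf-adjacent inner node instead). I expect the cleanest writeup eliminates a \emph{maximal} inner node (one whose children in $T^\circ$ are empty, i.e.\ a node both of whose children in $T$ are leaves) rather than the root; at such a node $\v$, if $\v \neq \v_\star$ then $\v \le \v_\star$ forces... no, a maximal node can be incomparable to $\v_\star$. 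If $\v$ is maximal and $\v\le\v_\star$ then $\v = \v_\star$; otherwise $\v$ is incomparable to $\v_\star$ or $\v > \v_\star$, and $\ell_\v$ has no lower constraint from $\lambda$, ranging in $\{\ell_{\v^-}, \ell_{\v^-}+1,\ldots\}$ up to $\infty$ where $\v^-$ is the parent — and here condition (1) at $\v$ gives $\eta_\v > 0$, making this geometric tail converge to order $2^{-\ell_{\v^-}\eta_\v}$, i.e.\ pushing $\eta_\v$ onto the parent $\v^-$; then condition (1) is trivially preserved and condition (2) at $\v$ (which has a non-trivial sibling sum) gave exactly the sign needed... I would organize the induction around eliminating maximal non-$\v_\star$ nodes first until only nodes $\le \v_\star$ remain, then eliminate the root repeatedly (which is then $\le \v_\star$, possibly equal), using condition (2) for the sign of the relevant partial sums. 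Verifying that the two hypotheses are inherited at each elimination step, in both the $\eta_\v>0$ and $\eta_\v<0$ sub-cases and across the $\ell$-comparison splits, is the one genuinely delicate part; everything else is summing geometric series.
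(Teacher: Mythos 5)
Your final plan is essentially the paper's own proof: induct on the number of inner nodes, eliminating an extremal inner node different from the distinguished one by summing its geometric tail upward (condition (1) gives the positive exponent) and merging its weight into the parent, and, when the only extremal node is the distinguished one so the tree is a chain, eliminating the root by summing its label up to the child's (condition (2) gives the negative sign so the top term dominates) and merging into the child — in both cases the merged weights preserve hypotheses (1) and (2), so the induction closes. The opening digression about eliminating a root with two inner children (and the attendant $\min$-label case analysis) is unnecessary and correctly abandoned; the only imprecision left is attributing the inheritance of the hypotheses after an extremal-node elimination to condition (2) at that node, whereas it is automatic since merging a node's weight into its parent leaves all relevant partial sums unchanged.
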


\begin{remark}
Since the order on $T^\circ$ is only partial, $\u \not \ge \v$ is different from
$\u < \v$. The latter would only consider the nodes between $\v$ and the root,
while the former also includes the subtrees dangling from these nodes. 
Note also that the second condition above is empty (and therefore automatically satisfied) 
in the special case when $\v_\star$ also happens to be the root.
\end{remark}

\begin{remark}
As will be evident from the proof, the first condition is necessary for
$\CI_\lambda(\eta)$ to even be finite. Regarding the second condition, if it fails,
then for every $\v$ with $\v_\star \ge \v$ such that $\sum_{\u \not\ge \v} \eta_\u = \alpha > 0$,
the upper bound for $\CI_\lambda(\eta)$ is larger by a factor $\lambda^{-\alpha}$.
If $\sum_{\u \not\ge \v} \eta_\u = 0$, one loses a factor $|\log \lambda|$.
\end{remark}

\begin{proof}[of Lemma \ref{theo:sumTree}]
The proof goes by induction on the size of $T^\circ$. If  $|T^\circ|=1$, it
consists of only the node $\v_\star$. Condition $1$ implies
that $\eta_{\v_\star} > 0$ and one has 
$\CI_\lambda(\eta) = \sum_{2^{-\ell} \le \lambda} 2^{-\ell \eta_{\v_\star}}
\sim \lambda^{\eta_{\v_\star}} = \lambda^{|\eta|}$ as required.

If $|T^\circ| > 1$, we distinguish between two different cases. 
In the first case, $T^\circ$ contains at least one extremal node\footnote{We call leaves of $T^\circ$ ``extremal nodes'' in order not to confuse them with the leaves of $T$.} $v$ 
which is different from the distinguished node $\v_\star$.
In this case, one has $\eta_\v > 0$ by the first assumption (since $v$ is
extremal, the only $\u$ with $\u \ge \v$ is $\v$ itself).
Denote now by $\bar T^\circ$ the tree obtained by erasing the leaf $\v$ and by
$\bar \eta \colon \bar T^\circ \to \R$ the function obtained by setting
$\bar \eta_\u = \eta_\u$ for every node 
$\u$ which is not the parent $\v_\uparrow$
of $\v$. 
We also set $\bar \eta_{\v_\uparrow} = \eta_{\v_\uparrow} + \eta_\v$, which ensures that $\bar \eta$
still satisfies conditions 1 and 2.
One then has
\begin{equs}
\CI_\lambda(\eta) &= \sum_{\ell \in \CN_\lambda(T^\circ)}\prod_{\w \in T^\circ} 2^{-\ell_\w \eta_\w}
= \sum_{\ell \in \CN_\lambda(\bar T^\circ)}\sum_{m \ge \ell_{\v_\uparrow}} 2^{-m\eta_\v }\prod_{\w \in \bar T^\circ} 2^{-\ell_\w \eta_]w}  \\
&\sim \sum_{\ell \in \CN_\lambda(\bar T^\circ)}2^{-\ell_{\v_\uparrow}\eta_\v }\prod_{\w \in \bar T^\circ} 2^{-\ell_\w \eta_\w} = \sum_{\ell \in \CN_\lambda(\bar T^\circ)}\prod_{\w \in \bar T^\circ} 2^{-\ell_\w \bar \eta_\w} = \CI_\lambda(\bar \eta)\;.
\end{equs}
By the induction hypothesis the required upper and lower bounds 
hold.

On the other hand, it may happen that the only extremal node of $T^\circ$ is $\v_\star$ itself.
In this case, the tree $T^\circ$ has a total order and, if $|T^\circ| = k \ge 2$, one can
label its nodes $\v_1\le \ldots\le \v_k = \v_\star$.
Denoting the corresponding values of $\eta$ by $\eta_1,\ldots,\eta_k$, 
we see that in this case our assumptions are equivalent to  
the fact that, for each $j \in \{1,\ldots,k\}$, one has
$\sum_{i \ge j} \eta_i > 0$ and, if $j > 1$, $\sum_{i < j} \eta_i < 0$.
In this case, we define $\bar T$ to be the tree where we remove the root $v_1$
and take $v_2$ as our new root. Similarly, to above, we define $\bar \eta$ on $\bar T^\circ$
by setting it equal to $\eta$ except on $\v_2$ where we set $\bar \eta_{\v_2} = \eta_{\v_2} + \eta_{\v_1}$. We then have the bound
\begin{equs}
\CI_\lambda(\eta) &= \sum_{\ell \in \CN_\lambda(T^\circ)}\prod_{\w \in T^\circ} 2^{-\ell_\w \eta_\w}
= \sum_{\ell \in \CN_\lambda(\bar T^\circ)}\sum_{0 \le m \le \ell_{\v_2}} 2^{-m\eta_{\v_1} }\prod_{\w \in \bar T^\circ} 2^{-\ell_\w \eta_\w}  \\
&\sim \sum_{\ell \in \CN_\lambda(\bar T^\circ)}2^{-\ell_{\v_2}\eta_{\v_1} }\prod_{\w \in \bar T^\circ} 2^{-\ell_\w \eta_\w} = \sum_{\ell \in \CN_\lambda(\bar T^\circ)}\prod_{\w \in \bar T^\circ} 2^{-\ell_\w \bar \eta_\w} = \CI_\lambda(\bar \eta)\;,
\end{equs}
as above. Again, we note that $\bar \eta$ does again satisfy our assumptions,
so the claim follows from the inductive hypothesis. Since we have exhausted all
possibilities, this concludes the proof.
\end{proof}

\subsection{General form of the bound}

Given a labelled tree $(T,\ell)\in \CCT(\CCV)$, denote by $\CD(T,\ell)$ the subset of $(\R^d)^{\CCV}$ such that
$\|x_v - x_w\|_\s \le  |\CCV| 2^{-\ell_{v\wedge w}}$ for all $v,w \in \CCV$. 
As usual, we use the convention that $x_0 = 0$.

\begin{lemma}\label{l:12} Suppose $\tilde K^{(\n)}(x)$ is a function such for each $\n \in \CN(T,\ell)$,
\begin{equ}\label{eel}\supp \tilde K^{(\n)} \subset  \CD(T,\ell)\end{equ} 
and \begin{equ}\label{eek}
\int_{(\R^d)^{\CCV}} \hat K^{(\n)}(x)\,dx = \int_{(\R^d)^{\CCV}}  \tilde K^{(\n)}(x)\,dx,
\end{equ}  then
\begin{equ}[e:bestBoundILambda]
|\CI_\lambda(K)| \lesssim \sum_{(T,\ell) \in \CCT_\lambda(\CCV)} \Bigl(\prod_{v \in T^\circ} 2^{-\ell_v |\s|}\Bigr) \sup_{\n \in \CN(T,\ell)} \sup_{x}|\tilde K^{(\n)}(x)|\;,
\end{equ}
where $T^\circ$ denotes the set of interior nodes of $T$.
\end{lemma}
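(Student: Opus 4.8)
The plan is to reduce the sum over labelings $\CN_\lambda$ in \eqref{eq:a13} to a sum over trees via the hypotheses \eqref{eel} and \eqref{eek}, and then bound the integral over each tree by a product of volume factors. The starting point is \eqref{eq:a13}: replacing $\hat K^{(\n)}$ by $\tilde K^{(\n)}$ using \eqref{eek} under the integral sign, we obtain
\begin{equ}
|\CI_\lambda^\CCG(K)| \lesssim \sum_{(T,\ell) \in \CCT_\lambda(\CCV)} \sum_{\n \in \CN(T,\ell)} \Bigl|\int_{(\R^d)^{\CCV}} \tilde K^{(\n)}(x)\,dx\Bigr|\;.
\end{equ}
For fixed $(T,\ell)$ and $\n \in \CN(T,\ell)$, the support condition \eqref{eel} confines $x$ to $\CD(T,\ell)$. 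First I would observe that $\CD(T,\ell)$, after fixing $x_0 = 0$, has finite Lebesgue measure: indeed the constraints $\|x_v - x_{v_\uparrow}\|_\s \lesssim 2^{-\ell_{v \wedge v_\uparrow}}$ obtained by walking down the tree show that $|\CD(T,\ell)| \lesssim \prod_{v \in \CCV_0} 2^{-\ell_{\v(v)}|\s|}$ for a suitable assignment of interior nodes $\v(v)$, and a careful bookkeeping (each interior node has exactly two children in $T$, and $\CCV_0$ has $|T^\circ|$ elements once we account for the root $0$) shows this product is exactly $\prod_{v \in T^\circ} 2^{-\ell_v|\s|}$.

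Next I would bound the integral crudely by the volume of the support times the sup norm of the integrand:
\begin{equ}
\Bigl|\int_{(\R^d)^{\CCV}} \tilde K^{(\n)}(x)\,dx\Bigr| \le |\CD(T,\ell)| \cdot \sup_x |\tilde K^{(\n)}(x)| \lesssim \Bigl(\prod_{v \in T^\circ} 2^{-\ell_v|\s|}\Bigr) \sup_x |\tilde K^{(\n)}(x)|\;.
\end{equ}
The remaining sum over $\n \in \CN(T,\ell)$ for fixed $(T,\ell)$ is harmless: by the definition of $\CN(T,\ell)$, each component of $\n_e$ is within distance $c = \log|\CCV| + 2$ of a quantity determined by $\ell$, so the number of admissible $\n$ is bounded by $(2c+1)^{3|\CCE|}$, a constant depending only on the graph. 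Hence $\sum_{\n \in \CN(T,\ell)} \sup_x|\tilde K^{(\n)}(x)| \lesssim \sup_{\n \in \CN(T,\ell)} \sup_x |\tilde K^{(\n)}(x)|$, and summing over $(T,\ell) \in \CCT_\lambda(\CCV)$ yields exactly \eqref{e:bestBoundILambda}.

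I expect the main obstacle to be the volume computation for $\CD(T,\ell)$, that is, verifying that the natural product of volume factors arising from a spanning-tree traversal of the constraint graph collapses precisely to $\prod_{v \in T^\circ} 2^{-\ell_v|\s|}$ rather than some larger product. The key point is the bijection between $\CCV_0$ and $T^\circ$ (obtained, e.g., by assigning to each interior node the child that is a leaf, or more carefully by induction on the tree, using that $T$ is binary so $|T^\circ| = |\CCV| - 1 = |\CCV_0|$) together with the fact that $\ell$ respects the partial order, so that integrating out the leaves of $T$ from the bottom up, each integration over one coordinate $x_v \in \R^d$ contributes a factor $\lesssim 2^{-\ell_{\v}|\s|}$ for the appropriate node $\v$ and these nodes exhaust $T^\circ$. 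Everything else — pulling $\tilde K^{(\n)}$ out in sup norm, counting the finitely many $\n$, and reindexing the sum over $\CCT_\lambda(\CCV)$ — is routine given the constructions already in place.
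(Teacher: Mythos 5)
Your proposal is correct and follows essentially the same route as the paper: replace $\hat K^{(\n)}$ by $\tilde K^{(\n)}$ via \eqref{eek}, restrict to $\CD(T,\ell)$ via \eqref{eel}, bound the integral by $|\CD(T,\ell)|\cdot\sup_x|\tilde K^{(\n)}(x)|$, and absorb the boundedly many $\n\in\CN(T,\ell)$ into the supremum. The paper's volume estimate picks, for each $\v\in T^\circ$, a pair of leaves $v_-,v_+$ with $v_-\wedge v_+=\v$ so that these edges form a spanning tree of $\CCV$ and one integrates the coordinates out one by one, which is the same bookkeeping as your bottom-up integration using $|T^\circ|=|\CCV_0|$ (your side remark about "the child that is a leaf" is not quite right, but your inductive version is).
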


\begin{proof}From \eqref{eq:a13}, using \eref{eek} and \eref{eel},
\begin{equ}
|\CI_\lambda(K)| \lesssim \sum_{T \in \CCT_\lambda(\CCV)} \sum_{n \in \CN(T,\ell)} \Bigl|  \int_{\CD(T,\ell)} \tilde K^{(\n)}(x)\,dx\Bigr|\;.
\end{equ}
We claim that the Lebesgue measure of 
$\CD(T,\ell)$ is bounded from above by some fixed constant multiple of $\prod_{v \in T^\circ} 2^{-\ell_v |\s|}$
.
To prove this, for each interior vertex $\v \in T^\circ$, we choose two 
elements $v_-, v_+ \in \CCV$ so that $v_- \wedge v_+ = \v$. 
The collection of edges $\{(v_-, v_+)\,:\, \v\in T^\circ\}$  forms
a spanning tree of $\CCV$ and 
\begin{equ}
\CD(T,\ell) \subset \bigl\{x\,:\, \|x_{v_-} - x_{v_+}\|_\s \le|\CCV| 2^{-\ell_{v}}\quad \forall  v \in T^\circ\bigr\}\;.
\end{equ}
The claim  follows by integrating over these
coordinates one by one. 
\end{proof}

\begin{remark}  
One could in principle simply choose $\tilde K^{(\n)} =  \hat K^{(\n)}$ in the lemma. 
It turns out that the 
resulting bound fails to take into account some cancellations and is not good enough for our purposes. 
The strategy of proof will then be to build, for each $n \in \CN(T,\ell)$, a function
$\tilde K^{(\n)}$  such that $\sup_{x}|\tilde K^{(\n)}(x)|$
can be bounded in a sharp way yielding a bound
of the desired homogeneity in $\lambda$.
\end{remark}

\subsection{Na\"\i ve bound}

Define  $\eta\colon T^\circ \to \R$ by
$\eta(v) =|\s|+ \sum_{e \in \hat\CCE} \eta_e(v)$, where
\begin{equs}[e:defEtaE7]
\eta_e(v)   &= - \hat{a}_e \one_{e_\uparrow} (v)
+ r_e  (\one_{e_+\wedge 0} (v) 
- \one_{e_\uparrow} (v)\bigr) \one_{r_e >0, e_+\wedge 0 > e_\uparrow} \\
&\quad + (1-r_e -  \hat{a}_e) (\one_{e_-\wedge 0} (v)
- \one_{e_\uparrow} (v)\bigr) \one_{r_e >0, e_-\wedge 0 > e_\uparrow}\;,
\end{equs}
with $\one_v(v) = 1$ and $\one_v(w) = 0$ for $w \neq v$.
We then have the following bound for the functions $\hat K^{(\n)}$:
%
%

\begin{lemma} \label{lem:simpleBoundKhat}
Assume the $\hat K^{(\n)}$ are given by Definition \ref{d:Khat} with $K_e$ satisfying \eref{e:boundK} and $\Psi^{(n)}$ given by \eref{psisupn}.
Then
 $\eta$ defined by \eqref{e:defEtaE7}  is such that
\begin{equ}[e:wantedProdBoundKHat]
\Bigl(\prod_{v \in T^\circ} 2^{-\ell_v |\s|}\Bigr)\sup_x |\hat K^{(\n)}(x)| \lesssim \prod_{v \in T^\circ} 2^{-\ell_v \eta(v)}\;,
\end{equ}
uniformly over all $\n \in \CN(T,\ell)$.
\end{lemma}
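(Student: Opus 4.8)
The statement is a pointwise bound on the scale-$\n$ pieces $\hat K^{(\n)}$ of the product kernel, and the strategy is to reduce everything to the elementary scaling properties of the bump functions $\Psi^{(k)}$ and the kernel bounds \eref{e:boundKn} provided by Lemma~\ref{lem:kernelsRenorm}, edge by edge. First I would establish the per-edge bound: for a single edge $e=(v,w)$ and a given multi-index $\n_e\in\CN(T,\ell)$, I claim
\begin{equ}
\sup_{x}\bigl|\hat K_e^{(\n_e)}(x_{e_-},x_{e_+})\bigr| \lesssim \prod_{\u \in T^\circ} 2^{-\ell_\u \eta_e(\u)}\;,
\end{equ}
with $\eta_e$ as in \eref{e:defEtaE7}. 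Since $\hat K^{(\n)}(x)=\prod_{e\in\CCE}\hat K_e^{(\n_e)}(x_{e_-},x_{e_+})$ and $\eta(v)=|\s|+\sum_e \eta_e(v)$, multiplying these bounds over all edges and pulling in the prefactor $\prod_v 2^{-\ell_v|\s|}$ immediately gives \eref{e:wantedProdBoundKHat}. So the whole lemma is really the per-edge statement.

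For the per-edge bound I would split into the two cases of Definition~\ref{d:Khat}. When $r_e\le 0$, one has $\n_e=(k,0,0)$ and $\hat K_e^{(\n_e)}(x_{e_-},x_{e_+})=K_e^{(k)}(x_{e_+}-x_{e_-})$; by \eref{e:boundKn} (with $|k|=0$) this is bounded by $\lesssim 2^{a_e k}$ in absolute value, and by the support condition $\n_e\in\CN(T,\ell)$ we have $|k-\ell_{v\wedge w}|\le c$, so $2^{a_e k}\sim 2^{a_e \ell_{v\wedge w}} = 2^{-\ell_{e_\uparrow}(-a_e)}$ since $e_\uparrow=v\wedge w$ is exactly the ``join'' node of the edge. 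This matches the first term of $\eta_e$ and the other indicator terms vanish because $r_e\le 0$. When $r_e>0$, $\hat K_e^{(\n_e)}$ is given by \eref{e:defKhatn}: it is the product of the three cutoffs $\Psi^{(k)}(y-x)\Psi^{(p)}(x)\Psi^{(m)}(y)$ with the renormalised difference $K_e(y-x)-\sum_{|j|_\s<r_e}\frac{y^j}{j!}D^jK_e(-x)$. On the support of $\Psi^{(k)}(y-x)\Psi^{(m)}(y)$ we have $\|y-x\|_\s\sim 2^{-k}$ and $\|y\|_\s\sim 2^{-m}$, so a Taylor-remainder estimate together with \eref{e:boundK} bounds the renormalised difference by a constant multiple of $\|y\|_\s^{r_e}\sup_{|j|_\s=r_e(\text{rounded up})}\|\theta y - x\|_\s^{-a_e-r_e}\lesssim 2^{-mr_e}2^{k(a_e+r_e)}$ when $m\ge k$ (the generic ordering for this renormalisation), i.e.\ $2^{-m r_e + k(a_e+r_e-1)}\cdot 2^{k}$. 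I would then carefully rewrite the exponents of $2^{-k},2^{-p},2^{-m}$ in terms of the labels $\ell_{v\wedge w},\ell_{v\wedge 0},\ell_{w\wedge 0}$ using $\n_e\in\CN(T,\ell)$, and check that the result is exactly $\prod_\u 2^{-\ell_\u\eta_e(\u)}$, where the two indicator conditions ``$e_+\wedge 0 > e_\uparrow$'' and ``$e_-\wedge 0 > e_\uparrow$'' in \eref{e:defEtaE7} correspond precisely to whether the node $v\wedge 0$ (resp.\ $w\wedge 0$) lies strictly above the join node $e_\uparrow$ in the tree, which determines whether $p$ (resp.\ $m$) is genuinely a smaller scale than $k$ or forced equal to it up to $O(1)$. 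When the relevant node coincides with $e_\uparrow$, the corresponding factor of $2^{-p}$ or $2^{-m}$ is $\sim 2^{-k}$ and gets absorbed into the $2^{-k}$ exponent, which is exactly what the indicator encodes.

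The main obstacle I expect is bookkeeping rather than any deep estimate: one must match the three-parameter scaling structure of the $r_e>0$ kernels against the tree-labelling data $\CN(T,\ell)$ and verify that \eref{e:defEtaE7} is the correct translation in every case, in particular handling the subtle point that $\ell$ respects the partial order so $\ell_{v\wedge 0}\ge \ell_{v\wedge w}$ only when $v\wedge 0 \ge v\wedge w$ in the tree, and that the Taylor remainder must be taken at the right order (the scaled degree threshold $|r_e|$, with $r_e$ a positive integer here). I would also need to note that the mollification by $\Psi^{(p)}(x)$ on $x=x_{e_-}$ is compatible with the support constraint $\|x_{e_-}\|_\s\sim 2^{-m}$ built into $\CN(T,\ell)$, so no additional restriction is lost. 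Once the per-edge identification is in place, taking the product over $e\in\CCE$ and absorbing the $|\s|$ prefactor into $\eta(v)=|\s|+\sum_e\eta_e(v)$ finishes the proof; uniformity over $\n\in\CN(T,\ell)$ is automatic since all the $\sim$'s only cost constants depending on $c$ and $|\CCV|$, not on $\n$.
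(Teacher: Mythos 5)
Your overall architecture matches the paper's: factor $\hat K^{(\n)}$ edge by edge, use \eref{e:boundKn} for the edges with $r_e\le 0$, and translate the scale parameters $(k,p,m)$ into tree labels via $\n\in\CN(T,\ell)$; the prefactor $\prod_v 2^{-\ell_v|\s|}$ is indeed just absorbed into $\eta(v)=|\s|+\sum_e\eta_e(v)$. (A cosmetic point: the per-edge bound should really be stated per \emph{multiedge} pair $(e_-,e_+)$, since several parallel edges share the same $k$ up to $O(1)$, but only one can carry $r_e>0$; this changes nothing essential.)

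The genuine gap is in the $r_e>0$ case: you rely \emph{only} on a Taylor-remainder estimate, asserting the bound $\lesssim 2^{-mr_e}2^{k(a_e+r_e)}$ "when $m\ge k$, the generic ordering". This is false in the configuration $e_-\wedge 0>e_\uparrow$, i.e.\ when $\|x_{e_-}\|_\s\sim 2^{-p}$ is much smaller than $2^{-k}\sim\|x_{e_+}-x_{e_-}\|_\s\sim\|x_{e_+}\|_\s\sim 2^{-m}$ (so $m\approx k$, and your condition $m\ge k$ is satisfied up to constants). There the expansion point $-x_{e_-}$ sits right next to the singularity of $K_e$, the interpolation region of the Taylor remainder reaches points of parabolic norm as small as $\approx 2^{-p}$, and in fact no cancellation occurs at all: already for $r_e=1$ the kernel in \eqref{e:defKhatn} is dominated by the subtracted term $K_e(-x_{e_-})$, of size $\sim 2^{\hat a_e p}\gg 2^{\hat a_e k}$, so its sup is \emph{not} $\lesssim 2^{-mr_e+( a_e+r_e)k}$. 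This is exactly the case responsible for the third term of \eqref{e:defEtaE7}, with its coefficient $(1-r_e-\hat a_e)$ attached to $e_-\wedge 0$, which a Taylor estimate alone can never produce. The paper handles it by a second, complementary bound: estimate $K_e(y-x)$ and each subtracted monomial $\tfrac{y^j}{j!}D^jK_e(-x)$ \emph{separately}, giving \eqref{e:secondBB}, and then check (this is the "only non-obvious point" in the paper) that $2^{\hat a_e k}+\sum_{|j|_\s<r_e}2^{-m|j|_\s+(\hat a_e+|j|_\s)p}\lesssim 2^{(r_e-1)(p-m)+\hat a_e p}$, which is the target dictated by $\eta_e$ in that case. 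The Taylor-remainder bound is reserved for the opposite configuration $e_+\wedge 0>e_\uparrow$ (increment $\|x_{e_+}\|_\s\sim 2^{-m}$ small, base point at distance $\sim 2^{-k}$ from the singularity), where it is both valid and matches the $r_e(\one_{e_+\wedge 0}-\one_{e_\uparrow})$ term; and even there one should use the anisotropic Taylor formula \cite[Prop.~A.1]{Regularity} rather than a one-parameter remainder, since under parabolic scaling the boundary multi-indices $j$ have $|j|_\s$ straddling $r_e$. Without the second regime and the corresponding check, the "bookkeeping" you defer cannot close, so as written the proposal does not prove the lemma.
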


\begin{proof}
Due to the multiplicative 
structure of both sides of this inequality, it holds as soon as we are
able to prove the bound
\begin{equ}[e:wantedSingleBound]
\sup_x \left| \prod_{\tilde{e} = (e_-,e_+)} \hat K_{\tilde{e}}^{(\n_{\tilde{e}})}(x_{e_-},x_{e_+})\right| \lesssim \prod_{v \in T^\circ} 2^{-\ell_v \eta_e(v)}\;.
\end{equ}

Note here that the product on the right hand side actually only involves
at most two terms as a consequence of \eref{e:defEtaE7}.
This bound in turn follows trivially from \eref{e:boundKn} for those
edges ${\tilde{e}}$ for which all $r_{\tilde{e}} \le 0$. 

For the multiedges with some $r_{\tilde{e}_0} > 0$, and $\n_{\tilde{e}_0} = (k,p,m)$ we will estimate in two different ways:  If  $2m > k $,  then we use \cite[Prop.~A.1]{Regularity} to bound the next term in the Taylor expansion \eref{e:defKhatn} for that particular edge $\tilde{e}_0$.  The other multiedges $\tilde{e}=(e_-,e_+)$ all have $r_{\tilde{e}}=0$ and so produce multiplicative factors $K_{\tilde{e}}^{(k_{\tilde{e}})}(x_{e_+}-x_{e_-})$.  From the definition we must have $| k_{\tilde{e}}- k|\le 1$ or the product simply vanishes.  This gives
\begin{equ}[e:firstBB]
\sup_x \left| \prod_{\tilde{e} = (e_-,e_+)} \hat K_{\tilde{e}}^{(\n_{\tilde{e}})}(x)\right| \lesssim 2^{-r_e m + (\hat{a}_e + r_e) k}
\;.
\end{equ}
Since $\n \in \CN(T,\ell)$,
the index $n_e = (k,p,m)$ satisfies
$
|k - \ell_{e_\uparrow}| \le c$, $
|p - \ell_{e_-\wedge 0}| \le c$, and  $
|m - \ell_{e_+\wedge 0}| \le c$,
which gives
\begin{equ}[e:firstBB2]
\sup_x \left| \prod_{\tilde{e} = (e_-,e_+)} \hat K_{\tilde{e}}^{(\n_{\tilde{e}})}(x)\right| \lesssim  2^{-r_e \ell_{e_+\wedge 0} + (\hat{a}_e + r_e) \ell_{e_\uparrow}} \;.
\end{equ}
On the other hand, if $2m \le k $, then we simply bound the terms
appearing in \eref{e:defKhatn} separately, which yields \begin{equ}[e:secondBB]
\sup_x \left|\prod_{\tilde{e} = (e_-,e_+)} \hat K_{\tilde{e}}^{(\n_{\tilde{e}})}(x)\right| \lesssim 2^{\hat{a}_e k} + \sum_{|j|_\s < r_e} 2^{-m |j|_\s + (\hat{a}_e + |j|_\s) p}\;.
\end{equ}
Then it is mostly straightforward to check that \eqref{e:wantedSingleBound} holds for \eqref{e:defEtaE7}.
The only non-obvious point is that in the case $e_-\wedge 0 >e_\uparrow$, we have 
$2^{\hat{a}_e k } \le 2^{ \hat{a}_e p}$ and $p\ge m$ so $2^{\hat{a}_e k} + \sum_{|j|_\s < r_e} 2^{-m |j|_\s + (\hat{a}_e + |j|_\s) p} \lesssim 2^{(r_e-1)( p-m ) + \hat{a}_e  p}$.
\end{proof}

The problem with this bound is that it is not the case in general that the function
$\eta$ satisfies the assumptions of Lemma~\ref{theo:sumTree}. This is because of the possible
presence of edges $e$ with $\hat{a}_e > |\s|$, which can cause the first assumption of 
Lemma~\ref{theo:sumTree} to fail. The purpose of the next subsection is to obtain an
improved bound which deals with such a situation.

\subsection{Improved bound} 

Let $\Amin \subset \CCE$ be the subset of those edges $e$ such that
the following two properties hold.
\begin{claim}
\item One has $r_e < 0$.  
\item The element $e_\uparrow \eqdef e_- \wedge e_+ \in T$ is such that 
if $\{u,v\}$ are such that $u \wedge v = e_\uparrow$, then $\{u,v\} = \{e_-,e_+\}$.
\end{claim}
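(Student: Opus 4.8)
The set $\Amin$ singles out the edges that can be removed from the graph by an integration-by-parts argument, and the plan for this section is to use it to upgrade Lemma~\ref{lem:simpleBoundKhat} to a bound whose weight function satisfies the two hypotheses of Lemma~\ref{theo:sumTree}. The first defining property, $r_e < 0$, means by Lemma~\ref{lem:kernelsRenorm}(5) that for $n>0$ all moments of $K_e^{(n)}$ of scaled degree strictly below $|r_e|$ vanish. The second property says that the common ancestor $e_\uparrow = e_-\wedge e_+$ of the two leaves $e_-$ and $e_+$ has no other descendants, i.e.\ $e_-$ and $e_+$ are sibling leaves of $T$; consequently, for any other leaf $u$ one has $u\wedge e_\pm < e_\uparrow$, so that every kernel incident to $e_-$ or $e_+$ other than $e$ itself connects it to a point lying at scale $\gg 2^{-\ell_{e_\uparrow}} \sim \|x_{e_-}-x_{e_+}\|_\s$, hence is effectively smooth at the scale of the edge $e$.

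First I would, for each $e\in\Amin$, write $G$ for the product of those kernels of the integrand that depend on $x_{e_+}$ but are not $K_e^{(n_e)}$, and use the vanishing moments above to replace $K_e^{(n_e)}(x_{e_+}-x_{e_-})\,G(x_{e_+})$ by $K_e^{(n_e)}(x_{e_+}-x_{e_-})$ times the Taylor remainder of $G$ at $x_{e_-}$ of scaled degree $\lceil|r_e|\rceil$, which leaves the value of the integral unchanged. By \cite[Prop.~A.1]{Regularity} this remainder gains a factor of order $\|x_{e_+}-x_{e_-}\|_\s^{\lceil|r_e|\rceil}\sim 2^{-\ell_{e_\uparrow}\lceil|r_e|\rceil}$, at the cost of a derivative of $G$, which by the smoothness just noted is harmless. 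Carrying this out for every edge of $\Amin$ yields a family $\tilde K^{(\n)}$ supported in $\CD(T,\ell)$ with the same integral as $\hat K^{(\n)}$, so that Lemma~\ref{l:12} applies, and for which the associated weight function $\bar\eta$ differs from the $\eta$ of \eqref{e:defEtaE7} in that, at each node $e_\uparrow$ with $e\in\Amin$, the exponent $-\hat a_e$ is replaced by (essentially) $-(\hat a_e+r_e)$; since $\hat a_e+(r_e\wedge 0)<|\s|$ by Assumption~\ref{ass:mainGraph}(1), the node $e_\uparrow$ then contributes a strictly positive net weight. An induction on $|T^\circ|$, contracting one $\Amin$-edge at a time, then brings us to a situation where $\bar\eta$ satisfies both conditions of Lemma~\ref{theo:sumTree}, whence \eqref{e:wantedBound2} follows by combining Lemmas~\ref{l:12} and \ref{theo:sumTree}.

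The hard part will be the combinatorial verification that $\bar\eta$ does satisfy the two hypotheses of Lemma~\ref{theo:sumTree} for every labelled tree $(T,\ell)$ and every graph satisfying Assumption~\ref{ass:mainGraph}. The first hypothesis, positivity of $\sum_{\u\ge\v}\bar\eta_\u$ at each node, is exactly the one that fails for the naive $\eta$, and only at nodes lying at or above some $e_\uparrow$ with $\hat a_e\ge|\s|$; one must argue that, after the $\Amin$-contractions have been performed, any remaining such edge is itself forced into $\Amin$, which is where the subgraph inequalities \eqref{e:assEdges} and \eqref{e:assEdges2} enter, translated from subsets $\bar\CCV\subset\CCV$ to subtrees of $T$. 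The second hypothesis, negativity of $\sum_{\u\not\ge\v}\bar\eta_\u$ along the path to the distinguished node $\v_\star$, encodes integrability at large scales and should follow from \eqref{e:assEdges3} applied to complements of subtrees, using that the distinguished edges carry labels $(a_e,r_e)=(0,0)$. Tracking how the labels of merged multi-edges add up under contraction and matching each tree inequality to the corresponding subgraph inequality is the bookkeeping that occupies the rest of the section.
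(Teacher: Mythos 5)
Your reading of the two defining properties of $\Amin$ is the right one, and your overall strategy coincides with the paper's: the condition $r_e<0$ gives the vanishing moments of Lemma~\ref{lem:kernelsRenorm}(5) that allow one to subtract a Taylor jet of the remaining factors at $x_{e_-}$ without changing the integral, and the sibling-leaf condition guarantees that every other kernel touching $e_\pm$ lives at the strictly coarser scale $2^{-\ell_{e_\Uparrow}}$, so that the generalised Taylor remainder of \cite[Prop.~A.1]{Regularity} can be applied. This is exactly the construction \eqref{e:defKtilde}, and the final verification against Assumption~\ref{ass:mainGraph} proceeds as you indicate.

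There is, however, one bookkeeping slip that matters. The derivative of the smooth factor $G$ produced by the Taylor remainder is \emph{not} harmless: a derivative of order $k$ of the kernels attached at scale $2^{-\ell_{e_\Uparrow}}$ costs a factor $2^{\ell_{e_\Uparrow}|k|_\s}$, and since $|k|_\s\ge |r_e|$ for the remainder terms, the correct net effect is the \emph{transfer} of a weight $|r_e|$ from the node $e_\Uparrow$ to the node $e_\uparrow$ — this is precisely the term $|r_e|\bigl(\one_{e_\uparrow}(v)-\one_{e_\Uparrow}(v)\bigr)$ in \eqref{tildeetas}, yielding the improvement factor $2^{(\ell_{e_\Uparrow}-\ell_{e_\uparrow})|r_e|}$ of \eqref{e:boundKtilde}. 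If you only record the gain at $e_\uparrow$, as your ``$-\hat a_e$ replaced by $-(\hat a_e+r_e)$'' suggests, the total $|\tilde\eta|$ no longer equals $|\s|\,|\CCV_0|-\sum_e a_e$ and the exponent of $\lambda$ in \eqref{e:wantedBound2} comes out wrong. The sibling-leaf condition is what makes this transfer a genuine improvement: it forces $e_\Uparrow$ to be a \emph{strict} ancestor of $e_\uparrow$, so weight is moved downward in the tree order while the total is conserved. Finally, the induction ``contracting one $\Amin$-edge at a time'' is not needed: the paper performs all the subtractions simultaneously in \eqref{e:defKtilde} and then checks the two hypotheses of Lemma~\ref{theo:sumTree} for $\tilde\eta$ directly, matching each subtree (respectively, complement of a subtree) of $T$ with the corresponding subset inequality \eqref{e:assEdges}--\eqref{e:assEdges3}.
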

In graphical terms, edges $e \in \Amin$ are those giving rise to the situation
where the subtree of $T$ below $e_\uparrow$ consists only of the node $e_\uparrow$
and the leaves $e_-$ and $e_+$:
\begin{center}
\begin{tikzpicture}
	\node at (0,0)  [dot] (variable1) [label=below:{$e_-$}] {};
	\node at (2,0)  [dot] (variable2) [label=below:{$e_+$}] {};
	\node at (1,1.5)  [dot] (variable3) [label=above:{$e_\uparrow$}] {};
	\node at (2.5,1.5)  {$\ldots$};

	\draw (variable1) -- (variable3) -- (variable2);
	\draw (variable3) -- (2.5,2);
	\draw[thick,->] (variable1) to[bend right=30] node [above] {$e$} (variable2);
\end{tikzpicture}
\end{center}

We now build a function $\tilde K^{(\n)}$ as follows.
First, given any edge $e =(e_-,e_+)$ and any $r > 0$, we define an operator $\Op_e^r$ acting
on sufficiently smooth functions $V \colon \R^{\CCV} \to \R$ by
\begin{equ}
\bigl(\Op_e^r V\bigr)(x) = V(x) - \sum_{|k|_\s < r} {(x_{e_+} - x_{e_-})^k \over k!} \bigl(D_{e_+}^k V\bigr)(P_e(x))\;,
\end{equ}
where $D_{e_+}$ denotes differentiation with respect to the coordinate $x_{e_+}$ and
the function $P_e \colon \R^{\CCV} \to\R^{\CCV}$ is given by
\begin{equ}
\bigl(P_e(x)\bigr)_v = 
\left\{\begin{array}{cl}
	x_v & \text{if $v \neq e_+$,} \\
	x_{e_-} & \text{otherwise.}
\end{array}\right.
\end{equ}
We then further note that, as an immediate consequence of \eqref{def:kayhat}, the kernel 
$\hat K^{(\n)}$ factors naturally as
\begin{equ}
\hat K^{(\n)}(x) = \hat G^{(\n)}(x) \prod_{e \in \Amin} \hat K^{(\n_e)}_e(x_{e_-},x_{e_+})\;,\qquad
\hat G^{(\n)}(x) = \prod_{e  \not\in \Amin} \hat K^{(\n_e)}_e(x_{e_-},x_{e_+})\;.
\end{equ}
With these notations at hand, and writing $\Amin = \{e^{(1)},\ldots,e^{(k)}\}$ for 
some $k \ge 0$, we then define the kernel $\tilde K^{(\n)}$ by
\begin{equ}[e:defKtilde]
\tilde K^{(\n)}(x) = \bigl(\Op_{e^{(k)}}^{r_{e^{(k)}}}\cdots \Op_{e^{(1)}}^{r_{e^{(1)}}}\hat G^{(\n)}\bigr)(x) \prod_{e \in \Amin} \hat K^{(\n_e)}_e(x_{e_-},x_{e_+})\;.
\end{equ}

We can easily verify that one does indeed have the identity \eqref{eek}
 because $ \hat K^{(\n)}(x)$ and $\tilde K^{(\n)}(x)$ differ by a number of terms
that are all of the form
$
J(x)\, (x_{e_+}-x_{e_-})^k \hat K^{(\n_{e})}_{e}(x_{e_-}-x_{e_+})$
where $e \in \Amin$, $|k|_\s < r_e$, and where $J$  is some smooth 
function depending on $e$ and $k$ that does not depend 
on the variable $x_{e_+}$. Integrating over $x_{e_+}$ and using
the fact that $\hat K^{(\n_{e})}_{e}$ annihilates polynomials of degree less than $r_{e}$ by assumption,
we conclude that \eref{eek} holds as claimed.  

Now define $\tilde \eta(v) = |\s| + \sum_{e \in \hat\CCE} \tilde \eta_e(v)$ where
\begin{equ}[tildeetas]
\tilde \eta_e(v) = \eta_e(v) +  |r_e|  \one_{e \in \Amin} \bigl(\one_{e_\uparrow} (v) - \one_{e_\Uparrow} (v) \bigr)\;.
\end{equ} where $\eta_e(v)$ is given in \eqref{e:defEtaE7}.
Here  $e_\Uparrow \in T^\circ$ denotes
the ancestor of $e_- \wedge e_+$, i.e.\ the element of the form $w \wedge e_-$ with $w \not \in e$ 
which is furthest from the root. Note that there is at least one such $w$ as long as $e\in \Amin$,
since either $0$ or $v_{\star,1}$ is a candidate. (If $e\in \Amin$ contains $0$, it must be $e_-$, since $e_+\neq 0$ by
 the assumption that $e_+=0$ implies $r_e\ge 0$. But then $e_+\neq v_{\star,1}$ since $r_{(0, v_{\star,1})} =0$ by assumption.)

\begin{lemma}
The kernels $\tilde K^{(\n)}$ defined in \eqref{e:defKtilde} satisfy the bound
\begin{equ}[e:wantedBoundK]
\Bigl(\prod_{v \in T^\circ} 2^{-\ell_v |\s|}\Bigr) \sup_x |\tilde K^{(\n)}(x)| \lesssim \prod_{v \in T^\circ} 2^{-\ell_v \tilde \eta_v}\;,
\end{equ}
uniformly over all $\n \in \CN(T,\ell)$.
\end{lemma}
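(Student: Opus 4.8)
\textbf{Proof plan for the bound \eqref{e:wantedBoundK}.}
The plan is to follow essentially the same multiplicative strategy that proved Lemma~\ref{lem:simpleBoundKhat}, but now paying attention to the extra Taylor subtractions introduced by the operators $\Op_{e}^{r_e}$ for $e \in \Amin$. First I would observe that, as in the proof of Lemma~\ref{lem:simpleBoundKhat}, both sides of \eqref{e:wantedBoundK} factor over the edge set, so it suffices to estimate, for each (multi)edge $\tilde e$, the corresponding factor in $\tilde K^{(\n)}$ and to match it against $\prod_{v \in T^\circ} 2^{-\ell_v \tilde \eta_e(v)}$. For edges $\tilde e \notin \Amin$ the factor is unchanged from $\hat K^{(\n)}$, and $\tilde \eta_e = \eta_e$ on those edges, so the bound is exactly the one already established in Lemma~\ref{lem:simpleBoundKhat}. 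Thus the only new work concerns the factors coming from edges $e \in \Amin$, together with the derivatives $D_{e_+}^k$ that the operators $\Op_{e}^{r_e}$ dump onto the kernels of the \emph{other} edges incident to $e_+$.

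The key step is to control the effect of $\Op_{e}^{r_e}$ when $e \in \Amin$. By definition of $\Amin$, the node $e_\uparrow = e_- \wedge e_+$ has exactly the two leaves $e_-, e_+$ below it, so $\|x_{e_+} - x_{e_-}\|_\s \sim 2^{-\ell_{e_\uparrow}}$ and, by the defining property of $\Amin$ together with $\n \in \CN(T,\ell)$, the kernel $\hat K^{(\n_e)}_e(x_{e_-},x_{e_+}) = K_e^{(m)}(x_{e_+}-x_{e_-})$ is supported where $\|x_{e_+}-x_{e_-}\|_\s \sim 2^{-\ell_{e_\uparrow}}$ and satisfies \eqref{e:boundKn}. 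The point of applying $\Op_e^{r_e}$ (rather than leaving $\hat G^{(\n)}$ alone) is that a factor $(x_{e_+}-x_{e_-})^k$ with $|k|_\s < r_e = -|r_e| \cdot(-1)$ — here I mean $|k|_\s < |r_e|$, using $r_e<0$ — produces a gain of $2^{-|k|_\s \ell_{e_\uparrow}}$ coming from the smallness of $\|x_{e_+}-x_{e_-}\|_\s$, while the worst term in the Taylor remainder replaces $\hat G^{(\n)}$ by $(x_{e_+}-x_{e_-})^j (D_{e_+}^j \hat G^{(\n)})(P_e(x))$ with $|j|_\s = \lceil |r_e|\rceil$ or, more precisely, the first omitted order. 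Using \cite[Prop.~A.1]{Regularity} for the remainder of a Taylor expansion of scaled order $|r_e|$, together with the chain-rule observation that $D_{e_+}$ acting on $\hat G^{(\n)}(P_e(x))$ hits only those other edges incident to $e_+$, each such derivative costs a factor $2^{\ell_{w}}$ per unit of scaled derivative, where $w$ is the relevant node controlling the scale of that neighbouring edge. Since $e_+$ is a leaf sitting directly below $e_\uparrow$, and after the substitution $P_e$ the role of $x_{e_+}$ is played by $x_{e_-}$, the scale of every edge incident to $e_+$ (other than $e$ itself) is governed by $e_\Uparrow$, the ancestor of $e_\uparrow$; thus the derivative cost is $2^{|r_e| \ell_{e_\Uparrow}}$ while the gain from the smallness of $\|x_{e_+}-x_{e_-}\|_\s$ is $2^{-|r_e| \ell_{e_\uparrow}}$. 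Net effect: compared to the naive bound of Lemma~\ref{lem:simpleBoundKhat}, we improve the exponent by $|r_e|(\one_{e_\uparrow}(v) - \one_{e_\Uparrow}(v))$, which is exactly the extra term in \eqref{tildeetas}. (One must also check the ``small'' terms of the Taylor polynomial separately, but these are dominated by the remainder term in this regime since $\ell_{e_\uparrow} \ge \ell_{e_\Uparrow}$ and $\hat a_e < |\s|$ by Assumption~\ref{ass:mainGraph}.1; I would dispatch these by the same elementary scaling estimates as in Lemma~\ref{lem:simpleBoundKhat}.)

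Finally I would assemble the per-edge estimates. For $e \notin \Amin$ we reuse \eqref{e:wantedSingleBound}; for $e \in \Amin$ we have just shown
\[
\sup_x \Bigl| \bigl(\text{contribution of $e$ to } \tilde K^{(\n)}\bigr)(x)\Bigr| \lesssim 2^{-\hat a_e \ell_{e_\uparrow} + |r_e|(\ell_{e_\uparrow} - \ell_{e_\Uparrow})}\;,
\]
which matches $\prod_{v} 2^{-\ell_v \tilde \eta_e(v)}$ with $\tilde\eta_e$ as in \eqref{tildeetas}. Multiplying over all edges and inserting the factor $\prod_{v\in T^\circ} 2^{-\ell_v|\s|}$ from the left-hand side gives \eqref{e:wantedBoundK}. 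The main obstacle I anticipate is the bookkeeping in the chain-rule step: one has to verify carefully that $D_{e_+}$ applied to $\hat G^{(\n)}(P_e(x))$ never lands on a kernel whose natural scale is smaller than $2^{-\ell_{e_\Uparrow}}$ — this is where the precise definition of $\Amin$ (only the two leaves $e_-,e_+$ below $e_\uparrow$) and of $e_\Uparrow$ (the ancestor of $e_\uparrow$) is essential, and one needs the fact, noted after \eqref{tildeetas}, that $e_\Uparrow$ always exists because either $0$ or $v_{\star,1}$ provides a candidate leaf $w \notin e$. Once this is pinned down, the rest is routine scaling analysis combined with \cite[Prop.~A.1]{Regularity}.
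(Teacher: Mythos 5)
Your plan is correct and follows essentially the same route as the paper's proof: the generalized Taylor remainder of \cite[Prop.~A.1]{Regularity}, with mass bound $\lesssim \|x_{e_+}-x_{e_-}\|_\s^{|k|_\s}$ and $|k|_\s\ge|r_e|$, supplies the gain $2^{-|r_e|\ell_{e_\uparrow}}$, while the derivatives $D_{e_+}^k$ only hit kernels of edges whose scale is governed by nodes at or above $e_\Uparrow$ (using that $e\in\Amin$ leaves only $e_\pm$ below $e_\uparrow$ and that $0\notin e$), so the net per-edge improvement is exactly the term $|r_e|\bigl(\one_{e_\uparrow}-\one_{e_\Uparrow}\bigr)$ added in \eqref{tildeetas}, with the remaining edges handled by (derivative versions of) the bounds of Lemma~\ref{lem:simpleBoundKhat}. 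The only cosmetic difference is your separate check of the ``small'' Taylor-polynomial terms, which is unnecessary since the remainder identity already accounts for the full subtraction.
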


\begin{remark}\label{rem:complete}
Recalling Lemma~\ref{theo:sumTree}, and keeping in mind that the summation over labelled trees with 
vertex set $\CCV$ can be absorbed into a $(\CCV,\CCE)$ dependent constant, we see that the proof of Theorem~\ref{theo:ultimate}
is complete as soon as we show that $\tilde \eta$ does indeed satisfy
the conditions of Lemma~\ref{theo:sumTree}, applied to the binary tree  $T^\circ$, and is such that
\begin{equ}[e:wantedEst]
|\tilde \eta| = |\s| |\CCV_0| - \sum_{e \in \hat\CCE} \hat{a}_e\;.
\end{equ}
\end{remark}

\begin{proof}
Write $\d \Amin$ for the set
of all functions $k \colon \Amin \to \N^d$ with $|k_e|_\s<|r|$ but $|k_e+e_i|_\s \ge |r|$ for some $e_i\in \N^d$ with $|e_i|=1$.
For such a  $k$, we write $\D^k$ for the differential operator
in $(\R^d)^\Amin$ given by
$
\D^k = \prod_{e \in \Amin} D_{x_{e_+}}^{k_e}$.
With these notations at hand, it then follows
from the construction of $\tilde K^{(\n)}$ and the generalized Taylor's formula \cite[Prop~A.1]{Regularity} 
that there are explicitly described positive measures $\CQ_x^{k,e}$ on $\R^d$ with \begin{equ}[e:boundMassQ]
\CQ_x^{k,e}(\R^d) \lesssim \|x_{e_+}-x_{e_-}\|_\s^{|k_e|_\s}\;,
\end{equ}such that 
one has the identity
\begin{equ}[e:defGhat]
\tilde K^{(\n)}(x) = \Bigl(\prod_{e \in \Amin} \hat K_e^{(\n_e)}(x_{e_-},x_{e_+})\Bigr)
\sum_{k \in \d \Amin} \int_{(\R^d)^\Amin} \D^{k} \hat G^{(\n)}(x|y) \prod_{e \in \Amin}\CQ^{k,e}_x(dy_e)\;,
\end{equ}
where we introduced the notation $x|y$ for the element in $(\R^d)^{\CCV_0}$
which is obtained by setting
\begin{equ}
(x|y)_v = 
\left\{\begin{array}{cl}
	y_e & \text{if there is $e \in \Amin$ such that $v = e_-$,} \\
	x_v & \text{otherwise.}
\end{array}\right.
\end{equ}
This definition makes sense thanks to our assumption that there are not multiple edges $e\in A^-$ 
emerging from the same vertex, and because we are using smooth approximations to the distributional
kernels.

Furthermore, it follows similarly to before that if all multiedges $\tilde{e}$ connecting $e_-$ to $e_+$ have $r_{\tilde{e}} \le 0$ then, 
for every such multiindex
$k$,  one has the bound
\begin{equ}
\sup_x \Big|D_{e_{\pm}}^k \prod_{\tilde{e} =(e_-,e_+)} \hat K_{ \tilde{e}}^{(\n_{ \tilde{e}})}(x)\Big| 
\lesssim 2^{\ell_{e_\uparrow} |k|_\s} \prod_{v \in T^\circ} 2^{-\ell_v \eta_e(v)}\;,
\end{equ}where $e=(e_-,e_+)$,
uniformly over $n \in \CN_c(T,\ell)$. If some $r_{\tilde{e}} > 0$ on the other hand, 
one obtains the bound
\begin{equ}
\sup_x \Big|D_{e_{\pm}}^k \prod_{\tilde{e} =(e_-,e_+)}\hat K_{ \tilde{e}}^{(\n_{ \tilde{e}})}(x)\Big| \lesssim \Bigl(2^{\ell_{e_\uparrow} |k|_\s} + 2^{\ell_{e_\pm \wedge 0} |k|_\s}\Bigr) \prod_{v \in T^\circ} 2^{-\ell_v \eta_e(v)}\;.
\end{equ}
Combining this with the bound \eref{e:boundMassQ}, the definition of
$\eta$, and the fact that one has $|k_e|_\s \ge |r_e|$ for every edge $e \in \Amin$,
we conclude that the function $\tilde K^{(\n)}$ satisfies 
\begin{equ}[e:boundKtilde]
\sup_x |\tilde K^{(\n)}(x)| \lesssim \Bigl(\prod_{v \in T^\circ} 2^{-\ell_v \eta(v)}\Bigr)
\Bigl(\prod_{e \in \Amin} 2^{(\ell_{e_\Uparrow}-\ell_{e_\uparrow}) |r_e|} \Bigr)\;,
\end{equ}
which is precisely the required bound.
\end{proof}

\begin{remark}
By the definition of the set of edges $\Amin$,
for every $e \in \Amin$ and every $w \not \in e$, one always the property
that $e_\pm \wedge w < e_\uparrow$, so that the exponent appearing in the
second factor above is always negative. In other words, our choice of the set $\Amin$
guarantees that the bound
\eref{e:boundKtilde} is always an improvement over \eref{e:wantedProdBoundKHat}.
\end{remark}

\subsection{Putting everything together}\label{lastsubsection}

By Remark~\ref{rem:complete} the following lemma, which is the final
statement of this section, completes the proof of Theorem~\ref{theo:ultimate}.

\begin{lemma}
The function $\tilde \eta$ given in \eref{tildeetas} satisfies the identity \eref{e:wantedEst} and
the assumptions of Lemma~\ref{theo:sumTree} (applied to the tree $T$) as well as the identity \eref{e:wantedEst}.
\end{lemma}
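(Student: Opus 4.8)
The plan is to verify, one at a time, the three claims about $\tilde\eta$: the energy identity \eqref{e:wantedEst}, the first ``positivity up the tree'' condition of Lemma~\ref{theo:sumTree}, and the second ``negativity off the subtree'' condition. Recall that $\tilde\eta(v) = |\s| + \sum_{e\in\hat\CCE}\tilde\eta_e(v)$ with $\tilde\eta_e = \eta_e + |r_e|\,\one_{e\in\Amin}(\one_{e_\uparrow} - \one_{e_\Uparrow})$, and $\eta_e$ as in \eqref{e:defEtaE7}. The identity \eqref{e:wantedEst} is the easy part: for each fixed edge $e$, summing $\tilde\eta_e(v)$ over all $v\in T^\circ$ gives a telescoping/collapsing sum. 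For the $\Amin$-correction the two terms $\one_{e_\uparrow}-\one_{e_\Uparrow}$ cancel in the total sum, while $\sum_v\eta_e(v)$ collapses to $-\hat a_e$ (the indicator coefficients in the second and third lines of \eqref{e:defEtaE7} are designed so that $\one_{e_+\wedge 0}-\one_{e_\uparrow}$ etc.\ sum to $1$ or $0$ and the surviving contribution is always $-\hat a_e$; this is exactly the content of Lemma~\ref{lem:simpleBoundKhat}'s bookkeeping). Hence $\sum_{v\in T^\circ}\tilde\eta_e(v) = -\hat a_e$, and adding the $|\s|$ per node gives $|\tilde\eta| = |\s|\,|T^\circ| - \sum_e\hat a_e = |\s|\,|\CCV_0| - \sum_e\hat a_e$ since a binary tree with leaf set $\CCV_0$ has $|\CCV_0|-1$ interior nodes; but note the statement uses $|\CCV_0|$, so one must be careful whether the root is counted — I would track this by noting that in $\CI_\lambda$ the product runs over $T^\circ$ and $|T^\circ| = |\CCV|-1 = |\CCV_0|$ when $0$ is treated as a leaf, which matches the convention $x_0=0$ used throughout.

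Next I would check condition~1 of Lemma~\ref{theo:sumTree}: for every $\v\in T^\circ$, $\sum_{\u\ge\v}\tilde\eta_\u > 0$. Fix $\v$ and let $\bar\CCV\subset\CCV$ be the set of leaves lying in the subtree rooted at $\v$ (so $|\bar\CCV|\ge 2$ since $\v$ is interior). The nodes $\u\ge\v$ are exactly the interior nodes of that subtree, there are $|\bar\CCV|-1$ of them, so $\sum_{\u\ge\v}|\s| = (|\bar\CCV|-1)|\s|$. The edge contributions $\sum_{\u\ge\v}\tilde\eta_e(\u)$ are nonzero only for edges $e$ incident to $\bar\CCV$, and after the telescoping one is left precisely with the quantities appearing in Assumption~\ref{ass:mainGraph}.2 and 3 — specifically $-\sum_{e\in\hat\CCE_0(\bar\CCV)}\hat a_e$ plus the positive-renormalization corrections for incoming/outgoing edges, plus the $\Amin$-correction $+|r_e|$ for edges $e\in\Amin$ with $e_\uparrow\ge\v$ but $e_\Uparrow<\v$, which only helps. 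So $\sum_{\u\ge\v}\tilde\eta_\u = (|\bar\CCV|-1)|\s| - (\text{LHS of \eqref{e:assEdges} or \eqref{e:assEdges2}}) + (\text{nonneg.})$, and this is $>0$ precisely by Assumption~\ref{ass:mainGraph} (with the case $|\bar\CCV|=2$ handled by the $\Amin$-correction and part~1, and $|\bar\CCV|\ge 3$ or $0\in\bar\CCV$ handled by parts~2,3). The case distinction according to whether $0\in\bar\CCV$, and whether $\bar\CCV$ has size $2$, is where I expect the most careful case analysis.

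Then condition~2: for $\v\le\v_\star$ (here $\v_\star$ is the node $v_{\star,1}\wedge v_{\star,2}$, or whichever distinguished inner node plays that role in the application, with $2^{-\ell_{\v_\star}}\le\lambda$), if $\sum_{\u\not\ge\v}\tilde\eta_\u$ is a nonempty sum then it must be negative. The set of leaves \emph{not} in the subtree below $\v$ is a set $\bar\CCV'\subset\CCV$ that contains $0$ and also contains $v_{\star,1},v_{\star,2}$ (since $\v\le\v_\star = v_{\star,1}\wedge v_{\star,2}$ means both distinguished leaves are on the far side), so $\bar\CCV'\cap\CCV_\star\supseteq\CCV_\star$, i.e.\ $\CCV_0\setminus\bar\CCV' = \bar\CCV\setminus\{0\}\subset\CCV\setminus\CCV_\star$. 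The interior nodes $\u\not\ge\v$ number $|\bar\CCV'|-1$ minus something — more precisely $\sum_{\u\not\ge\v}|\s|$ counts the interior nodes of $T$ not above $\v$, which after subtracting is $(|\CCV|-1) - (|\bar\CCV|-1) - 1 = |\CCV\setminus\bar\CCV'| - 1$... I would instead compute $\sum_{\u\not\ge\v}\tilde\eta_\u = |\tilde\eta| - \sum_{\u\ge\v}\tilde\eta_\u - \tilde\eta_\v$ is not quite right either since $\v\not\ge\v$ is false; rather $\sum_{\u\not\ge\v} = |\tilde\eta| - \sum_{\u\ge\v}\tilde\eta_\u$. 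Using \eqref{e:wantedEst} and the expression for $\sum_{\u\ge\v}$ from the previous step, this becomes a sum over the complementary vertex set $\bar W := \CCV_0\setminus(\bar\CCV\setminus\{0\})$, and the relevant inequality is exactly Assumption~\ref{ass:mainGraph}.4 applied to $\bar W\setminus\CCV_\star$ (nonempty iff the sum is nonempty), with the $\Amin$-correction again having the right sign (by the Remark after \eqref{e:boundKtilde}, for $e\in\Amin$ the exponent $\ell_{e_\Uparrow}-\ell_{e_\uparrow}$ is negative, i.e.\ the correction $+|r_e|\one_{e_\uparrow} - |r_e|\one_{e_\Uparrow}$ contributes $-|r_e|$ to $\sum_{\u\not\ge\v}$ whenever $e_\uparrow\ge\v$, $e_\Uparrow<\v$, which makes the sum more negative). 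The main obstacle, as usual in these multiscale arguments, is matching bookkeeping: tracking exactly which edges land in $\hat\CCE^\uparrow_+$, $\hat\CCE^\downarrow_+$, $\hat\CCE_0$, and $\Amin$ relative to the cut defined by $\v$, and confirming that the telescoped indicator sums $\sum_\u(\one_{e_+\wedge 0}(\u)-\one_{e_\uparrow}(\u))$ evaluate to the $0/1$ values that make the right-hand sides of \eqref{e:assEdges}–\eqref{e:assEdges3} appear verbatim. I would organize this by introducing, for each edge $e$ and each cut $\v$, the three scalars $\sum_{\u\ge\v}\eta_e(\u)$, computing them explicitly from \eqref{e:defEtaE7} in the (finitely many) possible configurations of $e_-,e_+,e_\uparrow,0$ relative to $\v$, and then assembling.
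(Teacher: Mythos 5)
Your plan is correct and follows essentially the same route as the paper's proof: the telescoping argument for \eqref{e:wantedEst} with $|T^\circ|=|\CCV_0|$, dropping the nonnegative $\Amin$-correction (except in the $|L_v|=2$ case, where it combines with Assumption~\ref{ass:mainGraph}.1) and invoking Assumptions 2--3 with $\bar\CCV=L_v$ for condition~1, and taking $\v_\star$ to be the common ancestor of $\CCV_\star$, noting the correction's favourable sign, and applying Assumption~4 to the complementary leaf set $\CCV\setminus L_v\subset\CCV\setminus\CCV_\star$ for condition~2. The case-by-case evaluation of the indicator sums that you defer is exactly the "checking all cases" bookkeeping the paper performs, so nothing essential is missing.
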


\begin{proof}  To verify that assumption~1 of Lemma~\ref{theo:sumTree} holds, 
we choose an arbitrary element $v \in T^\circ$ 
and we consider the set $L_v \subset \CCV$ of all the leaves $u \in T$ with $u\ge v$. 
Note that one always has $|L_v| \ge 2$, and we will treat the case
$|L_v| = 2$ separately.

If $|L_v| = 2$, then there exists an edge $e$ such that $L_v = e$ and 
$v = e_\uparrow$. In this case, assumption~1 of Lemma~\ref{theo:sumTree}
requires that $\tilde \eta(v) > 0$. We have
$
\tilde \eta(v)= |\s| - \hat{a}_e  + |r_e| \one_{r_e <0}
$ which is indeed positive by
 Assumption~\ref{ass:mainGraph}.1.

We now turn to the case $|L_v| > 2$. Since $e_\uparrow > e_\Uparrow$, one
always has $\sum_{u\ge v} \bigl(\one_{e_\uparrow} (u) - \one_{e_\Uparrow} (u)\bigr) \ge 0$. From the definitions \eref{e:defEtaE7} of $\eta$ and \eref{tildeetas} of $\tilde\eta$  we have  $\sum_{u \ge v} \tilde \eta(u)\ge  \sum_{u \ge v} \eta(u)$. 
By checking all cases $e_-\in L_v$,
$e_-\not\in L_v$,  $e_+\in L_v$,
$e_+\not\in L_v$,  $0\in L_v$,
$0\not\in L_v$ separately, we find that $\sum_{e\in\hat\CCE(L_v)} \sum_{u\ge v} \eta_e(u)$ is given by
\begin{equ}
\sum_{e\in\hat\CCE_0(L_v)} -\hat{a}_e 
+ \one_{0\in L_v} \bigg( 
\sum_{e\in\hat\CCE^\downarrow(L_v)\cap\hat\CCE_+(L_v)}  r_e 
+ \sum_{e\in\hat\CCE^\uparrow(L_v)\cap\hat\CCE_+(L_v)} (-\hat{a}_e- r_e +1)\bigg)\;.
\end{equ}
Note the cancellation which appears in the special
case when all three $e_-,e_+,0\in L_v$.
Now points 2 and 3 of Assumption~\ref{ass:mainGraph} with the choice $\bar\CCV=L_v$
imply that $\sum_{u\ge v}\eta(u) >0$, which concludes the proof that assumption~1 holds.
%
%
 
We now turn to the second condition appearing in Lemma~\ref{theo:sumTree}.
In our case, we choose for the distinguished node $v_\star$ the most recent
common ancestor between the elements of $\CCV_\star$.
The reason for this choice is that this node encodes the 
largest scale appearing in the multiscale clustering which is still guaranteed to be smaller than 
the scale $\lambda$ fixed by the test function.
We then fix an arbitrary node $v \in T^\circ$ such that $v_\star \ge v$.
Denoting by $U_v = \{u \in T^\circ\,:\, u \not \ge v\}$,
the situation is the following, where $U_v$ contains all the nodes lying in the shaded region: 
\begin{center}
\begin{tikzpicture}
	\node at (0,0)  [dot] (variable1) [label=below:{$v_{\star,1}$}] {};
	\node at (1,0)  [dot] (variable2) [label=below:{$0$}] {};
	\node at (2,0)  [dot] (variable3)  {};
	\node at (3,0)  [dot] (variable4) [label=below:{$v_{\star,2}$}] {};
	\node at (1.5,1)  [dot] (vv) [label=above left:{$v_{\star}$}] {};
	\node at (2.5,1.5)  [dot] (uu) [label=above left:{$v$}] {};
	\node at (3.5,2)  [dot] (root)  {};
	\node at (5.8,0)  [dot] (l1) {};
	\node at (5.2,0)  [dot] (l2) {};
	\node at (4,0)  [dot] (l3) {};
	\node at (4.7,0)  [dot] (l4) {};
	\node at (5.5,1)  [dot] (bif1) {};
	\node at (4.5,1.5)  [dot] (bif2) {};
	\node at (5,2)  {$\ldots$};
	
	\fill[rounded corners=8pt,very nearly transparent,fill=black] (4.6,3.1) -- (2.5,2) -- (5.5,0.5) -- (7,0.5) -- cycle;

	\draw (variable1) -- (0.5,0.5) -- (variable2);
	\draw (0.5,0.5) -- (1.5,1) -- (2.5,0.5) -- (variable4);
	\draw (2.5,0.5) -- (variable3);

	\draw (root) -- (4.5,2.5);
	\draw (bif2) -- (4.7,0);
	\draw (1.5,1) -- (root) -- (5.5,1) -- (l1);
	\draw (5.5,1) -- (l2);
	\draw (2.5,1.5) -- (l3);
\end{tikzpicture}
\end{center}

Note again that similarly to before, one has
$\sum_{u \in U_v} \tilde \eta(u) \le \sum_{u \in U_v} \eta(u)$,
so that we can restrict ourselves to the verification of the second 
condition for the function $\eta$. Denoting by $\bar \CCV$ the set
of leaves attached to $U_v$, one has
$\bar \CCV \subset \CCV \setminus \CCV_\star$.   
By checking the three cases directly we have 
\begin{equs}
&\sum_{e} \sum_{u\in U_v}\eta_e(u) =-
\sum_{e\in \hat\CCE\setminus \hat\CCE^{\downarrow} }  \hat{a}_e - \sum_{e\in \hat\CCE^\uparrow\cap \hat\CCE_+}  r_e  
+ \sum_{e\in \hat\CCE^\downarrow\cap \hat\CCE_+}  (r_e  -1 )
\end{equs}
with the obvious notation that  $\hat\CCE=\hat\CCE(\bar\CCV)$, $\hat\CCE^\uparrow=\hat\CCE^{\uparrow} (\bar\CCV)$, etc.
Furthermore, 
the cardinality of $U_v$ is exactly equal to $\bar \CCV$ so we have 
\begin{equ}
\sum_{u \in U_v} \eta(u) = |\s|\,|\bar \CCV| -
\sum_{e\in \hat\CCE\setminus \hat\CCE^{\downarrow}}  \hat{a}_e - \sum_{e\in \hat\CCE^\uparrow\cap\hat\CCE_+}  r_e  
+ \sum_{e\in \hat\CCE^\downarrow\cap\hat\CCE_+}  (r_e  -1 )
\;,
\end{equ}
so that 
the condition $\sum_{u \in U_v} \eta(u) < 0$ is satisfied as a consequence
of Assumption~\ref{ass:mainGraph}.4. 

Finally to see that it satisfies the identity \eref{e:wantedEst}, note that similar to before,  termwise cancellation gives us $\sum_{v\in T}\tilde \eta_e(v) =-\hat{a}_e$.  Hence $|\eta|= \sum_{v\in T^\circ} (|\s| + \sum_{e\in \hat\CCE} \tilde\eta_e(v) = |\s||T^\circ| -\sum_{e\in \hat\CCE} \hat{a}_e$.  
Since $T^\circ$ is  a binary tree, $|T^\circ| = \#$ of leaves $ -1= |\CCV_0|$.
\end{proof}

\section{Notes on renormalisation}

Recall that given a map $M \colon \CT_\ex \to \CT_\ex$ as in 
Section~\ref{sec:renormOp}, the map $\hDeltaM \colon \CT_+ \to \CT_+ \otimes \CT_+$ 
is uniquely defined by the relations
\begin{equs}[e:defDeltaM]
\bigl(\CA\hat M \CA \otimes \hat M\bigr) \Deltap &= (1 \otimes \CM) (\Deltap \otimes 1)\hDeltaM\;,\\
(M \otimes \hat M)\Delta &= (1\otimes \CM)(\Delta \otimes 1)\DeltaM\;,\\
\hat M\J_k &= \CM \bigl(\J_k \otimes 1\bigr) \DeltaM \CQ_{>|k|}\;,
\end{equs}
where $\CA \colon \CT_+ \to \CT_+$ is the antipode of the Hopf algebra $\CT_+$ defined
as in \cite[Thm~8.16]{Regularity}, and $\Deltap$ is its coproduct given in \cite[Equ.~8.9]{Regularity} and $\CQ_{>\alpha}$ projects onto elements of homogeneity 
greater than $\alpha - 2$ (the number $2$ being the gain of homogeneity given by $\J$). 
In the sequel, we also denote by $\CQ_{\ge\alpha}$ the projection
onto elements of homogeneity at least $\alpha - 2$, so that 
$\CQ_{\ge\alpha} + \CQ_{<\alpha} = I$. The motivation for the definitions \eqref{e:defDeltaM}
is that if $(\Pi,\Gamma)$ is an admissible model and one defines $(\Pi^M,\Gamma^M)$ by
\begin{equ}[e:renormModel]
\Pi_x^M = (\Pi_x \otimes f_x)\DeltaM\;,\qquad
\gamma_{xy}^M = (\gamma_{xy} \otimes f_y)\hDeltaM\;,
\end{equ}
(with $\Gamma_{xy} \tau = (1\otimes\gamma_{xy})\Delta \tau$ and similarly for $\Gamma_{xy}^M$)
then $(\Pi^M,\Gamma^M)$ does satisfy all the algebraic identities required 
for an admissible model.

In \cite{Regularity}, the renormalisation group associated to a regularity structure generated by
noises, products and abstract integration maps was defined as the set of maps $M$ 
preserving the noises and $\one$, commuting with the abstract integration maps
and multiplication by $X^k$, such that furthermore both $\DeltaM$ and $\hDeltaM$
are ``upper triangular'' in the sense that, if we write
\begin{equ}
\DeltaM \tau = \tau^{(1)}\otimes \tau^{(2)}\;,\qquad
\hDeltaM \bar \tau = \bar \tau^{(1)}\otimes \bar \tau^{(2)}\;,
\end{equ}
with an implicit summation suppressed in the notation, then one
has $|\tau^{(1)}| \ge |\tau|$ and $|\bar \tau^{(1)}| \ge |\bar \tau|$.
This property was absolutely crucial since this is what guarantees that 
if we use $\DeltaM$ and $\hDeltaM$ to renormalise a model as in \eqref{e:renormModel}, 
then $(\Pi^M,\Gamma^M)$ also satisfies the \textit{analytical} properties required
to be a model. In this section, we show that one only ever needs to
verify that $\DeltaM$ is upper triangular, as this then automatically implies the
same for $\hDeltaM$. 

Throughout this section, we consider a general regularity structure generated
by a number of ``noise symbols'' $\Xi$, a multiplication operation, as well
as a number of abstract integration operators. In other words, every basis
vector of $\CT$ is assumed to be generated from the vectors $\Xi_i$, 
$X_i$ or $\one$ by multiplication and / or abstract integration. 
The structure considered in this
article is of this type since $\CE^k$ can be considered as an integration
operator of order $|k|$. Our main result can be summarised as follows.

\begin{theorem}\label{theo:triangular}
Let $(\CT,\CG)$ be a regularity structure as above and let $M\colon \CT \to \CT$ be
a linear map preserving $\Xi_i$, $X^k$, and commuting with the abstract 
integration maps and with multiplication by $X^k$. Let $\DeltaM$ and $\hDeltaM$
be given by \eqref{e:defDeltaM}. If $\DeltaM$ is upper triangular, then so is $\hDeltaM$.
\end{theorem}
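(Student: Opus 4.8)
\textbf{Proof proposal for Theorem~\ref{theo:triangular}.}

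The plan is to induct on the ``complexity'' of basis vectors $\bar\tau \in \CT_+$, where complexity counts the number of integration operators and noise symbols used to build $\bar\tau$. Since $\CT_+$ is the free commutative algebra generated by $\{X_i\}$ and symbols of the form $\J_k(\tau)$ (with $\tau \in \CT$), and since $\hDeltaM$ is multiplicative in a suitable sense while $\hat M X^k = X^k$, it suffices to verify the upper-triangularity claim, namely $|\bar\tau^{(1)}| \ge |\bar\tau|$ in the Sweedler notation $\hDeltaM\bar\tau = \bar\tau^{(1)}\otimes\bar\tau^{(2)}$, only on generators of the form $\bar\tau = \J_k(\tau)$. (Here I am using that if the property holds on $\sigma_1$ and $\sigma_2$ then it holds on $\sigma_1\sigma_2$, which follows from the fact that $|\cdot|$ is additive under the product and $\hat M$ is a Hopf-algebra morphism; this is the routine multiplicativity step that I would state but not belabor.)

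First I would unpack the third defining relation in \eqref{e:defDeltaM}, namely $\hat M\J_k = \CM(\J_k\otimes 1)\DeltaM\CQ_{>|k|}$, in combination with the identity from Proposition~\ref{prop:renormInt}, which expresses $\DeltaM\CI(\tau)$ (equivalently $\DeltaM$ applied to any integrated symbol) as $(\CI\otimes 1)\DeltaM\tau$ plus correction terms of the form $\sum_{|m| > |\tau|+2}\frac{X^m}{m!}\otimes \J_{m}(\tau_M^{(1)})\tau_M^{(2)}$. The point is that $\hDeltaM\J_k(\tau)$ can then be read off: its left leg is built from $\J_{k}$ applied to the \emph{left legs} of $\DeltaM\tau$ (projected appropriately), so that writing $\DeltaM\tau = \tau_M^{(1)}\otimes\tau_M^{(2)}$ with $|\tau_M^{(1)}| \ge |\tau|$ by hypothesis, one obtains contributions whose left leg has homogeneity $\ge |\tau_M^{(1)}| + 2 \ge |\tau| + 2 = |\J_k(\tau)| + (2 - |k|) + (|k| - 0)$... — the bookkeeping here needs to be done carefully, distinguishing the ``main'' term $\J_k(\tau_M^{(1)})\otimes (\ldots)$ from the polynomial-decorated correction terms $X^m \otimes (\ldots)$, but in each case the projection $\CQ_{>|k|}$ is exactly what guarantees the left leg carries homogeneity $> |k| - 2 = |\J_k(\tau)|$.

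The key technical step — and the one I expect to be the main obstacle — is handling the first relation in \eqref{e:defDeltaM}, $(\CA\hat M\CA \otimes \hat M)\Deltap = (1\otimes\CM)(\Deltap\otimes 1)\hDeltaM$, which implicitly defines $\hDeltaM$ and involves the antipode $\CA$. The antipode does not respect the homogeneity grading in an obvious monotone way (it is only ``triangular up to lower-order terms''), so one cannot simply quote that $\CA$ preserves the filtration. What I would do instead is use the recursive structure of $\CA$ from \cite[Thm~8.16]{Regularity}: $\CA$ is defined by $\CA\one = \one$ and a Sweedler-type recursion that strictly decreases complexity at each step, so that within the induction I may assume the triangularity statement is already known for all symbols of strictly smaller complexity appearing in $\CA\bar\tau$, $\hat M\CA\bar\tau$, etc. Combining this inductive control with the explicit form of $\Deltap$ on $\J_k(\tau)$ (which, like $\Delta$, produces a ``main term'' $\J_k(\tau)\otimes\one$ plus terms with strictly larger left-leg homogeneity), one can invert $(1\otimes\CM)(\Deltap\otimes 1)$ — it differs from the identity by a nilpotent, strictly-homogeneity-raising operator, exactly as in the proof of Proposition~\ref{prop:renormInt} — and conclude that $\hDeltaM\J_k(\tau)$ has the required form. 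The delicate point to get right is that the nilpotent operator being inverted raises homogeneity, so its inverse does not spoil the lower bound $|\bar\tau^{(1)}| \ge |\bar\tau|$; I would check this by the same ``geometric series'' argument used for $D^{-1}$ in Proposition~\ref{prop:renormInt}. Once this is in place, the induction closes and the theorem follows.
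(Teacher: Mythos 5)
You have the right global architecture --- reduce to the generators $\J_k(\tau)$ of $\CT_+$, induct, and isolate the antipode-laden relation $(\CA\hat M\CA\otimes\hat M)\Deltap=(1\otimes\CM)(\Deltap\otimes 1)\hDeltaM$ as the crux --- and this matches the paper. But the mechanism you propose for that crux does not work. First, your worry about the antipode is misplaced: $\CT_+$ is graded connected, all generators have strictly positive homogeneity, and $\Deltap$ preserves total homogeneity, so $\CA$ preserves homogeneity exactly. The genuine obstruction is the operator you propose to invert. Writing $\Deltap\sigma=\sigma\otimes\one+\one\otimes\sigma+\sum\sigma'\otimes\sigma''$, the map $(1\otimes\CM)(\Deltap\otimes 1)$ is the identity plus a nilpotent part that \emph{transfers homogeneity from the left tensor factor to the right} (it produces terms such as $\one\otimes\sigma\rho$); it is not ``strictly homogeneity-raising'', and its inverse $(1\otimes\CM)(1\otimes\CA\otimes 1)(\Deltap\otimes 1)$ destroys any left-leg lower bound term by term. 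Concretely, expanding $\hDeltaM=(1\otimes\CM)\bigl((1\otimes\CA)\Deltap\CA\hat M\CA\otimes\hat M\bigr)\Deltap$ as in \eqref{e:hatDM} produces many summands whose left leg is $\one$; already for $M=\mathrm{id}$, where $\hDeltaM\bar\tau=\bar\tau\otimes\one$, such summands are present and vanish only through cancellation (essentially the antipode identity). So the theorem holds because of cancellations, and your proposal contains no mechanism for exhibiting them; appealing to the induction hypothesis for symbols inside $\CA\bar\tau$ does not remove them (and, incidentally, $\CA\J_k(\tau)$ contains symbols like $\J_{k+m}(\tau)$ of the \emph{same} complexity, so that step of your induction is also not sound as stated --- the paper instead inducts over the $\CT_+$-components of $\Delta\tau$).

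This is precisely where the paper's proof invests its effort: Lemmas~\ref{lem:MMM}, \ref{lem:silly2} and \ref{lem:silly} are combined in Proposition~\ref{prop:ridiculous} to recast $\hDeltaM\J_k(\tau)$ via \eqref{e:ridiculous} as the manifestly triangular term $(\J_k\otimes 1)\DeltaM\tau$ minus corrections whose left factor is of the form $X^\ell\tau^{(2)}\sigma^{(1)}$, where $\tau^{(1)}\otimes\tau^{(2)}\otimes\tau^{(3)}$ comes from $(\CQ_{<|k+\ell|}\otimes\hDeltaM)\Delta\tau$ and $\sigma^{(i)}\in\CT_+$; the low-homogeneity projection, the induction hypothesis (triangularity of $\hDeltaM$ on the $\CT_+$-legs of $\Delta\tau$, giving $|\tau^{(1)}|+|\tau^{(2)}|\ge|\tau|$), and positivity of homogeneities in $\CT_+$ then force $|X^\ell\tau^{(2)}\sigma^{(1)}|\ge|\tau|+2-|k|=|\J_k(\tau)|$. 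Without this identity, or an equivalent rearrangement that performs the cancellations, the bookkeeping you sketch cannot close. A secondary point: your opening step tries to ``read off'' $\hDeltaM\J_k(\tau)$ from $\hat M\J_k=\CM(\J_k\otimes 1)\DeltaM\CQ_{>|k|}$ and Proposition~\ref{prop:renormInt}, but those formulas govern $\hat M$ and $\DeltaM$ (on $\CT_\ex$), not $\hDeltaM$ on $\CT_+$, and the projection $\CQ_{>|k|}$ belongs to the definition of $\hat M$, so it cannot be invoked to bound the left leg of $\hDeltaM\J_k(\tau)$.
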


%
%

In order to prove Theorem~\ref{theo:triangular}, we first derive a number of identities
involving the operators $\DeltaM$ and $\hDeltaM$.
We  first note that a simple calculation using the coassociativity of $\Deltap$ and the 
properties of the antipode $\CA$ yields
\begin{equ}
\bigl((1 \otimes \CM) (\Deltap \otimes 1)\bigr)^{-1} = (1 \otimes \CM)(1\otimes \CA \otimes 1) (\Deltap \otimes 1)\;,
\end{equ}
so that the first identity can be rewritten somewhat more explicitly as
\begin{equ}[e:hatDM]
\hDeltaM = (1 \otimes \CM)\bigl((1\otimes \CA)\Deltap\CA\hat M \CA \otimes \hat M\bigr) \Deltap\;.
\end{equ}
Similarly, the second identity is equivalent to
\begin{equ}[e:DM]
\DeltaM = (1 \otimes \CM)\bigl((1\otimes \CA)\Delta M \otimes \hat M\bigr) \Delta\;.
\end{equ}
Throughout this section, we will make use of the following notation. Given
a map $\sigma\colon \{1,\ldots,n\} \to \{1,\ldots,k\}$, we write
$\CM^\sigma$ for the map
\begin{equ}
\CM^\sigma \colon \bigotimes_{j=1}^n \tau_j \mapsto 
\bigotimes_{i=1}^k \Bigl(\prod_{j \in \sigma^{-1}(i)} \tau_j\Bigr)\;.
\end{equ}
For a surjection $\sigma$, we also use the notation
$\sigma = (\sigma^{-1}(1))\cdots (\sigma^{-1}(k))$, so that for example
\begin{equ}
\CM^{(2)(1,4)(2,5)}(\tau_1 \otimes \cdots\otimes \tau_5)
= \tau_2 \otimes (\tau_1 \tau_4) \otimes (\tau_2\tau_5)\;.
\end{equ}
It will also sometimes be convenient to use for the above example the 
alternative notation
\begin{equ}
\CM^{(2)(1,4)(2,5)} = \CM^{(2)}\otimes \CM^{(1,4)} \otimes \CM^{(2,5)}\;.
\end{equ}

Recall also that the antipode $\CA$ is automatically an antihomomorphism 
of coalgebras \cite{Sweedler}, so that 
\begin{equ}
\Deltap \CA = \CM^{(2)(1)}(\CA \otimes \CA)\Deltap\;.
\end{equ}
With all of these notations at hand, we first claim that on has the following.

\begin{lemma}\label{lem:MMM}
The identity
\begin{equ}[e:MMM]
\CM^{(1,3)(2)}(\hat M \otimes \hDeltaM \CA)\Deltap
= (1\otimes \CA) \Deltap \hat M\;,
\end{equ}
holds true.
\end{lemma}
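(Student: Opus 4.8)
The plan is to prove Lemma~\ref{lem:MMM} by an algebraic computation using the three characterising identities \eqref{e:defDeltaM}, the fact that $\CA$ is an antihomomorphism of coalgebras, and the defining property of the antipode, namely $\CM(\CA \otimes 1)\Deltap = \CM(1\otimes \CA)\Deltap = \one \eta$ where $\eta$ is the counit. The point of the statement is essentially a ``transport'' of the comultiplicativity of $\hat M$ (expressed by the first line of \eqref{e:defDeltaM}, equivalently by \eqref{e:hatDM}) through the antipode, and it is the key technical step that makes the upper-triangularity of $\hDeltaM$ follow from that of $\DeltaM$ in Theorem~\ref{theo:triangular}.

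First I would rewrite the right-hand side of \eqref{e:MMM} using \eqref{e:hatDM}, which expresses $\hDeltaM$ (and hence, precomposed with $\CA$, the quantity $\hDeltaM\CA$) purely in terms of $\hat M$, $\CA$, $\Deltap$ and $\CM$. Then I would expand the left-hand side, applying $\Deltap$ to the first leg (using coassociativity to move the comultiplications around) and substituting the expression for $\hDeltaM\CA$ into the second leg. The strategy is to massage both sides into a common normal form in which all occurrences of $\hat M$ have been pushed to act on ``primitive-looking'' pieces and all the antipodes have been collected using $\Deltap\CA = \CM^{(2)(1)}(\CA\otimes\CA)\Deltap$; the $\CM^\sigma$ notation introduced just above the lemma statement is precisely designed to keep track of which tensor factors get multiplied together in these manipulations. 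Coassociativity of $\Deltap$, together with the compatibility of $\CM$ with $\Deltap$ (the bialgebra axiom for $\CT_+$), should let both sides collapse to the same expression.

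The main obstacle I anticipate is bookkeeping: the identity mixes $\hat M$, $\CA$, $\Deltap$ and $\CM$ in a way that produces, after a naive expansion, a sum over several surjections $\sigma$, and one has to verify that the apparently different terms on the two sides match up after repeated use of the antipode axiom $\CM(1\otimes\CA)\Deltap = \one\eta$ to kill the ``diagonal'' contributions. It will be cleanest to argue this in Sweedler notation on a general basis element $\tau$, tracking the three-fold coproduct $\Deltap^{(2)}\tau = \tau^{(1)}\otimes\tau^{(2)}\otimes\tau^{(3)}$, writing out $\hDeltaM\CA\tau^{(?)}$ via \eqref{e:hatDM}, and then recognising the result of applying $\CM^{(1,3)(2)}(\hat M\otimes\,\cdot\,)$ as exactly $(1\otimes\CA)\Deltap\hat M\tau$ after the antipode identity is invoked once. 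Once Lemma~\ref{lem:MMM} is in hand, the remaining steps of Theorem~\ref{theo:triangular} are the real payoff: one combines \eqref{e:MMM} with \eqref{e:hatDM} and the already-assumed upper-triangularity of $\DeltaM$ to read off, degree by degree, that every term $\bar\tau^{(1)}\otimes\bar\tau^{(2)}$ appearing in $\hDeltaM\bar\tau$ satisfies $|\bar\tau^{(1)}| \ge |\bar\tau|$, since the operators $\hat M$, $\CM$, $\CA$ and $\Deltap$ all interact with the homogeneity grading in a controlled way (with $\CA$ preserving homogeneity and $\Deltap$ being upper triangular for the usual reasons from \cite{Regularity}).
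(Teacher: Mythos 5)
Your plan is correct and follows essentially the same route as the paper's proof: one expands $\hDeltaM\CA$ on the left of \eqref{e:MMM} via \eqref{e:hatDM} and collapses the result using coassociativity of $\Deltap$, the antihomomorphism property $\Deltap\CA = \CM^{(2)(1)}(\CA\otimes\CA)\Deltap$, the multiplicativity of $\hat M$, and the antipode axiom, exactly the bookkeeping you describe with the $\CM^\sigma$ notation. The paper merely prefaces this computation with a two-line motivation obtained by testing both sides against $f_y\otimes\gamma_{xy}$ and using $f^M_x = f_x\hat M$, but the rigorous argument is the same algebraic chain you propose.
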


\begin{proof}
In view of \eqref{e:renormModel}, it is natural to test both sides of \eqref{e:MMM}
against $f_y \otimes \gamma_{xy}$. Since $\gamma_{xy} = f_x^{-1} \circ f_y$, the right hand
side is then equal to $(f_y \circ \gamma_{xy}^{-1})\hat M = f_x \hat M = f_x^M$.
The left hand side on the other hand is equal to 
\begin{equ}
(f_y \otimes \gamma_{xy} \otimes f_y)(\hat M \otimes \hDeltaM \CA)\Deltap
= (f_y^M \otimes (\gamma_{xy}^M)^{-1})\Deltap = f_y^M \circ (\gamma_{xy}^M)^{-1} = f_x^M\;,
\end{equ}
as required since $f_x$ and $f_y$ are arbitrary multiplicative functionals.
More directly, it follows from \eqref{e:hatDM} that
\begin{equs}
\CM^{(1,3)(2)}&\bigl(\hat M \otimes \hDeltaM \CA\bigr)\Deltap \\
& = 
\CM^{(1,3,4)(2)}\bigl(\hat M \otimes \bigl((1\otimes \CA)\Deltap \CA \hat M \CA \otimes \hat M\bigr)\Deltap \CA\bigr)\Deltap \\
& = 
\CM^{(1,2,4)(3)}\bigl(\hat M \otimes \bigl(\hat M \CA \otimes (1\otimes \CA)\Deltap \CA \hat M\bigr)\Deltap \bigr)\Deltap \\
& = 
\CM^{(1,2,4)(3)}\bigl(\bigl(\hat M \otimes \hat M \CA\bigr)\Deltap \otimes (1\otimes \CA)\Deltap \CA \hat M\bigr)\Deltap \\
& = 
\CM^{(2)(1)}(1\otimes \CA)\Deltap \CA \hat M
 = (1 \otimes \CA)\Deltap \hat M\;,
\end{equs}
as required.
\end{proof}

\begin{lemma}\label{lem:silly2}
One has the identity
\begin{equ}
\bigl((1\otimes \CA)\Delta\otimes 1\bigr)\DeltaM
=
\CM^{(1)(3,4)(2,5)}
\bigl((1\otimes \Deltap)\DeltaM \otimes (\CA \otimes 1)\hDeltaM\bigr)\Delta\;.
\end{equ}
\end{lemma}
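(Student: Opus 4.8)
The goal is to establish the commutation identity
\begin{equ}
\bigl((1\otimes \CA)\Delta\otimes 1\bigr)\DeltaM
=
\CM^{(1)(3,4)(2,5)}
\bigl((1\otimes \Deltap)\DeltaM \otimes (\CA \otimes 1)\hDeltaM\bigr)\Delta\;.
\end{equ}
The strategy is to unfold both sides purely algebraically, using the explicit formulas \eqref{e:hatDM} and \eqref{e:DM} for $\DeltaM$ and $\hDeltaM$, together with the standard Hopf-algebraic facts about $\Delta$, $\Deltap$ and the antipode $\CA$: the coassociativity of both coproducts, the cointeraction identity relating $\Delta$ and $\Deltap$ (this is \cite[Equ.~8.9]{Regularity} and the surrounding structure, the point being that $\CT$ is a comodule over the Hopf algebra $\CT_+$, so $(\Delta\otimes 1)\Delta = (1\otimes \Delta_+')\Delta$ in the appropriate sense), and the fact that $\CA$ is an anti-coalgebra morphism so that $\Deltap\CA = \CM^{(2)(1)}(\CA\otimes\CA)\Deltap$. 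The key supporting facts are Lemmas~\ref{lem:MMM} and \ref{lem:silly2} which are already available to me; in particular Lemma~\ref{lem:silly2} is exactly the statement I want with the right-hand side written in a more primitive form, so what remains is essentially to reconcile the combinatorial bookkeeping of the permutation maps $\CM^\sigma$.

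\textbf{Step 1.} I would start from the left-hand side and apply Lemma~\ref{lem:silly2} directly, which rewrites $\bigl((1\otimes \CA)\Delta\otimes 1\bigr)\DeltaM$ as $\CM^{(1)(3,4)(2,5)}\bigl((1\otimes \Deltap)\DeltaM \otimes (\CA \otimes 1)\hDeltaM\bigr)\Delta$. If Lemma~\ref{lem:silly2} is stated in precisely this form, then the proof is a one-line invocation and I am done. The substance of the matter therefore lies in proving Lemma~\ref{lem:silly2} itself, which I expect is what the text intends; so the real plan is: expand $\DeltaM$ on the left via \eqref{e:DM}, giving a five-fold tensor expression involving $\Delta$, $M$, $\hat M$ and $\CA$; expand the right-hand side by substituting \eqref{e:DM} for $\DeltaM$ and \eqref{e:hatDM} for $\hDeltaM$; then match the two. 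Matching requires pushing the antipode through the coproducts and repeatedly using coassociativity of $\Delta$ and $\Deltap$ together with the comodule compatibility, so that both sides collapse to the same normal form, which will be something like $\CM^{\tau}\bigl((1\otimes\CA)\Delta M \otimes (1\otimes\CA)\Deltap\CA\hat M\CA \otimes \hat M\bigr)(\Delta\otimes 1)\Delta$ for a suitable surjection $\tau$.

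\textbf{Step 2 and the main obstacle.} The genuinely delicate part is the combinatorial tracking of the permutation/multiplication maps $\CM^\sigma$ under composition with the iterated coproducts. The identities $(\Delta\otimes 1)\Delta$, $(1\otimes\Deltap)\Deltap$ etc.\ produce tensor factors in a fixed order, but the cointeraction axiom relating $\Delta$ and $\Deltap$ introduces a nontrivial shuffle, and each application of $\CM^\sigma$ re-indexes everything; a single misplaced factor invalidates the whole computation. I would handle this by fixing a consistent labelling convention for the tensor legs at the outset, writing out the cointeraction axiom in the form $(\Delta\otimes\mathrm{id})\Delta = \CM^{(1)(2,4)(3)}(\Delta\otimes\Delta_+')\Delta$ (or whatever the correct combinatorial shape is for this regularity structure, which I would read off from \cite[Sec.~8]{Regularity}), and then mechanically reduce both sides. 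I expect the use of Lemma~\ref{lem:MMM} to enter precisely at the point where a factor of $\hat M$ paired with $\hDeltaM\CA$ via $\Deltap$ needs to be simplified to $(1\otimes\CA)\Deltap\hat M$; this is the ``magic'' that makes the two sides agree. Once the normal forms coincide, the lemma, and hence the displayed identity, follows. The proof is entirely formal — no analysis, no homogeneity bounds — so the only risk is bookkeeping error, and the antidote is scrupulous index discipline rather than any new idea.
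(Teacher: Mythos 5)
Your overall route---insert the explicit formulas \eqref{e:DM} and \eqref{e:hatDM} and reduce by Hopf-algebraic manipulations---is exactly the paper's, but what you have written is a plan rather than a proof: for a lemma of this type the content \emph{is} the chain of tensor computations, and you neither carry it out nor identify the steps that make it terminate. The paper does not bring both sides to a common ``normal form''; it reduces the right-hand side alone until it literally equals $\bigl((1\otimes\CA)\Delta\otimes 1\bigr)\DeltaM$. The ingredients are: the anti-cohomomorphism property $\Deltap\CA=\CM^{(2)(1)}(\CA\otimes\CA)\Deltap$ (together with the fact that $\CA$ is an involution on the commutative algebra $\CT_+$), which converts the factor $(\CA\otimes\CA)\Deltap\CA\hat M\CA$ coming from \eqref{e:hatDM} into a twisted $\Deltap\hat M\CA$; the plain comodule identity $(\Delta\otimes 1)\Delta=(1\otimes\Deltap)\Delta$ (no shuffled ``cointeraction'' of the kind you guessed is needed); the multiplicativity of $\hat M$ and of $\Deltap$; and---the decisive collapsing step absent from your sketch---the antipode axiom $\CM(1\otimes\CA)\Deltap=\one\,\one^*$, which annihilates the pair $\Deltap\hat M\otimes\Deltap\hat M\CA$ and, after one more use of $\Deltap\CA=\CM^{(2)(1)}(\CA\otimes\CA)\Deltap$ and \eqref{e:DM}, leaves exactly the left-hand side.

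Two further corrections. Your Step~1 (invoking Lemma~\ref{lem:silly2} to prove itself) is vacuous and should be deleted. And your expectation that Lemma~\ref{lem:MMM} must enter is wrong: it plays no role in this computation; it is used only later, in the proof of Proposition~\ref{prop:ridiculous}. So the strategy is the right one, but as it stands the argument has a genuine gap: the identity is never actually verified, and the specific algebraic inputs that close the reduction (the antipode/counit identity combined with multiplicativity of $\hat M$) do not appear in your plan.
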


\begin{proof}
It follows from the definitions of $\DeltaM$ and $\hDeltaM$ that the right 
hand side is given by
\begin{equs}
{}&\CM^{(1)(3,4)(2,5)}
\bigl((1\otimes \Deltap)\DeltaM \otimes (\CA \otimes 1)\hDeltaM\bigr)\Delta \\
&= \CM^{(1)(3,4)(2,5,6)}
\Big((1\otimes \Deltap\CM)\bigl((1\otimes \CA)\Delta M \otimes \hat M\bigr)\Delta \\
&\qquad \otimes \bigl((\CA \otimes \CA)\Deltap \CA \hat M \CA \otimes \hat M\bigr)\Deltap\Big)\Delta \\
&= \CM^{(1)(3,5)(2,4,6)}
\bigl((1\otimes \Deltap\CM)\bigl((1\otimes \CA)\Delta M \otimes \hat M\bigr)\Delta \otimes \bigl(\Deltap \hat M \CA \otimes \hat M\bigr)\Deltap\bigr)\Delta \\
&= \CM^{(1)(3,5,7)(2,4,6,8)}
\bigl(\bigl((1\otimes \Deltap\CA)\Delta M \otimes \Deltap\hat M\bigr)\Delta \otimes \bigl(\Deltap \hat M \CA \otimes \hat M\bigr)\Deltap\bigr)\Delta \\
&= \CM^{(1)(3,5,7)(2,4,6,8)}
\bigl((1\otimes \Deltap\CA)\Delta M \otimes \Deltap\hat M \otimes \Deltap \hat M \CA \otimes \hat M\bigr)(\Delta \otimes \Deltap)\Delta \\
&= \CM^{(1)(3,5,7)(2,4,6,8)}
\bigl((1\otimes \Deltap\CA)\Delta M \otimes (\Deltap\hat M \otimes \Deltap \hat M \CA)\Deltap \otimes \hat M\bigr)(1 \otimes \Deltap)\Delta \\
&= \CM^{(1)(3,5)(2,4,6)}
\bigl((1\otimes \Deltap\CA)\Delta M \otimes \Deltap\hat M \CM (1 \otimes \CA)\Deltap \otimes \hat M\bigr)(1 \otimes \Deltap)\Delta \\
&= \CM^{(1)(3)(2,4)}
\bigl((1\otimes \Deltap\CA)\Delta M \otimes \hat M\bigr)\Delta  \\
&= (1\otimes 1 \otimes \CM)
\bigl((1\otimes (\CA \otimes \CA)\Deltap)\Delta M \otimes \hat M\bigr)\Delta \\
&= (1\otimes 1 \otimes \CM)
\bigl(((1\otimes \CA)\Delta \otimes \CA)\Delta M \otimes \hat M\bigr)\Delta \\
&= ((1\otimes \CA)\Delta \otimes 1)(1 \otimes \CM)
\bigl((1 \otimes \CA)\Delta M \otimes \hat M\bigr)\Delta
= ((1\otimes \CA)\Delta \otimes 1) \DeltaM
\end{equs}
as required.
\end{proof}

\begin{lemma}\label{lem:silly}
One has the identity
\begin{equ}[e:silly]
\J_k(\tau)\otimes \one - \one \otimes \J_k(\tau) 
= \sum_\ell \Bigl(1 \otimes {(-X)^\ell \over \ell!} \CM\Bigr)\Bigl((1\otimes \CA)\Deltap \J_{k+\ell} \otimes 1\Bigr)\Delta\tau\;.
\end{equ}
\end{lemma}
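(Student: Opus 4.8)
The plan is to treat \eqref{e:silly} as a purely algebraic identity in the Hopf algebra $\CT_+$, with $\CT$ viewed as a comodule over it via $\Delta$, and to prove it by a direct computation in the spirit of Lemmas~\ref{lem:MMM} and~\ref{lem:silly2}. The starting point is the explicit form of the coproduct on the generators $\J_k$ of $\CT_+$ recorded in \cite[Eq.~(8.9)]{Regularity}, which writes $\Deltap\J_k(\sigma)$ as $(\J_k\otimes 1)\Delta\sigma$ plus ``Taylor correction'' terms built from $\sum_\ell \tfrac{X^\ell}{\ell!}\otimes\J_{k+\ell}(\sigma)$. Writing $\Delta\tau=\tau^{(1)}\otimes\tau^{(2)}$ in Sweedler notation and substituting this formula (with $\sigma=\tau^{(1)}$) into the right hand side of \eqref{e:silly} splits it into two families of terms: one carrying a factor of the form $\J_{k+\ell}(\cdot)$ in the first tensor slot, one carrying a polynomial $X^j$ there.

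For the first family I would invoke the coassociativity $(\Delta\otimes 1)\Delta=(1\otimes\Deltap)\Delta$ of the $\CT_+$-comodule structure to rewrite $\Delta\tau^{(1)}\otimes\tau^{(2)}$ as $\tau^{(1)}\otimes\Deltap\tau^{(2)}$, and then apply the antipode identity $\CM(\CA\otimes 1)\Deltap=\epsilon(\cdot)\,\one$ together with the counit axiom for $\Delta$ to collapse the $\tau^{(2)}$-dependence; this produces the term $\J_k(\tau)\otimes\one$ from the left of \eqref{e:silly}, together with analogous terms carrying $\tfrac{(-X)^\ell}{\ell!}$ in the second slot for $\ell\neq0$. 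The remaining task is to show that those $\ell\neq0$ leftovers, the second family of terms, and the ``polynomial'' part of the recursive formula for the antipode $\CA(\J_m(\cdot))$ (itself read off from the same axiom) all combine — via the binomial identity $\sum_\ell\tfrac{(-X)^\ell}{\ell!}\tfrac{X^{m-\ell}}{(m-\ell)!}=\delta_{m,0}$ and a re-indexing of the nested sums — down to the single surviving term $-\one\otimes\J_k(\tau)$. Making the cancellation of all the mixed terms visible, and simultaneously controlling the summations over $\ell$, over the Sweedler indices of $\Delta\tau$ and of $\Deltap\J_{k+\ell}$, and over polynomial degrees, is the main obstacle; the precise shape of \cite[Eq.~(8.9)]{Regularity} is what makes it come out.

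An alternative, and arguably cleaner, route — parallel to the ``test against a model'' trick used for Lemma~\ref{lem:MMM} — is to pair \eqref{e:silly} with $f_x\otimes f_y$ for the canonical lifts of smooth noises, using that such functionals, as the noise and the base points vary, are enough to detect equality in $\CT_+\otimes\CT_+$ for identities that are polynomial in the model. The left side becomes $f_x(\J_k\tau)-f_y(\J_k\tau)$; using $f_y(X^\ell)=(-y)^\ell$, the multiplicativity of $f_y$, and the fact that $f_y\circ\CA$ is the convolution inverse of $f_y$, the right side becomes a sum over $\ell$ of contractions of $f_x$, $f_y\circ\CA$ and $f_y$ against $\Deltap\J_{k+\ell}(\tau^{(1)})\otimes\tau^{(2)}$. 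This is precisely the $\J$-analogue of the structure-group identity \eqref{e:idengamma} (see \cite[p.~127]{Regularity}), namely $\gamma_{zz'}(\J_k\tau)=f_{z'}(\J_k\tau)-\sum_m\tfrac{(z'-z)^m}{m!}f_z(\J_{k+m}\Gamma_{zz'}\tau)$, once $\Gamma_{zz'}$ and $\gamma_{zz'}$ are expanded through $\Delta$ and $\Deltap$; thus \eqref{e:silly} reduces to that identity together with the admissibility relation \eqref{e:intProp}. Either way, the resulting identity is exactly what is needed to feed into \eqref{e:hatDM} in the proof of Theorem~\ref{theo:triangular}.
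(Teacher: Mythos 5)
Your first route is essentially the paper's own proof: the paper likewise substitutes the recursive formula for $\Deltap \J_{k+\ell}$ into the right-hand side, splits it into exactly the two families you describe, and collapses them using the comodule property $(\Delta\otimes 1)\Delta=(1\otimes\Deltap)\Delta$, the antipode/counit identity $\CM(\CA\otimes 1)\Deltap=\one\one^*$ together with $(1\otimes\one^*)\Delta\tau=\tau$, and the binomial identity, the two families producing $\J_k(\tau)\otimes\one$ and $-\one\otimes\J_k(\tau)$ (the paper groups the $\ell\neq 0$ cancellations slightly differently than you do, but the ingredients and the re-indexing are the same). Modulo carrying out the bookkeeping you defer, this is the paper's argument.

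Your ``alternative, arguably cleaner'' route does have a genuine gap as stated: pairing \eqref{e:silly} with $f_x\otimes f_y$ only for characters coming from canonical lifts of smooth noises does not suffice to conclude an identity in $\CT_+\otimes\CT_+$, because such characters satisfy many relations beyond multiplicativity (admissibility determines $f_z(\J_k\tau)$ from $\Pi_z$, and \eqref{e:canonical2B} ties $f_z(\EE^k_\ell(\tau))$ to $\eps$ and $\Pi_z\tau$), so this restricted family is not known to separate points of the relevant finite-dimensional subspace; at best you would prove the identity modulo the ideal of relations shared by all canonical models. This is presumably why the paper uses the ``test against $f_y\otimes\gamma_{xy}$'' computation only as a heuristic preamble in the proof of Lemma~\ref{lem:MMM} before giving the direct algebraic verification. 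To repair it you would have to test against arbitrary characters of the free commutative algebra $\CT_+$ (which do separate points), but then the argument is just the dual formulation of your first route rather than an independent shortcut; moreover, deducing \eqref{e:silly} from the character-level identity of the type \eqref{e:idengamma} risks circularity, since identities of that form are themselves consequences of the same coproduct/antipode algebra that \eqref{e:silly} encodes.
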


\begin{proof}
It follows from the recursive definition of $\Deltap$ that
\begin{equs}
\sum_\ell &\Bigl(1 \otimes {(-X)^\ell \over \ell!} \CM\Bigr)\Bigl((1\otimes \CA)\Deltap \J_{k+\ell} \otimes 1\Bigr)\Delta\tau \\
&= 
\sum_{\ell,m} \Bigl(1 \otimes {(-X)^\ell \over \ell!} \CM\Bigr)\Bigl(\Bigl(\J_{k+\ell+m}\otimes {X^m \over m!}\CA\Bigr)\Delta  \otimes 1\Bigr)\Delta\tau \\
&\qquad + \sum_\ell \one \otimes {(-X)^\ell \over \ell!} \CM \bigl(\CA \J_{k+\ell} \otimes 1\bigr)\Delta \tau\;.
\end{equs}
It now follows from the defining property of $\CA$, 
followed by the binomial identity and the
comodule property of $\Delta$ and $\Deltap$ (see the statement 
and proof of \cite[Thm~8.16]{Regularity}) that  
\begin{equs}
\sum_\ell {(-X)^\ell \over \ell!} \CM &\bigl(\CA \J_{k+\ell} \otimes 1\bigr)\Delta \tau \\
&=
- \sum_{\ell,m} {(-X)^\ell \over \ell!} \CM \bigl(\CM\bigl(\J_{k+\ell+m} \otimes {X^m \over m!}\CA\bigr)\Delta \otimes 1\bigr)\Delta \tau \\
&=
- \CM \bigl(\CM\bigl(\J_{k} \otimes \CA\bigr)\Delta \otimes 1\bigr)\Delta \tau \\
&=
- \CM \bigl(\CM\bigl(\J_{k} \otimes \CA\bigr) \otimes 1\bigr) (1 \otimes \Deltap\bigr)\Delta \tau \\
&=
- \CM \bigl(\J_{k} \otimes \CM(\CA \otimes 1)\Deltap\bigr) \Delta \tau
=
- \J_{k}(\tau)\;.
\end{equs}
Here, we used the fact that $\CM(\CA \otimes 1)\Deltap = \one \one^*$ and $(1\otimes \one^*)\Delta \tau = \tau$.
Similarly, it follows from the binomial identity followed by the comodule property that
\begin{equs}
\sum_{\ell,m} \Bigl(1 \otimes {(-X)^\ell \over \ell!} \CM\Bigr)&\Bigl(\Bigl(\J_{k+\ell+m}\otimes {X^m \over m!}\CA\Bigr)\Delta  \otimes 1\Bigr)\Delta\tau \\
&= \bigl(1 \otimes \CM\bigr)\bigl((\J_{k}\otimes \CA)\Delta  \otimes 1\bigr)\Delta\tau\\
&= \bigl(\J_k \otimes \CM\bigl(\CA \otimes 1)\Deltap \bigr)\Delta\tau = \J_k(\tau) \otimes \one\;,
\end{equs}
which concludes the proof of \eqref{e:silly}.
\end{proof}

We now have all the ingredients required to obtain a recursive 
characterisation of $\hDeltaM$ from which Theorem~\ref{theo:triangular} can
then easily be derived. 

\begin{proposition}\label{prop:ridiculous}
The map $\hDeltaM$ satisfies the identity
\begin{equs}
\hDeltaM \J_k(\tau) &= \bigl(\J_k \otimes 1\bigr) \DeltaM \tau \label{e:ridiculous}
 - \sum_\ell \Bigl({X^\ell\over \ell!} \CM^{(2,3)}\otimes \CM^{(1,4)}\Bigr)  \\
 &\qquad\times \Bigl(\bigl(1\otimes \CA\bigr)\Deltap \CM \bigl(\J_{k+\ell} \otimes 1\bigr)\DeltaM \CQ_{\le|k+\ell|} \otimes \hDeltaM\Bigr)\Delta \tau\;.
\end{equs}
\end{proposition}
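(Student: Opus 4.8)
\textbf{Proof strategy for Proposition~\ref{prop:ridiculous}.}
The plan is to reduce the identity for $\hDeltaM \J_k(\tau)$ to the definition \eqref{e:defDeltaM} of $\hDeltaM$, then manipulate the resulting expression using the Hopf-algebraic identities established in Lemmas~\ref{lem:MMM}, \ref{lem:silly2} and \ref{lem:silly}. The starting point is the third relation in \eqref{e:defDeltaM}, namely $\hat M\J_k = \CM(\J_k \otimes 1)\DeltaM \CQ_{>|k|}$. Since $\CQ_{>|k|} + \CQ_{\le|k|} = I$, we write $\DeltaM = \DeltaM\CQ_{>|k|} + \DeltaM\CQ_{\le|k|}$ and use this to split $(\J_k\otimes 1)\DeltaM$ into the term we want and a ``correction'' term built from $\DeltaM\CQ_{\le|k|}$; the correction is exactly what will become the second line of \eqref{e:ridiculous} after all the bookkeeping. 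The key structural fact is that $\hDeltaM$ is the coproduct-twisting map characterised by \eqref{e:hatDM}, and $\J_k$ itself is \emph{not} a generator of $\CT_+$ as an algebra — it only appears through $\Deltap$ — so one cannot compute $\hDeltaM\J_k(\tau)$ directly by multiplicativity; one must instead use the ``primitive-like'' identity of Lemma~\ref{lem:silly} which expresses $\J_k(\tau)\otimes \one - \one \otimes \J_k(\tau)$ in terms of $\Deltap\J_{k+\ell}$ applied to the first leg of $\Delta\tau$.

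First I would apply $\hDeltaM$ to both sides of Lemma~\ref{lem:silly}, using that $\hDeltaM$ is an algebra morphism (so $\hDeltaM(X^\ell \sigma) = \bigl(\sum \frac{X^{\ell_1}}{\ell_1!}\otimes \frac{X^{\ell_2}}{\ell_2!}\bigr)\hDeltaM\sigma$ with the obvious binomial splitting, since $\hat M X^k = X^k$) and that $\hDeltaM$ commutes with multiplication by $\one$ in both legs. The left-hand side becomes $\hDeltaM\J_k(\tau)\otimes$-stuff, while the right-hand side is controlled by the \emph{already-known} action of $\hDeltaM$ on the lower-homogeneity pieces of $\Deltap\J_{k+\ell}(\cdot)$ applied to $\tau^{(1)}$ (the first leg of $\Delta\tau$). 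Here I would invoke Lemma~\ref{lem:silly2} — the compatibility $\bigl((1\otimes \CA)\Delta\otimes 1\bigr)\DeltaM = \CM^{(1)(3,4)(2,5)}\bigl((1\otimes \Deltap)\DeltaM \otimes (\CA \otimes 1)\hDeltaM\bigr)\Delta$ — to rewrite the composite $(1\otimes\Deltap)\DeltaM$ appearing after expanding everything, trading a factor of $\hDeltaM$ on a \emph{strictly smaller} symbol (this is where the recursive structure is genuinely used) against the more concrete $\DeltaM$. Lemma~\ref{lem:MMM}, $\CM^{(1,3)(2)}(\hat M \otimes \hDeltaM \CA)\Deltap = (1\otimes \CA) \Deltap \hat M$, is then used to collapse the remaining $\hat M$'s and $\CA$'s into the single factor $(1\otimes\CA)\Deltap$ that appears in the statement. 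Throughout, I would track the multiplication patterns $\CM^\sigma$ carefully; the combinatorics of which legs get multiplied together is the bulk of the computation, and the projection $\CQ_{\le|k+\ell|}$ emerges precisely because $\hat M\J_{k+\ell} = \CM(\J_{k+\ell}\otimes 1)\DeltaM\CQ_{>|k+\ell|}$ only controls the ``large'' part, so the ``small'' part $\CQ_{\le|k+\ell|}$ is the error that survives into the correction term.

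The main obstacle I expect is the correct sign and index bookkeeping in the correction term: one has to verify that when the binomial expansion of $X^\ell$ is distributed across the two legs and combined with the $\CA$'s coming from the antipode identities, everything reorganises into exactly $-\sum_\ell \bigl(\frac{X^\ell}{\ell!}\CM^{(2,3)}\otimes\CM^{(1,4)}\bigr)\bigl((1\otimes\CA)\Deltap\CM(\J_{k+\ell}\otimes 1)\DeltaM\CQ_{\le|k+\ell|}\otimes\hDeltaM\bigr)\Delta\tau$, with no leftover terms. A secondary technical point is checking that all sums over $\ell$ (and over the Sweedler indices of $\Delta$ and $\Deltap$) are finite; this follows because $\J_{k+\ell}(\sigma) = 0$ whenever $|\sigma|\le |k+\ell|-2$, exactly as noted after \eqref{e:defDelta}, together with the finiteness built into Lemma~\ref{lem:subcritical}. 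Once \eqref{e:ridiculous} is established, the upper-triangularity of $\hDeltaM$ claimed in Theorem~\ref{theo:triangular} is immediate by induction on homogeneity: the first term $(\J_k\otimes 1)\DeltaM\tau$ inherits triangularity from that of $\DeltaM$ (adding $2$ to the homogeneity of the first leg via $\J_k$), and in the correction term the inner factor $\CM(\J_{k+\ell}\otimes 1)\DeltaM\CQ_{\le|k+\ell|}$ lands in homogeneity $> |k+\ell|-2$ while $\hDeltaM$ on the strictly smaller symbol $\tau^{(1)}$ is triangular by the inductive hypothesis — so a short homogeneity count finishes the argument.
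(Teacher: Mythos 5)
Your plan follows essentially the same route as the paper's proof: apply $\hDeltaM$ and $\hat M$ to the two legs of Lemma~\ref{lem:silly}, collapse the resulting antipode/$\hat M$ factors with Lemma~\ref{lem:MMM}, then use $\hat M\J_{k+\ell}=\CM(\J_{k+\ell}\otimes 1)\DeltaM\CQ_{>|k+\ell|}$ together with a second application of Lemma~\ref{lem:silly} and the compatibility of Lemma~\ref{lem:silly2}, so that only the complementary projection $\CQ_{\le|k+\ell|}$ survives in the correction term, and finish the triangularity claim by induction exactly as in Theorem~\ref{theo:triangular}. The only slip is in your opening sentence, where the splitting is placed at $|k|$; as you correctly state later, the relevant projection enters at $|k+\ell|$ inside the sum.
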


\begin{proof}
We apply the ``swapping'' operator $\CM^{(2)(1)}\colon \tau \otimes \bar \tau \mapsto \bar \tau \otimes \tau$
to \eqref{e:silly} and then apply the map $(1\otimes \CM)(\hDeltaM \otimes \hat M)$ to both
sides. This yields the identity
\begin{equs}
\hDeltaM \J_k \tau &= \one \otimes \hat M \J_k \tau \\
&\quad + \sum_\ell \Bigl({(-X)^\ell \over \ell!} \otimes \CM\Bigr) \bigl(\hDeltaM \CM^{(2,3)}\otimes \CM^{(1)}\bigr) \bigl((\hat M \otimes \CA)\Deltap \J_{k+\ell} \otimes 1\bigr)\Delta \tau \\
&=  \one \otimes \hat M \J_k \tau \\
&\quad + \sum_\ell \Bigl({(-X)^\ell \over \ell!}\CM^{(2,4)} \otimes \CM^{(1,3,5)}\Bigr) \bigl((\hat M \otimes \hDeltaM \CA)\Deltap \J_{k+\ell} \otimes \hDeltaM\bigr)\Delta \tau \\
&=  \one \otimes \hat M \J_k \tau \label{e:basic}\\
&\quad + \sum_\ell \Bigl({(-X)^\ell \over \ell!}\CM^{(2,3)} \otimes \CM^{(1,4)}\Bigr) \bigl((1 \otimes \CA)\Deltap \hat M \J_{k+\ell} \otimes \hDeltaM\bigr)\Delta \tau\;,
\end{equs}
where we made use of Lemma~\ref{lem:MMM} to obtain the last identity.
We furthermore use the definition of $\hat M$ which leads to the identity
\begin{equ}[e:iden1']
\bigl((1 \otimes \CA)\Deltap \hat M \J_{k+\ell} \otimes \hDeltaM\bigr)\Delta \tau
= 
\bigl((1 \otimes \CA)\Deltap \CM (\J_{k+\ell}  \otimes 1)\DeltaM \CQ_{>|k+\ell|} \otimes \hDeltaM\bigr)\Delta \tau
\end{equ}

Noting that
\begin{equ}[e:iden2]
\one \otimes \hat M \J_k(\tau) = \one \otimes \CM (\J_k \otimes 1)\DeltaM \tau\;,
\end{equ}
we then make use of Lemma~\ref{lem:silly} which yields
\begin{equs}
\one \otimes \CM (\J_k \otimes 1)\DeltaM \tau &= 
(\J_k \otimes 1)\DeltaM \tau - \sum_\ell \Bigl({(-X)^\ell \over \ell!} \CM^{(2,3)} \otimes \CM^{(1,4)}\Bigr)\\
&\qquad \times\Bigl(\bigl((1\otimes \CA)\Deltap \J_{k+\ell} \otimes 1\bigr)\Delta \otimes 1\Bigr)\DeltaM\tau\;. \label{e:smallIden}
\end{equs}
(Here we used the fact that the left hand side, and therefore also the right hand side,
of \eqref{e:silly} is symmetric under the map $\tau \otimes \bar \tau \mapsto \bar \tau \otimes \tau$.)
At this stage we use the fact that, thanks to Lemma~\ref{lem:silly2}, one has
\begin{equ}
\CM^{(1)(3,4)(2,5)}
\bigl((1\otimes (1\otimes \CA)\Deltap)\DeltaM \otimes \hDeltaM\bigr)\Delta
= \bigl(\Delta\otimes 1\bigr)\DeltaM \;.
\end{equ}
(just compose both sides with $1\otimes \CA \otimes 1$), which then yields the identity
\begin{equs}
\CM^{(2,3)(1,4)} &\bigl((1\otimes \CA)\Deltap \CM \bigl(\J_k \otimes 1\bigr)\DeltaM \otimes \hDeltaM\bigr)\Delta \\
&= 
\CM^{(2,4,5)(1,3,6)} \bigl(\bigl((1\otimes \CA)\Deltap\J_k \otimes (1\otimes \CA)\Deltap\bigr)\DeltaM \otimes \hDeltaM\bigr)\Delta \\
&= \CM^{(2,3)(1,4)}\bigl((1\otimes \CA)\Deltap\J_k \otimes 1\otimes 1\bigr)\\
&\qquad \times \CM^{(1)(3,4)(2,5)} \bigl(\bigl(1 \otimes (1\otimes \CA)\Deltap\bigr)\DeltaM \otimes \hDeltaM\bigr)\Delta \\
&= \CM^{(2,3)(1,4)}\bigl(\bigl((1\otimes \CA)\Deltap\J_k \otimes 1\bigr)\Delta\otimes 1\bigr)\DeltaM\;.
\end{equs}
Combining this with \eqref{e:smallIden}, \eqref{e:iden2}, \eqref{e:iden1'} and \eqref{e:basic}
finally leads to the required identity.
\end{proof}

We can now finally turn to the proof of Theorem~\ref{theo:triangular}.

\begin{proof}[of Theorem~\ref{theo:triangular}]
We proceed by induction. Assume that the statement holds for all the elements in $\CT_+$ appearing in the
description of $\Delta\tau$, then we claim that the statement also holds for $\J_k(\tau)$
as well as for  $\EE_k^\ell(\tau)$.
Since the algebraic properties of $\EE_k^\ell$ are the same as those of
$\J_k$, we only consider the latter.
For the first term in \eqref{e:ridiculous}, this follows from the upper triangular structure of $\DeltaM$.
Regarding the second term, it follows from the induction hypothesis that the quantity
\begin{equ}
\bigl(\CQ_{<|k+\ell|} \otimes \hDeltaM\bigr)\Delta \tau\;,
\end{equ}
is necessarily a linear combination of expressions of the form $\tau^{(1)} \otimes \tau^{(2)} \otimes \tau^{(3)}$
with $|\tau^{(1)}| + |\tau^{(2)}| \ge |\tau|$ and $|\tau^{(1)}| < |k+\ell|-2$. In particular, one
has $|\tau^{(2)}| \ge |\tau| + 2 - |k+\ell|$. 
It now suffices to note that, with $\tau^{(i)}$ as just defined for any fixed $\ell$, the 
second term in \eqref{e:ridiculous} always consists of linear combinations of terms of the form
\begin{equ}
X^\ell \tau^{(2)} \sigma^{(1)} \otimes \sigma^{(2)}\tau^{(3)}\;,
\end{equ}
with the $\tau^{(i)}$ as above and some $\sigma^{(i)}$ in $\CT_+$.
Since the $\sigma^{(i)}$ belong to $\CT_+$, they have positive homogeneity, so that the homogeneity
of the first factor is at least $|\tau^{(2)}| + |\ell| = |\tau| + 2 - |k+\ell| + |\ell| = |\J_k(\tau)|$,
thus concluding the proof.
\end{proof}

\section{Symbolic index}

In this appendix, we collect some of the most used symbols of the article, together
with their meaning and the page where they were first introduced.

\begin{center}
\renewcommand{\arraystretch}{1.1}
\begin{tabular}{lll}\toprule
Symbol & Meaning & Page\\
\midrule
$|z| = \|z\|_\s$ & Parabolic distance & \pageref{standing}\\
$|k|=|k|_\s$ & Parabolic length of a multi-index & \pageref{parlm} \\
$\xi$ & Space-time white noise & 2\\
$F$ & Nonlinearity & \pageref{e:growth}\\
$\CU$ & Symbols used to describe solution $h$ & \pageref{e:rel}\\
$\CU'$ & Symbols used to describe derivative $\partial_x h$ & \pageref{e:rel}\\
$\CV$ & Symbols used to describe right hand side & \pageref{e:rel}\\
$\CV_{\ell,k}$ &  & \pageref{struck}\\
$\CW$ & Symbols used to describe equation & \pageref{e:rel}\\
$\CW_+$ & Symbols used to describe structure group & \pageref{struck}\\
$\CT$ & Linear span of $\CW$ & \pageref{e:rel}\\
$\Xi$ & Abstract symbol for noise & \pageref{e:rel}\\
$X=(X_0,X_1)$ & Abstract symbol for $z=(t,x)$ &  \pageref{e:rel}\\
$|\tau|$ & Homogeneity of $\tau\in \CT$ & \pageref{e:rel}\\
$\CI, \CI'$ & Abstract integration maps & \pageref{e:rel}\\
$\CE^k$ & Abstract symbol of multiplication by $\eps^k$ on $\CT_\ex$ & \pageref{e:recursion}\\
$\CG$ & Structure group & \pageref{defofCG}\\
$\hat\CE^k$ & Abstract symbol on $\CD^\gamma$ & \pageref{e:defEpshat}\\
$ \J_\ell(\tau)$ & Formal symbol representing $D^{(\ell)}K \star \Pi_z\tau)(z)$ & \pageref{e:genT+}\\
$\EE_\ell^{k}(\tau)$ & Formal symbol representing $\eps^kD^{(\ell)}\Pi_z\tau)(z)$ & \pageref{e:genT+}\\
$\CT_+$ & Free commutative algebra generated by $X$, $ \J_\ell(\tau)$, $\EE_\ell^{k}(\tau)$ & \pageref{e:genT+}\\
$\bar\CT$ & Subspace of $\CT$ generated by powers of $X$ & \pageref{e:genT+}\\
$\Delta$ &  Linear map $\CT \to \CT\otimes \CT_+$ used to build structure group  & \pageref{Delta}\\
$\DD$ &  Abstract spatial derivative &  \pageref{DD}\\
$\CG_+$& Multiplicative linear functionals on $\CT_+$ & \pageref{defofCGplus}\\
\bottomrule
\end{tabular}

\newpage
\begin{tabular}{lll}\toprule
Symbol & Meaning & Page\\
\midrule
$\CR$ & Reconstruction operator & \pageref{reco}\\
$\CT_\ex$ & Extended regularity structure & \pageref{sec:extended}\\
$\CU'_\ex$ & Symbols used to describe powers of $\d_x h$ & \pageref{defofUprimeex}\\
$\bar \CW_+$&&  \pageref{defofwbarplus} \\
 $\CW_\ex$ & & \pageref{defofwex}\\
$\CB$ & Smooth compactly supported functions $\phi \colon \R^2 \to \R$ & \pageref{eq:models}\\
$\MM$ & Admissible models &  \pageref{admissible}\\
$\gamma_{z\bar z}$ & &  \pageref{defofgammazbarz}\\
$\$\Pi\$$, $\$\Pi;\bar \Pi\$$ & Norm on models  &  \pageref{triple}\\
$|\tau|_\alpha$ & (Euclidean) norm of projection of $\tau$ onto $\CT_\alpha$ & \pageref{eq:distH0}\\
$\|H\|_\gamma$,  $\|H;\bar H\|_\gamma $ & (H\"older) norm on $\CD_\gamma$ & 
\pageref{e:distH}\\
$\LL_\eps(\zeta)$ &  Lift of smooth $\zeta$ to $\CT_\ex$ & 
\pageref{sec:canonical}\\
$ \CS'$ & Dual of Schwartz space &\pageref{defofschwartz}
\\
$\CT_{<\gamma}$& Elements of $\CT$ of homogeneity strictly less than $\gamma$& \pageref{defofCTsubleg}\\
$\MM_\eps$ & $\eps$-models  & \pageref{epsilonmodels}\\
$\CK$ & Operator on $\CD^\gamma$ defined from kernel $K$    & \pageref{eq:integrationop}\\
$\CP$ & Operator defined from the heat kernel & \pageref{eq:integrationop}\\
$\MM_\ex$& Admissible models on $\CT_\ex$ & \pageref{cemex}\\
$\CD^{\gamma,\eta}$, $ \|\cdot\|_{\gamma,\eta}$ & Modelled distributions with blow-up at $t=0$ &
\pageref{def:cdgammaeta} \\
 $\CD^{\gamma,\eta}_\eps$, $\|\cdot\|_{\gamma,\eta;\eps}$ & Modelled distributions with short range $\eps$ &
\pageref{def:cep} \\
$\Wick$ &Wick renormalization &\pageref{defWick}\\
$\H_k$& $k$th Hermite polynomial &\pageref{defofhermite}  \\
$\RWick$ & Reconstruction operator for Wick renormalized model & \pageref{rwick} \\
$ C^\W$ & Wick renormalization constant &\pageref{seew}\\
$N_\eps$& Kernel appearing in the model bounds &\pageref{def_of_Neps}\\
$\Ren $ &Kernel renormalization& \pageref{defofren}\\
$Q_\eps^{(m)}$ & Kernel appearing in the model bounds &\pageref{defofqeps} \\
$\CQ_{\le 0}$ & Projection onto $\bigoplus_{\alpha \le 0}\CT_\alpha$ in $\CT_\ex$& \pageref{defofCQ}\\
\bottomrule
\end{tabular}

\end{center}

\endappendix

\bibliographystyle{Martin}
\bibliography{refs}

\def\cprime{$'$} \def\polhk#1{\setbox0=\hbox{#1}{\ooalign{\hidewidth
  \lower1.5ex\hbox{`}\hidewidth\crcr\unhbox0}}}
\begin{thebibliography}{BPRS93}
\expandafter\ifx\csname url\endcsname\relax
  \def\url#1{\texttt{#1}}\fi
\expandafter\ifx\csname urlprefix\endcsname\relax\def\urlprefix{URL }\fi
\expandafter\ifx\csname href\endcsname\relax
  \def\href#1#2{#2}\fi
\expandafter\ifx\csname burlalt\endcsname\relax
  \def\burlalt#1#2{\href{#2}{\texttt{#1}}}\fi

\bibitem[ACQ11]{MR2796514}
\textsc{G.~Amir}, \textsc{I.~Corwin}, and \textsc{J.~Quastel}.
\newblock Probability distribution of the free energy of the continuum directed
  random polymer in {$1+1$} dimensions.
\newblock \emph{Comm. Pure Appl. Math.} \textbf{64}, no.~4, (2011), 466--537.
\newblock \burlalt{arXiv:1003.0443}{http://arxiv.org/abs/1003.0443}.
\newblock \burlalt{doi:10.1002/cpa.20347}{http://dx.doi.org/10.1002/cpa.20347}.

\bibitem[AKQ10]{AKQ}
\textsc{T.~{Alberts}}, \textsc{K.~{Khanin}}, and \textsc{J.~{Quastel}}.
\newblock {Intermediate Disorder Regime for Directed Polymers in Dimension
  1+1}.
\newblock \emph{Physical Review Letters} \textbf{105}, no.~9, (2010), 090603.
\newblock
  \burlalt{doi:10.1103/PhysRevLett.105.090603}{http://dx.doi.org/10.1103/PhysRevLett.105.090603}.

\bibitem[BA07]{ben-artzi}
\textsc{M.~Ben-Artzi}.
\newblock Lectures on {V}iscous {H}amilton-{J}acobi {E}quations, 2007.
\newblock \urlprefix\url{http://www.ma.huji.ac.il/~mbartzi/}.

\bibitem[BC14]{MR3152785}
\textsc{A.~Borodin} and \textsc{I.~Corwin}.
\newblock Macdonald processes.
\newblock \emph{Probab. Theory Related Fields} \textbf{158}, no. 1-2, (2014),
  225--400.
\newblock
  \burlalt{doi:10.1007/s00440-013-0482-3}{http://dx.doi.org/10.1007/s00440-013-0482-3}.

\bibitem[BG97]{MR1462228}
\textsc{L.~Bertini} and \textsc{G.~Giacomin}.
\newblock Stochastic {B}urgers and {KPZ} equations from particle systems.
\newblock \emph{Comm. Math. Phys.} \textbf{183}, no.~3, (1997), 571--607.
\newblock
  \burlalt{doi:10.1007/s002200050044}{http://dx.doi.org/10.1007/s002200050044}.

\bibitem[BP57]{BP}
\textsc{N.~N. Bogoliubow} and \textsc{O.~S. Parasiuk}.
\newblock \"{U}ber die {M}ultiplikation der {K}ausalfunktionen in der
  {Q}uantentheorie der {F}elder.
\newblock \emph{Acta Math.} \textbf{97}, (1957), 227--266.
\newblock
  \burlalt{doi:10.1007/BF02392399}{http://dx.doi.org/10.1007/BF02392399}.

\bibitem[BPRS93]{MR1317994}
\textsc{L.~Bertini}, \textsc{E.~Presutti}, \textsc{B.~R{\"u}diger}, and
  \textsc{E.~Saada}.
\newblock Dynamical fluctuations at the critical point: convergence to a
  nonlinear stochastic {PDE}.
\newblock \emph{Teor. Veroyatnost. i Primenen.} \textbf{38}, no.~4, (1993),
  689--741.
\newblock \burlalt{doi:10.1137/1138062}{http://dx.doi.org/10.1137/1138062}.

\bibitem[BQS11]{MR2784327}
\textsc{M.~Bal{\'a}zs}, \textsc{J.~Quastel}, and
  \textsc{T.~Sepp{\"a}l{\"a}inen}.
\newblock Fluctuation exponent of the {KPZ} / stochastic {B}urgers equation.
\newblock \emph{J. Amer. Math. Soc.} \textbf{24}, no.~3, (2011), 683--708.
\newblock \burlalt{arXiv:0909.4816}{http://arxiv.org/abs/0909.4816}.
\newblock
  \burlalt{doi:10.1090/S0894-0347-2011-00692-9}{http://dx.doi.org/10.1090/S0894-0347-2011-00692-9}.

\bibitem[CT15]{Corwin:2015aa}
\textsc{I.~Corwin} and \textsc{L.-C. Tsai}.
\newblock {KPZ} equation limit of higher-spin exclusion processes.
\newblock \emph{ArXiv e-prints} (2015).
\newblock \burlalt{arXiv:1505.04158}{http://arxiv.org/abs/1505.04158}.

\bibitem[DT13]{Dembo:2013aa}
\textsc{A.~Dembo} and \textsc{L.-C. Tsai}.
\newblock Weakly asymmetric non-simple exclusion process and the
  {K}ardar-{P}arisi-{Z}hang equation.
\newblock \emph{ArXiv e-prints} (2013).
\newblock \burlalt{arXiv:1302.5760}{http://arxiv.org/abs/1302.5760}.

\bibitem[FH14]{Book}
\textsc{P.~Friz} and \textsc{M.~Hairer}.
\newblock \emph{A Course on Rough Paths}.
\newblock Universitext. Springer, New York, 2014,  xiv+251.
\newblock With an introduction to {R}egularity {S}tructures.
\newblock
  \burlalt{doi:10.1007/978-3-319-08332-2}{http://dx.doi.org/10.1007/978-3-319-08332-2}.

\bibitem[FQ14]{Funaki:2014aa}
\textsc{T.~Funaki} and \textsc{J.~Quastel}.
\newblock Kpz equation, its renormalization and invariant measures.
\newblock \emph{ArXiv e-prints} (2014).
\newblock \burlalt{arXiv:1407.7310}{http://arxiv.org/abs/1407.7310}.

\bibitem[GJ14]{MR3176353}
\textsc{P.~Gon{\c{c}}alves} and \textsc{M.~Jara}.
\newblock Nonlinear fluctuations of weakly asymmetric interacting particle
  systems.
\newblock \emph{Arch. Ration. Mech. Anal.} \textbf{212}, no.~2, (2014),
  597--644.
\newblock \burlalt{arXiv:1309.5120}{http://arxiv.org/abs/1309.5120}.
\newblock
  \burlalt{doi:10.1007/s00205-013-0693-x}{http://dx.doi.org/10.1007/s00205-013-0693-x}.

\bibitem[GP15]{Gubinelli:2015cr}
\textsc{M.~Gubinelli} and \textsc{N.~Perkowski}.
\newblock {KPZ} reloaded.
\newblock \emph{ArXiv e-prints} (2015).
\newblock \burlalt{arXiv:1508.03877}{http://arxiv.org/abs/1508.03877}.

\bibitem[Hai13]{KPZ}
\textsc{M.~Hairer}.
\newblock Solving the {KPZ} equation.
\newblock \emph{Ann. of Math. (2)} \textbf{178}, no.~2, (2013), 559--664.
\newblock \burlalt{arXiv:1109.6811}{http://arxiv.org/abs/1109.6811}.
\newblock
  \burlalt{doi:10.4007/annals.2013.178.2.4}{http://dx.doi.org/10.4007/annals.2013.178.2.4}.

\bibitem[Hai14]{Regularity}
\textsc{M.~Hairer}.
\newblock A theory of regularity structures.
\newblock \emph{Invent. Math.} \textbf{198}, no.~2, (2014), 269--504.
\newblock \burlalt{arXiv:1303.5113}{http://arxiv.org/abs/1303.5113}.
\newblock
  \burlalt{doi:10.1007/s00222-014-0505-4}{http://dx.doi.org/10.1007/s00222-014-0505-4}.

\bibitem[Hai15a]{Notes}
\textsc{M.~Hairer}.
\newblock Introduction to regularity structures.
\newblock \emph{Braz. J. Probab. Stat.} \textbf{29}, no.~2, (2015), 175--210.
\newblock \burlalt{arXiv:1401.3014}{http://arxiv.org/abs/1401.3014}.
\newblock
  \burlalt{doi:10.1214/14-BJPS241}{http://dx.doi.org/10.1214/14-BJPS241}.

\bibitem[Hai15b]{Hairer:2015aa}
\textsc{M.~Hairer}.
\newblock Regularity structures and the dynamical $\phi^4_3$ model.
\newblock \emph{ArXiv e-prints} (2015).
\newblock \burlalt{arXiv:1508.05261}{http://arxiv.org/abs/1508.05261}.

\bibitem[Hep69]{Hepp}
\textsc{K.~Hepp}.
\newblock On the equivalence of additive and analytic renormalization.
\newblock \emph{Comm. Math. Phys.} \textbf{14}, (1969), 67--69.
\newblock
  \burlalt{doi:10.1007/BF01645456}{http://dx.doi.org/10.1007/BF01645456}.

\bibitem[HHZ95]{hhz}
\textsc{T.~Halpin-Healy} and \textsc{Y.-C. Zhang}.
\newblock Kinetic roughening phenomena, stochastic growth, directed polymers
  and all that.
\newblock \emph{Physics Reports} \textbf{254}, no. 4--6, (1995), 215--414.
\newblock
  \burlalt{doi:10.1016/0370-1573(94)00087-J}{http://dx.doi.org/10.1016/0370-1573(94)00087-J}.

\bibitem[HL15]{Hairer:2015ab}
\textsc{M.~{Hairer}} and \textsc{C.~{Labb{\'e}}}.
\newblock {Multiplicative stochastic heat equations on the whole space}.
\newblock \emph{ArXiv e-prints} (2015).
\newblock \burlalt{arXiv:1504.07162}{http://arxiv.org/abs/1504.07162}.

\bibitem[HP15]{Etienne}
\textsc{M.~Hairer} and \textsc{{\'E}.~Pardoux}.
\newblock A {W}ong-{Z}akai theorem for stochastic {PDE}s.
\newblock \emph{J. Math. Soc. Japan} \textbf{67}, no.~4, (2015), 1551--1604.
\newblock \burlalt{arXiv:1409.3138}{http://arxiv.org/abs/1409.3138}.
\newblock
  \burlalt{doi:10.2969/jmsj/06741551}{http://dx.doi.org/10.2969/jmsj/06741551}.

\bibitem[HPP13]{HPP13}
\textsc{M.~Hairer}, \textsc{{\'E}.~Pardoux}, and \textsc{A.~Piatnitski}.
\newblock Random homogenization of a highly oscillatory singular potential.
\newblock \emph{SPDEs: Anal. Comp.} \textbf{1}, no.~4, (2013), 571--605.
\newblock \burlalt{arXiv:1303.1955}{http://arxiv.org/abs/1303.1955}.
\newblock
  \burlalt{doi:10.1007/s40072-013-0018-y}{http://dx.doi.org/10.1007/s40072-013-0018-y}.

\bibitem[HS15]{HaoShen}
\textsc{M.~{Hairer}} and \textsc{H.~{Shen}}.
\newblock A central limit theorem for the {KPZ} equation.
\newblock \emph{ArXiv e-prints} (2015).
\newblock \burlalt{arXiv:1507.01237}{http://arxiv.org/abs/1507.01237}.

\bibitem[KPZ86]{KPZOrig}
\textsc{M.~Kardar}, \textsc{G.~Parisi}, and \textsc{Y.-C. Zhang}.
\newblock Dynamic scaling of growing interfaces.
\newblock \emph{Phys. Rev. Lett.} \textbf{56}, no.~9, (1986), 889--892.
\newblock
  \burlalt{doi:10.1103/PhysRevLett.56.889}{http://dx.doi.org/10.1103/PhysRevLett.56.889}.

\bibitem[Kru56]{MR0078686}
\textsc{J.~B. Kruskal, Jr.}
\newblock On the shortest spanning subtree of a graph and the traveling
  salesman problem.
\newblock \emph{Proc. Amer. Math. Soc.} \textbf{7}, (1956), 48--50.

\bibitem[KS91]{krug-spohn}
\textsc{J.~Krug} and \textsc{H.~Spohn}.
\newblock Kinetic roughening of growing surfaces.
\newblock In \emph{Solids Far from Equilibrium},  479--583. Cambridge
  University Press, 1991.

\bibitem[Mey92]{MR1228209}
\textsc{Y.~Meyer}.
\newblock \emph{Wavelets and operators}, vol.~37 of \emph{Cambridge Studies in
  Advanced Mathematics}.
\newblock Cambridge University Press, Cambridge, 1992,  xvi+224.
\newblock Translated from the 1990 French original by D. H. Salinger.
\newblock
  \burlalt{doi:10.1017/CBO9780511623820}{http://dx.doi.org/10.1017/CBO9780511623820}.

\bibitem[MFQR]{MQR}
\textsc{G.~Moreno-Flores}, \textsc{J.~Quastel}, and \textsc{D.~Remenik}.
\newblock In preparation.

\bibitem[Mue91]{Carl}
\textsc{C.~Mueller}.
\newblock On the support of solutions to the heat equation with noise.
\newblock \emph{Stochastics and Stochastic Reports} \textbf{37}, no.~4, (1991),
  225--245.
\newblock
  \burlalt{doi:10.1080/17442509108833738}{http://dx.doi.org/10.1080/17442509108833738}.

\bibitem[Nua06]{Nualart}
\textsc{D.~Nualart}.
\newblock \emph{The {M}alliavin calculus and related topics}.
\newblock Probability and its Applications (New York). Springer-Verlag, Berlin,
  second ed., 2006,  xiv+382.
\newblock
  \burlalt{doi:10.1007/3-540-28329-3}{http://dx.doi.org/10.1007/3-540-28329-3}.

\bibitem[PP12]{PP12}
\textsc{{\'E}.~Pardoux} and \textsc{A.~Piatnitski}.
\newblock Homogenization of a singular random one-dimensional {PDE} with
  time-varying coefficients.
\newblock \emph{Ann. Probab.} \textbf{40}, no.~3, (2012), 1316--1356.
\newblock \burlalt{arXiv:0912.5277}{http://arxiv.org/abs/0912.5277}.
\newblock \burlalt{doi:10.1214/11-AOP650}{http://dx.doi.org/10.1214/11-AOP650}.

\bibitem[QV13]{Diffusive}
\textsc{J.~Quastel} and \textsc{B.~Valk{\'o}}.
\newblock Diffusivity of lattice gases.
\newblock \emph{Arch. Ration. Mech. Anal.} \textbf{210}, no.~1, (2013),
  269--320.
\newblock \burlalt{arXiv:1211.3716}{http://arxiv.org/abs/1211.3716}.
\newblock
  \burlalt{doi:10.1007/s00205-013-0651-7}{http://dx.doi.org/10.1007/s00205-013-0651-7}.

\bibitem[Spo91]{spohn}
\textsc{H.~Spohn}.
\newblock \emph{Large Scale Dynamics of Interacting Particles}.
\newblock Texts and Monographs in Physics. Springer-Verlag, 1991.

\bibitem[Swe69]{Sweedler}
\textsc{M.~E. Sweedler}.
\newblock \emph{Hopf algebras}.
\newblock Mathematics Lecture Note Series. W. A. Benjamin, Inc., New York,
  1969,  vii+336.

\bibitem[Wal86]{MR876085}
\textsc{J.~B. Walsh}.
\newblock An introduction to stochastic partial differential equations.
\newblock In \emph{\'{E}cole d'\'et\'e de probabilit\'es de {S}aint-{F}lour,
  {XIV}---1984}, vol. 1180 of \emph{Lecture Notes in Math.},  265--439.
  Springer, Berlin, 1986.

\bibitem[Zim69]{Zimmermann}
\textsc{W.~Zimmermann}.
\newblock Convergence of {B}ogoliubov's method of renormalization in momentum
  space.
\newblock \emph{Comm. Math. Phys.} \textbf{15}, (1969), 208--234.
\newblock
  \burlalt{doi:10.1007/BF01645676}{http://dx.doi.org/10.1007/BF01645676}.

\end{thebibliography}

\end{document}